\def\DONOTINSERTCOMMENTS{}
\not \isundefined{\disputationsdatum} 
\not \isundefined{\disputationslokal}}   
  \or \boolean{detectedSTOC} \or \boolean{detectedFOCS}
  \or \boolean{detectedSIAM} \or \boolean{detectedIEEE}
  \or \boolean{detectedPoster}}
\or \boolean{detectedFOCS} \or \boolean{detectedSTOC}
  \or \boolean{detectedPoster} \or \boolean{detectedSIAM} \or \boolean{detectedLMCS}
  \or \boolean{detectedIEEE}   \or \boolean{detectedNOW}  \or \boolean{detectedToC}
  \or \boolean{detectedThesis} \or \boolean{detectedLNCS} \or \boolean{detectedACM}} 
\DeclareMathAlphabet{\mathsfsl}{OT1}{cmss}{m}{sl}
\newcommand{\formatfunctiontoset}[1]{\mathit{#1}}
\newcommand{\introduceterm}[1]{{\emph{#1}}}
\newcommand{\eqperiod}{\enspace .}
\newcommand{\eg}{for instance\xspace} %
\newcommand{\etal}{et al.\@\xspace}
\newcommand{\wolog}{without loss of generality\xspace}
\newcommand{\Bigoh}[1]{\mathrm{O} \bigl( #1 \bigr)}
\newcommand{\bigoh}[1]{\mathrm{O} ( #1 )}
\newcommand{\bigtheta}[1]{\Theta ( #1 )}
\newcommand{\Bigomega}[1]{\Omega \bigl( #1 \bigr)}
\newcommand{\bigomega}[1]{\Omega ( #1 )}
\newcommand{\complclassformat}[1]%
        {\textrm{\upshape{\textsf{#1}}}\xspace}
\newcommand{\cocomplclass}[1]%
        {\textrm{\upshape{\textsf{co#1}}}\xspace}
\newcommand{\DTIMEadviceclass}[2]%
    {\ensuremath{\complclassformat{DTIME}\bigl(#1\bigr)/{#2}}}
\newcommand{\Pclass}{\complclassformat{P}}
\newcommand{\NP}{\complclassformat{NP}}
\newcommand{\PSPACE}{\complclassformat{PSPACE}}
\newcommand{\pspace}{\PSPACE}
\newcommand{\NC}{\complclassformat{NC}}
\newcommand{\refsec}[1]{Section~\ref{#1}}
\newcommand{\reffig}[1]{Figure~\ref{#1}}
\newcommand{\reftwofigs}[2]{Figures~\ref{#1} and~\ref{#2}}
\newcommand{\refth}[1]{Theorem~\ref{#1}}
\newcommand{\reflem}[1]{Lemma~\ref{#1}}
\newcommand{\reftwolems}[2]{Lemmas~\ref{#1} and~\ref{#2}}
\newcommand{\refpr}[1]{Proposition~\ref{#1}}
\newcommand{\refcor}[1]{Corollary~\ref{#1}}
\newcommand{\refdef}[1]{Definition~\ref{#1}}
\newcommand{\refalg}[1]{Algorithm~\ref{#1}}
\newcommand{\refitem}[1]{item~\ref{#1}}
\newcommand{\refeq}[1]{\eqref{#1}}}
\renewcommand{\refeq}[1]{\eqref{#1}}}
\newcommand{\defeq}{\,\overset{\mathrm{def}}{=}\,}
\newcommand{\ceiling}[1]{\lceil #1 \rceil}
\newcommand{\floor}[1]{\lfloor #1 \rfloor}
  \newcommand{\N}         {\mathbb{N}}
  \newcommand{\Nplus}     {\mathbb{N}^{+}}
\newcommand{\MAXOFEXPR}[2][]{\max_{#1} \left\{ #2 \right\}}
\newcommand{\MINOFEXPR}[2][]{\min_{#1} \left\{ #2 \right\}}
\newcommand{\Maxofexpr}[2][]{\max_{#1} \bigl\{ #2 \bigr\}}
\newcommand{\Minofexpr}[2][]{\min_{#1} \bigl\{ #2 \bigr\}}
\newcommand{\maxofexpr}[2][]{\max_{#1} \{ #2 \}}
\newcommand{\MAXOFSET}[3][:]{\max \left\{ #2 #1 #3 \right\}}
\newcommand{\MINOFSET}[3][:]{\min \left\{ #2 #1 #3 \right\}}
\newcommand{\Maxofset}[3][:]{\max \bigl\{ #2 #1 #3 \bigr\}}
\newcommand{\Minofset}[3][:]{\min \bigl\{ #2 #1 #3 \bigr\}}
\newcommand{\minofset}[3][:]{\min \{ #2 #1 #3 \}}
\renewcommand{\MAXOFSET}[3][:]%
     {\ifthenelse{\equal{#1}{;}}%
     {\MAXOFEXPR{ #2 \,;\, #3 }}
     {\ifthenelse{\equal{#1}{:}}%
     {\MAXOFEXPR{ #2 \,:\, #3 }}
     {\max \twincommandJN{\left\{}{#2}{\left#1}{\right}{\,#3}{\right\}}}}}
\renewcommand{\MINOFSET}[3][:]%
     {\ifthenelse{\equal{#1}{;}}%
     {\MINOFEXPR{ #2 \,;\, #3 }}
     {\ifthenelse{\equal{#1}{:}}%
     {\MINOFEXPR{ #2 \,:\, #3 }}
     {\min \twincommandJN{\left\{}{#2}{\left#1}{\right}{\,#3}{\right\}}}}}
\renewcommand{\Maxofset}[3][:]%
     {\ifthenelse{\equal{#1}{;}}%
     {\Maxofexpr{ #2 \,;\, #3 }}
     {\ifthenelse{\equal{#1}{:}}%
     {\Maxofexpr{ #2 \,:\, #3 }}
     {\max \twincommandJN{\bigl\{}{#2}{\bigl#1}{\bigr}{\,#3}{\bigr\}}}}}
\renewcommand{\Minofset}[3][:]%
     {\ifthenelse{\equal{#1}{;}}%
     {\Minofexpr{ #2 \,;\, #3 }}
     {\ifthenelse{\equal{#1}{:}}%
     {\Minofexpr{ #2 \,:\, #3 }}
     {\min \twincommandJN{\bigl\{}{#2}{\bigl#1}{\bigr}{\,#3}{\bigr\}}}}}
\DeclareMathOperator{\Expop}{E}
\newcommand{\twincommandJN}[6]%
    {#1#2#3\vphantom{#2#5}\mspace{-2.05mu}#4.#5#6}
\newcommand{\CondExp}[2]%
    {\Expop\twincommandJN{\bigl[}{#1}{\bigl|}{\bigr}{\,#2}{\bigr]}}
\newcommand{\CONDEXP}[2]%
     {\Expop\twincommandJN{\left[}{#1}{\left|}{\right}{\,#2}{\right]}}
\newcommand{\CondProb}[3][]%
    {\Pr_{#1}\twincommandJN{\bigl[}{#2}{\bigl|}{\bigr}{\,#3}{\bigr]}}
\newcommand{\CONDPROB}[3][]%
    {\Pr_{#1}\twincommandJN{\left[}{#2}{\left|}{\right}{\,#3}{\right]}}
\newcommand{\isdistras}[2]{\ensuremath{#1} \sim \ensuremath{#2}}
\newcommand{\funcdescr}[3]{\ensuremath{ #1 : #2 \to #3}}
\newcommand{\edges}[1]{E( #1 )}
\newcommand{\vertices}[1]{V( #1 )}
\newcommand{\succname}{\formatfunctiontoset{succ}}
\newcommand{\predname}{\formatfunctiontoset{pred}}
\newcommand{\ancname}{\formatfunctiontoset{anc}}
\newcommand{\descname}{\formatfunctiontoset{desc}}
\newcommand{\pancname}{\formatfunctiontoset{anc}^*}
\newcommand{\pdescname}{\formatfunctiontoset{desc}^*}
\newcommand{\succnode}[2][]{\succname_{#1}(#2)}
\newcommand{\prednode}[2][]{\predname_{#1}(#2)}
\newcommand{\ancnode}[2][]{\ancname_{#1}(#2)}
\newcommand{\pdescnode}[2][]{\pdescname_{#1}(#2)}
\newcommand{\pancnode}[2][]{\pancname_{#1}(#2)}
\newcommand{\descnode}[2][]{\descname_{#1}(#2)}}
\renewcommand{\descnode}[2][]{\descname_{#1}(#2)}}
\newcommand{\Ancnode}[2][]{\ancname_{#1}\bigl(#2\bigr)}
\newcommand{\Pancnode}[2][]{\pancname_{#1}\bigl(#2\bigr)}
\newcommand{\setcompact}[1]{{\ensuremath{\bigl\{ #1 \bigr\}}}}
\newcommand{\setdescrcompact}[3][\mid]{{\setcompact{ #2 #1 #3 }}}
\newcommand{\set}[1]{\{ #1 \}}
\newcommand{\Set}[1]{\bigl\{ #1 \bigr\}}
\newcommand{\Setdescr}[3][|]%
     {\ifthenelse{\equal{#1}{;}}%
     {\Set{ #2 \,;\, #3 }}
     {\ifthenelse{\equal{#1}{:}}%
     {\Set{ #2 \,:\, #3 }}
     {\twincommandJN{\bigl\{}{#2\,}{\bigl#1}{\bigr}{\,#3}{\bigr\}}}}}
\newcommand{\SETDESCR}[3][|]%
     {\twincommandJN{\left\{}{#2\,}{\left#1}{\right}{\,#3}{\right\}}}
\newcommand{\Setdescrbrackets}[3][|]%
     {\twincommandJN{\bigl[}{#2}{\bigl#1}{\bigr}{\,#3}{\bigr]}}
\newcommand{\SETDESCRBRACKETS}[3][|]%
     {\twincommandJN{\left[}{#2}{\left#1}{\right}{\,#3}{\right]}}
\newcommand{\setsize}[1]{\lvert#1\rvert}
\newcommand{\intersection}{\cap}
\newcommand{\union}{\cup}
\newcommand{\Union}{\bigcup}
\newcommand{\unionSP}{\, \union \, }
\newcommand{\disjointunion}{\overset{.}{\cup}}
\newcommand{\DisjointunionInText}%
    {{\smash{\overset{\mbox{\boldmath{.}}}{\bigcup}}}\vphantom{\bigcup}}
\newcommand{\intnfirst}[1]{[{#1}]}
\newcommand{\Lor}{\bigvee}
\newcommand{\Land}{\bigwedge}
\newcommand{\olnot}[1]{\overline{#1}}
\newcommand{\stdnot}[1]{\olnot{#1}}
\newcommand{\false}{\bot}
\newcommand{\true}{\top}
\newcommand{\cnfform}{\cnfshort for\-mu\-la\xspace}
\newcommand{\cnfshort}{CNF\xspace}
\newcommand{\xcnf}[1]{\mbox{\ensuremath{#1}-CNF}\xspace}
\newcommand{\xcnfform}[1]{\mbox{\ensuremath{#1}-}\cnfform}
\newcommand{\kcnfform}{\xcnfform{\clwidth}}
\newcommand{\xclause}[1]{\mbox{\ensuremath{#1}-clause}\xspace}
\newcommand{\nvar}{n}
\newcommand{\nclause}{m}
\newcommand{\clwidth}{k}
\newcommand{\randkcnfnclwrepl}[3][\clwidth]%
        {\ensuremath{\mathcal{F}^{#2, #3}_{#1}}}
\newcommand{\randkcnfnclwreplstd}%
        {\randkcnfnclwrepl{\clwidth}{\nvar}{\nclause}}
\newcommand{\israndkcnfnclwrepl}[4]%
  {\isdistras{#1}{\randkcnfnclwrepl[#2]{#3}{#4}}}
\newcommand{\randkcnfprobcl}[3]%
        {\ensuremath{\mathcal{F}^{#2}_{#1} \bigl(#3 \bigr)}}
\newcommand{\pcfor}[4][to]{for #2 := #3 #1 #4 do}
\newcommand{\pcformath}[4][to]%
    {\pcfor[#1]{\ensuremath{#2}}{\ensuremath{#3}}{\ensuremath{#4}}}
\newcommand{\pcassigncompact}[2]{#1 := #2}
\newcommand{\pcassignmathcompact}[2]%
        {\pcassigncompact{\ensuremath{#1}}{\ensuremath{#2}}}
\newcommand{\inductionformat}[1]{\textit{#1}}
\newcommand{\BASE}[1][]
        {\inductionformat
                {%
                        \ifthenelse{\equal{#1}{}}%
                                {Base case: }%
                                {Base case (#1):}%
                }%
        }
\not \boolean{detectedSTOC}     \and \not \boolean{detectedFOCS}
\not \boolean{detectedElsevier} \and \not \boolean{detectedPoster} 
\not \boolean{detectedSIAM}     \and \not \boolean{detectedACM}
\not \boolean{detectedIEEE}     \and \not \boolean{detectedNOW}
\not \boolean{detectedToC}      \and \not \boolean{detectedThesis}
\not \boolean{detectedLNCS}}
\or \boolean{detectedSIAM} \or \boolean{detectedLMCS}
 \or \boolean{detectedNOW} \or \boolean{detectedLNCS} \or \boolean{detectedACM}
 \or \boolean{detectedToC}}
\numberwithin{equation}{section}
   \def\thmt@refnamewithcomma #1#2#3,#4,#5\@nil{%
     \@xa\def\csname\thmt@envname #1utorefname\endcsname{#3}%
     \ifcsname #2refname\endcsname
       \csname #2refname\expandafter\endcsname\expandafter{\thmt@envname}{#3}{#4}%
     \fi
   }
\declaretheorem[name=Theorem,parent=section]{theorem}
\declaretheorem[name=Lemma,sibling=theorem]{lemma}
\declaretheorem[name=Claim,sibling=theorem,Refname={Claim,Claims}]{claim}
\declaretheorem[name=Corollary,sibling=theorem]{corollary}
\declaretheorem[name=Proposition,sibling=theorem]{proposition}
\declaretheorem[name=Definition,style=definition,sibling=theorem]{definition}
\declaretheorem[name=Construction,style=definition,sibling=theorem,
Refname={Construction,Constructions}]{construction}
\declaretheorem[name=Algorithm,sibling=theorem]{algorithm}
\Crefname{section}{Section}{Sections}
\Crefname{inequality}{Inequality}{Inequalities}
\or \boolean{detectedThesis} \or 
\or \boolean{detectedToC}}    
\def\SetTime{\hours=\time
\global\divide\hours by 60
\minutes=\hours
\multiply\minutes by 60
\advance\minutes by-\time
\global\multiply\minutes by-1 }
\def\now{\number\hours:\ifnum\minutes<10 0\fi\number\minutes}
\newcommand{\formuladots}{\cdots}
\DeclareFontFamily{OT1}{pzc}{}
\DeclareFontShape{OT1}{pzc}{m}{it}{<-> s * [1.200] pzcmi7t}{}
\DeclareMathAlphabet{\mathpzc}{OT1}{pzc}{m}{it}
\newcommand{\deriveswithall}%
        {\vdash_{\!\!\!{\scriptscriptstyle \forall}}} 
\newcommand{\notderiveswithall}%
        {\nvdash_{\!\!\!{\scriptscriptstyle \forall}}} 
\newcommand{\clcfgtransitioncrammed}[2]%
        {\ensuremath{#1 \!\rightsquigarrow\! #2}}
\newcommand{\fstd}{{\ensuremath{F}}}
\newcommand{\varx}{\ensuremath{x}}
\newcommand{\lita}{\ensuremath{a}}
\newcommand{\clc}{\ensuremath{C}}
\newcommand{\setsofvarsorlitlarge}[2]%
        {\mathit{#1}\left({#2}\right)}
\newcommand{\setsofvarsorlit}[2]%
        {\mathit{#1}({#2})}
\newcommand{\setsofvarsorlitcompact}[2]%
        {\mathit{#1}\bigl({#2}\bigr)}
\newcommand{\setsofvarsorlitsup}[3]%
        {\mathit{#1}^{#2}({#3})}
\newcommand{\setsofvarsorlitsuplarge}[3]%
        {\mathit{#1}^{#2}\left({#3}\right)}
\newcommand{\setsofvarsorlitsupcompact}[3]%
        {\mathit{#1}^{#2}\bigl({#3}\bigr)}
\newcommand{\restrict}[2]{#1\!\!\upharpoonright_{#2}}
\newcommand{\derivabbrev}[2]{\bigl( #1 \vdash #2 \bigr)}
\newcommand{\derivabbrevsmall}[2]{( #1 \vdash #2 )}
\newcommand{\derivabbrevcompact}[2]{\bigl( #1 \vdash #2 \bigr)}
\newcommand{\refutabbrevsmall}[1]{\derivabbrevsmall{#1}{\falsenum}}
\newcommand{\refutabbrevcompact}[1]{\derivabbrevcompact{#1}{\falsenum}}
\renewcommand{\refutabbrevsmall}[1]{\derivabbrevsmall{#1}{\!\emptycl}}
\renewcommand{\refutabbrevcompact}[1]{\derivabbrevcompact{#1}{\!\emptycl}}
\renewcommand{\refutabbrevsmall}[1]{\derivabbrevsmall{#1}{\!\bot}}
\renewcommand{\refutabbrevcompact}[1]{\derivabbrevcompact{#1}{\!\bot}}
\newcommand{\genericrefsmall}[3]%
    {{\mathit{#1}}_{#2}\refutabbrevsmall{#3}}
\newcommand{\genericrefcompact}[3]%
    {{\mathit{#1}}_{#2}\refutabbrevcompact{#3}}
\newcommand{\genericderiv}[4]%
    {{\mathit{#1}}_{#2}\derivabbrev{#3}{#4}}
\newcommand{\genericderivsmall}[4]%
    {{\mathit{#1}}_{#2}\derivabbrevsmall{#3}{#4}}
\newcommand{\genericderivcompact}[4]%
    {{\mathit{#1}}_{#2}\derivabbrevcompact{#3}{#4}}
\newcommand{\generictaut}[3]%
    {{\mathit{#1}}_{#2}\derivabbrev{}{#3}}
\newcommand{\generictautcompact}[3]%
    {{\mathit{#1}}_{#2}\derivabbrevcompact{}{#3}}
\newcommand{\generictautsmall}[3]%
    {{\mathit{#1}}_{#2}\derivabbrevsmall{}{#3}}
\newcommand{\formulaformat}[1]{\ensuremath{\mathit{#1}}}
\renewcommand{\formulaformat}[1]{\mathit{#1}}
\newcommand{\formatconfiguration}[1]{\mathbb{#1}}
\newcommand{\formatpebblingstrategy}[1]{\mathcal{#1}}
\newcommand{\transitionarrow}{\rightsquigarrow}
\newcommand{\pebcfgtransition}[2]%
    {\ensuremath{#1 \transitionarrow #2}}
\newcommand{\pebcfgtransitionsqueeze}[2]%
    {#1 \! \transitionarrow \! #2}
\newcommand{\pebbling}[1][P]{\formatpebblingstrategy{#1}}
\newcommand{\pconf}[1][P]{\formatconfiguration{#1}}
\newcommand{\formatpebblingprice}[1]{\textsl{\textsf{#1}}}
\newcommand{\Pebblingprice}[1]%
    {\formatpebblingprice{Peb}\bigl(#1\bigr)}
\newcommand{\pebblingpricecompact}[1]%
    {\formatpebblingprice{Peb}\bigl(#1\bigr)}
\newcommand{\Bwpebblingprice}[1]%
    {\formatpebblingprice{BW-Peb}\bigl(#1\bigr)}
\newcommand{\bwpebblingpricecompact}[1]%
    {\formatpebblingprice{BW-Peb}\bigl(#1\bigr)}
\newcommand{\pebpersistentsymbol}{\bullet}
\newcommand{\pebvisitingsymbol}{\emptyset}
\newcommand{\bwpebpricepersistent}[1]%
    {\formatpebblingprice{BW-Peb}^{\pebpersistentsymbol}(#1)}
\newcommand{\Bwpebpricepersistent}[1]%
    {\formatpebblingprice{BW-Peb}^{\pebpersistentsymbol}\bigl(#1\bigr)}
\newcommand{\bwpebpricevisiting}[1]%
    {\formatpebblingprice{BW-Peb}^{\pebvisitingsymbol}(#1)}
\newcommand{\Bwpebpricevisiting}[1]%
    {\formatpebblingprice{BW-Peb}^{\pebvisitingsymbol}\bigl(#1\bigr)}
\newcommand{\pebpricepersistent}[1]%
    {\formatpebblingprice{Peb}^{\pebpersistentsymbol}(#1)}
\newcommand{\Pebpricepersistent}[1]%
    {\formatpebblingprice{Peb}^{\pebpersistentsymbol}\bigl(#1\bigr)}
\newcommand{\pebpricevisiting}[1]%
    {\formatpebblingprice{Peb}^{\pebvisitingsymbol}(#1)}
\newcommand{\Pebpricevisiting}[1]%
    {\formatpebblingprice{Peb}^{\pebvisitingsymbol}\bigl(#1\bigr)}
\newcommand{\bwpebblingpriceempty}[1]%
    {\formatpebblingprice{BW-Peb}^{\pebvisitingsymbol}(#1)}
\newcommand{\bwpebblingpriceemptycompact}[1]%
    {\formatpebblingprice{BW-Peb}^{\pebvisitingsymbol}\bigl(#1\bigr)}
\newcommand{\stoptime}{\tau}
\newcommand{\pebspace}[1]{\formatpebblingprice{space} ( #1 )}
\newcommand{\pebtime}[1]{\formatpebblingprice{time} ( #1 )}
\newcommand{\pebdeg}{\ensuremath{d}}
\newcommand{\pebaxcompact}[2]%
        [\pebdeg]{\ensuremath{\formulaformat{Ax}^{#1} \bigl(#2 \bigr)}}
\newcommand{\pqrxvar}[6]%
    {\ensuremath{\stdnot{\varx({#1})}_{#2} \lor \stdnot{\varx({#3})}_{#4} \lor %
    \sourceclausexvar[#6]{#5}}}
\newcommand{\pqr}[6]%
    {\ensuremath{\stdnot{#1}_{#2} \lor \stdnot{#3}_{#4} \lor %
    \sourceclausenodisplay[#6]{#5}}}
\newcommand{\pqrstd}{\pqr{p}{i}{q}{j}{r}{l}}
\newcommand{\pqrall}[6]%
        {\setdescrcompact
        {\pqr{#1}{#2}{#3}{#4}{#5}{#6}}{#2,#4 \in \intnfirst{\pebdeg}}}
\newcommand{\pqrallstd}%
        {\setdescrcompact{\pqrstd}{i,j \in \intnfirst{\pebdeg}}}
\newcommand{\sourceclausexvar}[2][n]%
        {\Lor_{#1 = 1}^{\pebdeg} \varx({#2})_{#1}}
\newcommand{\subsourceclausexvar}[3][n]%
        {\Lor_{#1 = {#2}}^{\pebdeg} \varx({#3})_{#1}}
\newcommand{\sourceclausexvarnodisplay}[2][n]%
        {\textstyle \Lor_{#1 = 1}^{\pebdeg} \varx({#2})_{#1}}
\newcommand{\sourceclausenodisplay}[2][n]%
        {\textstyle \Lor_{#1 = 1}^{\pebdeg} #2_{#1}}
\newcommand{\relativisation}[1]%
    {\ensuremath{\formulaformat{Rel}\bigl(#1 \bigr)}}
\newcommand{\extPHPnot}[2]
    {\ensuremath{\extendedversion{\formulaformat{PHP}}^{#1}_{#2}}}
\newcommand{\GraphOntoPHPnot}[1][G]%
    {\text{$\formulaformat{Onto}$-$\formulaformat{PHP}$}_{#1}}
\renewcommand{\extPHPnot}[2]%
    {\ephpnot{#1}{#2}}
\newcommand{\ephpnot}[2]%
    {\vphantom{\extendedversion{\formulaformat{PHP}}}
      {\smash{\extendedversion{\formulaformat{PHP}}}
        \vphantom{\formulaformat{PHP}}}^{#1}_{#2}}
\newcommand{\efphpnot}[2]%
    {\vphantom{\extendedversion{\formulaformat{FPHP}}}
      {\smash{\extendedversion{\formulaformat{FPHP}}}
        \vphantom{\formulaformat{FPHP}}}^{#1}_{#2}}
\newcommand{\ontophpnot}[2]%
    {\formulaformat{Onto}\text{-}\formulaformat{PHP}^{#1}_{#2}}
\newcommand{\ontofphpnot}[2]%
    {\formulaformat{Onto}\text{-}\formulaformat{FPHP}^{#1}_{#2}}
\newcommand{\extendedversion}[1]{\widetilde{#1}}
\newcommand{\formatfunctiontosubconfiguration}[1]{\mathsf{#1}}
\newcommand{\formatfunctiontomulti}[1]{\mathcal{#1}}
\newcommand{\pebconditional}{conditional\xspace}
\newcommand{\pebunconditional}{unconditional\xspace}
\newcommand{\pebcomplete}{complete\xspace}
\DeclareMathOperator{\dummystar}{*}
\newcommand{\pebblingcontrNT}[2][G]%
 {\ensuremath{\dummystar\!\!\formulaformat{Peb}^{#2}_{#1}}}
\newcommand{\somenodetrueclausedeg}[2]{\formulaformat{All}_{#1}^{+}({#2})}
\newcommand{\slashedstrickenletter}[1]{{\backslash\mkern-9mu #1}}
\newcommand{\strikethroughcommand}[1]{\slashedstrickenletter{#1}}
\newcommand{\abovevertices}[2][G]%
    {{#1}_{#2}^{\hspace{-0.2 pt}\triangledown}}
\newcommand{\aboveverticesNR}[2][G]%
    {{#1}_{\strikethroughcommand{#2}}^{\hspace{-0.3 pt}\triangledown}}
\newcommand{\belowvertices}[2][G]%
    {{#1}^{#2}_{\hspace{-0.6 pt}\vartriangle}}
\newcommand{\belowverticesNR}[2][G]%
    {{#1}^{\strikethroughcommand{#2}}_{\hspace{-0.6 pt}\vartriangle}}
\newcommand{\lpebblingpricecompact}[1]%
    {\formatpebblingprice{L-Peb}\bigl(#1\bigr)}
\newcommand{\scnot}[2]{#1 \langle #2 \rangle}
\newcommand{\scnotcompact}[2]{#1 \bigl\langle #2 \bigr\rangle}
\newcommand{\spcanonconfcompact}[1]%
        {\formatfunctiontosubconfiguration{canon}\bigl({#1}\bigr)}
\newcommand{\spprojsubsub}[4]%
    {\formatfunctiontosubconfiguration{proj}_{\scnot{#1}{#2}}(\scnot{#3}{#4})}
\newcommand{\spprojsubsubcompact}[4]%
    {\formatfunctiontosubconfiguration{proj}_{\scnot{#1}{#2}}%
    \bigl(\scnot{#3}{#4}\bigr)}
\newcommand{\spprojsubconf}[3]%
    {\formatfunctiontosubconfiguration{proj}_{\scnot{#1}{#2}}({#3})}
\newcommand{\spprojsubconfcompact}[3]%
    {\formatfunctiontosubconfiguration{proj}_{\scnot{#1}{#2}}\bigl({#3}\bigr)}
\newcommand{\spprojconfsub}[3]%
    {\formatfunctiontosubconfiguration{proj}_{#1}(\scnot{#2}{#3})}
\newcommand{\spprojconfsubcompact}[3]%
    {\formatfunctiontosubconfiguration{proj}_{#1}\bigl(\scnot{#2}{#3}\bigr)}
\newcommand{\spprojconfconf}[2]%
    {\formatfunctiontosubconfiguration{proj}_{#1}({#2})}
\newcommand{\spprojconfconfcompact}[2]%
    {\formatfunctiontosubconfiguration{proj}_{#1}\bigl({#2}\bigr)}
\newcommand{\spclossubcompact}[2]%
        {\formatfunctiontoset{cl}\bigl(\scnotcompact{#1}{#2}\bigr)}
\newcommand{\spintersubcompact}[2]%
        {\formatfunctiontoset{int}\bigl(\scnotcompact{#1}{#2}\bigr)}
\newcommand{\spcoversubcompact}[2]%
        {\formatfunctiontoset{cover}\bigl(\scnotcompact{#1}{#2}\bigr)}
\newcommand{\spcoverconfcompact}[1]%
        {\formatfunctiontoset{cover}\bigl({#1}\bigr)}
\newcommand{\spinducedblack}[1]%
    {\formatfunctiontoset{Bl} (#1)}
\newcommand{\spinducedwhite}[1]%
    {\formatfunctiontoset{Wh} (#1)}
\newcommand{\spinducedblackcompact}[1]%
    {\formatfunctiontoset{Bl} \bigl(#1 \bigr)}
\newcommand{\spinducedwhitecompact}[1]%
    {\formatfunctiontoset{Wh} \bigl(#1 \bigr)}
\newcommand{\pathclausedeg}[2][\pebdeg]%
    {\somenodetrueclausedeg[#1]{\vertexpath{#2}}}
\newcommand{\pathclauseNRdeg}[2][\pebdeg]%
    {\somenodetrueclausedeg[#1]{\vertexpathNR{#2}}}
\newcommand{\blacktruthdegexplicit}[4]%
        {\setdescrcompact
        {{\textstyle \Lor_{#2 = 1}^{#3} {#1}_{#2}}}
        {{#1} \in {#4}}}
\newcommand{\binsubtree}[1]{T^{#1}}
\newcommand{\vertexpath}[1]{{P}^{#1}}
\newcommand{\vertexpathNR}[1]{{P}_{*}^{#1}}
\newcommand{\unrelatedNP}[1]%
        {T \setminus \bigl(\binsubtree{#1} \unionSP \vertexpath{#1} \bigr)}
\newcommand{\unrelatedsmallNP}[1]%
        {T \setminus (\binsubtree{#1} \unionSP \vertexpath{#1} )}
\newcommand{\abovelevelblockerminsizecompact}%
    [2]{L_{\succeq{#1}}\bigl({#2}\bigr)}
\newcommand{\necessaryhidingvert}[2]%
{{#1}{\scriptstyle{\llfloor {#2} \rrfloor}}}
\newcommand{\Klawepropertyprefix}{Limited hiding-cardinality\xspace}
\newcommand{\klawepropacronym}{LHC property\xspace}
\newcommand{\nongenklaweprop}%
{non-generalized \Klawepropertyprefix property\xspace}
\newcommand{\nongenklawepropacronym}%
{non-generalized \klawepropacronym}
\newcommand{\nongenklawepropacronymWithParam}%
{(non-generalized) \klawepropacronym}
\newcommand{\siblingnonreachabiblitypropertynoref}%
{Sibling non-reachability property\xspace}
\newcommand{\Siblingnonreachabiblitypropertynoref}%
{Sibling non-reachability property\xspace}
\newcommand{\siblingnonreachabiblityproperty}%
{\siblingnonreachabiblitypropertynoref~%
\ref{property:sibling-non-reachability-property}\xspace}
\newcommand{\Siblingnonreachabiblityproperty}%
{\Siblingnonreachabiblitypropertynoref~%
\ref{property:sibling-non-reachability-property}\xspace}
\newcommand{\introducetermanmpctext}%
    {a \introduceterm{\mpctext{}}\xspace}
\newcommand{\introducetermamultipebblingtext}%
  {a \introduceterm{\multipebblingtext{}}\xspace}
\newcommand{\blobpebblingtext}{blob-pebbling\xspace}
\newcommand{\multipebblingtext}{\blobpebblingtext}
\newcommand{\mpcostblack}[1]%
        {\formatpebblingprice{cost}_{\mpcblacks}( #1 )}
\newcommand{\mpcostwhite}[1]%
        {\formatpebblingprice{cost}_{\mpcwhites}( #1 )}
\newcommand{\blobpebblingpricecompact}[1]%
    {\formatpebblingprice{Blob-Peb}\bigl(#1\bigr)}
\newcommand{\multipebblingpricecompact}[1]%
    {\formatpebblingprice{Blob-Peb}\bigl(#1\bigr)}
\newcommand{\mpcblacks}{\formatfunctiontomulti{B}}
\newcommand{\mpcwhites}{\formatfunctiontomulti{W}}
\newcommand{\mpscnotcompact}[2]%
        {\big[ {#1} \big] \bigl\langle {#2} \bigr\rangle}
\newcommand{\mpctext}{\blobpebblingtext con\-fig\-u\-ra\-tion\xspace}
\newcommand{\chargeablevertices}[1]%
{\formatfunctiontoset{chargeable}({#1}) }
\newcommand{\chargeableverticescompact}[1]%
{\formatfunctiontoset{chargeable}\bigl({#1}\bigr) }
\newcommand{\blackschargedfor}[1][]%
    {\mpcblacks_{#1}}
\newcommand{\whiteschargedfor}[1][]%
    {\mpcwhites_{#1}^{\hspace{-0.3 pt}\vartriangle}}
\newcommand{\whitesbelowjustblocked}%
    {\mpcwhites_{B}^{\hspace{-0.3 pt}\vartriangle}}
\newcommand{\whitesbelowhidden}%
    {\mpcwhites_{H}^{\hspace{-0.3 pt}\vartriangle}}
\newcommand{\whitestight}%
    {\mpcwhites_{T}^{\hspace{-0.3 pt}\vartriangle}}
\newcommand{\logstar}{\log^*}
\newcommand{\pconfbefore}{\pconf}
\newcommand{\pconfafter}{\pconf'}
\newcommand{\blowupformatting}[1]{\mathsf{#1}}
\renewcommand{\blowupformatting}[1]{\mathfrak{#1}}
\renewcommand{\blowupformatting}[1]{\mathcal{#1}}
\newcommand{\pathnvert}{\ell}
\newcommand{\pebsp}{s}
\newcommand{\pebsepconst}{K}
\newcommand{\pebshiftconst}{K_p}
\newcommand{\timet}{t}
\newcommand{\pstandard}{standard\xspace}
\newcommand{\Pstandard}{Standard\xspace}
\newcommand{\singlesinkdagtext}{single-sink DAG\xspace}
\newcommand{\graphstd}{G}
\newcommand{\refclaim}[1]{Claim~\ref{#1}}
\newcommand{\refcon}[1]{Construction~\ref{#1}}
\newcommand{\efficiently}{polytime\xspace} %
\renewcommand{\efficiently}{polynomial-time\xspace} %
\renewcommand{\defeq}{\mathrel{\mathop{:}}=}
\renewcommand{\th}{^{\text{\tiny th}}}
\newcommand{\where}{:} %
\renewcommand{\funcdescr}[3]{\ensuremath{ #1 \colon #2 \to #3}}
\renewcommand{\true}{\textsc{True}\xspace}
\renewcommand{\false}{\textsc{False}\xspace}
\renewcommand{\true}{\textsc{true}\xspace}
\renewcommand{\false}{\textsc{false}\xspace}
\newcommand{\tvundef}{\ensuremath{*}\xspace} %
\newcommand{\ib}[1]{\ibleft#1\ibright} 
\newcommand{\ibleft}{\llbracket}
\newcommand{\ibright}{\rrbracket}
\newcommand{\intersect}{\cap}
\newcommand{\disjunion}{\amalg}
\renewcommand{\disjunion}{\disjointunion}
\DeclareMathOperator{\symmdiff}{\triangle}
\newcommand{\pyrg}{g}
\newcommand{\pyrx}{\Delta}
\newcommand{\mold}{M} %
\newcommand{\inmold}{_{\mathrm{in}}} %
\newcommand{\outmold}{_{\mathrm{out}}} %
\newcommand{\anymold}{_{\mathrm{any}}} %
\newcommand{\outmoldproj}{\projectionFunc\outmold}
\newcommand{\anymoldproj}{\projectionFunc\anymold}
\newcommand{\outmoldprojTime}[1]{\outmoldproj(\pconf_{#1}')}
\newcommand{\anymoldprojTime}[1]{\anymoldproj(\pconf_{#1}')}
\newcommand{\projectionFunc}{\formatfunctiontoset{proj}}
\newcommand{\projection}[1]{\projectionFunc(#1)}
\newcommand{\sink}{z} %
\newcommand{\propancnode}{\pancnode}
\newcommand{\DAGNamePath}{P}
\newcommand{\interedges}[1][]{E^{#1}_{\mathrm{inter}}}
\newcommand{\intraedges}[1][]{E^{#1}_{\mathrm{intra}}}
\newcommand{\visitprice}[2][]{\formatpebblingprice{Visit}_{#1}(#2)} 
\newcommand{\surroundprice}[2][]{\formatpebblingprice{Surr}_{#1}(#2)}
\newcommand{\persistentprice}[2][]{\formatpebblingprice{Per}_{#1}(#2)}
\newcommand{\blackprice}[2][]{\formatpebblingprice{Peb}_{#1}(#2)}
\newcommand{\dtprice}[2][]{\formatpebblingprice{DT}_{#1}(#2)}
\newcommand{\rmprice}[2][]{\formatpebblingprice{RM}_{#1}(#2)}
\renewcommand{\visitprice}[2][]{\formatpebblingprice{RPeb}^V_{#1}%
  \ifthenelse{\equal{#1}{}}{\!}{}(#2)} 
\renewcommand{\surroundprice}[2][]{\formatpebblingprice{RPeb}^S_{#1}%
  \ifthenelse{\equal{#1}{}}{\!}{}(#2)} 
\renewcommand{\persistentprice}[2][]{\formatpebblingprice{RPeb}_{#1}(#2)} 
\renewcommand{\blackprice}[2][]{\formatpebblingprice{Peb}_{#1}(#2)}
\renewcommand{\dtprice}[2][]{\formatpebblingprice{DT}_{#1}(#2)}
\renewcommand{\rmprice}[2][]{\formatpebblingprice{RM}_{#1}(#2)}
\newcommand{\Visitprice}[2][]{\formatpebblingprice{RPeb}^V_{#1}%
  \ifthenelse{\equal{#1}{}}{\!}{}\bigl(#2\bigr)} 
\newcommand{\Surroundprice}[2][]{\formatpebblingprice{RPeb}^S_{#1}%
  \ifthenelse{\equal{#1}{}}{\!}{}\bigl(#2\bigr)} 
\newcommand{\Persistentprice}[2][]{\formatpebblingprice{RPeb}_{#1}\bigl(#2\bigr)}
\newcommand{\dtnames}{Dymond--Tompa\xspace}
\newcommand{\dtgametext}{\dtnames game\xspace}
\newcommand{\dtpricetext}{\dtnames price\xspace}
\newcommand{\dtnrounds}{r}
\newcommand{\dtpebbler}{Pebbler\xspace}
\newcommand{\dtchallenger}{Challenger\xspace}
\newcommand{\dtjump}{jump\xspace}
\newcommand{\dtjumps}{jumps\xspace}
\newcommand{\dtjumping}{jumping\xspace}
\newcommand{\dtstay}{stay\xspace}
\newcommand{\dtstays}{stays\xspace}
\newcommand{\dtstaying}{staying\xspace}
\newcommand{\rmnames}{Raz--McKenzie\xspace}
\newcommand{\rmgametext}{\rmnames game\xspace}
\newcommand{\rmpricetext}{\rmnames price\xspace}
\newcommand{\rmnrounds}{r}
\newcommand{\rmpebbler}{Pebbler\xspace}
\newcommand{\rmcolourer}{Colourer\xspace}
\newcommand{\rmred}{red\xspace}
\newcommand{\rmblue}{blue\xspace}
\newcommand{\sP}{\pebbling} %
\newcommand{\pike}{road\xspace}
\newcommand{\Pike}{Road\xspace}
\newcommand{\semiopen}{semiopen\xspace}
\newcommand{\highway}{turnpike\xspace}
\newcommand{\Highway}{Turnpike\xspace}
\newcommand{\highwaynot}{T}
\newcommand{\literal}{\ell}
\newcommand{\variable}{x}
\newcommand{\clause}{C}
\newcommand{\CNF}{\Gamma}
\newcommand{\quantifier}{Q}
\newcommand{\QBF}{\phi}
\newcommand{\partassign}{\rho} %
\newcommand{\gadget}{G} %
\newcommand{\gadgetConst}[1]{\gadget(#1)} %
\newcommand{\condition}{q} %
\newcommand{\bank}{r} %
\newcommand{\Literal}{L} %
\newcommand{\literalConst}[2]{\Literal_{#1}(#2)} %
\newcommand{\canonical}{S} %
\newcommand{\Conj}{\Lambda} %
\newcommand{\conjConst}[3]{\Conj_{#1}(#2,#3)} %
\newcommand{\ConjConst}[3]{\Conj_{#1}\bigl(#2,#3\bigr)} %
\newcommand{\Region}{\check R} %
\newcommand{\perlocked}{p-locked\xspace} %
\newcommand{\vislocked}{v-locked\xspace} %
\newcommand{\srclocked}{s-locked\xspace} %
\newcommand{\perlockedfull}{persistent-locked\xspace} %
\newcommand{\vislockedfull}{visiting-locked\xspace} %
\newcommand{\blowup}{\blowupformatting{R}} %
\newcommand{\blowupConst}[2]{\blowup(#1,#2)} %
\newcommand{\outblowup}{_{\mathrm{out}}} %
\newcommand{\extblowup}{_{\mathrm{ext}}} %
\newcommand{\intblowup}{_{\mathrm{int}}} %
\newcommand{\anyblowup}{_{\mathrm{any}}} %
\newcommand{\block}[1]{^{#1}}
\newcommand{\pebblingBlock}[1]{\pebbling\block{#1}}
\newcommand{\pconfBlock}[1]{\pconf\block{#1}}
\newcommand{\pconfBlowUp}{\pconf'}
\newcommand{\pebblingBlowUp}{\pebbling'}
\newcommand{\pconfTime}[1]{\pconf(#1)}
\newcommand{\pconfBlockTime}[2]{\pconfBlock{#1}(#2)}
\newcommand{\outproj}[2]{\projectionFunc\block{#1}\outblowup(#2)}
\newcommand{\extproj}[2]{\projectionFunc\block{#1}\extblowup(#2)}
\newcommand{\intproj}[2]{\projectionFunc\block{#1}\intblowup(#2)}
\newcommand{\anyproj}[2]{\projectionFunc\block{#1}\anyblowup(#2)}
\newcommand{\perproj}[1]{\projectionFunc^P(#1)}
\newcommand{\perprojText}{p-projection\xspace}
\newcommand{\visproj}[1]{\projectionFunc^V(#1)}
\newcommand{\visprojText}{v-projection\xspace}
\newcommand{\pebClosure}[1]{\formatfunctiontoset{clos}(#1)}
\newcommand{\pconfB}{\pconf'}
\newcommand{\pconfBTime}[1]{\pconfB_{#1}}
\newcommand{\pconfBTimeFunc}[1]{#1}
\newcommand{\pconfBlocksTime}[1]{%
  \{ \pconfBlockTime{v_1}{#1} \}_{v_1 \in V(G_1)}}
\newcommand{\pebblingBlocks}{%
  \{ \pebblingBlock{v_1} \}_{v_1 \in V(G_1)}}
\newcommand{\unsurrounded}[2]{%
  \formatfunctiontoset{unsur}_{#1}(#2)}
\newcommand{\Unsurrounded}[2]{%
  \formatfunctiontoset{unsur}_{#1}\bigl(#2\bigr)}
\newcommand{\RoundInvariant}[3]{%
  #1 \union \allowbreak \Bigl( \VisitPart{#1}{#2}{#3} \Bigr)}
\newcommand{\IntInvariant}[2]{%
  \RoundInvariant{%
    \outproj{#1}{#2}}{%
    \intproj{#1}{#2}}{%
    G_2}%
}
\newcommand{\IntInvariantFunc}[2]{%
  \formatfunctiontoset{I}\block{#1}\bigl(#2\bigr)%
}
\newcommand{\IntInvariantSurroundedFunc}[2]{%
  \IntInvariantFunc{#1}{\pconfBTimeFunc{#2}}%
}
\newcommand{\IntInvariantUnsurroundedFunc}[2]{%
  \IntInvariantFunc{#1}{\pebClosure{\pconfBTimeFunc{#2}}}%
}
\newcommand{\ExtInvariant}[1]{%
  \RoundInvariant{\perproj{#1}}{\visproj{#1}}{G_1}%
}
\newcommand{\extInvariantFunc}[1]{%
  \formatfunctiontoset{Ext}(#1)%
}
\newcommand{\visitPart}[3]{%
  #2 \intersect \allowbreak \unsurrounded{#3}{#1}}
\newcommand{\VisitPart}[3]{%
  #2 \intersect \allowbreak \Unsurrounded{#3}{#1}}
\newcommand{\VisitPartInterior}[2]{%
  \VisitPart{%
    \outproj{#1}{#2}}{%
    \intproj{#1}{#2}}{%
    G_2}}
\newcommand{\VisitPartInteriorTime}[2]{%
  \VisitPartInterior{#1}{\pconfBTimeFunc{#2}}%
}
\newcommand{\visitPartExterior}[1]{%
  \visitPart{\perproj{#1}}{\visproj{#1}}{G_1}%
}
\newcommand{\UnsurroundedOutput}[2]{%
  \Unsurrounded{G_2}{\outproj{#1}{\pconfBTimeFunc{#2}}}}
\newcommand{\toPebble}{T_+}
\newcommand{\toUnpebble}{T_-}
\newcommand{\preToPebble}{D_+}
\newcommand{\preToUnpebble}{D_-}
\newcommand{\prePebbleUnpebble}{D}
\newcommand{\timeBottleneck}{b}
\newcommand{\pconfBottleneck}{\pconf_{\timeBottleneck}}
\newcommand{\pconfBlowUpBottleneck}{\pconf_{\beta}'}
\newcommand{\intmappingBlock}[1]{\formatfunctiontoset{Int}\block{#1}}
\newcommand{\Reasonablepebbling}{Reasonable pebbling\xspace}
\newcommand{\reasonablepebbling}{reasonable pebbling\xspace}
\newcommand{\blowupNaiveConst}[2]{\blowupformatting{N}(#1,#2)} 
\newcommand{\blowupExpensive}{\blowupformatting{T}}
\newcommand{\blowupExpensiveConst}[2]{\blowupformatting{T}(#1,#2)} 
\newcommand{\pebblingBlowUpExpensive}{\pebbling^\blowupExpensive}
\newcommand{\blowupBlack}{\blowupformatting{S}} %
\newcommand{\pconfBlowUpBlack}{\pconf^{\blowupBlack}}
\newcommand{\pebblingBlowUpBlack}{\pebbling^\blowupBlack}
\newcommand{\pebblingchunk}[1]{\pebbling^{[#1]}}
\newcommand{\blowupBlackConst}[2]{\blowupBlack(#1,#2)} %
\newcommand{\pointygraph}[1]{\hat{#1}}
\newcommand{\theauthorSMC}{the first author\xspace}
\newcommand{\TheauthorSMC}{The first author\xspace}
\newcommand{\TheauthorJN}{The third author\xspace}
\newtheoremstyle{metacommenttheoremstyle}%
    {3pt}%
    {3pt}%
    {\sffamily \itshape \scriptsize
    }%
    {}%
    {\bfseries \scshape \footnotesize }%
    {:}%
    { }%
    {}%
\theoremstyle{metacommenttheoremstyle}
\newtheorem{jncommentcontainer}{Jakob's comment}
\newtheorem{mlcommentcontainer}{Massimo's comment}
\newtheorem{smccommentcontainer}{Siu Man's comment}
\newtheorem{mvcommentcontainer}{Marc's comment}
  \newcommand{\jncomment}[1]%
  {\begin{jncommentcontainer} \textcolor{blue}{#1} \end{jncommentcontainer}}
  \newcommand{\mlcomment}[1]%
  {\begin{mlcommentcontainer} \textcolor{OliveGreen}{#1} \end{mlcommentcontainer}}
  \newcommand{\smccomment}[1]%
  {\begin{smccommentcontainer} \textcolor{magenta}{#1} \end{smccommentcontainer}}
  \newcommand{\mvcomment}[1]%
  {\begin{mvcommentcontainer} \textcolor{orange}{#1} \end{mvcommentcontainer}}
  \newcommand{\jncomment}[1]{}
  \newcommand{\mlcomment}[1]{}
  \newcommand{\smccomment}[1]{}
  \newcommand{\mvcomment}[1]{}
\tikzset{%
  Node/.style={draw,circle,thick},
  Arc/.style={draw,->,thick},
  Bank/.style={draw,rectangle},
  Dashed/.style={draw,rectangle,dashed,rounded corners},
  Cra/.style={draw,<-,thick},
  Attachment/.style={draw,-,thick},
  Mold/.style={draw,
    shape border rotate=30,
    regular polygon,
    regular polygon sides=3},
}
\newcommand{\GadgetBankShift}{.7}
\newcommand{\MoldLegendOpt}[6]{%
  \node[Mold,#4={0cm of #2},label={#5:$#3$},#6] (#1-#2) {};
}
\newcommand{\MoldLegend}[3]{%
  \MoldLegendOpt{#1}{#2}{#3}{left}{center}{}
}
\newcommand{\GadgetAttachment}[3]{%
  \node[Node] (#1) at (0,0) {};
  \node[Node] (#2) at (0,1.5) {};
  
  \node[label={below:$#1$}] at (#1) {};
  \node[label={below:$#2$}] at (#2) {};
  \MoldLegend{#1}{#2}{#3}
  \draw[Attachment] (#1) to (#1-#2);
}
\newcommand{\GadgetAttachmentUp}[3]{%
  \node[Node] (#1) at (0,0) {};
  \node[Node] (#2) at (0,1.5) {};
  
  \node[label={right:$#1$}] at (#1) {};
  \node[label={right:$#2$}] at (#2) {};
  \MoldLegendOpt{#1}{#2}{#3}{below}{center}{shape border rotate=0}
  \draw[Attachment] (#1) to (#1-#2);
}
\newcommand{\GadgetLiteralFilled}[5]{%
  \node[Node,fill=#4] (#2') at (0,0) {};
  \node[Node,fill=#5] (#2) at (0,1) {};

  \node[label={#3:$#1'$}] at (#2') {};
  \node[label={#3:$#1$ }] at (#2) {};

  \draw[Arc] (#2') to (#2);
}
\newcommand{\GadgetLiteralWithSourcesFilled}[6]{%
  \GadgetLiteralFilled{#1}{#2}{#4}{#5}{#6}

  \node[Bank] (s#2) at (0,-\GadgetBankShift) {$#3$};
  \draw (s#2.north west) to (#2'.south west);
  \draw (s#2.north east) to (#2'.south east);
}
\newcommand{\GadgetLiteralWithSources}[4]{%
  \GadgetLiteralWithSourcesFilled{#1}{#2}{#3}{#4}{none}{none}
}
\newcommand{\GadgetVariableWithSourcesFilled}[4]{%
  \begin{scope}[xshift=-.5cm]
    \GadgetLiteralWithSourcesFilled{\variable_{#1}}{\variable#1}{#2}{left}{#3}{#4}
  \end{scope}
  \begin{scope}[xshift=.5cm]
    \GadgetLiteralWithSourcesFilled{\bar\variable_{#1}}{b\variable#1}{#2}{right}{#4}{#3}
  \end{scope}
}
\newcommand{\GadgetVariableWithSources}[2]{%
  \GadgetVariableWithSourcesFilled{#1}{#2}{none}{none}
}
\newcommand{\GadgetVariableWithSourcesFalse}[2]{%
  \GadgetVariableWithSourcesFilled{#1}{#2}{black}{none}
}
\newcommand{\GadgetVariableWithSourcesTrue}[2]{%
  \GadgetVariableWithSourcesFilled{#1}{#2}{none}{black}
}
\newcommand{\GadgetThreeOR}[1]{%
  \node[Node] (a#1) at (-1,0) {};
  \node[Node] (b#1) at ( 0,0) {};
  \node[Node] (c#1) at ( 1,0) {};
  \node[Node] (u#1) at (-.5,1) {};
  \node[Node] (v#1) at (0.5,1) {};
  \node[Node] (p#1) at ( 0,2) {};

  \node[label={below:$a_{#1}$}] at (a#1) {};
  \node[label={below:$b_{#1}$}] at (b#1) {};
  \node[label={below:$c_{#1}$}] at (c#1) {};
  \node[label={below:$u_{#1}$}] at (u#1) {};
  \node[label={below:$v_{#1}$}] at (v#1) {};
  \node[label={below:$p_{#1}$}] at (p#1) {};

  \foreach \u / \v in {a#1/u#1, b#1/u#1, b#1/v#1,
    c#1/v#1, u#1/p#1, v#1/p#1} {%
    \draw[Arc] (\u) to (\v);
  }
}
\newcommand{\GadgetClauseLiterals}[2]{%
  \node[Node] (l#11') at (-1,-3) {};
  \node[Node] (l#12') at (0,-3) {};
  \node[Node] (l#13') at (1,-3) {};
  \node[Node] (l#11) at (-1,-2) {};
  \node[Node] (l#12) at (0,-2) {};
  \node[Node] (l#13) at (1,-2) {};

  \node[label={below:$\literal_{#1,1}'$}] at (l#11') {};
  \node[label={below:$\literal_{#1,2}'$}] at (l#12') {};
  \node[label={below:$\literal_{#1,3}'$}] at (l#13') {};
  \node[label={below:$\literal_{#1,1}$}] at (l#11) {};
  \node[label={below:$\literal_{#1,2}$}] at (l#12) {};
  \node[label={below:$\literal_{#1,3}$}] at (l#13) {};

  \foreach \u / \v in {l#11'/l#11, l#12'/l#12, l#13'/l#13} {%
    \draw[Arc] (\u) to (\v);
  }

  \foreach \u / \v in {l#11/a#1, l#12/b#1, l#13/c#1} {%
    \MoldLegendOpt{\u}{\v}{#2}{left}{center}{minimum size=.9cm}
    \draw[Attachment] (\u) to (\u-\v);
  }
}
\newcommand{\GadgetClause}[2]{%
  \GadgetThreeOR{#1}
  \GadgetClauseLiterals{#1}{#2}
}
\newcommand{\GadgetConjunction}[3]{%
  \node[Node] (t1) at (-.7, 0) {};
  \node[Node] (t2) at (0.7, 0) {};
  \node[Node] (d1) at (-.7, 2) {};
  \node[Node] (d2) at (0.7, 2) {};
  \node[Node] (d3) at (  0, 3) {};
  \node[Node] (d4) at (  0, 5) {};
  \node[Node] (e) at (  0, 6) {};

  \node[label={below:$\sink_1$}] at (t1) {};
  \node[label={below:$\sink_2$}] at (t2) {};
  \node[label={below:$d_1$}] at (d1) {};
  \node[label={below:$d_2$}] at (d2) {};
  \node[label={below:$d_3$}] at (d3) {};
  \node[label={below:$d_4$}] at (d4) {};
  \node[label={below:$e$}] at (e) {};

  \foreach \u / \v in {d1/d3, d2/d3, d4/e} {%
    \draw[Arc] (\u) to (\v);
  }

  \foreach \u / \v /\w in {t1/d1/#1, t2/d2/#1-1, d3/d4/#1-2} {%
    \MoldLegendOpt{\u}{\v}{\w}{left}{#2}{#3}
    \draw[Attachment] (\u) to (\u-\v);
  }
}
\newcommand{\GadgetExistentialQuantifier}[2]{%
  \GadgetVariableWithSources{#1}{#2}

  \node[Node] (f#1) at (-2,2) {};
  \node[label={left:$f_{#1}$}] at (f#1) {};

  \node[Node] (g#1) at (0,2) {};
  \node[label={right:$g_{#1}$}] at (g#1) {};

  \node[Node] (q#1) at (-2,3) {};
  \node[label={left:$q_{#1}$}] at (q#1) {};

  \foreach \u / \v in {\variable#1/g#1, b\variable#1/g#1, f#1/q#1, g#1/q#1} {%
    \draw[Arc] (\u) to (\v);
  }
}
\newcommand{\GadgetExistentialQuantifierConnection}[5]{%
  \MoldLegendOpt{q#2}{f#1}{#3}{below}{#4}{shape border rotate=0,#5}
  \draw[Attachment] (q#2) to (q#2-f#1);
}
\newcommand{\GadgetExistentialQuantifierSingle}[6]{%
  \GadgetExistentialQuantifier{#1}{#3}
  \node[Node] (q#2) at (-2,-.5) {};
  \node[label={left:$q_{#2}$}] at (q#2) {};

  \GadgetExistentialQuantifierConnection{#1}{#2}{#4}{#5}{#6}
}
\newcommand{\GadgetUniversalQuantifier}[6]{%
  \GadgetVariableWithSources{#1}{#2}

  \node[Node] (f#1') at (-.5,2) {};
  \node[label={#6:$f_{#1}'$}] at (f#1') {};

  \node[Node] (bf#1') at (0.5,2) {};
  \node[label={#6:$\bar f_{#1}'$}] at (bf#1') {};

  \node[Node] (f#1) at (-.5,4) {};
  \node[label={#6:$f_{#1}$}] at (f#1) {};

  \node[Node] (bf#1) at (0.5,4) {};
  \node[label={#6:$\bar f_{#1}$}] at (bf#1) {};

  \node[Node] (g#1) at (-2.5,3) {};
  \node[label={left:$g_{#1}$}] at (g#1) {};

  \node[Node] (bg#1) at (-1.5,3) {};
  \node[label={left:$\bar g_{#1}$}] at (bg#1) {};

  \node[Node] (h#1) at (-2.5,4) {};
  \node[label={left:$h_{#1}$}] at (h#1) {};

  \node[Node] (bh#1) at (-1.5,4) {};
  \node[label={left:$\bar h_{#1}$}] at (bh#1) {};

  \node[Node] (q#1) at (-2,5) {};
  \node[label={left:$q_{#1}$}] at (q#1) {};

  \foreach \u / \v in {\variable#1/f#1', b\variable#1'/f#1', g#1/h#1,
    b\variable#1/bf#1', \variable#1'/bf#1', bg#1/bh#1,
    h#1/q#1, bh#1/q#1} {%
    \draw[Arc] (\u) to (\v);
  }

  \foreach \u / \v in {f#1/h#1, bf#1/bh#1} {%
    \draw[Arc,bend right=40] (\u) to (\v);
  }

  \foreach \u / \v in {f#1'/f#1, bf#1'/bf#1} {%
    \MoldLegendOpt{\u}{\v}{#3}{below}{#4}{shape border rotate=0,#5}
    \draw[Attachment] (\u) to (\u-\v);
  }
}
\newcommand{\GadgetUniversalQuantifierStd}[5]{
  \GadgetUniversalQuantifier{#1}{#2}{#3}{#4}{#5}{right}
}
\newcommand{\GadgetUniversalQuantifierConnection}[5]{%
  \MoldLegendOpt{q#2}{g#1}{#3}{below}{#4}{shape border rotate=0,#5}
  \draw[Attachment] (q#2) to (q#2-g#1);

  \MoldLegendOpt{q#2}{bg#1}{#3}{below}{#4}{shape border rotate=0,#5}
  \draw[Attachment] (q#2) to (q#2-bg#1);
}
\newcommand{\GadgetUniversalQuantifierSingle}[7]{%
  \GadgetUniversalQuantifierStd{#1}{#3}{#4}{#6}{#7}
  \node[Node] (q#2) at (-2,-.5) {};
  \node[label={left:$q_{#2}$}] at (q#2) {};

  \GadgetUniversalQuantifierConnection{#1}{#2}{#5}{#6}{#7}
}
\newcommand{\GadgetClauseConnection}[8]{%
  \foreach \u / \v in {#5\variable#2/a#1, #6\variable#3/b#1, #7\variable#4/c#1} {%
    \MoldLegendOpt{\u}{\v}{#8}{left}{center}{minimum size=.9cm}
    \draw[Attachment] (\u) to [out=120,in=-90,out looseness=1.5,in looseness=0.5] (\u-\v.west);
  }
}
\newcommand{\GadgetConjunctionX}[6]{%
  \node[Node] (d#11) at (-.7, 2.5) {};
  \node[Node] (d#12) at (0.7, 2.5) {};
  \node[Node] (d#13) at (  0, 3.5) {};
  \node[Node] (d#14) at (  0, 5) {};
  \node[Node] (e#1) at (  0, 6) {};

  \node[label={above right:$d_{#1,1}$}] at (d#11) {};
  \node[label={above right:$d_{#1,2}$}] at (d#12) {};
  \node[label={above right:$d_{#1,3}$}] at (d#13) {};
  \node[label={above right:$d_{#1,4}$}] at (d#14) {};
  \node[label={above right:$e_#1$}] at (e#1) {};

  \foreach \u / \v in {d#11/d#13, d#12/d#13, d#14/e#1} {%
    \draw[Arc] (\u) to (\v);
  }

  \foreach \u / \v /\w in {e#2/d#11/#4, p#3/d#12/#4-1, d#13/d#14/#4-2} {%
    \pgfmathparse{\w}
    \MoldLegendOpt{\u}{\v}{\pgfmathprintnumber{\pgfmathresult}}{above left}{#5}{shape border rotate=105,#6}
    \draw[Attachment] (\u.south east) to [out=90,in=-90,looseness=1.5] (\u-\v.north west);
  }
}
\newcommand{\ExampleQBF}{
  \path[use as bounding box] (-4, -3) rectangle (12, 16);
  \begin{scope}
    \GadgetUniversalQuantifier{1}{15}{12}{center}{minimum size=.9cm}{left}
  \end{scope}
  \begin{scope}[yshift=6cm]
    \GadgetExistentialQuantifier{2}{19}{center}{minimum size=.9cm}
    \GadgetExistentialQuantifierConnection{2}{1}{16}{center}{minimum size=.9cm}
  \end{scope}
  \begin{scope}[yshift=10cm]
    \GadgetUniversalQuantifier{3}{23}{20}{center}{minimum size=.9cm}{left}
    \GadgetUniversalQuantifierConnection{3}{2}{19}{center}{minimum size=.9cm}
  \end{scope}
  \begin{scope}[xshift=5cm,yshift=12cm,rotate=-90]
    \GadgetThreeOR{1}
    \GadgetClauseConnection{1}{3}{2}{1}{}{}{}{2}
  \end{scope}
  \begin{scope}[xshift=5cm,yshift=8cm,rotate=-90]
    \GadgetThreeOR{2}
    \GadgetClauseConnection{2}{3}{2}{1}{}{}{b}{4}
  \end{scope}
  \begin{scope}[xshift=5cm,yshift=4cm,rotate=-90]
    \GadgetThreeOR{3}
    \GadgetClauseConnection{3}{3}{2}{1}{b}{b}{b}{6}
  \end{scope}
  \begin{scope}[xshift=8cm,yshift=13cm]
    \node[Bank] (s0) at (0,\GadgetBankShift) {7};
    \node[Node] (e0) at (0,0) {};
    \draw (s0.south west) to (e0.north west);
    \draw (s0.south east) to (e0.north east);
  \end{scope}
  \begin{scope}[xshift=7cm,yshift=13cm,rotate=-135]
    \GadgetConjunctionX{1}{0}{1}{2}{center}{minimum size=.9cm}
  \end{scope}
  \begin{scope}[xshift=7cm,yshift=9cm,rotate=-135]
    \GadgetConjunctionX{2}{1}{2}{4}{center}{minimum size=.9cm}
  \end{scope}
  \begin{scope}[xshift=7cm,yshift=5cm,rotate=-135]
    \GadgetConjunctionX{3}{2}{3}{6}{center}{minimum size=.9cm}
  \end{scope}
  \MoldLegendOpt{e3}{g1}{11}{below}{center}{shape border rotate=0, minimum size=.9cm};
  \draw[Attachment] (e3) to [out=180,in=-90] (-2,0) to [out=90,in=-90] (e3-g1);
  \MoldLegendOpt{e3}{bg1}{11}{below}{center}{shape border rotate=0, minimum size=.9cm};
  \draw[Attachment] (e3) to [out=180,in=-90] (-2,0) to [out=90,in=-90] (e3-bg1);
}
\newcommand{\ExampleMolding}{
  \node[Node] (i1) at (-1,0) {};
  \node[Node] (o1) at (0,0) {};
  \node[Node] (i2) at (-1,1) {};
  \node[Node] (o2) at (0,1) {};
  \node[Node] (i3) at (1,0.5) {};
  \node[Node] (o3) at (2,0.5) {};
  \node[Node] (s) at (-2,2) {};
  
  \node[label={below:$s$}] at (s) {};

  \foreach \u / \v in {i1/o1, i2/o2, i3/o3, o1/i3, o2/i3} {%
    \draw[Arc] (\u) to (\v);
  }
  \foreach \u / \v in {s/o1, s/o2, s/o3} {%
    \draw[Arc] (\u) to [bend left=30](\v);
  }
}
\newcommand{\BlowUpCell}[2]{
  \node[Node] (e#1#2) at (-.5,0) {};
  \node[Node] (i#1#2) at (.5,0) {};
  \node[Node] (o#1#2) at (0,1) {};
  \foreach \u / \v in {e#1#2/o#1#2, i#1#2/o#1#2} {%
    \draw[Arc] (\u) to (\v);
  }
}
\newcommand{\BlowUpBlock}[1]{
  \begin{scope}[yshift=-2cm]
    \BlowUpCell{#1}{1}
  \end{scope}
  \begin{scope}[xshift=-1.5cm]
    \BlowUpCell{#1}{2}
  \end{scope}
  \begin{scope}[xshift=1.5cm]
    \BlowUpCell{#1}{3}
  \end{scope}
  \begin{scope}[yshift=2cm]
    \BlowUpCell{#1}{4}
  \end{scope}
  \foreach \u / \v in {1/2, 1/3, 2/4, 3/4} {%
    \draw[Arc] (o#1\u) to [bend right=30] (i#1\v);
  }
}
\newcommand{\BlowUpBadBlock}[1]{
  \node[Node] (w#1) at (-.7,0) {};
  \node[Node] (e#1) at (.7,0) {};
  \node[Node] (n#1) at (0,1) {};
  \node[Node] (s#1) at (0,-1) {};
  \foreach \u / \v in {s/w, s/e, w/n, e/n} {%
    \draw[Arc] (\u#1) to [bend right=15] (\v#1);
  }
}
\newcommand{\BlowUpBad}{
\begin{scope}[xshift=-1.5cm]
  \BlowUpBadBlock{1}
\end{scope}
\begin{scope}[xshift=1.5cm]
  \BlowUpBadBlock{2}
\end{scope}
\begin{scope}[yshift=3cm]
  \BlowUpBadBlock{3}
\end{scope}
  \foreach \u / \v in {1/3, 2/3} {
    \draw[Arc] (n\u) to [bend left=30] (s\v);
  }
}
\newcommand{\BlowUpExpensive}{
\begin{scope}[xshift=-1.5cm]
  \BlowUpBadBlock{4}
\end{scope}
\begin{scope}[xshift=1.5cm]
  \BlowUpBadBlock{5}
\end{scope}
\begin{scope}[yshift=3cm]
  \BlowUpBadBlock{6}
\end{scope}
  \draw[Arc] (n4) to [bend left=30] (s6);
  \draw[Arc] (n4) to [bend left=30] (w6);
  \draw[Arc] (n4) to [bend left=30] (e6);
  \draw[Arc] (n4) to [bend left=50] (n6);
  \draw[Arc] (n5) to [bend left=30] (s6);
  \draw[Arc] (n5) to [bend left=50] (w6);
  \draw[Arc] (n5) to [bend left=30] (e6);
  \draw[Arc] (n5) to [bend left=30] (n6);
}
\newcommand{\BlowUpRhombus}{
  \BlowUpBadBlock{0}
}
\newcommand{\BlowUpPyr}{
  \node[Node] (a) at (-.7,0) {};
  \node[Node] (b) at (.7,0) {};
  \node[Node] (c) at (0,1) {};
  \foreach \u in {a,b} {
    \draw[Arc] (\u) to [bend left=15] (c);
  }
}
\newcommand{\BlowUpReversible}{
\begin{scope}[xshift=-3cm]
  \BlowUpBlock{1}
\end{scope}
\begin{scope}[xshift=3cm]
  \BlowUpBlock{2}
\end{scope}
\begin{scope}[yshift=6cm]
  \BlowUpBlock{3}
\end{scope}
  \draw[Arc] (o14) to [bend left=30] (e31);
  \draw[Arc] (o14) to [bend left=30] (e32);
  \draw[Arc] (o14) to [bend left=30] (e33);
  \draw[Arc] (o14) to [bend left=50] (e34);
  \draw[Arc] (o24) to [bend left=30] (e31);
  \draw[Arc] (o24) to [bend left=50] (e32);
  \draw[Arc] (o24) to [bend left=30] (e33);
  \draw[Arc] (o24) to [bend left=30] (e34);
}
\newcommand{\ExampleBlowUp}{
\begin{scope}[xshift=0cm,yshift=7cm]
  \BlowUpRhombus
  \node at (-1.5,0.5) {$G_{2}=$};
\end{scope}
\begin{scope}[xshift=-5.75cm,yshift=7cm]
  \BlowUpPyr
  \node at (-1.5,0.5) {$G_{1}=$};
\end{scope}
\begin{scope}[xshift=-5.75cm]
  \BlowUpBad
  \node at (-0.75,4.75) {$\blowupNaiveConst{G_1}{G_2}=$};
\end{scope}
\begin{scope}[xshift=0cm]
  \BlowUpExpensive
  \node at (-0.75,4.75) {$\blowupExpensiveConst{G_1}{G_2}=$};
\end{scope}
\begin{scope}[xshift=8.5cm]
  \BlowUpReversible
  \node at (-4,8) {$\blowupConst{G_1}{G_2}=$};
\end{scope}
}
\newcommand{\ExampleBlowUpReversible}{
\begin{scope}
  \BlowUpRhombus
  \node at (-2,0) {$G_{2}=$};
\end{scope}
\begin{scope}[yshift=6cm]
  \BlowUpPyr
  \node at (-2,0) {$G_{1}=$};
\end{scope}
\begin{scope}[xshift=10cm]
  \BlowUpReversible
  \node at (-4,8) {$\blowupConst{G_1}{G_2}=$};
\end{scope}
}
\newcommand{\BlowUpBlackBlock}[1]{
  \foreach \i in {0,1,2,3} {
    \node[Node] (p{#1}\i) at (0,\i) {};
  }
  \foreach \i in {1,2,3} {
    \pgfmathtruncatemacro{\prevlayer}{\i-1}
    \node[Node] (b{#1}\i) at (-1,\prevlayer) {};
    \draw[Arc] (b{#1}\i) to (p{#1}\i);  
    \draw[Arc] (p{#1}\prevlayer) to (p{#1}\i);  
  }
  \node[Node] (g{#1}1) at (0,-2) {};
  \node[Node] (g{#1}2) at (-1,-1) {};
  \node[Node] (g{#1}3) at (1,-1) {};
  \draw[Arc] (g{#1}1) to [bend right=15] (g{#1}2);
  \draw[Arc] (g{#1}1) to [bend right=15] (g{#1}3);
  \draw[Arc] (g{#1}2) to [bend right=15] (p{#1}0);
  \draw[Arc] (g{#1}3) to [bend right=15] (p{#1}0);
}
\newcommand{\BlowUpBlack}{
\begin{scope}[xshift=-3cm]
  \BlowUpBlackBlock{1}
\end{scope}
\begin{scope}[xshift=3cm]
  \BlowUpBlackBlock{2}
\end{scope}
\begin{scope}[yshift=6cm]
  \BlowUpBlackBlock{3}
\end{scope}
  \foreach \u / \v in {1/3, 2/3} {%
    \foreach \z in {1,2,3} {%
      \draw[Arc] (p{\u}3) to [bend left=27] (b{\v}\z);
    }
  }
}
\newcommand{\BlowUpBlackExample}{
\begin{scope}
  \BlowUpRhombus
  \node at (-2,0) {$G_{2}=$};
\end{scope}
\begin{scope}[yshift=6cm]
  \BlowUpPyr
  \node at (-2,0) {$G_{1}=$};
\end{scope}
\begin{scope}[xshift=10cm]
  \BlowUpBlack
  \node at (-4,8) {$\blowupBlackConst{G_1}{G_2}=$};
\end{scope}
}
\newcommand{\JoinLayers}[2]{
  \pgfmathtruncatemacro{\nextlayer}{#1+1}
  \foreach \column in {#2} {
    \pgfmathtruncatemacro{\nextcol}{Mod(\column,3)+1}
    \draw[Arc] (x#1\column) to (x\nextlayer\column);
    \draw[Arc] (x#1\nextcol) to (x\nextlayer\column);
  }
}
\newcommand{\ExamplePike}{
  \foreach \layer in {1,2,3,4,5,6,7,8} {
    \foreach \column in {1,2,3} {
      \node[Node] (x\layer\column) at (\column,\layer) {};
    }
  }
  \node[Node] (x91) at (1,9) {};
  \node[Node] (x92) at (2,9) {};
  \node[Node] (x101) at (1,10) {};
  \foreach \layer in {1,2,3,4,5,6,7} {
    \JoinLayers{\layer}{1,2,3}
  }
  \JoinLayers{8}{1,2}
  \JoinLayers{9}{1}
}
\newcommand{\JoinLayersUnbounded}[3]{
  \pgfmathtruncatemacro{\nextlayer}{#1+1}
  \foreach \column in {#2} {
    \foreach \nextcol in {#3} {
      \draw[Arc] (x#1\column) to (x\nextlayer\nextcol);
    }
  }
}
\newcommand{\ExamplePikeUnbounded}{
  \foreach \layer in {1,2,3,4,5,6,7,8,9} {
    \foreach \column in {1,2,3} {
      \node[Node] (x\layer\column) at (\column,\layer) {};
    }
  }
  \node[Node] (x101) at (2,10) {};
  \foreach \layer in {1,2,3,4,5,6,7,8} {
    \JoinLayersUnbounded{\layer}{1,2,3}{1,2,3}
  }
  \JoinLayersUnbounded{9}{1,2,3}{1}
}
\newcommand{\christmaslayer}[4]{%
  \begin{scope}[yshift=-#1cm]
     \draw (0.577*#2,0) -- (-0.577*#2,0) -- (0,#2) -- cycle;
     \node at (0,0.25*#2) {$G_{#3}$};
     \node at (3.5,0.25*#2) {Layer $#4$};
   \end{scope}
}
\newcommand{\ExampleChristmasTree}{
\christmaslayer011\bank
\draw[thin] (0,-0.5) .. controls (0.5,-0.25) .. (0,0);
\draw[thin] (0,-0.5) .. controls (0.5,-0.25) .. (0.25,0);
\draw[thin] (0,-0.5) .. controls (0.5,-0.25) .. (-0.25,0);
\christmaslayer{1.5}11{\bank-1}
\draw[thin] (0,-2) .. controls (-1.5,-1.25) .. (0,0);
\draw[thin] (0,-2) .. controls (-1.5,-1.25) .. (0.25,0);
\draw[thin] (0,-2) .. controls (-1.5,-1.25) .. (-0.25,0);
\draw[thin] (0,-2) .. controls (0.5,-1.75) .. (0,-1.5);
\draw[thin] (0,-2) .. controls (0.5,-1.75) .. (0.25,-1.5);
\draw[thin] (0,-2) .. controls (0.5,-1.75) .. (-0.25,-1.5);
\christmaslayer{3.5}{1.5}2{\bank-2}
\node at (0,-4) {$\vdots$};
\christmaslayer{7.5}{2.5}{\bank-1}1
}
\numberwithin{equation}{section}
\begin{document}

\title{Hardness of Approximation in PSPACE and \\
  Separation Results for Pebble Games}

\author[ ]{Siu Man Chan\!}
\author{Massimo Lauria}
\affil{Sapienza --- Università di Roma, Italy}
\author{Jakob Nordstr\"{o}m}
\affil{University of Copenhagen, Denmark, and Lund University, Sweden}
\author{Marc Vinyals}
\affil{University of Auckland, New Zealand}

\title{%
  Hardness of Approximation in PSPACE and \\
  Separation Results for Pebble Games%
  \thanks{This is the full-length version of the paper with the same
    title that appeared in
    \emph{Proceedings of the 56th IEEE Symposium on Foundations of
                  Computer Science (FOCS~'15)}.}}

\date{\today}

\maketitle

\thispagestyle{empty}

\pagestyle{fancy}
\fancyhead{}
\fancyfoot{}
\renewcommand{\headrulewidth}{0pt}
\renewcommand{\footrulewidth}{0pt}

\fancyhead[CE]{\slshape 
  HARDNESS OF APPROXIMATION IN PSPACE AND SEPARATION FOR PEBBLE GAMES}
\fancyhead[CO]{\slshape \nouppercase{\leftmark}}
\fancyfoot[C]{\thepage}

\setlength{\headheight}{13.6pt}

\begin{abstract}
  We consider the pebble game on DAGs with bounded fan-in introduced
  in [Paterson and Hewitt~'70] and the reversible version of this game
  in [Bennett~'89], and study the question of how hard it is to decide
  exactly or approximately the number of pebbles needed for a
  given DAG in these games.

  We prove that the problem of deciding whether $s$~pebbles suffice to
  reversibly pebble a DAG~$G$ is \PSPACE-complete, as was previously
  shown for the \pstandard pebble game in [Gilbert, Lengauer and Tarjan~'80].  
  Via two different graph product constructions we then strengthen
  these results to establish that both \pstandard and reversible
  pebbling space are \PSPACE-hard to approximate to within any
  additive constant.  To the best of our knowledge, these are the
  first hardness of approximation results for pebble games in an
  unrestricted setting (even for polynomial time). Also, since [Chan~'13]
  proved that reversible pebbling is equivalent to the games in
  [Dymond and Tompa~'85] and [Raz and McKenzie~'99], our results apply
  to the Dymond--Tompa and Raz--McKenzie games as well, and from the
  same paper it follows that resolution depth is \PSPACE-hard to
  determine up to any additive constant.

  We also obtain a multiplicative logarithmic separation between
  reversible and standard pebbling space. This improves on the
  additive logarithmic separation previously known and could plausibly
  be tight, although we are not able to prove this.

  We leave as an interesting open problem whether our additive
  hardness of approximation result could be strengthened to a
  multiplicative bound if the computational resources are decreased
  from polynomial space to the more common setting of polynomial time.
\end{abstract}

\section{Introduction}
\label{sec:intro}

In the \introduceterm{pebble game} first studied by Paterson and
Hewitt~\cite{PH70Comparative}, one starts with an empty directed
acyclic graph (DAG)~$G$ with bounded fan-in (and which in this paper
in addition will always have a single sink) and places pebbles on the
vertices according to the following rules:
\begin{itemize}
\item
  If all (immediate) predecessors of an empty vertex~$v$ contain
  pebbles, a pebble may be placed on~$v$.
  
\item

  A pebble may be removed from any vertex at any time.
\end{itemize}
The goal is to get a pebble on the sink vertex of~$G$ with all other
vertices being empty, and to do so while minimizing the total number
of pebbles on~$G$ at any given time (the 
\introduceterm{pebbling price} 
of~$G$).
This game models computations with execution 
independent of the actual 
input. A pebble 
on a vertex indicates that the corresponding value is currently kept
in memory and the objective is to perform the computation with the
minimum amount of memory.

The pebble game has been used to study
flowcharts and recursive schemata~\cite{PH70Comparative},
register allocation~\cite{Sethi75CompleteRegisterAllocation},
time and space as Turing-machine
resources~\cite{Cook74ObservationTimeStorageTradeOff,HPV77TimeVsSpace}, 
and algorithmic time-space trade-offs
\cite{Chandra73Efficient, 
  SS77SpaceTimeFFT, 
  SS79SpaceTimeOblivious,
  SS83SpaceTimeLinear,
  Tompa78TimeSpaceComputing}.
In the last 10--15 years, there has been a renewed interest in pebbling in
the context of proof complexity as discussed in the 
survey~\cite{Nordstrom13SurveyLMCS}
(although in this context one is often interested also in the slightly
more general \introduceterm{black-white pebble game} introduced
in~\cite{CS76Storage}), and pebbling has also turned out to be useful for
applications in cryptography~\cite{DNW05Pebbling,AS15HighParallel}.   
An excellent overview of pebbling up to ca.\ 1980 is given 
in~\cite{Pippenger80Pebbling} and some more recent developments are
covered in the upcoming survey~\cite{Nordstrom09PebblingSurveyFTTCS}.

Bennett~\cite{Bennett89TimeSpaceReversible} introduced the
\introduceterm{reversible pebble game} as part of a broader
program~\cite{Bennett73LogicalReversibility} to investigate
possibilities to eliminate (or significantly reduce) energy
dissipation in logical computation.  Another reason reversible
computation is of interest is that observation-free quantum
computation is inherently reversible.  In the reversible pebble game,
the moves performed in reverse order should also constitute a legal
pebbling, which means that the rules for pebble placement and removal
become symmetric as follows:
\begin{itemize}

\item
  If all predecessors of an
  empty vertex~$v$   contain pebbles,
  a pebble may  be placed on~$v$. 

\item
  If all predecessors of a pebbled vertex~$v$ 
  contain pebbles, 
  the pebble on~$v$ may be removed.
  
\end{itemize}
Reversible pebblings 
of DAGs 
have been studied in
\cite{LV96Reversibility,Kralovic04TimeSpaceReversible}
and have been employed to shed light on 
time-space trade-offs in
reversible simulation of irreversible computation in
\cite{LTV98Reversible,LMT00Reversible,Williams00Reversible,BTV01Reversible}.  
In a different line of work
Potechin~\cite{Potechin10Monotone} 
implicitly used the reversible pebble game for proving lower bounds on monotone
space complexity, with the connection made
explicit in the follow-up works \cite{CP14Monotone,FPRC13Average}.

Another pebble game on DAGs that will be of interest in this paper is the 
\introduceterm{\dtgametext{}}~\cite{DT85Speedups}
played on a DAG~$G$ by a
\introduceterm{\dtpebbler{}}
and a 
\introduceterm{\dtchallenger{}}.
This game is played in rounds, with both players starting at the sink
in the first round.
In the following rounds, \dtpebbler places a pebble on some vertex
of~$G$ after which 
\dtchallenger either stays at the current vertex or moves to the newly
pebbled vertex. This repeats until at the end of a round \dtchallenger
is standing on a vertex with all (immediate) predecessors pebbled (or
on a source, in which case the condition vacuously holds),
at which point the game ends.
Intuitively, \dtchallenger is challenging \dtpebbler to ``catch me if
you can'' and wants to play for as many rounds as possible, whereas
\dtpebbler wants to ``surround'' \dtchallenger as quickly as possible.
The
\introduceterm{\dtpricetext{}}
of~$G$
is the smallest number~$\dtnrounds$
such that \dtpebbler can always finish the game in at
most~$\dtnrounds$ rounds.
The \dtgametext has been used to establish that for parallel time a
speed-up by a logarithmic factor is always possible~\cite{DT85Speedups},
and  in~\cite{VT89NewPebbleGame} it was shown that 
a slightly modified variant of this game exactly characterizes
parallelism in complexity classes like $\complclassformat{AC}^i$,
$\NC$, and~$\Pclass$,
and can be used to re-derive, for instance, Savitch's theorem.
Furthermore, collapses or separations of these classes can in
principle be recast (or discovered) as bounds on \dtpricetext{}.
Interestingly, this characterization of parallelism 
extends to proof complexity as well as discussed
in~\cite{Chan13JustAPebble}. 

A final game with pebbles that we want to just mention 
without going into any details
is the 
\introduceterm{\rmgametext{}} introduced in~\cite{RM99Separation}
to study the depth complexity of decision
trees solving search problems.
The reason for bringing up the
\dtnames and \rmgametext{}s is that it was shown 
in~\cite{Chan13JustAPebble}
that both games are actually equivalent to the reversible
pebble game. Hence, any bounds derived for the reversible pebble
game also hold for
\dtpricetext and \rmpricetext.

The main focus of this paper is to study how hard it is to decide
exactly or approximately the pebbling
price of a DAG. For the \pstandard pebble game
Gilbert \etal~\cite{GLT80PebblingProblemComplete}
showed that given a DAG~$G$ and a positive integer~$\pebsp$ it is
\PSPACE-complete to determine whether space~$\pebsp$ 
is sufficient to pebble~$G$ or not.
It would seem natural to suspect that reversible pebbling price should
be \PSPACE-complete as well, but the construction
in~\cite{GLT80PebblingProblemComplete} cannot be used to show this.

Given that pebbling price is hard to determine exactly, an even more
interesting question is if anything can be said regarding the hardness
of approximating pebbling price. Although this seems like a very
natural and appealing question, apparently next to nothing has been
known about this.

Wu \etal~\cite{WAPL14Inapproximability}
showed that ``one-shot'' \pstandard pebbling price is hard to
approximate to within any multiplicative constant assuming the
so-called
Small Set Expansion (SSE) hypothesis.
In a one-shot pebbling one is only allowed to pebble each vertex once,
however, and this is a major restriction since the complexity now drops
from \PSPACE-complete to 
\mbox{\NP-complete} \cite{Sethi75CompleteRegisterAllocation}.
Note that containment in \NP is easy to see since any one-shot
pebbling can be described concisely just by listing the order in which
the vertices should be pebbled (and it is easy to compute when a pebble is
no longer needed and can be removed). In contrast, in the general case
pebbling strategies that are optimal with respect to space 
can sometimes provably
require exponential time.

One can also go in the other direction and study more general pebble
games, such as the AND/OR pebble game introduced by
Lingas~\cite{Lingas78PSPACE} in one of the works leading up
to~\cite{GLT80PebblingProblemComplete}. 
Here every vertex is labelled AND or OR. For AND-vertices we have
the usual pebbling rule, but for
OR-vertices it is sufficient to just have one pebble on some
predecessor in order to be allowed to pebble the vertex.
This game has a relatively straightforward reduction from hitting
set~\cite{FNPW10Unpublished}, which shows that it is hard to
approximate to within a logarithmic factor, but the reduction
crucially depends on the OR-nodes.

We remark that hardness of approximation in \PSPACE for other problems
has been studied in~\cite{CFLS95Probabilistically}, but those
techniques seem hard to adapt to pebble games since the reduction from
QBF to pebbling is inherently unable to preserve gaps. 

Another problem that we study 
in the current paper
is the relation between \pstandard
pebbling price and reversible pebbling price.  Clearly, the space
needed to reversibly pebble a graph is at least the space required in the
\pstandard pebble game. It is also not hard to see that there are
graphs that require strictly more pebbles in a reversible setting: for
a directed path on $n$~vertices only $2$~pebbles are needed in the
\pstandard game, while it is relatively straightforward to show that
the reversible pebbling space is
$\bigtheta{\log n}$~\cite{Bennett89TimeSpaceReversible,LV96Reversibility}.
However, for ``classic'' graphs studied in the pebbling literature,
such as binary trees, pyramids, certain superconcentrators, and the
worst-case graphs in~\cite{PTC76SpaceBounds},
the reversible and \pstandard pebbling prices coincide asymptotically,
and are sometimes markedly closer than an additive logarithm apart.

This raises the question whether reversible and \pstandard pebbling
can be asymptotically separated with respect to space.  It might be
worth pointing out in this context that for Turing machines
it was proven in~\cite{LMT00Reversible} that any computation can be
simulated reversibly in exactly the same space. 
In the more restricted pebbling model,  it was shown
in~\cite{Kralovic04TimeSpaceReversible} 
that if the \pstandard pebbling price of a DAG~$G$ on $n$~vertices
is~$\pebsp$, then $G$ can be reversibly pebbled with at most
$\pebsp^2 \log n$~pebbles. 
Thus, if there is not only an additive but also a multiplicative
separation between \pstandard and reversible  pebbling price, such a
separation cannot be too large.

\subsection{Our Results}

We obtain the following results:
\begin{enumerate}
\item 
  We establish an asymptotic separation between \pstandard and
  reversible pebbling by exhibiting
  families of graphs
  $\set{G_n}_{n=1}^{\infty}$ of size~$\bigtheta{n}$
  with a single sink and fan-in~$2$ 
  which have standard pebbling price~$\pebsp(n)$
  and reversible pebbling price $\bigomega{\pebsp(n) \log n}$. 
  This construction works for
  any $\pebsp(n) = \Bigoh{n^{1/2 - \epsilon}}$
  with $\epsilon > 0$ constant,
  where the constant hidden in the asymptotic notation in the lower
  bound has a (mild)   dependence on~$\epsilon$. 

\item 
  \label{item:pspace-hardness-result}
  We prove that determining reversible pebbling price
  is \PSPACE-complete. That is, 
  given a single-sink DAG~$G$ of fan-in~$2$ and a parameter~$\pebsp$, it
  is \PSPACE-complete to decide whether $G$ can be reversibly
  pebbled in space~$\pebsp$ or not.

\item 
  Finally, we present two different graph products
  (for \pstandard and reversible pebbling, respectively)
  that take DAGs $G_i$ of size~$n_i$
  with pebbling price~$\pebsp_i$ for $i=1,2$
  and yield  a DAG
  of size~$\Bigoh{(n_1 + n_2)^2}$
  with pebbling price
  $s_1 + s_2 + \pebshiftconst$
  (for $\pebshiftconst = \pm 1$ depending on the flavour of the pebble
  game).
  Combining these graph products with the \PSPACE-completeness results
  for \pstandard pebbling in~\cite{GLT80PebblingProblemComplete} and
  reversible pebbling in  \refitem{item:pspace-hardness-result},
  we deduce that
  for any fixed~$\pebsepconst$ 
  the promise problem of deciding for a DAG $G$ (with a single sink
  and fan-in~$2$) whether it can be pebbled in space~$\pebsp$ or
  requires space~$\pebsp + \pebsepconst$ is \PSPACE-hard in both the
  \pstandard and the reversible pebble game.
\end{enumerate}

We need to provide more formal definitions before going into a
detailed discussion of techniques, but want to stress right away that
a key feature of the above results is the bounded fan-in
condition. This is the standard setting for pebble games in the
literature and is also crucial in most of the applications mentioned
above.  Without this constraint it would be much easier, but also much
less interesting, to prove our
results.\footnote{The reason to emphasize this is that for
  unbounded fan-in \theauthorSMC proved a \PSPACE-completeness result
  for reversible pebbling in~\cite{Chan13Thesis}, but this
  result uses simpler constructions and techniques that do not
  transfer to the bounded fan-in setting. Another, somewhat related,
  example is that deciding space in the black-white pebble game
  has also been shown to be \PSPACE-complete for unbounded indegree
  in~\cite{HP10PspaceCompleteness}, but there the unbounded
  fan-in can be used to eliminate the white pebbles completely, 
  and again the techniques fail to transfer to the bounded indegree
  case.} 

Another aspect worth pointing out is that although the reversible
pebble game is weaker than the 
\pstandard pebble game, it is technically much more challenging to
analyze. The reason for this is that a \pstandard pebbling will always
progress in a ``forward sweep'' through the graph in topological order,
and so one can often assume without loss of generality that once one
has pebbled through some subgraph the pebbling will never touch this
subgraph again.  For a reversible pebbling this is not so,
since any pebble placed on any descendant of vertices in the subgraph
will also have to be removed at some later time, and this has to be
done in reverse topological order. Therefore, in any reversible
pebbling there will be alternating phases of ``forward sweeps''  and
``reverse sweeps,'' and these phases can also be interleaved at
various levels. For this reason, controlling the progress of a reversible
pebbling is substantially more complicated.
Despite the additional technical
difficulties,
however, we consider the reversible pebble game to be at least as
interesting to study as the \pstandard and black-white pebble games
in view of its tight connection with parallelism in circuit and proof
complexity as described in~\cite{Chan13JustAPebble}.

\subsection{Follow-up Work}

Our hardness of approximation result for the standard pebble game was improved
by Demaine and Liu~\cite{DL17Inapproximability}, who proved that it is
$\pspace$-hard to approximate the standard space of a graph of size
$n$ within an additive $n^{1/3-\epsilon}$ term.

\subsection{Organization of This Paper}
\label{sec:intro-org}

\ifthenelse{\boolean{conferenceversion}}
{%

We present the necessary preliminaries in
\refsec{sec:preliminaries}
and then give a detailed overview of our results in
\refsec{sec:overview}.
Due to space constraints we can only sketch the proofs 
in this extended abstract
and therefore refer to the upcoming full-length version for all
missing details.  Some concluding remarks are presented in
\refsec{sec:conclusion}.
}
{%
We present the necessary preliminaries in
\refsec{sec:preliminaries}
and then give a detailed overview of our results in
\refsec{sec:overview}.
We prove an asymptotic separation between \pstandard and reversible pebbling in
\refsec{sec:separation-standard-reversible}.
In \refsec{sec:tight-bounds} we compute the exact price of some classic graphs, trees and pyramids, that we use in \refsec{sec:technical-constructions} to construct technical gadgets. These play a key role in \refsec{sec:pspace-completeness-reversible}, where we show that reversible pebbling is \pspace-complete.
We detail the graph product for reversible pebbling in \refsec{sec:blow-up-reversible} and its counterpart for \pstandard pebbling in \refsec{sec:blow-up-standard}.
Some concluding remarks are presented in
\refsec{sec:conclusion}.
}

\section{Preliminaries}
\label{sec:preliminaries}
 
All logarithms in this paper are base $2$ unless otherwise specified.
For a positive integer~$n$ we write $\intnfirst{n}$ to  denote the set
of integers $\set{1, 2, \ldots, n}$.
We use Iverson bracket notation
\begin{equation}
  \ib{B} =
  \begin{cases}
    1 & \text{if the Boolean expression $B$ is true;} \\
    0 & \text{otherwise;}
  \end{cases}
\end{equation}
to convert Boolean values to integer values.
 
\subsection{Boolean Formula Notation and Terminology}
\label{sec:prelim-boolean-notation}

A \introduceterm{literal}
$\lita$
over a Boolean variable $\varx$ is either
the variable $\varx$ itself or 
its negation~%
$\olnot{\varx}$ 
(a \introduceterm{positive} or \introduceterm{negative} literal,
respectively). A \introduceterm{clause}   
$\clc = \lita_1 \lor \formuladots \lor \lita_{\clwidth}$ 
is a disjunction of literals. 
A \introduceterm{$\clwidth$\nobreakdash-clause} is a clause that
contains at most $\clwidth$~literals. A  formula~$\fstd$ in
\introduceterm{conjunctive normal form (CNF)}
is a conjunction of
clauses
$\fstd = \clc_1 \land \formuladots \land \clc_m$.  
\mbox{A \introduceterm{\kcnfform{}}} is a CNF formula consisting of
\xclause{\clwidth}{}s. We think of clauses and CNF formulas as sets, so that
the order of elements is irrelevant and there are no repetitions.

A
\introduceterm{quantified Boolean formula (QBF)}
is a formula
$\QBF = Q_1 x_1 \, Q_2 x_2 \, \ldots \, Q_n x_n \, \, \fstd$,
where
$\fstd$ is a CNF formula over variables $x_1, \ldots, x_n$
and 
$Q_i \in \set{\forall, \exists}$
are universal or existential quantifiers
(\ie 
the formula is in prenex normal form with all
variables bound by quantifiers).
It was shown in~\cite{SM73WordProblems} that it is \PSPACE-complete to
decide whether a QBF is true or not 
(where we can assume \wolog that $\fstd$ is a \xcnfform{3}).

\subsection{Graph Notation and Terminology}
\label{sec:prelim-graph-notation}

We write
$G = (V,E)$
to denote a graph with vertices~$\vertices{G}=V$ and edges
$\edges{G}=E$. All graphs in this paper are directed acyclic graphs
(DAGs).  
An edge
$(u,v) \in \edges{G}$
is an \introduceterm{outgoing edge} of~$u$ and an
\introduceterm{incoming edge} of~$v$, and we say that $u$ is a
\introduceterm{predecessor} of $v$ and that $v$ is a
\introduceterm{successor} of~$u$.
We write $\prednode[G]{v}$ to denote the set of all predecessors of~$v$
in~$G$ and $\succnode[G]{v}$ to denote all its successors.
Vertices with no incoming edges are called \introduceterm{sources} and
vertices with no outgoing edges are called \introduceterm{sinks}.
For brevity, we will sometimes refer to a DAG with a unique
sink as a 
\introduceterm{\singlesinkdagtext{}},
and this sink will usually be denoted~$\sink$.

Taking the transitive closures of the predecessor and successor
relations, we define the 
\introduceterm{ancestors}
$\ancnode[G]{v}$ of~$v$ to be the set of vertices that have a path to~$v$
and the 
\introduceterm{descendants}
$\descnode[G]{v}$
to be the set of vertices on some path from~$v$.
By convention, $v$ is an ancestor and descendant of itself.
We write
$\pancnode[G]{v} = \ancnode[G]{v} \setminus \set{v}$
and
$\pdescnode[G]{v} = \descnode[G]{v} \setminus \set{v}$
to denote the
\introduceterm{proper ancestors} and 
\introduceterm{proper descendants} of~$v$, respectively.
These concepts are extended to sets of pairwise incomparable
vertices by taking unions so that 
$\ancnode[G]{U} = \Union_{u \in U} \ancnode[G]{u}$,
$\pancnode[G]{U} = \Union_{u \in U} \pancnode[G]{u}$,
et cetera, 
where we say that
the vertices in $U$ are pairwise incomparable when no vertex in the
set is an ancestor of any other vertex in the set.
When the graph~$G$ is clear from context we will sometimes drop it
from the notation.

\subsection{\Pstandard and Reversible Pebble Games}
\label{sec:prelim-pebbling}

A
\introduceterm{pebble configuration}
on a DAG
$G = (V,E)$
is a subset of vertices
$\pconf \subseteq V$.
We consider the following three rules for manipulating pebble
configurations: 
\begin{enumerate}
\item 
  \label{item:placement}
  $\pconfafter=\pconfbefore\unionSP \set{v}$ 
  for
  $v \notin \pconfbefore$
  such that
  $\prednode[G]{v} \subseteq \pconfbefore$
  (a \introduceterm{pebble placement} on~$v$).
\item
  \label{item:removal}
  $\pconfafter = \pconfbefore\setminus \set{v} $ for
  $ v\in \pconfbefore$
  (a \introduceterm{pebble removal} from~$v$).
\item
  \label{item:reversible-removal}
  $\pconfafter = \pconfbefore\setminus \set{v} $ for
  $ v\in \pconfbefore$
  such that
  $\prednode[G]{v} \subseteq \pconfbefore$
  (a \introduceterm{reversible pebble removal} from~$v$).
\end{enumerate}
A
\introduceterm{\pstandard pebbling} 
from
$\pconf_{0}$
to~$\pconf_{\stoptime}$
is a   sequence of pebble configurations
$
\pebbling =
(\pconf_{0}, \pconf_{1}, \ldots, \pconf_{\stoptime})
$
where each configuration is obtained from the preceding one by the 
rules
\ref{item:placement}
and~\ref{item:removal}
while in a 
\introduceterm{reversible pebbling} 
rules
\ref{item:placement}
and~\ref{item:reversible-removal}
should be used.
The \introduceterm{time} of a pebbling
$\pebbling = (\pconf_0, \ldots, \pconf_{\stoptime})$
is $\pebtime{\pebbling} = \stoptime$,
and the \introduceterm{space} is
$\pebspace{\pebbling} =
\maxofexpr[0 \leq t \leq {\stoptime}]{\setsize{\pconf_t}}$.

We say that a pebbling is
\introduceterm{\pebunconditional{}} 
if $\pconf_0 = \emptyset$ and
\introduceterm{\pebconditional{}}
otherwise.
The
\introduceterm{pebbling price}
$\blackprice[G]{\pconf}$
of a pebble configuration~$\pconf$
is the minimum space of any \pebunconditional
\pstandard pebbling on~$G$  ending 
\mbox{in~$\pconf_\stoptime = \pconf$},
and we define the 
\introduceterm{reversible pebbling price}
$\persistentprice[G]{\pconf}$
by taking the minimum over all \pebunconditional reversible pebblings
reaching~$\pconf$.
The pebbling price of a
\singlesinkdagtext $G$ with sink~$\sink$
is
$\blackprice{G} = \blackprice[G]{\set{\sink}}$,
and the reversible pebbling price of~$G$ is
$\persistentprice{G} = \persistentprice[G]{\set{\sink}}$.
We refer to such pebblings as
\introduceterm{(\pebcomplete{}) pebblings of~$G$}
or
\introduceterm{pebbling strategies for~$G$}.
Again, when $G$ is clear from context we can drop it from the
notation, and from now on we will usually abuse notation by omitting
the curly brackets around singleton vertex sets.

For technical reasons, we will often be interested in distinguishing
particular flavours of reversible pebblings.
Suppose that $v$ is a vertex in~$G$ and that
$\pebbling = (\pconf_0 = \emptyset, \pconf_1,\ldots, \pconf_{\stoptime})$
is a reversible pebbling. We will use the
following terminology and notation:
\begin{itemize}
\item 
  $\pebbling$ is a \introduceterm{visiting pebbling} of $v$ if
  $v \in \pconf_{\stoptime}$.
  The 
  \introduceterm{visiting price} 
  $\visitprice{v}$
  of~$v$
  is the minimal space of any such pebbling.
  
\item 
  $\pebbling$ is a 
  \introduceterm{surrounding pebbling} of~$v$ if
  $\prednode{v} \subseteq \pconf_{\stoptime}$
  and the 
  \introduceterm{surrounding price} 
  $\surroundprice{v}$
  is the minimal space of any such pebbling.
  
\item 
  $\pebbling$ is a \introduceterm{persistent pebbling} of $v$ if 
  it is a reversible pebbling of~$v$ in the sense defined before, \ie
  such that
  $P_{\stoptime}=\set{v}$.
  We will sometimes refer to 
  $\persistentprice{v}$
  as the
  \introduceterm{persistent price}
  of~$v$ to distinguish it from the visiting and surrounding prices.
\end{itemize}
We also define the
visiting price for a \singlesinkdagtext~$G$ with sink~$\sink$ as
$\visitprice{G}= \visitprice[G]{\sink}$
and the surrounding price as 
$\surroundprice{G} = \surroundprice[G]{\sink}$. 

Note that because of reversibility we could obtain exactly the same
visiting space measure by defining a visiting pebbling of~$v$ to be a
pebbling
$\pebbling = (\pconf_0, \pconf_1,\ldots, \pconf_{\stoptime})$
such that
$\pconf_0 = \pconf_{\stoptime} = \emptyset$
and
$v \in \Union_{0\leq t \leq \stoptime} \pconf_t$,
and let the visiting price be the minimal space of any such pebbling.
This is because once we have reached a configuration
containing~$v$ we can simply run the pebbling backwards
(because of reversibility) until we reach the empty configuration again.
We can therefore think of a pebbling as \introduceterm{visiting}~$v$ if
there is a pebble on $v$ at some point but this pebble does not stay
on~$v$ until the end of the pebbling.
In a \introduceterm{persistent} pebbling the pebble remains on~$v$ until all
other pebbles have been removed.
A \introduceterm{surrounding} pebbling,
finally, is a pebbling that reaches exactly the point where a pebble
could be placed on~$v$, since all its predecessors are covered by
pebbles (\ie $v$ is ``surrounded'' by pebbles),
but where $v$ is not necessarily pebbled.

It is not hard to see that for a \singlesinkdagtext~$G$ we have the
inequalities 
\begin{equation}
  \label{eq:standard-leq-visiting}
  \blackprice{G} \leq \visitprice{G}
\end{equation}
and
\begin{equation}
  \label{eq:surrounding-leq-visiting-leq-persistent}
  \surroundprice{G} \leq \visitprice{G} \leq \persistentprice{G}
  \eqperiod
\end{equation}
Perhaps slightly less obviously,
we also have the following useful equality.

\begin{proposition}
  \label{lem:surrounding-eq-persistent}
  For any vertex $v$ in a DAG $G$ it holds that
  $\surroundprice{v} = \persistentprice{v} - 1$.
\end{proposition}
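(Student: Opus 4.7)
The plan is to establish the equality by proving both $\persistentprice{v} \leq \surroundprice{v} + 1$ and $\surroundprice{v} \leq \persistentprice{v} - 1$, each by an explicit pebbling construction that converts an optimal pebbling of one flavour into one of the other flavour.

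For the bound $\persistentprice{v} \leq \surroundprice{v} + 1$, I would start with an optimal surrounding pebbling $\pi_S$ of space $\surroundprice{v}$ and first observe that one may assume without loss of generality that $v$ is never pebbled in~$\pi_S$: otherwise, truncating $\pi_S$ at the moment just before $v$ is first placed still yields a surrounding pebbling (since $\prednode{v}$ must already be pebbled at that instant) and does not increase the space. The desired persistent pebbling is then obtained by running $\pi_S$ to its end configuration, placing~$v$, and then running $\pi_S$ in reverse while keeping $v$ pebbled in every configuration. The crucial point is that, because $v$ is never present in $\pi_S$, no vertex $u$ ever placed or removed by $\pi_S$ can have $v \in \prednode{u}$—placing such a $u$ would have required $v$ to be pebbled—so every placement and reversible removal in the reverse pass remains legal when $v$ is added to the configurations. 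The resulting persistent pebbling of $v$ has space at most $\surroundprice{v} + 1$.

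For the bound $\surroundprice{v} \leq \persistentprice{v} - 1$, I would take an optimal persistent pebbling $\pi_P = (\pconf_0, \dots, \pconf_{\stoptime})$ of space $q = \persistentprice{v}$ and first reduce to the case where $\pi_P$ never pebbles a proper descendant of~$v$; this is achieved by replacing each $\pconf_j$ with $\pconf_j \setminus \pdescnode{v}$, which preserves validity because a vertex outside $\pdescnode{v}$ cannot have any predecessor inside $\pdescnode{v}$ (and moves involving vertices in $\pdescnode{v}$ simply collapse to no-ops). Let $t^{\star}$ be the last time $v$ is placed in this cleaned pebbling; after $t^{\star}$ the pebble on~$v$ remains until $\pconf_\stoptime = \set{v}$. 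Reversing this ``clean-up'' segment and deleting $v$ from every configuration produces a sequence from $\emptyset$ to $\pconf_{t^{\star}} \setminus \set{v} = \pconf_{t^{\star}-1} \supseteq \prednode{v}$; each move is a placement or reversible removal of some $u \neq v$ with $u \notin \pdescnode{v}$, so $v \notin \prednode{u}$ and validity is preserved. This is a surrounding pebbling of $v$, and since $v$ was present in every configuration of the clean-up, its space is at most $q - 1$.

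The step requiring most care is the cleaning reduction in the second direction: verifying that intersecting every configuration of $\pi_P$ with the complement of $\pdescnode{v}$ really yields a valid reversible pebbling. Its correctness rests on the acyclicity observation that a vertex which is not a proper descendant of $v$ cannot have a predecessor that is one, which is exactly what is needed for the surviving placements and reversible removals to remain legal. Once this is in place, the constructions in both directions are straightforward manipulations of pebblings together with their reverses.
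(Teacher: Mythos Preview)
Your proof is correct and follows the same two-direction strategy as the paper: extend a surrounding pebbling by placing $v$ and reversing, and conversely take the tail of a persistent pebbling after the last placement of $v$, strip $v$, and reverse. Your treatment is in fact more careful than the paper's---the truncation so that $v$ never appears in $\pi_S$, and especially the cleaning step removing $\pdescnode{v}$ in the second direction, address genuine edge cases (a successor of $v$ touched during clean-up would otherwise break legality once $v$ is deleted) that the paper's terse argument leaves implicit.
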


\ifthenelse{\boolean{conferenceversion}}
{\begin{IEEEproof}}
{\begin{proof}}
  To see that $\persistentprice{v} \leq \surroundprice{v}+1 $
  consider a surrounding pebbling $\pebbling^{S}$ of space
  $\surroundprice{v}$.
  Let~$\pebbling^{*}     $ be the pebbling which
  first runs $\pebbling^{S}$ to surround $v$,
  then puts a pebble on $v$, and
  finally runs the reverse of $\pebbling^{S}$ to ``unsurround'' $v$ (while
  keeping the pebble on~$v$).
  Since $\pebbling^{*}$ is a persistent pebbling of space
  $\surroundprice{v}+1$, the inequality follows.
  
  We now prove that $\surroundprice{v} \leq \persistentprice{v}-1 $.
  Consider a persistent pebbling $\pebbling$ for $v$ of space
  $\persistentprice{v}$.
  Let $\timet$ be the last time that a pebble is put on $v$.
  Then vertex $v$ is surrounded at time $\timet$, and
  there is a pebble on $v$ since time $\timet$.
  Let $\pebbling_{\geq \timet}$ 
  be the \pebconditional pebbling obtained from $\pebbling$
  after
  time $\timet$, with the modification that vertex $v$ has no pebble
  throughout $\pebbling_{\geq \timet}$,
  and let $\pebbling^{R}_{\geq \timet}$ be this pebbling run in reverse.
  Then $\pebbling^{R}_{\geq \timet}$
  is a surrounding pebbling in space at
  most $\persistentprice{v}-1$, and the inequality follows.
\ifthenelse{\boolean{conferenceversion}}
{\end{IEEEproof}}
{\end{proof}}

\subsection{The Dymond--Tompa and Raz--McKenzie Games}
\label{sec:DT-and-RM}

As described above, the \introduceterm{\dtgametext{}} 
on a \singlesinkdagtext~$G$
is played in rounds by  two players 
\introduceterm{\dtpebbler{}} and  \introduceterm{\dtchallenger{}}.
In the first round \dtpebbler places a pebble on the sink~$\sink$
and \dtchallenger challenges this vertex.
In all subsequent rounds, \dtpebbler places a pebble on an arbitrary
empty vertex and \dtchallenger chooses to either challenge this new
vertex 
(which we refer to as
\introduceterm{\dtjumping{}})
or to re-challenge the previously challenged vertex
(referred to as \introduceterm{\dtstaying{}}).  
The game ends when at the end of a round all the (immediate)
predecessors of the currently challenged vertex are covered by 
pebbles.\footnote{We remark that our description 
  follows~\cite{Chan13JustAPebble} and thus
  differs slightly from the original definition in~\cite{DT85Speedups},
  but the two versions are equivalent 
  for all practical purposes.}
The 
\introduceterm{\dtpricetext{}}
$\dtprice{G}$ of~$G$
is the maximal number of pebbles~$\dtnrounds$ needed for \dtpebbler to
finish the game, or expressed differently the smallest
number~$\dtnrounds$ such that \dtpebbler has a strategy to make the
game end in at most~$\dtnrounds$ rounds regardless of how
\dtchallenger plays.

Let us also for completeness describe the
\introduceterm{Raz--McKenzie game},
which is also played on a \singlesinkdagtext~$G$
by  two players
\introduceterm{\rmpebbler{}} and  \introduceterm{\rmcolourer{}}.
In the first round \rmpebbler places a pebble on the sink~$\sink$
and \rmcolourer colours it \rmred.
In all subsequent rounds, \rmpebbler places a pebble on an arbitrary
empty vertex and \rmcolourer then colours this new pebble either \rmred
or \rmblue.
The game ends when there is a vertex with a \rmred pebble, while all
its predecessors in the graph have \rmblue pebbles.
The 
\introduceterm{\rmpricetext{}}
$\rmprice{G}$ of~$G$   
is the smallest number~$\rmnrounds$
such that \rmpebbler has a strategy to make the game end in at
most~$\rmnrounds$ rounds regardless of how \rmcolourer plays.

The intuition for this game is that the vertices on the graphs have
assigned values true (\rmblue) or false (\rmred), with the condition
that each vertex has value equal to the conjunction of the values of
its  predecessors. \rmcolourer claims that the sink is false, but the above
condition vacuously implies that all source vertices must be true.
\rmcolourer loses when \rmpebbler discovers a violation of the condition.
\rmpebbler wants to find the violation as soon as possible, while
\rmcolourer wants to fool \rmpebbler for as long as possible.

In \cite{Chan13JustAPebble}
\theauthorSMC proved that the equalities
\begin{equation}
  \label{eq:chan13-equalities}  
  \dtprice{G} = \rmprice{G} = \persistentprice{G}
\end{equation}
hold for any \singlesinkdagtext~$G$, \ie
that the reversible pebbling price, the \dtpricetext and the
\mbox{\rmpricetext} all coincide.
Thus, any result we prove for one of these games is also
guaranteed to hold for the other games. 
The above equalities are very convenient in that they allow us to
switch back and forth between the reversible pebble game and the
\dtgametext (or \rmgametext) when proving upper and 
lower bounds, depending on which
perspective is more suitable at any given time.
In particular, when proving lower bounds for reversible pebblings it
is often helpful 
to do so by devising good \dtchallenger strategies in the
\dtgametext. One final technical remark in this context is that in all 
such strategies that we construct it holds that \dtchallenger will
either \dtstay or \dtjump to an ancestor of the currently challenged
vertex. Because of this we can assume without loss of generality that
\dtpebbler only pebbles vertices in the subgraph consisting of
ancestors of the currently challenged vertex. If \dtpebbler pebbles
some vertex outside of this subgraph \dtchallenger will just stay put
on the current vertex, and so \dtpebbler just wastes a round.

\section{Overview of Results and Sketches of Proofs}
\label{sec:overview}

In this section we give a detailed overview of our results and also
sketch some of the main ideas in the proofs.
\ifthenelse{\boolean{conferenceversion}}           
{We refer to the upcoming full-length version 
  for all formal proofs and missing technical details.
}
{In the rest of the paper, we then provide all the missing technical
  definitions and present the actual formal proofs.
}

\subsection{Separation Between \Pstandard and Reversible Pebbling}
\label{sec:separation-overview}

As mentioned in \refsec{sec:intro}, the strongest separation hitherto
known between \pstandard and reversible pebbling is for the
length-$\pathnvert$ path on
vertices $\set{v_{1}, v_{2},  \ldots, v_{\pathnvert + 1}}$
with edges
$(v_{i}, v_{i+1})$ for all $i \in \intnfirst{\pathnvert}$,
which has a \pstandard pebbling with $2$~pebbles
whereas reversible pebblings require space
$\bigtheta{\log\ell}$~\cite{Bennett89TimeSpaceReversible,LV96Reversibility}.
We give a simple construction improving this to a multiplicative
logarithmic separation.

\begin{theorem}
  \label{th:separation-standard-reversible-overview}
  For any function $\pebsp(n) = \Bigoh{n^{1/2 - \epsilon}}$
  for $\epsilon > 0$ constant there are DAGs
  $\set{G_n}_{n=1}^{\infty}$ of size~$\bigtheta{n}$
  with a single sink and fan-in~$2$
  such that
  $\blackprice{G} = \bigoh{\pebsp(n)}$
  and 
  $\persistentprice{G} = \bigomega{\pebsp(n) \log n}$
  (where the hidden constant depends linearly on~$\epsilon$). 
\end{theorem}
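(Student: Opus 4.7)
The plan is to build $G_n$ by taking a known graph with matching standard and reversible pebbling prices and then stretching out its edges to blow up only the reversible price. Concretely, I would let $s = \Theta(\pebsp(n))$ and $\ell = \Theta(n/s^2)$, and define $G_n$ as the edge-subdivision of the pyramid $\pyramidgraph[s]$ of height $s$ in which each edge is replaced by a directed path of length $\ell$ (i.e.\ $\ell-1$ fresh degree-$2$ vertices are inserted along each edge). The resulting DAG has a single sink, fan-in~$2$, and $\Theta(s^2 \ell) = \Theta(n)$ vertices. Under the hypothesis $s = \Bigoh{n^{1/2-\epsilon}}$ we have $\ell = \Bigomega{n^{2\epsilon}}$, so $\log\ell = \Bigomega{\epsilon \log n}$, which is where the linear dependence on~$\epsilon$ in the hidden constant enters.

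For the upper bound on $\blackprice{G_n}$, I would simulate the canonical pyramid pebbling, which uses $s+1$ pebbles on $\pyramidgraph[s]$. Whenever the pyramid strategy would traverse an edge $(u,u')$, I replace that single move by a sliding pebble walking down the inserted length-$\ell$ path from $u$ to $u'$, using only two extra pebbles (one anchor plus one sliding pebble; when the walk reaches $u'$, the anchor is released). All other pebbles representing the pyramid configuration remain in place on their endpoints. This gives $\blackprice{G_n} \le s + \Bigoh{1} = \Bigoh{\pebsp(n)}$.

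For the lower bound I would work via the Dymond--Tompa game, appealing to $\persistentprice{G_n} = \dtprice{G_n}$ from \eqref{eq:chan13-equalities}, and give a Challenger strategy achieving $\Omega(s \log \ell)$ rounds. The strategy maintains the invariant that Challenger sits on a pyramid vertex (never on an interior path vertex) and, whenever Pebbler completes the task of surrounding that vertex, Challenger jumps onto the less-pebbled of its two path-predecessors and then descends the length-$\ell$ path into its pyramid-predecessor; inside each path segment Challenger plays the optimal path Challenger strategy, which forces $\Omega(\log \ell)$ Pebbler moves per segment to pin him down, after which Challenger steps off the path onto the lower pyramid endpoint. Composed with the standard Klawe-style pyramid lower bound (which forces Challenger to descend through $\Omega(s)$ pyramid levels while Pebbler's pebbles remain spread across the levels), this produces the multiplicative bound $\Omega(s \log \ell) = \Omega(\pebsp(n) \cdot \epsilon \log n)$.

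The main obstacle will be the lower bound: one has to argue carefully that Challenger can indeed \emph{compose} the two tactics without the savings from one level of the pyramid being reused to pay for the path search at another level. The issue is that Pebbler may pebble path-interior vertices at levels other than the one currently being contested, trying to amortize binary searches across many pyramid edges at once. I would handle this by tracking the argument via a potential function counting (Pebbler pebbles restricted to ancestors of Challenger's position) + (depth of Challenger in the pyramid), and showing that each Pebbler move only decreases the potential by a bounded amount, while terminating the game requires the potential to drop by $\Omega(s \log \ell)$; rounds outside the ancestor subgraph can be discarded as noted in \refsec{sec:DT-and-RM}, which is what justifies the clean bookkeeping.
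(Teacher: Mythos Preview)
Your construction does not achieve the claimed lower bound: the edge-subdivided pyramid of height~$s$ with path-length~$\ell$ has reversible pebbling price only $\Bigoh{s + \log\ell}$, not $\Bigomega{s\log\ell}$. To see the upper bound, persistently pebble the root by the natural recursion: persistently pebble the left subpyramid root~$p_1$, then the length-$\ell$ path from~$p_1$ to the root's left predecessor (costing $\Bigoh{\log\ell}$ extra), then do the same on the right, pebble the root, and reverse. The recursion is $T(h) \le \max\bigl(T(h-1)+3,\; \Bigoh{\log\ell}\bigr)$, which solves to $T(s) = \Bigoh{s + \log\ell}$. The point is that the $\Bigoh{\log\ell}$ pebbles used to traverse one subdivided edge can be completely cleared and \emph{reused} for the next edge, because the path segments hang off different pyramid vertices and are never needed simultaneously. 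Your potential function reflects exactly this additivity: it is bounded by (pebbles on the board)~$+~s$, so it can only certify an $\Bigomega{s + \log\ell}$ lower bound, never the product.

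The paper instead uses \emph{\pike graphs}: $w$~parallel paths of length~$\ell$ with each vertex also wired to its neighbour on the adjacent path (cyclically), capped by a pyramid to make a single sink. The crucial structural difference is that here any minimal blocking set separating the sources from the sink has size at least~$w$ (\reflem{lem:separator-is-large}), so Pebbler cannot disconnect the graph without committing $w$~pebbles at once. The Challenger strategy then bisects on the position of this cut, forcing $\Bigomega{\log(\ell/w)}$ cuts of cost~$w$ each, giving $\Bigomega{w\log(\ell/w)}$. In your subdivided pyramid, by contrast, a single pebble on the one path currently leading to Challenger's vertex already disconnects it from all sources, so the bisection only costs $\Bigoh{1}$ per step---which is precisely why your composition of the pyramid and path Challenger strategies cannot be made to multiply.
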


A first observation is that if we did not have the bounded fan-in
restriction, 
\refth{th:separation-standard-reversible-overview}
would be very easy.
In such a case
we could just take the path of length~$\pathnvert$,
blow up every vertex~$v_{i}$ to $\pebsp$ vertices
$v_{i}^{1}, \ldots, v_{i}^{\pebsp}$,
and add edges
$\bigl( v_{i}^{j} , v_{i+1}^{j'}\bigr)$
for all $j, j' \in \intnfirst{\pebsp}$, 
so that we get a sequence of
complete bipartite graphs~$K_{\pebsp,\pebsp}$
glued together
as shown in \reffig{fig:pike-unbounded-overview}.
It is not hard to show that any reversible pebbling of this DAG would
have to do
$\pebsp$~parallel, synchronized pebblings of the paths
$\Set{ v_{1}^{j},  v_{2}^{j}, \ldots, v_{\pathnvert + 1}^{j}}$
for $j \in \intnfirst{\pebsp}$,
which would require space
$\bigomega{\pebsp \log \pathnvert}$,
whereas a \pstandard pebbling would clearly only need space~$\bigoh{\pebsp}$.

\begin{figure}[tp]
  \begin{center}
    \subcaptionbox{Path blown up to sequence of $K_{3,3}$-graphs.\label{fig:pike-unbounded-overview}}{%
      \begin{tikzpicture}[scale=.75,rotate=-90]
        \ExamplePikeUnbounded
      \end{tikzpicture}
    }
    \qquad
    \subcaptionbox{\Pike graph of length $9$ and width~$3$.\label{fig:pike-overview}}{%
      \begin{tikzpicture}[scale=.75,rotate=-90]
        \ExamplePike
      \end{tikzpicture}
    }
    \caption{Modifications of path graphs to amplify difference between
      reversible and \pstandard pebbling price.}
    \label{fig:path-graphs}
  \end{center}
\end{figure}
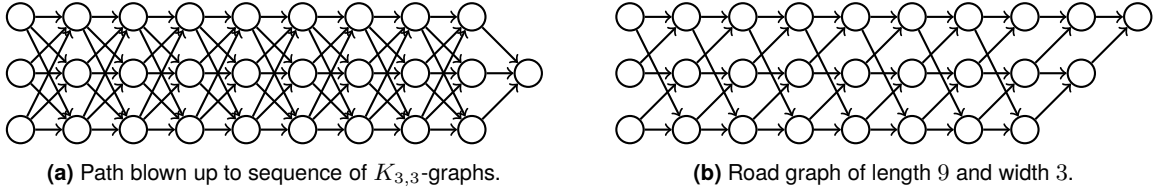

For bounded indegree it is not a priori clear what to do, however, or
indeed whether there should even be a multiplicative separation. But
it turns out that one can actually simulate a lower bound proof along
the same lines as above by considering a
layered
graph as in \reffig{fig:pike-overview}, 
with  $\pebsp$~parallel paths of length up to~$\pathnvert$ and
with every path having an extra edge
fanning out to its ``neighbour path'' above (or at the bottom for the
top row) at each level. We will refer to this construction as a
\introduceterm{\pike graph}
of length~$\pathnvert$ and width~$\pebsp$
(where a path is a 
maximally narrow
\pike of width~$1$).
It is easy to verify that the \pstandard pebbling price of
a \pike of width $\pebsp \geq 2$
is $\pebsp + 2$. We claim that the reversible pebbling price is 
$\Bigomega{\pebsp \log ( \pathnvert / \pebsp)}$, 
from which
\refth{th:separation-standard-reversible-overview}
follows.

To prove the reversible pebbling lower bound it is convenient to think
instead in terms of \dtchallenger strategies in the \dtgametext. The
idea is that \dtchallenger will stay put on the sink until \dtpebbler
has pebbled enough vertices so that there are no pebble-free paths
from any source vertex to the sink.
Intuitively,  the cheapest way for  \dtpebbler to disconnect the
graph is with a straight cut over some layer. 
When this happens, \dtchallenger looks at the latest pebbled vertex
and compares the subgraph between the sources and the cut with the
subgraph between the cut and the sink. If more than half of the graph
is before the cut, \dtchallenger jumps to the latest pebbled
vertex. If not, \dtchallenger stays on the sink. This strategy is then
repeated on a graph of at least half the length. Since every cut by
\dtpebbler requires $\pebsp$~pebbles, \dtchallenger can survive for
roughly 
$\pebsp \log \pathnvert$
rounds (except that the rigorous argument is not quite this simple,
and the slightly smaller factor 
$\log ( \ell / \pebsp)$
in the formal statement of the theorem is in fact inherent).
\subsection{PSPACE-Completeness of Reversible Pebbling}
\label{sec:pspace-completeness-overview}

Moving on to technically more challenging material, let us next discuss
our \PSPACE-completeness result for reversible pebbling, which we
restate here more formally for the record.

\begin{theorem}
  \label{th:pspace-construction}
  Given a single-sink DAG~$G$ of fan-in~$2$ and a parameter~$\pebsp$, it
  is \PSPACE-complete to decide whether~$G$ can be reversibly
  pebbled in space~$\pebsp$ or not. 
  In more detail, 
  given a QBF
  $\QBF = Q_1 x_1 \, Q_2 x_2 \, \ldots \, Q_n x_n \, \, \fstd$,
  where $\fstd$ is a \xcnfform{3}
  over variables $x_1, \ldots, x_n$,
  there is a \efficiently
  constructible 
  single-sink
  graph~$\gadgetConst{\QBF}$ 
  of fan-in~$2$
  and a \efficiently
  computable number $\gamma(\QBF)$ such that
  $\Persistentprice{\gadgetConst{\QBF}}=\gamma(\QBF)+\ib{\QBF \text{\
      is \false}}$.
\end{theorem}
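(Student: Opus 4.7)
The plan is to establish both membership and hardness. Membership in \PSPACE is the easy part: a reversible pebbling uses at most $\pebsp \log n$ bits to encode each configuration, so a nondeterministic \PSPACE machine can guess a sequence of configurations and verify that each one follows from the previous by a legal placement or reversible removal, bounding space by checking $\setsize{\pconf_t} \leq \pebsp$ throughout, and Savitch's theorem closes the gap. The bulk of the work is the reduction from true QBF to reversible pebbling space.

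The reduction proceeds by building a single-sink DAG $\gadgetConst{\QBF}$ of fan-in~$2$ from gadgets that mirror the syntactic structure of $\QBF$. For each variable $x_i$ I plan a \emph{literal gadget} consisting of two parallel chains, one for $x_i$ and one for $\bar x_i$, whose pebbling price is tightly controlled using the path-like constructions of \refsec{sec:separation-overview} (plugged into the exact-price trees/pyramids of \refsec{sec:tight-bounds}). Existential quantifiers $\exists x_i$ are realized by a gadget that merges the two literal chains at an OR-like meeting point so that the pebbler can persistently pebble the quantifier's sink by committing to exactly one of $\{x_i, \bar x_i\}$. Universal quantifiers $\forall x_i$ instead require a gadget whose sink can only be persistently pebbled after pebbling the sinks of \emph{both} literal chains at some point during the pebbling (the reversibility of the game is what makes this enforceable without a huge blow-up, because both branches must eventually be un-pebbled through their sources). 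At the innermost level, for each clause $\clc_j = \lita_{j,1} \lor \lita_{j,2} \lor \lita_{j,3}$ I attach a three-input OR gadget whose inputs are taps off of the literal chains of the $\lita_{j,\cdot}$, and I wire the clause sinks into a large balanced conjunction gadget, whose sink becomes the single sink~$\sink$ of $\gadgetConst{\QBF}$. Fan-in $2$ is maintained throughout by using standard binary-tree fan-in reducers; quantifier nesting is realized by stacking the quantifier gadgets so that the ``inputs'' to the $i$\th quantifier are the sinks of the gadgets for $Q_{i+1} \ldots Q_n \, F$ evaluated under the choice (or choices) made for $x_i$.

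The quantitative analysis then consists of two matching calculations around a threshold $\gamma(\QBF)$. For the upper bound, I will exhibit a recursive pebbling strategy that, whenever $\QBF$ is true, processes the gadgets in the natural left-to-right order dictated by a winning $\exists$-strategy: at each $\exists$ node it commits to the literal prescribed by the strategy and at each $\forall$ node it processes both branches one after the other, re-using space by the reversibility of the individual gadgets. Using \refpr{lem:surrounding-eq-persistent} to translate between visiting, surrounding, and persistent sub-pebblings, together with the exact prices of the component gadgets, this strategy uses exactly $\gamma(\QBF)$ pebbles. When $\QBF$ is false, the same strategy works up to the clause gadget, but the last OR/conjunction gadget forces one additional pebble in \emph{any} reversible pebbling, yielding the ``$+1$'' term via the Iverson bracket.

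The main obstacle, and the part on which the whole construction rests, is the \emph{lower bound} direction: any reversible pebbling of $\gadgetConst{\QBF}$ using at most $\gamma(\QBF)$ pebbles must encode a winning $\exists$-strategy, so that if $\QBF$ is false no such pebbling exists. I plan to attack this by switching to the \dtgametext via the equality $\persistentprice{G} = \dtprice{G}$ from~\cite{Chan13JustAPebble}, since adversary arguments in that game are much cleaner than arguing directly about reversible configurations with their interleaved ``forward'' and ``reverse'' sweeps. The \dtchallenger will play recursively on the quantifier tree, using tightness of the component gadgets as a budget certificate: at every $\forall$ gadget the adversary forces \dtpebbler to commit to a branch whose sub-game is as expensive as its sibling, and at every $\exists$ gadget the adversary mirrors a winning falsifying strategy for $\QBF$. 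The crux is showing that, because each gadget has \emph{exact} persistent/surrounding/visiting prices (matching upper and lower bounds, which is where the constructions of \refsec{sec:tight-bounds} are used), the pebbler cannot ``cheat'' by overlapping resources between a quantifier gadget and the sub-formula it guards; a potential-function accounting along the quantifier tree will be the technical workhorse. Once this lower bound matches the upper bound exactly up to the extra pebble forced by an unsatisfied clause, the theorem follows, and the reduction is polynomial-time because the whole construction and the number $\gamma(\QBF)$ are both linear in the size of $\QBF$ by inspection of the gadgets.
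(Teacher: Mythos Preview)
Your high-level skeleton (gadgets for literals, clauses, a conjunction, and nested quantifier gadgets, analyzed inductively with a threshold~$\gamma(\QBF)$) matches the paper's architecture, but the plan is missing the two technical ideas that the paper identifies as the crux of making a GLT80-style reduction go through in the reversible setting.

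First, you do not address the distinction between \emph{visiting} and \emph{persistent} price of a subgadget. The paper stresses (\refsec{sec:pspace-completeness-overview}) that when a pebbling traverses a subgadget you cannot assume it pays the persistent price; it might only pay the visiting price, and for pyramids these generically differ by~$1$---exactly the additive~$1$ you are trying to control. The paper's fix is the \emph{Christmas tree} construction (\refcon{con:ChristmasTree}), stacks of teabags engineered so that visiting and persistent price coincide exactly, and then the \emph{molding}/\emph{turnpike} construction (\reftwocons{con:Molding}{con:Highway}) which yields a ``toll'' subgraph with the property that any visiting pebbling must at some moment have the full toll \emph{plus} a pebble on the designated source. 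Your proposal instead points to ``path-like constructions of \refsec{sec:separation-overview}'' for the literal gadgets, but those road graphs are the separation witnesses and play no role here; and the pyramids of \refsec{sec:tight-bounds} by themselves do not give you matching visiting and persistent prices.

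Second, and relatedly, you are relying on the GLT80 picture in which expensive subgraphs sit at the \emph{sources} of each quantifier gadget and force a processing order. The paper explains why this fails reversibly: pebblings do interleaved forward and reverse sweeps, so subgraphs are entered in both topological directions and ``locking from below'' is not enough. The paper's solution is to thread turnpikes \emph{inside and between} the quantifier and clause gadgets as single-passage tollbooths; the key property (\reflem{lem:Molding}, \refclaim{cla:JammingAndLocking}) is that being \srclocked on a turnpike forces the entrance vertex to be \perlocked, which is what lets the inductive lower bound (\reftwolems{lem:ExistentialLowerBound}{lem:UniversalLowerBound}) pin down the assignment encoded by the pebbling and charge the extra pebble when the subformula is false. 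Your ``potential-function accounting along the quantifier tree'' via a \dtchallenger strategy is not a priori wrong---the equivalence $\persistentprice{G}=\dtprice{G}$ is available---but as stated it does not explain how the Challenger forces the Pebbler to \emph{commit to an assignment} and then \emph{pay the clause penalty under that assignment}, which in the paper is exactly the work done by the turnpike locking lemmas. Without a replacement for that mechanism, the lower-bound half of your plan has a genuine gap.
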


At a high level, our proof is similar to that 
in~\cite{GLT80PebblingProblemComplete}
for \pstandard pebbling:
we build gadgets for 
variables,
clauses, and universal and existential quantifiers, and then glue them
together in the right way so that pebbling through the gadgets
corresponds to verifying satisfying assignments for
universally and existentially quantified subformulas of
the QBF~$\QBF$.
However, the execution of this simple idea
is highly nontrivial even in~\cite{GLT80PebblingProblemComplete}, and
we run into several additional technical difficulties when we want to 
do an analogous reduction for reversible pebbling.

For starters, since the difference in pebbling price
for graphs~$\gadgetConst{\QBF}$ 
obtained from true and false QBFs~$\QBF$ is
just an additive~$1$, we need
exact control over the pebbling price of all components used in the
reduction. For \pstandard pebbling there is no  problem here---exact
bounds on pebbling price are known for quite a wide selection of
graphs---but in the 
reversible setting this becomes an issue already for almost the simplest
possible graph: the complete binary tree of height~$h$. An easy
inductive argument shows that the \pstandard pebbling price of such a 
tree is exactly $h+2$.
Since reversible pebblings find paths more challenging than do
\pstandard pebblings, one could perhaps expect an extra additive 
$\log h$ or so in the reversible pebbling bound. However, the
asymptotically correct bound turns out to be
$h + \bigtheta{\logstar h}$
as shown in~\cite{Kralovic04TimeSpaceReversible},
and the upper and lower bounds on the multiplicative constant 
obtained in that paper
are far from tight.

The story is even worse for the workhorse of the construction in
\cite{GLT80PebblingProblemComplete} (and many other pebbling results),
namely 
\introduceterm{pyramids}
of height~$h$,
which have $i$~vertices at level~$i$ for $i=1, \ldots, h+1$,
and where
the $j$th~vertex at level~$i$ has incoming edges 
from the $j$th~and $(j+1)$st~vertices at level~$i+1$.
There is a very neat proof
in~\cite{Cook74ObservationTimeStorageTradeOff} that the \pstandard
pebbling price is again exactly~$h+2$,
but for reversible pebbling price nothing has been known except that it
has to be somewhere between $h+2$ and 
$h + \bigoh{\logstar h}$
(where the latter bound follows since any strategy for a complete
binary tree of height~$h$ works for any DAG of height~$h$).
As a crucial first step towards establishing
\refth{th:pspace-construction}, 
we exactly determine the reversible pebbling price of pyramids (and
also binary trees).

\begin{theorem}
  \label{th:exact-trees-pyramids-overview}
  For $\Delta$ denoting a positive integer,
  let $\pyrg$ be the function defined recursively as
  \begin{equation*}  
    \pyrg(\pyrx) = \begin{cases}
      0 
      & \text{if $\Delta=1$;}
      \\
      2^{\pyrg(\pyrx-1) + \pyrx - 2} + \pyrg(\pyrx-1) 
      & \text{otherwise;}
    \end{cases}
  \end{equation*}
  and let the inverse $\pyrg^{-1}$ of this function be defined as
  \begin{equation*}
    \pyrg^{-1}(h)
    = \minofset[\mid]{\pyrx}{\pyrg(\pyrx)\geq h}
    \eqperiod
  \end{equation*}
  Then the  persistent pebbling price of a pyramid of height~$h$, as
  well as of a complete binary tree of height~$h$, 
  is $h + \pyrg^{-1}(h)$,
  where
  $\pyrg^{-1}$ is efficiently computable.
\end{theorem}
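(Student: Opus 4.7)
The plan is to establish matching upper and lower bounds on the reversible pebbling price of binary trees and pyramids of height~$h$ and to transfer these between the two graph families. First, I would observe that a binary tree of height~$h$ has reversible pebbling price at least that of a pyramid of height~$h$: any reversible pebbling of the tree projects to a valid reversible pebbling of the pyramid by identifying tree vertices at the same (level, column) position, and this projection can only decrease the number of pebbles in any configuration. It therefore suffices to prove the upper bound $\leq h + g^{-1}(h)$ for binary trees and the lower bound $\geq h + g^{-1}(h)$ for pyramids, after which both prices equal $h + g^{-1}(h)$ by sandwiching.

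For the upper bound on binary trees, I would induct on $\Delta$ to show that a tree of height $g(\Delta)$ admits a reversible persistent pebbling using $g(\Delta) + \Delta$ pebbles. The base case $\Delta = 1$ with $g(1)=0$ is trivial. For the inductive step, conceptually partition the tree of height $g(\Delta) = g(\Delta-1) + 2^{g(\Delta-1)+\Delta-2}$ into a top portion of height $g(\Delta-1)$ rooted at the global root and the $2^{g(\Delta-1)}$ bottom subtrees of height $2^{g(\Delta-1)+\Delta-2}$ hanging from the vertices at level $g(\Delta-1)$. Run a top strategy over the top portion, treating level-$g(\Delta-1)$ vertices as synthetic sources; whenever the top strategy wants to place or reversibly remove one of these sources, pause and invoke a bottom-subtree strategy to produce (or retract) a persistent pebble there. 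The bottom subtree has height $2^{g(\Delta-1)+\Delta-2}$ and is pebbled using the inductive hypothesis for $\Delta-1$ with $2^{g(\Delta-1)+\Delta-2} + \Delta - 1$ pebbles, and the scheduling is arranged so that top and bottom pebble counts peak at non-overlapping moments in a way consistent with the global budget $g(\Delta)+\Delta$.

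For the lower bound on pyramids, I would invoke the identity $\persistentprice{G} = \dtprice{G}$ from \refeq{eq:chan13-equalities} and exhibit a Challenger strategy in the Dymond--Tompa game. Maintaining the invariant that a ``current sub-pyramid'' of known height survives as the contested region, Challenger responds to each Pebbler move by inspecting whether the pebble lies closer to the apex or to the base of the current sub-pyramid, staying in the former case and jumping down to the new pebble in the latter. A potential function combining the height of the current sub-pyramid with a counter related to $g^{-1}$ decreases only slowly, forcing Pebbler to use at least $h + g^{-1}(h)$ rounds before Challenger is cornered on a single vertex. The recursion $g(\Delta) = 2^{g(\Delta-1)+\Delta-2} + g(\Delta-1)$ arises because on a pyramid of height $g(\Delta)$, Pebbler must either spend one additional pebble to push Challenger down into a bottom sub-pyramid of height $g(\Delta-1)$ (triggering a recursive call with one less pebble available) or be forced to absorb up to $2^{g(\Delta-1)+\Delta-2}$ extra rounds inside a top sub-pyramid.

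The main obstacle is matching the precise recursion $g$ on both sides, not merely a $\Theta(\log^* h)$ estimate as in~\cite{Kralovic04TimeSpaceReversible}. For the upper bound, I must exhibit an explicit schedule of pebble moves that interleaves the top and bottom recursive pebblings \emph{reversibly}, so that every forward sweep through a bottom subtree is matched by an appropriately timed reverse sweep triggered by the top strategy's own reversal; accounting precisely for the exponent $2^{g(\Delta-1)+\Delta-2}$ requires bounding the number of distinct times each level-$g(\Delta-1)$ vertex is covered and uncovered during the top strategy and exploiting reversibility to guarantee that these visits alternate between forward and backward phases of the bottom subtrees in a pebble-conserving way. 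On the lower bound side, the corresponding subtlety is designing Challenger's potential function so that each pebble placement by Pebbler reduces the potential by at most one unit whether Challenger stays or jumps, thereby capturing the tight recursion rather than losing constants at each inductive step.
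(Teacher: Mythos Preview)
Your overall skeleton---prove the upper bound for binary trees, the lower bound for pyramids, and sandwich via the projection---is exactly what the paper does. However, your upper-bound decomposition has a concrete gap. You split a tree of height $g(\Delta)$ into a top of height $g(\Delta-1)$ near the root and bottom subtrees of height $H = 2^{g(\Delta-1)+\Delta-2}$, and then claim each bottom subtree can be persistently pebbled with $H + \Delta - 1$ pebbles ``using the inductive hypothesis for $\Delta-1$''. But $H > g(\Delta-1)$ for every $\Delta \geq 2$ (already for $\Delta=2$ one has $H=1>0=g(1)$), so the hypothesis for $\Delta-1$ does not cover these subtrees; in fact $g^{-1}(H) = \Delta$, so each bottom subtree genuinely needs $H+\Delta$ pebbles. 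With that cost, even the most optimistic interleaving of top and bottom phases overshoots the budget $g(\Delta)+\Delta$ once $\Delta \geq 3$. No ``scheduling so that peaks do not overlap'' can fix this, because at the moment the top strategy is about to place its last source it already holds roughly $g(\Delta-1)+\Delta-2$ pebbles.

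The paper's upper bound uses a different decomposition: walk down the right spine $v_h, v_{h-1}, \ldots, v_k$ with $k = g(\Delta-1)+1$, persistently pebbling the \emph{left} subtree below each $v_i$ (height $i-1$) along the way; here the induction is on~$h$, so each of these calls is covered. When you reach $v_k$, the right subtree below it has height exactly $g(\Delta-1)$ and \emph{this} is where the hypothesis for $\Delta-1$ is spent. What remains is a bare path $v_k,\ldots,v_h$ of length $h-k \leq 2^{g(\Delta-1)+\Delta-2}$, and the exponent in the recursion enters precisely through the path lemma $\persistentprice{\DAGNamePath_\ell} = \lfloor \log \ell \rfloor + 2$ of \cite{LV96Reversibility}: surrounding the top of this path costs at most $g(\Delta-1)+\Delta-2$ additional pebbles, which closes the accounting exactly. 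Your plan never invokes the path price, and without it the $2^{g(\Delta-1)+\Delta-2}$ term has no operational meaning.

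The lower bound sketch is too coarse to match the exact recursion. The paper's \dtchallenger strategy does \emph{not} do a binary search on ``closer to apex versus base''; instead it waits until \dtpebbler first blocks the pyramid and then branches on whether the last-placed blocking vertex sits above or below level $g(\Delta-1)$. In the high case \dtchallenger jumps and recurses on the sub-pyramid below the new vertex, using a structural fact about pyramids (\refpr{stm:minimalcuts}) to charge $h-k$ additional pebbles outside that sub-pyramid. In the low case \dtchallenger plays the path sub-game on the stretch from the blocking vertex to the sink, and again it is the path lower bound $\lfloor \log(h-k+\ell-1) \rfloor + 2$ that makes the numbers come out to $h+\Delta+1$. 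Your proposed potential-function argument would need to encode both the blocking-set count and the path sub-game to avoid losing the constants you are worried about.
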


Even though \refth{th:exact-trees-pyramids-overview} is 
an important step, we immediately run into new problems when trying to
use it as a building block in our reduction for reversible pebbling. 
In the \pstandard pebble
game a \pebcomplete pebbling is any pebbling that reaches the
sink. For the reversible game there is a subtle distinction in
that we can ask whether it is sufficient to just reach the sink or
whether the rest of the graph must also be cleared of pebbles. 
As discussed in \refsec{sec:preliminaries}, this leads to
two different flavours of reversible pebblings, namely
\introduceterm{persistent pebblings},
which leave a pebble on the sink with the rest of the graph being empty,
and \introduceterm{visiting pebblings}, which just reach the sink (and
can then be thought to run in reverse after having visited the sink to
clear the whole graph including the sink from pebbles).  The pebblings
we actually care about are the persistent ones, but we cannot rule out
the possibility that subpebblings of gadgets are visiting
pebblings. Clearly, the difference in pebbling space is at most~$1$,
but this is exactly the additive~$1$ 
of which we cannot afford to lose control!
To make things worse, for pyramids it turns out that
persistent and visiting pebbling prices actually do differ except in
very rare cases.

Because of this, we have to build more involved graph gadgets for which we can
guarantee that visiting and persistent prices coincide.  These gadgets
are constructed in two steps.
First, we take a pyramid and append a path of suitable length,
depending on the height of the pyramid, to the pyramid sink, resulting
in a graph that we call a \introduceterm{teabag}. Second, we take such
teabags of smaller and smaller size and stack them on top of one
another, which yields a graph that looks a bit like a 
\introduceterm{Christmas tree}. These Christmas tree graphs are
guaranteed to have the same pebbling price regardless of whether a
reversible pebbling is visiting or persistent.

With this in hand
we are almost ready to follow the approach
in the \PSPACE-completeness reduction for \pstandard pebbling
in~\cite{GLT80PebblingProblemComplete}.
The idea is that we want to
build gadgets for the quantifiers in 
a formula
\mbox{$\QBF = \forall x \exists y  \cdots Q z \, F$}
of specified pebbling price 
so that the only way to pebble the graph~$\gadgetConst{\QBF}$ without
using too much space is to first pebble the gadget for 
$\forall x$, then~$\exists y$, et cetera, in the correct order until
all quantifier gadgets have been pebbled. Once we get to the clause
gadgets, we would like that the pebbles in the quantifier gadgets are
locked in place encoding a truth value assignment to the variables,
and that the only way to pebble through the clause gadgets without
exceeding the space budget is if every clause contains at least one
literal satisified by this truth value assignment.

In order to realize this plan, 
there remains one more significant technical obstacle to overcome,
however. To try to explain what the issue is, we need to discuss the
\PSPACE-completeness reduction in~\cite{GLT80PebblingProblemComplete}
in slightly more detail.  
The way this reduction imposes an order in which the quantifier
gadgets have to be pebbled
is that pyramid graphs are included ``at the bottom'' of the
gadgets (\ie topologically first
in order).  The source vertices of the quantifier gadgets
all appear in
such pyramids, and one has to pebble through these pyramids
to reach the rest of a gadget (where pebble placements encode variable
assignments as mentioned above).

In the first, outermost quantifier gadget the pyramids have large
height. In the second gadget the pyramid heights are slightly smaller, et
cetera, down to the last, innermost quantifier gadget where the
pyramids have smallest height. In this way, the pyramids are used to
``lock up'' pebbles
and force a strict order of pebbling of the gadgets.
It can be shown that in order not to exceed the pebbling space budget,
any pebbling strategy has to start by pebbling the highest pyramids in
the first gadget. If the pebbling starts anywhere else in the graph,
this will mean that there are already pebbles elsewhere in the graph
when the pebbling strategy reaches the first, highest pyramids in the
outermost quantifier, but if so the overall
pebbling has to use up too much space to pebble through this pyramid.  
One can also show that once the pyramids in the
outermost quantifier gadget have been pebbled, the pebbling cannot
proceed until the next quantifier gadget is pebbled. The pyramids in
this gadget
have smaller height, but there are also pebbles stuck in
place in the outermost gadget, meaning that pyramids must again be
pebbled in exactly the right order to stay within the space budget.

These properties can be used to \introduceterm{normalize}
pebbling strategies in the \pstandard pebble game.
Without loss of generality, one can assume that any 
strategy that starts pebbling a pyramid in a gadget will
complete this local pebbling in one go, leaving a pebble at the sink
of the pyramid, and will not place pebbles anywhere else until the
pebbling of the pyramid has been completed. Also, once a pyramid in a
quantifier gadget has been pebbled in this way, one can prove that it
will never be pebbled again since there is now at least one additional
pebble at some vertex later in the topological order in the graph, and
a repeated pebbling of the pyramid in question would therefore exceed
the space budget.  Thus, not only do the pyramids enforce that the
gadgets are pebbled in the right order---they also serve as
single-entry access points to the gadgets, making sure that each
gadget is pebbled exactly once.

There is no hope of building gadgets with such properties in a
reversible pebbling setting. It is 
simply
not true that a reversible pebbling
will pebble through a subgraph and then never return. Instead, as
already discussed
reversible pebblings will proceed in alternating
phases of interleaved ``forward sweeps'' and ``reverse sweeps,'' and
subgraphs will be entered also in reverse topological order.
Therefore, it is not sufficient to add ``space-locking'' subgraphs at
the source vertices of the gadgets.  Rather, we have to insert
``single-passage points'' inside and in between the gadgets for
quantifiers and clauses.  We obtain such subgadgets by further
tweaking our Christmas tree construction so that it can also connect
two vertices in such a way that any pebbling has to ``pay a
toll'' to go through this subgraph.  We cannot describe these gadgets,
which we call \introduceterm{\highway{}s}, in detail here, but mention
that the ``space-locking'' property that they have is that when the
entrance vertex is eliminated by having a pebble placed on that
vertex, then the cost of pebbling through the rest of the \highway
drops by~$1$. This is critically used in the subgraph compositions
described next.

Assuming the existence of the necessary technical subgraph
constructions sketched above, 
we can now describe the overall structure of our reduction from
quantified Boolean formulas to reversible pebbling
(where all parameters shown in the figures are fixed
appropriately in the formal proofs).
In the following 
figures
we denote a Christmas tree of (visiting and
persistent) pebbling price~$r$ by the symbol in
\reffig{fig:christmas-overview}, where we only display the sink
vertex. We denote the \highway gadget just discussed by the symbol in
\reffig{fig:highway-overview}. 
We write $r$ to denote the \introduceterm{toll} parameter of the
\highway, where a \highway with toll~$r$ has persistent price $r+2$,
but only $r+1$ if we do not count the source~$a$ as part of the
\highway.
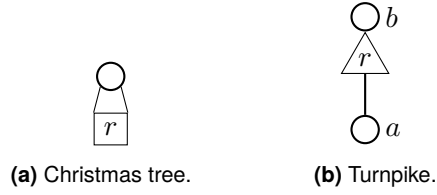
\begin{figure}[t]
  \begin{center}
  \subcaptionbox{Christmas tree.\label{fig:christmas-overview}}{%
    \trimbox{-1.5cm 0cm -1.25cm 0cm}{%
    \begin{tikzpicture}
      \node[Node] (s') at (0,0) {};
      \node[Bank] (s) at (0,-\GadgetBankShift) {$r$};
      \draw (s.north west) to (s'.south west);
      \draw (s.north east) to (s'.south east);
    \end{tikzpicture}}
  }
  \quad
  \subcaptionbox{\Highway.\label{fig:highway-overview}}{%
    \trimbox{-1cm 0cm -1cm 0cm}{%
    \begin{tikzpicture}
      \GadgetAttachmentUp ab\bank
    \end{tikzpicture}}
  }
  \end{center}
  \caption{Legend for technical gadget building blocks.}
\end{figure}

For every variable $x_i$ we have a \introduceterm{variable gadget} as
shown in \reffig{fig:variable-overview},
where we think of a
truth value
assignment~$\partassign$
as represented by pebbles on
vertices $\{\bar\variable_{i},\variable'_{i}\}$ when
$\partassign(\variable_{i})=\false$
and on
$\{\variable_{i},\bar\variable'_{i}\}$ when
$\partassign(\variable_{i})=\true$,
as shown in 
\reftwofigs{fig:variable-false}{fig:variable-true},
respectively.
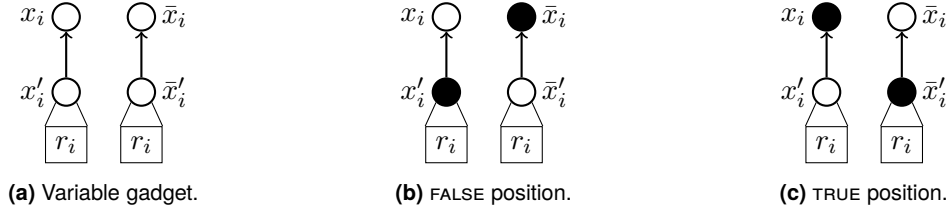
\begin{figure}[t]
  \begin{center}
    \subcaptionbox{Variable gadget.\label{fig:variable-overview}}{%
    \trimbox{-1cm 0cm -1cm 0cm}{%
    \begin{tikzpicture}
      \GadgetVariableWithSources i{\bank_i}
    \end{tikzpicture}}
  }
  \quad
    \subcaptionbox{\false position.\label{fig:variable-false}}{%
    \trimbox{-1cm 0cm -1cm 0cm}{%
    \begin{tikzpicture}
      \GadgetVariableWithSourcesFalse i{\bank_i}
    \end{tikzpicture}}
  }
  \quad
    \subcaptionbox{\true position.\label{fig:variable-true}}{%
    \trimbox{-1cm 0cm -1cm 0cm}{%
    \begin{tikzpicture}
      \GadgetVariableWithSourcesTrue i{\bank_i}
    \end{tikzpicture}}
  }
  \caption{Gadget for variable $x_i$ and pebble positions 
    corresponding to truth value assignments.}
  \end{center}
\end{figure}
 
For every clause $C_j$ we have a
\introduceterm{clause gadget} %
as depicted in \reffig{fig:clause-overview}.
The vertices labelled~$\ell'_{j,k}$ and~$\ell_{j,k}$ 
in \reffig{fig:clause-overview} are identified with the corresponding
vertices for the positive or negative literal~$\ell_{j,k}$ 
in the variable gadget
in \reffig{fig:variable-overview}.
If $\partassign$ satisfies a literal, then there is a pebble on the
entrance vertex of the corresponding \highway, meaning that we can
pebble through the gadget for a clause containing that literal with
one less pebble than if $\partassign$ does not satisfy the clause.

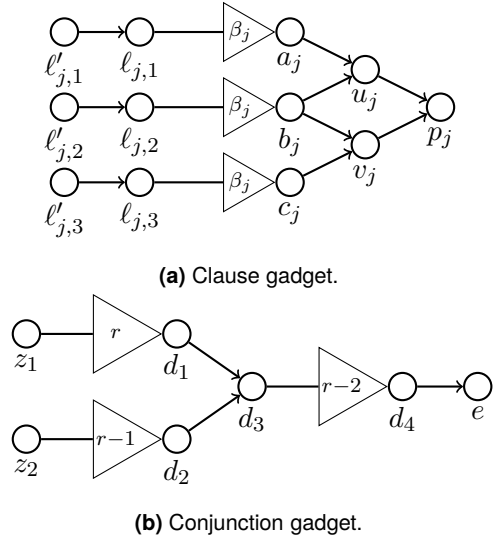
\begin{figure}[t]
  \begin{center}
  \subcaptionbox{Clause gadget.\label{fig:clause-overview}}{%
    \trimbox{-1cm 0cm -1cm 0cm}{%
    \begin{tikzpicture}[rotate=-90]
      \GadgetClause j{\scriptstyle\beta_j}
    \end{tikzpicture}}
  }
  \quad
  \subcaptionbox{Conjunction gadget.\label{fig:conjunction-overview}}{%
    \trimbox{-1cm 0cm -1cm 0cm}{%
    \begin{tikzpicture}[rotate=-90]
      \GadgetConjunction{\scriptstyle\bank}{center}{minimum size=1.2cm}
    \end{tikzpicture}}
  }
  \end{center}
  \caption{Gadgets for clauses and CNF formulas.}
\end{figure}
\begin{figure}[t]
  \begin{center}
  \subcaptionbox{Existential quantifier gadget.\label{fig:exists-overview}}{%
    \trimbox{-1cm 0cm -1cm 0cm}{%
    \begin{tikzpicture}
      \GadgetExistentialQuantifierSingle{i}{i-1}{\bank_i}{%
        \scriptstyle\gamma_i - 5}{center}{minimum size=1.4cm}
    \end{tikzpicture}}
  }
  \quad
  \subcaptionbox{Universal quantifier gadget.\label{fig:forall-overview}}{%
    \trimbox{-1cm 0cm -1cm 0cm}{%
    \begin{tikzpicture}
      \GadgetUniversalQuantifierSingle{i}{i-1}{\bank_i}{%
        \scriptscriptstyle\gamma_i - 6}{%
        \scriptscriptstyle\gamma_i - 7}{center}{minimum size=1.1cm}
    \end{tikzpicture}}
  }
  \end{center}
  \caption{Quantifier gadgets for variable $x_i$.}
\end{figure}
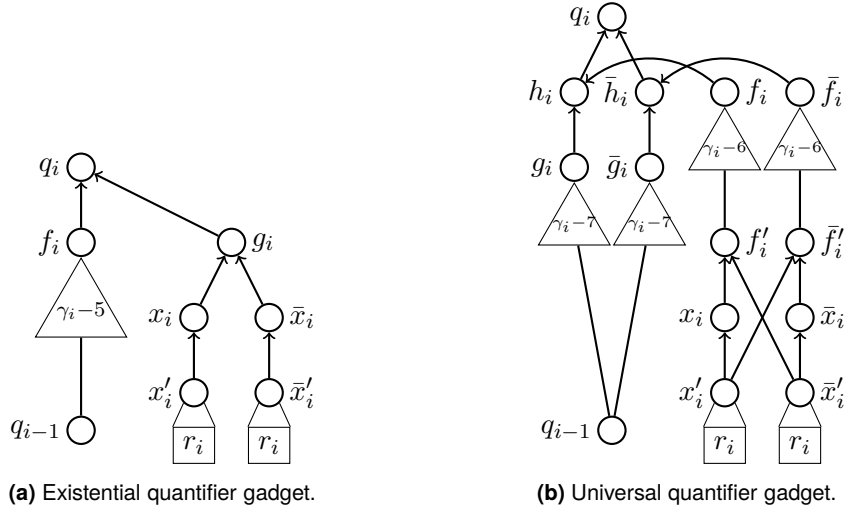

To build the subgraph corresponding to a \xcnf{$3$} formula
$\fstd = \Land^{m}_{j=1} C_{j}$ we join
clause gadgets sequentially using the 
\introduceterm{conjunction gadget} in \reffig{fig:conjunction-overview}.
For technical reasons we start by joining a dummy graph with the
first clause gadget, then we join the result to the second clause
gadget, and so on
up to the $m$th clause of~$\fstd$.
The resulting graph has the property that if pebbles are placed on the
variable gadgets according to an assignment~$\partassign$ that
satisfies~$\fstd$, then the number of additional pebbles needed to
pebble the graph is one less than if the assignment is falsifying.

Finally we have one \introduceterm{quantifier gadget} for each
variable. To describe this part of the construction, 
we
sort the variables indices
in reverse order
from the innermost to the outermost quantifier and
denote by $\QBF_{i}$ the 
subformula
with just the $i$~innermost
quantifiers, 
so that
$\QBF_{0} = 
\fstd = 
\Land^{m}_{j=1} C_{j}$,
$\QBF_{i} = Q_i \variable_{i} \, \QBF_{i-1}$ 
for $Q_i \in \set{\forall, \exists}$,
and $\QBF=\QBF_{n}$. 
We construct graphs $\gadget^{(i)}\defeq \gadgetConst{\QBF_{i}}$, 
starting with $\gadget^{(0)}$ which is just the 
subgraph corresponding to the CNF formula~$\fstd$.
To construct $\gadget^{(i+1)}$ from $\gadget^{(i)}$ we
add an existential gadget as in
\reffig{fig:exists-overview} 
if $\variable_{i}$ is existentially quantified and a universal gadget as in
\reffig{fig:forall-overview} 
if $\variable_{i}$ is universally quantified.
An example of the full construction can be found in
\reffig{fig:example-qbf}.

Given this construction we argue along the same lines as in
in~\cite{GLT80PebblingProblemComplete}, although as mentioned above
there are numerous additional technical complications that we cannot
elaborate on in this brief overview of the proof. 
We show that 
given an assignment $\partassign_{i}$ to
$\set{\variable_{n},\ldots, \variable_{i+1}}$,
the number of additional pebbles needed to pebble~$\gadget^{(i)}$
differs by~$1$ depending on whether 
$\QBF_{i}$ is true under the assignment~$\partassign_i$ or not.
An existential gadget can be optimally pebbled by setting $x_i$ to any
value that satisfies~$\QBF_{i-1}$. To pebble a universal gadget one
needs to assign~$x_i$ to some value, pebble through the gadget, unset $x_i$ and
assign it to the opposite value, and finally pebble through the gadget
again, and both assignments to~$x_i$ must yield satisfying assignments
to~$\QBF_{i-1}$ in order for the pebbling not to go over budget.
Proceeding by induction, 
we establish that the 
complete graph
$\gadget^{(n)}$
can be pebbled within the specified
space budget only if
\mbox{$\QBF=\QBF_{n}$}
is true, which yields
\refth{th:pspace-construction}. 

\begin{figure}[tp]
  \centering
  \begin{tikzpicture}
    \ExampleQBF
  \end{tikzpicture}
  \caption{Example of
    QBF-to-DAG reduction for 
    $\forall x_3 \exists x_2  \forall x_1 
    (x_1 \lor x_2 \lor x_3) \land 
    (\olnot{x_1} \lor x_2 \lor x_3) \land 
    (\olnot{x_1} \lor \olnot{x_2} \lor \olnot{x_3})
    $.}
  \label{fig:example-qbf}
\end{figure}
\subsection{PSPACE-Inapproximability up to Additive Constants}
\label{sec:pspace-inapproximability-overview}

Let us conclude the detailed overview of our contributions by
describing what is arguably the strongest result in this paper, namely
a strengthening of the
\PSPACE-completeness of \pstandard pebbling
in~\cite{GLT80PebblingProblemComplete} and of reversible pebbling in
\refth{th:pspace-construction} to \PSPACE-hardness results for
approximating \pstandard and reversible pebbling price to within any
additive constant~$\pebsepconst$.

\begin{theorem}
  \label{th:approx-both}
  For any fixed positive integer $\pebsepconst$ it is \PSPACE-complete
  to decide whether a \singlesinkdagtext~$G$ with fan-in~$2$ has
  (\pstandard or reversible) pebbling price at most~$\pebsp$ or at 
  least~$\pebsp + \pebsepconst$.
\end{theorem}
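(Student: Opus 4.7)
My plan is to deduce the theorem by iteratively applying the graph product construction (item~3 of the results summary) to amplify the additive gap between true and false instances of the exact pebbling decision problem.

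The starting point is that deciding the exact pebbling price is already \PSPACE-complete in both games: for the \pstandard pebble game this is the Gilbert--Lengauer--Tarjan result~\cite{GLT80PebblingProblemComplete}, while for the reversible pebble game it is \refth{th:pspace-construction} just proved. Crucially, in both reductions the output DAG~$G_\phi$ built from a QBF~$\phi$ has pebbling price exactly $s$ if $\phi$ is true and exactly $s+1$ if $\phi$ is false (with $s$ efficiently computable from~$\phi$). So we already have an additive gap of~$1$, and the only remaining task is to amplify it to any fixed~$\pebsepconst$.

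Now I would invoke the graph product $G_1 \otimes G_2$ promised by item~3 of the results summary, which produces a DAG of size $\Bigoh{(n_1+n_2)^2}$ and pebbling price $\pebsp_1 + \pebsp_2 + \pebshiftconst$, where $\pebshiftconst \in \{-1,+1\}$ depends on the flavour of the game. Iterating the product $\pebsepconst - 1$ times on $\pebsepconst$ copies of the same graph~$G_\phi$ yields a DAG $G_\phi^{\otimes \pebsepconst}$ whose pebbling price is $\pebsepconst \cdot \pebpricestd(G_\phi) + (\pebsepconst - 1)\pebshiftconst$. Since $\pebpricestd(G_\phi) \in \{s, s+1\}$, the price of the iterated product lies in $\{\pebsepconst s + c, \pebsepconst s + \pebsepconst + c\}$ for the constant $c = (\pebsepconst-1)\pebshiftconst$, which is exactly the required additive gap~$\pebsepconst$. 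For fixed~$\pebsepconst$, the size blow-up of the iterated product remains polynomial in $\setsize{\vertices{G_\phi}}$ (even a naive sequential composition gives size $\Bigoh{n^{2^{\pebsepconst-1}}}$, and a balanced product yields $\Bigoh{n^{\pebsepconst}}$), so the overall reduction is \efficiently. Membership in \PSPACE is clear since a straightforward search over pebble configurations decides the pebbling price problem in polynomial space, and this also covers the promise gap version.

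The only real content beyond the \PSPACE-completeness of the exact problem is therefore the graph product itself: one needs a construction that preserves a single sink and fan-in~$2$ while producing exactly additive price $\pebsp_1 + \pebsp_2 + \pebshiftconst$ with upper and lower bounds matching. I expect this additive-exact bound---with matching lower bounds that hold even when several copies are glued together---to be the main obstacle, since pebbling lower bounds that are tight up to an additive constant are notoriously delicate (as already witnessed by the difficulties with pyramids and Christmas trees in the \PSPACE-completeness proof). For the reversible game one also has to manage the visiting vs.\ persistent distinction across the product, so presumably the ``\highway''-style single-passage subgadgets developed for \refth{th:pspace-construction} will reappear here to force the pebbling to traverse the two factor graphs in a controlled interleaving. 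Given such a product, however, the amplification step above reduces the theorem to a routine iteration.
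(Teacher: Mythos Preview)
Your proposal is correct and follows essentially the same approach as the paper: start from the exact \PSPACE-completeness results (giving a gap of~$1$), then iterate the graph product of \refth{th:blowup-both} $\pebsepconst-1$ times to amplify the gap to~$\pebsepconst$, with membership in \PSPACE being routine. Your speculation that the product construction reuses the \highway gadgets is off---the paper instead builds dedicated products (a three-node ``cell'' structure per vertex for the reversible case, and a centipede-based block construction for the \pstandard case)---but you correctly identify that the graph product with its exactly additive price is where all the technical work lies.
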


We remark that it would of course have been even nicer to prove
multiplicative hardness results. We want to stress again, though, that to the
best of our knowledge these are the first results ever for hardness of
approximation of pebble games in a general setting. The fact that
these results hold even for \PSPACE could perhaps be taken both as an indication
that it should be possible to prove much stronger hardness results for
algorithms limited to polynomial time, and as a challenge to do so.

We obtain \refth{th:approx-both} by defining and analyzing two graph
product constructions, one for \pstandard and one for reversible
pebbling, which take two graphs and output product graphs with
pebbling price equal to the sum of the pebbling prices of the two
input graphs (except for an additive adjustment).
These graph products can then be applied
iteratively 
$\pebsepconst-1$ times to 
the graphs obtained by the reductions from QBFs.
In the next theorem we state the formal properties of these graph
products.  

\begin{theorem}
  \label{th:blowup-both}
  Given \singlesinkdagtext{}s 
  $\graphstd_i$ of fan-in 2 and size~$n_i$ for $i=1,2$,
  there are \efficiently constructible \singlesinkdagtext{}s 
  $\blowupBlackConst{\graphstd_1}{\graphstd_2}$
  and
  $\blowupConst{\graphstd_1}{\graphstd_2}$ 
  of fan-in 2 and size~$\Bigoh{(n_1 + n_2)^2}$ such that
  \begin{itemize}
    \item   
      For \pstandard pebbling price it holds that
    $\blackprice{\blowupBlackConst{\graphstd_1}{\graphstd_2}} =
    \blackprice{\graphstd_1} + \blackprice{\graphstd_2} - 1$.
    \item
      For reversible pebbling price it holds that
      $\Persistentprice{\blowupConst{\graphstd_1}{\graphstd_2}} 
    = \persistentprice{\graphstd_1} + \persistentprice{\graphstd_2} + 1$.
  \end{itemize}
\end{theorem}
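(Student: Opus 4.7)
The plan is to define the two graph products and verify matching upper and lower bounds on their pebbling price. The intuition common to both constructions is to \emph{blow up} $G_1$ by replacing each vertex $v_1 \in V(G_1)$ with a ``block'' whose internal structure is closely related to $G_2$, and to wire the blocks so that each edge of $G_1$ becomes a bundle of edges from the ``output'' vertex of one block to the ``input'' vertices of the next. The block gadget must be designed so that placing a pebble on a block vertex at the level of $G_1$ corresponds to executing a full pebbling of $G_2$ inside that block, while unpebbling a block (in the reversible case) corresponds to running this pebbling in reverse. For the reversible version one must ensure that a block can be freed only while its inputs are pebbled, which is what forces the $+1$ overhead; for the standard version the block is designed so that the output pebble can simultaneously serve as an input for subsequent blocks, saving one pebble and producing the $-1$ term.

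For the upper bounds I would describe a two-level strategy that simulates an optimal pebbling of $G_1$ at the block level. Whenever the $G_1$-strategy places or removes a pebble on $v_1$, the simulation runs an optimal pebbling of $G_2$ (or its reverse) inside the corresponding block. At any given moment at most one block is ``active,'' contributing at most $\persistentprice{G_2}$ (or $\blackprice{G_2}$) internal pebbles, while each already-completed block contributes exactly one pebble at its output vertex, for a total of $\persistentprice{G_1}$ (resp.\ $\blackprice{G_1}$) output pebbles minus one for the active block itself. Combining this count with the $\pm 1$ correction between surrounding and persistent pebblings from \refpr{lem:surrounding-eq-persistent} yields the claimed bounds.

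The lower bounds are the main technical point, and I would prove them by a \emph{projection} argument. Given any pebbling $\pebbling$ of the product graph, define at each time step a projected configuration on $G_1$ by declaring $v_1$ pebbled iff its block is in a suitable ``completed'' state. One first checks that, up to redundant moves, this projection is a legal pebbling of $G_1$, which follows because the inter-block wiring forces the inputs of a block to be pebbled before its output can appear. One then locates a \emph{bottleneck} time $t^\ast$ at which the projection uses at least $\blackprice{G_1}$ (or $\persistentprice{G_1}$) pebbles while some not-yet-completed block is mid-way through its own $G_2$-pebbling, and hence must simultaneously support at least $\blackprice{G_2}$ (or $\persistentprice{G_2}$) internal pebbles. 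Showing that these two contributions cannot overlap beyond the claimed $\pm 1$ slack is the core of the argument.

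The hardest obstacle will be the reversible lower bound, because a reversible pebbling is free to partially pebble a block, leave it in an intermediate state, work elsewhere, and return later to finish it. To rule out that such ``smearing'' of pebbles across blocks saves space, I would pass to the Dymond--Tompa formulation via \refeq{eq:chan13-equalities} and exhibit a \dtchallenger strategy that, against any \dtpebbler strategy on the product graph, either forces \dtpebbler to pebble many ancestor blocks (using $\dtprice{G_1}$ rounds projected on $G_1$) or drags the game deep inside a single block (using $\dtprice{G_2}$ rounds of the internal $G_2$-copy). The key structural property of the block gadget that enables this is that its embedded $G_2$-copy must be \emph{unshortcuttable}: no cheaper pebbling exists even when output pebbles of neighbouring blocks are available for free. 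Designing the block to enforce this property, and verifying it rigorously, is where I expect most of the construction-specific technical effort to lie.
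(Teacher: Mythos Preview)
Your high-level plan---blow up each vertex of $G_1$ into a block carrying a copy of $G_2$, prove upper bounds by two-level simulation, and prove lower bounds by projecting a pebbling of the product back to a pebbling of $G_1$---matches the paper. The paper's standard-pebbling lower bound is exactly the projection argument you sketch (blocks that have been ``completed'' project to pebbled vertices in $G_1$, and at the bottleneck time one block is mid-pebbling with $\blackprice{G_2}$ pebbles while $\blackprice{G_1}-1$ others are occupied).

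Where you diverge is the reversible lower bound. You propose to escape the ``smearing across blocks'' difficulty by passing to the Dymond--Tompa game and designing a two-level Challenger strategy. The paper does \emph{not} do this; it stays with the projection approach but refines it substantially. The key idea you are missing is the distinction between a block being \emph{\vislocked} (its current configuration needs $\visitprice{G_2}$ pebbles to clear) versus \emph{\perlocked} (it needs $\persistentprice{G_2}$). The projected $G_1$-configuration is built in two layers: first put a pebble on every $v_1$ whose block is \perlocked, then add those $v_1$ whose block is only \vislocked \emph{and} is not already surrounded by the \perlocked vertices. This second layer is exactly what compensates for the possible gap of $1$ between visiting and persistent price: at the bottleneck, if the expensive block paid only the visiting price, the ``unsurrounded'' condition guarantees a predecessor block that is also merely \vislocked and therefore carries at least two pebbles, recovering the missing pebble. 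Making this projection into a legal $G_1$-pebbling requires inserting short ``reasonable'' pebbling sequences between consecutive projected configurations, and the paper spends most of its effort verifying that these insertions are legal and do not introduce spurious pebbles.

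Your Dymond--Tompa route may well be workable, but note that it does not obviously produce the exact $+1$: the extra pebble in the paper's bound comes from the concrete three-vertex cell structure $(v_1,v_2)\extblowup, (v_1,v_2)\intblowup, (v_1,v_2)\outblowup$, which forces two auxiliary pebbles whenever a cell is touched. Your upper-bound accounting (invoking \refpr{lem:surrounding-eq-persistent} for the $\pm 1$) does not reflect this; the $+1$ is a feature of the construction, not of the surrounding/persistent relation. Similarly, for the standard product the paper appends a \emph{centipede} of length $\lvert G_1\rvert$ to each copy of $G_2$, which is what guarantees that at the bottleneck some centipede source is unpebbled and hence that the projection argument goes through---your ``unshortcuttable'' property is the right intuition, but the centipede is the concrete mechanism.
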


In the remainder of this section
we try to convey some of the flavour of
the arguments used to prove \refth{th:blowup-both} and to give a sense
of some of  the technical obstacles that have to be overcome during
the analysis. 
In what follows, we will mostly focus on the reversible pebble game,
since it is the technically more challenging and therefore also the
more interesting case. We will briefly discuss the product
construction for \pstandard pebbling at the very end of the
section. 
We will refer to 
$\graphstd_1$
as the \introduceterm{outer graph}
and $\graphstd_2$ 
as the \introduceterm{inner graph}
in the graph products
$\blowupConst{\graphstd_1}{\graphstd_2}$ 
and
$\blowupBlackConst{\graphstd_1}{\graphstd_2}$. 

Intuitively, when taking the graph product of $\graphstd_1$
and~$\graphstd_2$ the idea is to replace every vertex $v$ of the outer
graph~$\graphstd_1$ with a (possibly slightly modified) copy of the inner
graph~$\graphstd_{2}$. We will refer to this copy as the \mbox{$v$-block} in
the product graph.
The edges inside blocks are specified by the inner graph. 
For edges
$(u,v)\in \edges{\graphstd_{1}}$
in the outer graph, we will need to connect the sink of the $u$-block
to vertices in the \mbox{$v$-block} in some way, and this is the crux of the
construction.

A first naive approach would be to add an edge from the sink of the
$u$-block to every source vertex of the \mbox{$v$-block} (as shown in the graph
product~$\blowupNaiveConst{G_1}{G_2}$
in \reffig{fig:overview-blowup}).
Sadly, this simple idea fails for both \pstandard and reversible
pebbling. It is not hard to find examples showing that the pebbling
price of
$\blowupNaiveConst{G_1}{G_2}$
is not a function of the pebbling prices of
$G_1$ and~$G_2$.

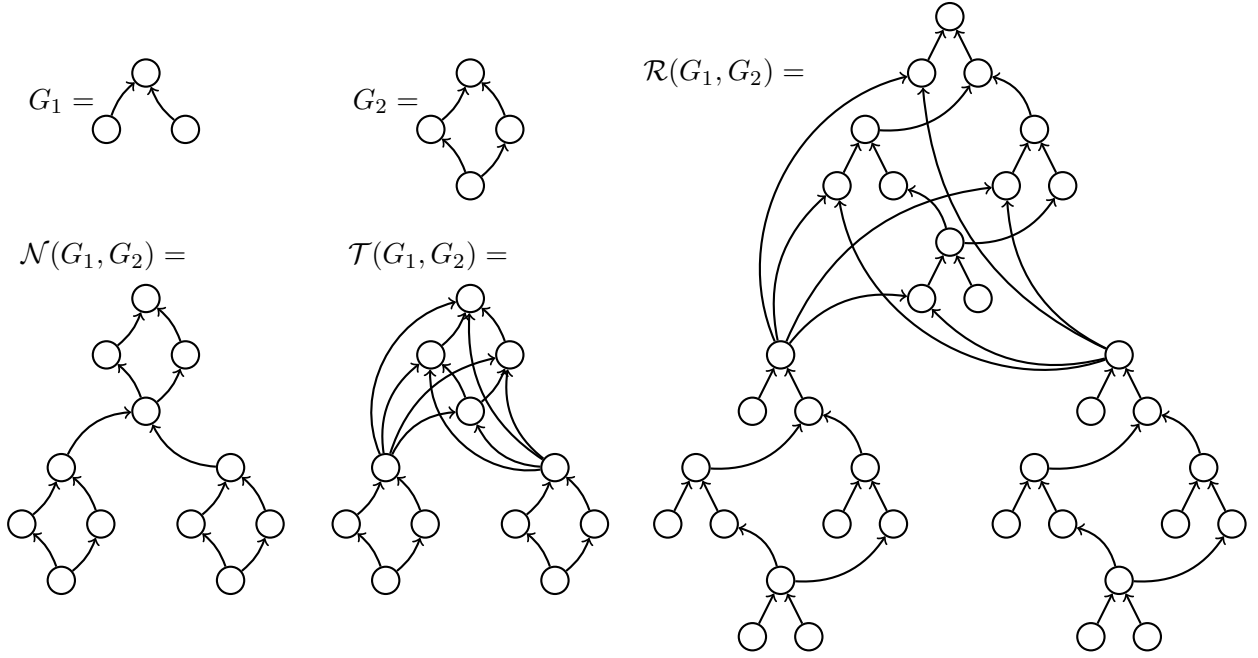
\begin{figure}[tp]
  \begin{center}
    \begin{tikzpicture}[scale=.75]
      \ExampleBlowUp
    \end{tikzpicture}
    \caption{Examples of  graph products 
      as applied to a
      pyramid of height~$1$
      (denoted $G_1$)
      and a rhombus (denoted $G_2$).}
    \label{fig:overview-blowup}
  \end{center}
\end{figure}
 
A slightly more refined idea
is to add edges from the sink of the $u$-block to
all vertices
in the \mbox{$v$-block} (as in the 
graph~$\blowupExpensiveConst{G_1}{G_2}$ 
in \reffig{fig:overview-blowup}).
While we can observe right away that this idea is a non-starter, since
it will blow up the fan-in of the product DAG (and with no bounds on fan-in
the gap amplification would be trivial), it turns out that the
analysis yields interesting insights for the graph product that we
will actually use.  We will therefore employ this toy construction to
showcase some of the ideas and technical challenges that arise in the
actual proof of \refth{th:blowup-both}.

Recall that we want to prove that
$
\Persistentprice{\blowupExpensiveConst{G_1}{G_2}} = 
\pebsp_{1}+\pebsp_{2}-1$, where
$\pebsp_{i} =
\Persistentprice{\graphstd_{i}}$ for $i=1,2$.
To reversibly pebble 
the product graph~$\blowupExpensiveConst{G_1}{G_2}$ 
in at most this amount of space
we simulate a
minimal space pebbling of~$\graphstd_1$,
where pebble placement or removal involving a vertex~$v$
of~$\graphstd_{1}$   
invokes a complete pebbling 
(or unpebbling)
of the copy of~$\graphstd_2$ corresponding
to the \mbox{$v$-block}.
This simulation uses at most $\pebsp_{2}$ pebbles 
in the relevant \mbox{$v$-block}
and at most \mbox{$\pebsp_{1}-1$ pebbles} on sinks of other blocks, \ie 
no more than $\pebsp_{1}+\pebsp_{2}-1$ pebbles in total.

Proving the lower bound
$
\Persistentprice{\blowupExpensiveConst{G_1}{G_2}} \geq
\pebsp_{1}+\pebsp_{2}-1$
is the difficult part. 
Here the approach is to assume that we are given a complete
pebbling~$\pebblingBlowUpExpensive$ of
$\blowupExpensiveConst{G_1}{G_2}$ 
and extract from it a  pebbling strategy~$\pebbling$
for~$\graphstd_{1}$ with the 
hope
that an expensive configuration in~$\pebbling$ will also
help us to pinpoint an expensive configuration
in~$\pebblingBlowUpExpensive$.

The most straightforward way to obtain a pebbling strategy~$\pebbling$
for~$\graphstd_{1}$ from~$\pebblingBlowUpExpensive$ would be to make a
vertex~$v$ in~$\graphstd_{1}$ contain a pebble or not depending
only on the local pebble configuration of the \mbox{$v$-block} 
in~$\blowupExpensiveConst{G_1}{G_2}$.
A natural idea is that $v$  should get a pebble if the  
\mbox{$v$-block} has a pebble on its sink and that this pebble should
be removed from~$v$ when the corresponding block 
has been emptied of pebbles.
If we apply this reduction to a pebbling~$\pebblingBlowUpExpensive$
of~$\blowupExpensiveConst{G_1}{G_2}$ we obtain a valid pebbling
of~$\graphstd_{1}$. 
The problem, however, is that $\pebblingBlowUpExpensive$ might locally
be doing a visiting pebbling (as defined in
\refsec{sec:prelim-pebbling})
of the copy of~$\graphstd_{2}$  corresponding to the \mbox{$v$-block}
as a way of moving pebbles on or off other blocks. The consequence of
this would be that a configuration of 
maximal space~$\pebsp_{1}$ 
in~$\pebbling$ may result from  a configuration in
$\pebblingBlowUpExpensive$ that 
uses
space only
$\pebsp_{1}+\pebsp_{2} - 2$, 
which is off by one compared to what we need and hence 
destroys
the gap in pebbling price that we are trying to create.

If the visiting price of~$G_2$ is the same as its persistent price,
then this problem does not arise, but since this does not hold for
graphs in general we need to 
argue more carefully.
It is true that a visiting pebbling of a 
copy of~$G_2$
might save one pebble
as compared to a persistent pebbling, but whenever the sink contains a
pebble in a visiting but not persistent pebbling we know that there
must also be some other vertex 
in~$G_2$
that has a pebble (or else
the pebbling would be persistent by definition). We need to count such
pebbles also in our analysis.

To this end, we make a distinction between blocks that have paid the
persistent price 
and the blocks that have paid the visiting price but not the
persistent price.
We say that 
the copy of~$G_2$ corresponding to some \mbox{$v$-block} is 
\introduceterm{\vislockedfull{}}, or just
\introduceterm{\vislocked{}} for brevity,
at some point in time if the current pebble configuration 
on its vertices requires reversible pebbling space~$\pebsp_{2}-1$ to
be reached,
and that the \mbox{$v$-block} is 
\introduceterm{\perlockedfull{}} 
(or \introduceterm{\perlocked{}} 
for short)
if the configuration 
has reversible pebbling price~$\pebsp_{2}$.

We can now define a more refined 
way of projecting $\pebblingBlowUpExpensive$-configurations to
$\pebbling$-configurations
as follows.
If a \mbox{$v$-block} has paid the persistent
price, we put a pebble on the corresponding vertex~$v$ in~$G_1$. If a
block has paid just the visiting price but not the persistent price,
then we might still put a pebble on~$v$ in~$G_1$, but we only do so if
an additional (and slightly delicate) technical 
condition\footnote{We do not want to get into too detailed a technical
  argument here, but just for the record pebble configurations on
  $\blowupExpensiveConst{G_1}{G_2}$
  can be
  projected to configurations on~$G_1$ in two stages as follows:
  \begin{enumerate}
  \item 
    Let $\pconf \subseteq \vertices{G_1}$
    consist of all vertices~$u$ such that the
    configuration on the $u$-block in
    $\blowupExpensiveConst{G_1}{G_2}$
    is \perlockedfull.
  \item
    Let $\pconf' \subseteq \vertices{G_1} \setminus \pconf$ consist of
    all vertices~$v$ such that 
    (a)~$v$ is not already surrounded by~$\pconf$, and
    (b)~the configuration on the $v$-block in 
    $\blowupExpensiveConst{G_1}{G_2}$
    is \vislockedfull.
  \end{enumerate}
  With this notation, the projected pebble configuration on~$G_1$ is
  defined to be 
  $\pconf \cup \pconf'$.}
holds for the pebbling configurations in the blocks corresponding to
predecessors of~$v$. This technical condition is designed so that with
some additional work\footnote{One added technical complication that we
  have to take care of here is that when we apply our 
  projection to a pebbling~$\pebblingBlowUpExpensive$ of
  $\blowupExpensiveConst{G_1}{G_2}$ 
  to obtain a sequence of pebble configurations on~$G_1$, this
  sequence need not be a valid pebbling of~$G_1$.
  However, when the projected pebble configuration on~$G_1$ changes
  after a pebbling move we can insert a legal pebbling sequence
  between the two projected configurations that passes through all
  vertices of~$G_1$ corresponding to \vislocked blocks, where pebbles
  are added in topological order and removed in inverse topological
  order, and this local pebbling does not affect the overall
  argument.}
we are still able to extract a legal pebbling strategy for~$G_1$ by
applying this  projection.
Furthermore, 
it will be the case 
that every pebbling
move on a vertex in the outer graph~$G_1$ is the result of the
copy of~$G_2$ corresponding to some \mbox{$v$-block} paying the
persistent or visiting price.

The reversible pebbling~$\pebbling$ thus extracted 
will be  a persistent pebbling 
of~$G_1$ by construction, so it must contain a
configuration with~$\pebsp_1$ pebbles. If this 
configuration was reached because a block paid the persistent price,
then that block contains~$\pebsp_2$ pebbles at a time when at least
$\pebsp_1-1$~other blocks have at least $1$~pebble each, which is the
lower bound that we are after. If the pebble configuration on~$G_1$
in~$\pebbling$ was reached because a block paid the visiting price,
however, then we are potentially still one pebble short. This is where
the additional technical condition mentioned above comes into play.
This condition on the predecessor blocks implies that we can find
at least one other block that also paid just the visiting price and
therefore must contain two pebbles. Summing up, we obtain one block
that  has at least $\pebsp_2-1$~pebbles, another block that has at
least $2$~pebbles, and at least $\pebsp_1-2$~additional blocks that
contain at least $1$~pebble each, and so the lower bound holds in this
case as well.
(Incidentally, this second case is the one where our first, naive,
graph product
$\blowupNaiveConst{G_1}{G_2}$
fails.)

We already observed, however, that the construction
$\blowupExpensiveConst{G_1}{G_2}$
does not get us very far
because it blows up the indegree of the resulting
product graph. 
Therefore, in the actual proof of \refth{th:blowup-both} 
we have to consider a different construction.
Briefly, the idea is to start with the 
graph~$\blowupExpensiveConst{\graphstd_{1}}{\graphstd_{2}}$
but to bring the indegree down by splitting each vertex~$w$
in every block into three vertices
$
w\extblowup,
w\intblowup,
w\outblowup
$.
All edges to~$w$ from other blocks are routed to~$w\extblowup$, all
edges from within the block are routed to~$w\intblowup$, 
and finally we add edges from~$w\extblowup$ and~$w\intblowup$
to~$w\outblowup$. 
This is the graph product
$\blowupConst{\graphstd_{1}}{\graphstd_{2}}$ 
that we use to amplify differences in reversible pebbling price, and 
that is also illustrated
in \reffig{fig:overview-blowup}.
Now we have to prove that the ideas just outlined work for this new
construction where each vertex has been replaced by a small ``cloud'' of
three vertices. The proof of this is much more technically challenging
than for the toy case discussed above, and there is no room to go
into details here. 

At this point we want to switch gears a bit and briefly discuss an
application in proof complexity of the \PSPACE-hardness result for
reversible pebbling. 
Perhaps the most well-studied proof system for proving the
unsatisfiability of, or refuting, CNF formulas is \introduceterm{resolution}
(we do not give any formal definition here, referring instead to, \eg, 
\cite{Segerlind07Complexity} for the necessary details).
Every resolution proof can be represented as a DAG, and the
\introduceterm{depth} of this proof is the length of a longest path in
this DAG. The \introduceterm{resolution depth} of refuting an
unsatisfiable CNF formula 
is the smallest depth of any resolution proof for the formula.
It was shown in~\cite{Chan13JustAPebble} that computing the reversible
pebbling price of a graph of fan-in~$\ell$ 
reduces to computing the resolution depth of a $(\ell+1)$-CNF
formula, and from this we can obtain the following corollary.

\begin{corollary}
  \label{th:resolution-pspace}
  For any fixed positive integer~$\pebsepconst$, 
  it is \PSPACE-complete to compute the resolution depth of refuting 
  $3$-CNF formulas up
  to an additive error~$\pebsepconst$.
\end{corollary}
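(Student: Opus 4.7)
The plan is to combine the hardness of approximation result for reversible pebbling (Theorem~\ref{th:approx-both}) with the reduction from reversible pebbling to resolution depth established by \theauthorSMC{} in~\cite{Chan13JustAPebble}. Specifically, that reduction takes a DAG~$G$ of fan-in~$\ell$ and outputs a $(\ell+1)$-CNF formula~$\fstd_G$ such that the resolution depth of refuting~$\fstd_G$ equals $\persistentprice{G}$ (or differs from it by a fixed constant that can be computed in polynomial time; the exact form of the equality does not matter for our purposes, since we only need that the reduction preserves the quantity up to an \emph{efficiently computable} shift). Taking $\ell = 2$ produces a $3$-CNF formula, which is precisely the input class for which we want to claim hardness.

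To prove PSPACE-hardness, I would start from a single-sink DAG~$G$ of fan-in~$2$ together with a parameter~$\pebsp$ provided by the promise problem of Theorem~\ref{th:approx-both}, so that distinguishing $\persistentprice{G} \leq \pebsp$ from $\persistentprice{G} \geq \pebsp + \pebsepconst$ is PSPACE-hard. Applying the Chan reduction in polynomial time yields a $3$-CNF formula~$\fstd_G$ whose resolution depth is $\persistentprice{G}$ up to an efficiently computable offset~$\delta$. Therefore deciding whether the resolution depth of~$\fstd_G$ is at most $\pebsp + \delta$ or at least $\pebsp + \delta + \pebsepconst$ is PSPACE-hard, which is exactly the claimed additive-$\pebsepconst$ hardness of approximation for resolution depth on $3$-CNF formulas.

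For the matching PSPACE upper bound, one observes that the exact resolution depth of an unsatisfiable CNF formula~$\fstd$ can be computed in polynomial space, for example by a standard recursive procedure that, for every partial assignment~$\partassign$, computes $\min_{\variable} \max_{b \in \{0,1\}} d(\fstd\!\upharpoonright_{\partassign \cup \{\variable = b\}}) + 1$, terminating when the restricted formula contains the empty clause. This recursion has depth at most the number of variables and uses only polynomial space per level, so it runs in PSPACE; approximating a quantity to within an additive constant is no harder than computing it exactly, so the promise problem lies in PSPACE as well.

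The main conceptual step is just invoking the Chan correspondence correctly; the only mildly delicate point is making sure that the reduction from~\cite{Chan13JustAPebble} preserves the pebbling measure up to an additive shift that is fixed (or at least efficiently computable) and independent of the gap parameter~$\pebsepconst$, so that a gap of size~$\pebsepconst$ in reversible pebbling price translates verbatim into a gap of the same size~$\pebsepconst$ in resolution depth. Once this is checked, the corollary follows immediately by combining Theorem~\ref{th:approx-both} with the reduction and the PSPACE upper bound sketched above.
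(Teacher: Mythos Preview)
Your overall approach matches the paper's: reduce from Theorem~\ref{th:approx-both} via the Chan correspondence between reversible pebbling price and resolution depth. The one point where your argument is not quite correct is precisely the ``mildly delicate point'' you flag at the end. The formula~$\fstd_G$ produced in~\cite{Chan13JustAPebble} has resolution depth equal to $\persistentprice{G'}$, where $G'$ is $G$ with a fresh sink~$\sink'$ appended above the old sink~$\sink$. The shift $\delta = \persistentprice{G'} - \persistentprice{G}$ is therefore \emph{not} a fixed constant nor efficiently computable: it is either~$0$ or~$1$ depending on~$G$, and deciding which is exactly as hard as the problem you are reducing from. So your forward reduction, which needs to output the threshold $\pebsp+\delta$, does not go through as written.

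The paper sidesteps this by arguing contrapositively: an algorithm approximating resolution depth of $\fstd_G$ within~$\pebsepconst$ approximates $\persistentprice{G'}$ within~$\pebsepconst$, hence $\persistentprice{G}$ within~$\pebsepconst+1$, contradicting Theorem~\ref{th:approx-both} (which holds for every fixed constant). Your forward argument is equally easy to repair: since $\lvert\delta\rvert \le 1$, just start from the gap-$(\pebsepconst+1)$ instance of reversible pebbling to obtain a gap of~$\pebsepconst$ in resolution depth. Either way, the key observation you were missing is that the shift is \emph{bounded} rather than computable, and that a bounded shift is absorbed because Theorem~\ref{th:approx-both} applies for every constant.
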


\ifthenelse{\boolean{conferenceversion}}
{\begin{IEEEproof}}
{\begin{proof}}
  Assuming that we can efficiently compute the resolution depth within
  an additive error at most~$\pebsepconst$, we show how to efficiently
  compute the reversible pebbling price of any graph~$\graphstd$
  within an additive error  $\pebsepconst+1$, contradicting
  \refth{th:approx-both}. 
  
  Letting~$\sink$ denote the unique sink of~$\graphstd$, we consider a
  new graph~$\graphstd'$ 
  which is $\graphstd$ augmented with a new successor~$\sink'$
  of~$\sink$ (\ie
  $\graphstd'=(V\cup\{\sink'\},E\cup\{(\sink,\sink')\})$ 
  in formal notation). 
  The reversible
  pebbling prices of~$\graphstd$ and~$\graphstd'$ differ by at most one.
  For any graph~$\graphstd$, \cite{Chan13JustAPebble}~exhibits an
  efficiently constructible unsatisfiable CNF
  formula~$\fstd_{\graphstd}$ that requires resolution depth equal to
  the  reversible pebbling price of~$\graphstd'$.  The width of the formula
  is equal to the fan-in of~$\graphstd$ plus one, so the result holds
  \mbox{for $3$-CNFs}.

  Hence, if we could estimate the resolution depth of refuting
  $\fstd_{\graphstd}$, \ie the reversible pebbling price of
  $\graphstd'$, within error~$\pebsepconst$, this would yield an
  estimate of the reversible pebbling price of~$\graphstd$ to within error
  $\pebsepconst+1$.
\ifthenelse{\boolean{conferenceversion}}
{\end{IEEEproof}}
{\end{proof}}

We wrap up this section by switching back to pebbling and describing
the product construction
$\blowupBlackConst{\graphstd_{1}}{\graphstd_{2}}$ 
used to amplify \pstandard pebbling price. 
In this construction we also replace every vertex of~$G_1$ with a copy
of~$G_2$, but this time we append what we refer to as a 
\introduceterm{centipede} graph to 
the sink of every copy. A centipede is a path where each vertex but
the source has an extra, unique predecessor. 
To connect the blocks, for every edge
$(u,v)\in E(G_1)$ we add edges from the sink of the~$u$-centipede
to every source of the~$v$-centipede. See 
\reffig{fig:overview-black-blowup}
for an illustration.

\begin{figure}[tp]
  \begin{center}
    \begin{tikzpicture}[scale=.75]
      \BlowUpBlackExample
    \end{tikzpicture}
    \caption{Illustration of  \pstandard pebbling graph product 
      $\blowupBlackConst{\graphstd_{1}}{\graphstd_{2}}$. 
    }
    \label{fig:overview-black-blowup}
  \end{center}
\end{figure}
 
Setting
$\pebsp_{i} = \Pebblingprice{\graphstd_{i}}$ for $i=1,2$,
we can pebble the graph product
$\blowupBlackConst{\graphstd_{1}}{\graphstd_{2}}$ 
in space~$\pebsp_{1}+\pebsp_{2}-1$
by simulating an optimal pebbling of~$G_1$: placing a pebble
on a vertex~$v$ of~$G_1$ is simulated by optimally pebbling the sink
of the corresponding \mbox{$v$-block}, and removing a pebble is
simulated by removing the pebble on the sink.

This pebbling strategy is in fact optimal, and we can show this by
projecting any \pstandard pebbling~$\pebblingBlowUpBlack$ of
$\blowupBlackConst{\graphstd_{1}}{\graphstd_{2}}$ 
to a strategy~$\pebbling$ for~$\graphstd_1$. Each time any block
in~$\blowupBlackConst{\graphstd_{1}}{\graphstd_{2}}$ 
contains $\pebsp_2$~pebbles, we pebble all vertices in~$G_1$ whose
predecessors 
have pebbles and whose corresponding block in~$\blowupBlack$ has a
pebble. When a block 
in~$\blowupBlackConst{\graphstd_{1}}{\graphstd_{2}}$ 
becomes empty, we remove the pebble from the
corresponding vertex in~$G_1$. This projection has the property that when the
sink of a block is pebbled, the corresponding vertex in $\graphstd_1$
is also pebbled. Arguing similarly to in the reversible case, we show
that a strategy~$\pebblingBlowUpBlack$ for $\blowupBlack$
using~$\pebsp$ pebbles yields a strategy for $\graphstd_1$
using~$\pebsp - \pebsp_2 +1$ pebbles. Therefore,
$\pebblingBlowUpBlack$ must use space at 
\mbox{least~$\pebsp_1 + \pebsp_2 -1$, and}
hence the graph product
$\blowupBlackConst{\graphstd_{1}}{\graphstd_{2}}$ has the property
claimed in  \refth{th:blowup-both}.

\section{Separation between \Pstandard and Reversible Pebbling}
\label{sec:separation-standard-reversible}

In this section we discuss how the reversible pebbling price compares
with the \pstandard one.
A reversible pebbling is also a legal \pstandard pebbling, but the
opposite is not always true. However it is possible to construct a
reversible pebbling from a \pstandard pebbling of time $\stoptime$ that
costs at most $ \log \stoptime $ times the price of the \pstandard pebbling.

\begin{theorem}[\cite{Kralovic04TimeSpaceReversible}]
  \label{th:reversible-from-standard}
  If graph $G$ has a \pstandard pebbling of time $\stoptime$ and space
  $p$, then $G$ has reversible pebbling price at most $p \ceiling{\log
    \stoptime}$.
\end{theorem}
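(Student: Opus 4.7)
I would prove this by strong induction on $\stoptime$, using Bennett's divide-and-conquer strategy for reversible simulation. Given the \pstandard pebbling $\pebbling = (\pconf_0, \pconf_1, \ldots, \pconf_{\stoptime})$ with $\pconf_0 = \emptyset$ and $\setsize{\pconf_i} \leq p$ throughout, the construction splits $\pebbling$ at the midpoint $m = \ceiling{\stoptime/2}$ into halves $\pebbling_1 = (\pconf_0, \ldots, \pconf_m)$ and $\pebbling_2 = (\pconf_m, \ldots, \pconf_{\stoptime})$, each a \pstandard pebbling of time at most $\ceiling{\stoptime/2}$ and space at most $p$.

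The base case $\stoptime = 1$ is straightforward: the single \pstandard move from $\emptyset$ must be a placement on a source vertex, which is already a legal reversible move. For the inductive step, the first half yields by induction a reversible pebbling $R_1$ from $\emptyset$ to $\pconf_m$ of space at most $p \ceiling{\log(\stoptime/2)}$. For the second half, the idea is to carry $\pconf_m$ as a fixed ``background'' of at most $p$ pebbles and recursively build a reversible simulation $R_2$ of $\pebbling_2$ on top of this background, using at most another $p \ceiling{\log(\stoptime/2)}$ pebbles. Concatenating $R_1$ with $R_2$ then produces a reversible pebbling from $\emptyset$ ending at $\pconf_{\stoptime}$, whose space is bounded by $\setsize{\pconf_m} + p \ceiling{\log(\stoptime/2)} \leq p + p(\ceiling{\log \stoptime} - 1) = p \ceiling{\log \stoptime}$, as claimed. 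The underlying recurrence is $s(\stoptime) \leq s(\stoptime/2) + p$, whose solution gives the desired bound.

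\textbf{Main obstacle.} The key subtlety is making the construction of $R_2$ rigorous. A direct application of the inductive hypothesis is problematic because $\pebbling_2$ starts from the nonempty configuration $\pconf_m$, and some of its \pstandard moves---in particular removals of pebbles whose predecessors are absent from the current configuration---are not legal reversible moves in isolation. Bennett's technique handles this by strengthening the inductive claim to apply to sub-intervals of \pstandard pebblings whose starting configurations are themselves reachable via a previously-constructed reversible pebbling, and by briefly running prior portions of the simulation backward to restore the missing predecessor pebbles whenever a reversible removal would otherwise be blocked, before running forward again. Verifying that these backward/forward replays compose into a single valid reversible pebbling---and that the total space never exceeds $p \ceiling{\log \stoptime}$ at any point---is the delicate technical heart of the argument, and is exactly where the multiplicative $\log \stoptime$ overhead enters.
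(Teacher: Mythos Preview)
Your approach is Bennett's recursive divide-and-conquer simulation, which is indeed how Kr\'alovi\v{c} originally proved this. It is correct in outline, and you have correctly located the one genuinely delicate point (reversibly simulating removals whose predecessors are absent).

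The paper, however, takes a different route. Rather than building a reversible pebbling directly, it invokes the equality $\persistentprice{G} = \dtprice{G}$ from~\cite{Chan13JustAPebble} and exhibits a \dtpebbler strategy in the Dymond--Tompa game that wins in at most $p\lceil\log\stoptime\rceil$ rounds. \dtpebbler maintains an interval $[a,b]$ of the given standard pebbling with the invariant that the currently challenged vertex lies in $\pconf_b$ while all of $\pconf_a$ is already pebbled; at each step \dtpebbler pebbles the vertices of $\pconf_m$ for $m=(a+b)/2$, and \dtchallenger's response (jump or stay) selects which half to recurse into. At most $\lceil\log\stoptime\rceil$ configurations are visited, each costing at most $p$ rounds.

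The bisection idea is the same in both arguments, but the Dymond--Tompa framing completely sidesteps the obstacle you flag: in that game \dtpebbler only ever \emph{places} pebbles, so the question of how to reversibly undo a standard removal never arises. Your approach is self-contained and does not rely on the DT equivalence, at the cost of the forward/backward replay machinery; the paper's proof is a few lines but imports a nontrivial external result.
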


\begin{proof}[Proof sketch] 
  Let $\pebbling=(\pconf_{0},\ldots,\pconf_{\stoptime})$ be a \pstandard
  pebbling of $G$ in space $p$. We show a \dtpebbler strategy
  for the Dymond--Tompa game on $G$ that allows \dtpebbler to win in at
  most $p\ceiling{\log \stoptime}$ rounds. Since
  $\dtprice{G} = \persistentprice{G}$ this is sufficient.
  \dtpebbler keeps as an invariant an interval $[a,b]$ such that the
  challenged pebble is in $\pconf_b$ and all vertices in $\pconf_a$
  are pebbled but not challenged. Initially the interval is
  $[0,\stoptime]$, and the strategy proceeds by bisection. At each
  bisection step \dtpebbler starts pebbling the vertices in the
  configuration $\pconf_m$, with $m=(a+b)/2$, in any order. If
  \dtchallenger \dtjumps to a vertex $v$, then let $t$ be the smallest
  number such that $a\leq t$ and $v\in \pconf_t$. \dtpebbler now plays
  in $[a,t]$. The interval halves because $t\leq m$. If \dtchallenger
  \dtstays in all moves, then \dtpebbler plays in $[m,b]$ and the
  interval also halves. When the interval becomes unit, the \dtpebbler
  invariant implies that the move from $\pconf_a$ to $\pconf_b$ is
  precisely a placement on the challenged vertex. Therefore, the
  predecessors of the challenged vertex are pebbled and the game
  ends. The game considers at most $\ceiling{\log \stoptime}$
  configurations of $\pebbling$ and spends at most $p$ rounds on each.
\end{proof}

We already know that the difference between the \pstandard and
reversible pebbling price is unbounded. For example the \pstandard
pebbling price for a path of length $n$ is 2, while its reversible
pebbling price is $\bigtheta{\log n}$.
It follows that if a DAG $ G $ has depth $d$ and a \pstandard pebbling
of time $ \stoptime $ and space $ p $, then 
\begin{equation*}
  \maxofexpr{p,\log d} \leq \persistentprice{G} \leq p\log \stoptime
\end{equation*}

We rule out the possibility of a simulation with only an additive
loss. Indeed we show a separation which is multiplicative in terms of
the logarithm of the size of the graph.

\begin{theorem}
  \label{th:separation-standard-reversible}
  For any function $\pebsp(n) = \Bigoh{n^{1/2 - \epsilon}}$
  where $\epsilon > 0$ is constant there are DAGs
  $\set{G_n}_{n=1}^{\infty}$ of size~$\bigtheta{n}$
  with a single sink and fan-in~$2$
  such that
  $\blackprice{G} = \bigoh{\pebsp(n)}$
  and 
  $\persistentprice{G} = \bigomega{\pebsp(n) \log n}$
  (where the hidden constant depends linearly on~$\epsilon$). 
\end{theorem}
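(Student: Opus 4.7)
The plan is to take $G_n$ to be the \emph{pike graph} of width $\pebsp = \pebsp(n)$ and length $\ell$, namely the layered DAG with $\pebsp$ vertices $x_{i,1},\dots,x_{i,\pebsp}$ at each layer $i=1,\dots,\ell$, edges $(x_{i,c}, x_{i+1,c})$ (same column) and $(x_{i,c}, x_{i+1,(c \bmod \pebsp)+1})$ (cyclic diagonal neighbour), and a funnel tapering to a single sink on top, exactly as in \reffig{fig:pike-overview}. This graph has fan-in $2$, a single sink, and $\Theta(\pebsp \ell)$ vertices. I would set $\ell \defeq \Theta(n/\pebsp)$ so that $\Setsize{\vertices{G_n}} = \Theta(n)$; since $\pebsp = \Bigoh{n^{1/2-\epsilon}}$ we get $\ell/\pebsp = \Bigomega{n^{2\epsilon}}$ and hence $\log(\ell/\pebsp) = \Bigomega{\epsilon \log n}$, which is the factor we want to extract in the lower bound.

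For the upper bound $\blackprice{G_n} \leq \pebsp + 2$ I would exhibit a layer-sweep strategy: maintain $\pebsp$ pebbles on layer $i$, then advance column by column to layer $i+1$. Because each pebble on layer $i$ is a predecessor of exactly two vertices on layer $i+1$, by placing the new pebbles in a suitable cyclic order we can remove each layer-$i$ pebble as soon as its two successors are in place. A routine count shows the peak configuration has at most $\pebsp + 2$ pebbles, giving $\blackprice{G_n} = \Bigoh{\pebsp(n)}$.

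For the lower bound I would use the equivalence $\persistentprice{G} = \dtprice{G}$ from \cite{Chan13JustAPebble} and design a \dtchallenger strategy in the Dymond--Tompa game forcing $\Bigomega{\pebsp \log(\ell/\pebsp)}$ rounds. The strategy is geometric bisection. Inductively, \dtchallenger sits on the current challenged vertex $c$ (initially the sink, so the ``relevant'' layer range has length $\ell$) and waits for \dtpebbler to create a pebble cut between the sources of $\ancnode{c}$ and $c$ itself. At that moment \dtchallenger looks at the latest placed pebble: if it lies in the upper half of the current relevant range, \dtchallenger \dtjumps to it, otherwise \dtchallenger \dtstays on $c$. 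Either choice yields a new challenged vertex whose ancestor subgraph still contains a pike-like region of length at least half the previous one. Since any vertex cut separating the sources of the pike from any vertex well above layer $\pebsp$ must contain at least $\pebsp$ vertices (the pike has $\pebsp$ vertex-disjoint source-to-sink paths, by Menger's theorem), each bisection phase forces \dtpebbler to place $\Bigomega{\pebsp}$ pebbles that contribute to a cut within the current subpike. Summing over $\Bigomega{\log(\ell/\pebsp)}$ phases and substituting the choice of $\ell$ yields $\persistentprice{G_n} = \Bigomega{\pebsp(n) \cdot \epsilon \log n}$, as claimed.

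The main obstacle is making the phase-by-phase accounting rigorous, because in the DT game pebbles are never removed, so pebbles accumulated in earlier phases persist and could potentially contribute to cuts in later phases without \dtpebbler spending fresh rounds. The remedy is that \dtchallenger's bisection is chosen so that the new relevant layer range is strictly above or strictly below the ``old'' cut, meaning the pebbles forming earlier cuts lie outside the new subpike and cannot by themselves form a cut there. The key invariant to prove is that after the $k$th phase the number of pebbles lying strictly inside the current subpike is less than $\pebsp$, so that \dtpebbler really must place $\Bigomega{\pebsp}$ fresh pebbles in phase $k{+}1$ before a new cut appears. Establishing this invariant (and handling the geometry at the boundary layers of the subpike carefully, which is where the $\log(\ell/\pebsp)$ rather than $\log \ell$ factor comes from, and where the constant depends on $\epsilon$) is the only nonroutine step; once it is in place, the recurrence $T(\ell,\pebsp) \geq \pebsp + T(\ell/2,\pebsp)$ with base case at $\ell = \Bigtheta{\pebsp}$ unwinds to the stated bound.
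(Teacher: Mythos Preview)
Your construction and overall strategy match the paper's exactly: the pike graph, the layer-sweep upper bound, and a \dtchallenger bisection argument for the lower bound via the equivalence $\persistentprice{G}=\dtprice{G}$. The place where your sketch diverges from the paper, and where the gap lies, is precisely the ``only nonroutine step'' you flag.

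Your bisection rule keys on the layer of the \emph{latest} pebble and your proposed invariant is that fewer than $\pebsp$ old pebbles survive inside the new subpike. Neither is what the paper does, and your rule as stated does not obviously yield a subpike disjoint from the current cut: if the minimal blocking set is spread over many layers, jumping to the last-placed pebble can leave plenty of cut pebbles both above and below it. The paper instead looks at the \emph{layer range} $[a,b]$ of a minimal blocking set~$B$ and bisects on the midpoint $m=(a+b)/2$ of that range (staying if $m\le \ell/2$, jumping otherwise), so that the new subpike lives entirely above layer~$b$ or entirely below layer~$a$ and is therefore \emph{disjoint} from~$B$. The accounting is then simply that total pebbles $\ge \setsize{B}$ (outside the subpike) plus the inductive bound for the subpike (inside it); no ``fewer than $\pebsp$ survivors'' invariant is needed.

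What makes this work is a structural lemma you are missing: a minimal blocking set of a width-$w$ chain that spreads over $d$ layers has size at least $w+d-1$. This is the replacement for your Menger-style ``any cut has size $\ge \pebsp$'' and is exactly what pays for the fact that the new subpike can be up to $d/2$ layers shorter than $\ell/2$. With $\setsize{B}\ge w+d-1$ and subpike length $\ge (\ell-d-1)/2$, the recurrence telescopes to $w\log(\ell/w)/2$; without it, your recurrence $T(\ell,\pebsp)\ge \pebsp + T(\ell/2,\pebsp)$ is not justified when the cut is not concentrated on a single layer.
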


The graphs that we use to witness the separation are the chains or
``wide paths''. A \introduceterm{chain} of width $w$ and length $\ell$
is a graph with $\ell+1$ layers, each having $ w $ vertices, where the
$i$-th vertex of a layer has two incoming edges from the $i$ and
$i+1$-th vertices of the previous layer (modulo $w$). The layers are
indexed from $0$ (the layer of the sources) to $\ell$ (the layer of
the sinks).

Since we want single sink graphs, we define a \introduceterm{\pike} of
width $w$ and length $\ell$ to be a chain of width $w$ and length
$\ell-w+1$ plus a pyramid of height $w-1$, where we identify the sinks of
the chain with the sources of the pyramid. The layers are indexed in
the same way as in the chain.%
\footnote{Equivalently, a \pike of length $ \ell $ and width $ w \leq
  \ell$ is an induced subgraph of a chain of length $ \ell $ and width
  $ w $. Fix one arbitrarily sink $ s $ in the chain: the subgraph
  induced by vertices in $\ancnode{s}$ is indeed a \pike
  of length $ \ell $ and width $ w $.}

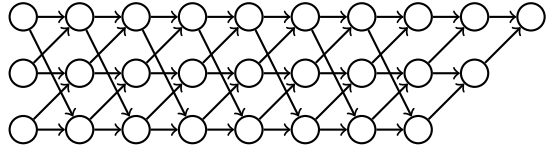
\begin{figure}[tp]
  \begin{center}
    \begin{tikzpicture}[scale=.75,rotate=-90]
      \ExamplePike
    \end{tikzpicture}
    \caption{\Pike graph of length 9 and width 3.}
    \label{fig:pike}
  \end{center}
\end{figure}

By pebbling each layer in order, we get a \pstandard pebbling of a \pike
of width $w$ which uses $w+2$ pebbles.
The reversible pebbling of a \pike depends on its length: a \pike of
width $w$ and length $\ell$ has a reversible pebbling price
$\bigoh{w\log \ell}$. The idea is to simulate in parallel $ w $
copies of the reversible pebbling of the path of length $ \ell $,
which has price $ \bigoh{\log \ell} $. 
We prove \refth{th:separation-standard-reversible} by choosing for each $n$ a \pike of width $w=s(n)$ and length $\ell=n/w$, and showing that this pebbling is essentially optimal.

\medskip 

\newcommand{\stdbset}{B}

A \introduceterm{blocking set} for a vertex set $ T \subseteq V(G)$ is
a subset of vertices $\stdbset\subseteq V(G)$ such that every path
from any source to any vertex in $T$ must contain a vertex in
$\stdbset$. We also say that $\stdbset$ \introduceterm{blocks} $T$.
A blocking set for all sinks of a directed acyclic graph $G$ is also
called a blocking set of $ G$.
We say that $\stdbset$ is a minimal blocking set if no subset of
$\stdbset$ is a blocking set.

\newcommand{\firstlayernot}{a}
\newcommand{\lastlayernot}{b}

A chain of width $ w $ has blocking sets with $ w $ vertices, all in
the same layer. It turns out that if a minimal blocking set has
vertices in multiple layers then it must be larger than that.
We say that a blocking set \introduceterm{spreads} over $d$ layers
when $a$ and $b$ are the lowest and the highest layers that the
blocking set intersects, respectively, and
$d=\lastlayernot-\firstlayernot+1$.

\begin{lemma}
  \label{lem:separator-is-large}
  A minimal blocking set of a chain that spreads over $d$ layers has
  size at least $d+w-1$.
\end{lemma}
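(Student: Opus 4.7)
\medskip

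The plan is to track the ``alive set'' in each layer as a potential function and to show that it must shrink under the pressure of the blocking set, layer by layer, at a controlled rate.

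More concretely, suppose $B$ is a minimal blocking set spreading over layers $a, a+1, \ldots, b$, and set $x_i = |B_i|$ where $B_i = B \cap \text{layer } i$. For each $i \in [a,b]$ I would define $S_i \subseteq \{0, 1, \ldots, w-1\}$ to be the set of column indices $j$ such that vertex $(i,j)$ is reachable from some source by a path avoiding $B$, and let $\phi_i = |S_i|$. First I would observe that $\phi_a = w - x_a$ and $\phi_b = 0$ (otherwise an alive vertex at layer $b$ would propagate straight down to a sink through the blocker-free region, contradicting the fact that $B$ blocks). Next, using the predecessor structure of the chain, the set of vertices in layer $i+1$ that have an alive predecessor in layer $i$ is $T_{i+1} = S_i \cup (S_i - 1 \bmod w)$, and $S_{i+1} = T_{i+1} \setminus B_{i+1}$, so $\phi_{i+1} \geq |T_{i+1}| - x_{i+1}$.

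The key combinatorial step is the expansion estimate $|T_{i+1}| \geq \phi_i + 1$ whenever $0 < \phi_i < w$. This follows by viewing $S_i$ as a union of cyclic arcs on $\{0, \ldots, w-1\}$: the shift $S_i - 1$ extends each arc by one index on its left end, and so $|S_i \cup (S_i - 1)| = \phi_i + (\text{number of arcs of } S_i) \geq \phi_i + 1$. Given this, I need to rule out the degenerate cases $\phi_i \in \{0, w\}$ for $a \le i < b$. If $\phi_i = 0$ for some $i < b$, then $B$ already blocks using only the layers up to $i$, so the non-empty $B_b$ would be redundant. If $\phi_i = w$ for some $i < b$, then the alive set at layer $i$ is the full layer, exactly as if no blockers had been placed in layers $a, \ldots, i$; hence $B' = B \setminus (B_a \cup \cdots \cup B_i)$ would still block (the evolution of the alive sets from layer $i$ onward depends only on $B_{i+1}, \ldots, B_b$), which contradicts minimality since $B_a \neq \emptyset$.

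With these constraints in hand, the inequality $\phi_{i+1} \geq \phi_i + 1 - x_{i+1}$ holds for every $a \leq i < b$, and a telescoping sum gives
\begin{equation*}
  |B| = \sum_{i=a}^{b} x_i
  \geq (w - \phi_a) + \sum_{i=a}^{b-1}\bigl(\phi_i + 1 - \phi_{i+1}\bigr)
  = w - \phi_b + (b-a) = w + d - 1
  \eqperiod
\end{equation*}

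I expect the main obstacle to be the minimality argument ruling out $\phi_i = w$ for intermediate $i$, since this is where we must exploit that $B$ is a \emph{minimal} blocking set rather than merely a blocking set; the expansion estimate $|T_{i+1}| \ge \phi_i + 1$ is the technical crux but is a clean and local statement about cyclic arcs, while the counting and telescoping at the end should be routine.
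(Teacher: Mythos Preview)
Your proof is correct and follows essentially the same approach as the paper's, just in the dual direction: you track $\phi_i$, the number of layer-$i$ vertices reachable \emph{from the sources} avoiding $B$, whereas the paper tracks $f(i)$, the number of layer-$i$ vertices that can reach \emph{a sink} avoiding $B$. Both arguments hinge on the same cyclic expansion step (a proper nonempty subset of $\mathbb{Z}/w\mathbb{Z}$ together with its shift by one has size at least one more), use minimality in the same way to rule out the degenerate values $0$ and $w$ of the potential on intermediate layers, and finish with an identical telescoping sum; the only cosmetic difference is that your boundary conditions are $\phi_a = w - x_a$ and $\phi_b = 0$ while the paper's are $f(a)=0$ and $|B_b| = w - f(b)$.
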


\begin{proof}
  Consider such a minimal blocking set $\stdbset$.
  Sort the layers of the chain from $ 0 $ (sources) to $ \ell $
  (sinks) and
  let $\firstlayernot$ and $\lastlayernot$ be the first and last layers with vertices
  in $ \stdbset $, so that $ d=\lastlayernot -\firstlayernot +1$.
  If $ \firstlayernot = \lastlayernot $ then $\stdbset$ must contain $ w $ vertices
  and the Lemma holds. For the rest of the proof we assume $ \firstlayernot
  < \lastlayernot$.

  Define $f(i)$ to be the number of vertices at layer
  $i$ that can reach a sink without passing through a vertex in $\stdbset$
  (and are not in $\stdbset$ themselves).
  By definition and minimality we get that $ f(i)=0 $ for all layers
  $i\leq\firstlayernot$; $0 < f(i) < w$ for all layers $\firstlayernot < i \leq \lastlayernot$ and
  $f(i)=w$ for $ i > \lastlayernot$. 

  Now we compute the intersection between $\stdbset$ and each
  layer. 
  All vertices at layer $ \lastlayernot $ can reach the sink unless
  they are in $ \stdbset$, therefore $ \stdbset$ must have $w-f(\lastlayernot)$
  elements at the last layer. We claim that that $
  \stdbset$ contains at least $f(i+1)-f(i)+1$ vertices from layer $i$,
  for $\firstlayernot \leq i < \lastlayernot$.

  Indeed, consider the set of the $f(i+1)$ vertices at layer $i+1$
  that can reach a sink, and define $ N_{i} $ to be the set of their
  predecessors in layer $ i $. Since $0<f(i+1)<w$, there are at least
  $f(i+1)+1$ vertices in $N_{i}$.
  
  A vertex in the $i$-th layer can reach a sink if and only if it is
  in $N_i$ and not blocked by $B$.  Since we assumed that exactly $
  f(i) $ of them can reach a sink, it must be that $ |N_{i}| - f(i)$
  vertices are blocked by $ \stdbset $ right on layer $i$, i.e., they
  are contained in $\stdbset$. Thus the intersection between $
  \stdbset $ and layer $ i $ is $\setsize{B_i} \geq
  \setsize{N_{i}}-f(i) = f(i+1)-f(i)+1$.
  
  Using these facts we get that 
  \begin{equation*}
    \setsize{\stdbset} = 
    \sum^{\lastlayernot}_{i=\firstlayernot} \setsize{B_i} \geq 
    w - f(\lastlayernot) +
    \sum^{\lastlayernot-1}_{i=\firstlayernot} f(i+1)-f(i) + 1
    =w +
    f(\firstlayernot) + (d-1) = d+w-1 \eqperiod
  \end{equation*}
\end{proof}

We need to generalize \reflem{lem:separator-is-large} to a \pike in
order to handle blocking sets within the pyramid part.

\begin{lemma}
  \label{lem:pike-separator-is-large}
  A minimal blocking set of a \pike that spreads over $d$ layers has
  size at least $d+q-1$, where $q$ is the width of the topmost layer. 
\end{lemma}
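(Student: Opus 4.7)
The plan is to adapt the proof of \reflem{lem:separator-is-large} to the \pike by tracking the width $w_i$ of each individual layer $i$ rather than working with a single global width. Let $\stdbset$ be a minimal blocking set of the \pike, let $\firstlayernot$ and $\lastlayernot$ be the lowest and highest layers intersected by $\stdbset$ (so $d = \lastlayernot - \firstlayernot + 1$ and $w_\lastlayernot = q$), and reuse the function $f(i)$ counting the vertices at layer $i$ not in $\stdbset$ which can reach a sink along some path avoiding $\stdbset$. The boundary conditions go through essentially as in the chain case: $f(\firstlayernot) = 0$, since a non-blocked vertex at layer $\firstlayernot$ reaching a sink while avoiding $\stdbset$ would combine with a source path available below $\firstlayernot$ to contradict the blocking property; and $0 < f(j) < w_j$ for $\firstlayernot < j \leq \lastlayernot$, where the lower bound uses any $v \in \stdbset$ at layer $\firstlayernot$ together with a minimality-supplied source-sink path through $v$ otherwise avoiding $\stdbset$, and the upper bound is the symmetric argument using any $\stdbset$-vertex at layer $\geq j$.

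Next I would derive the per-layer inequality $|\stdbset_i| \geq f(i+1) - f(i) + 1$ for $\firstlayernot \leq i < \lastlayernot$. Letting $N_i$ denote the predecessors at layer $i$ of the $f(i+1)$ sink-reaching vertices at layer $i+1$, every element of $N_i$ is either in $\stdbset$ or itself reaches a sink avoiding $\stdbset$ via its sink-reaching successor, so $|N_i| \leq |\stdbset_i| + f(i)$. The key geometric ingredient is the uniform expansion bound $|N_i| \geq f(i+1) + 1$: in chain layers this is the cycle-expansion argument of \reflem{lem:separator-is-large}, valid because $0 < f(i+1) < w$; in pyramid layers, layer $i+1$ has width $w_{i+1}$ and layer $i$ has width $w_{i+1}+1$ with predecessor map $k \mapsto \{k, k+1\}$, so for any non-empty subset $S$ of layer $i+1$ the maximal element of $S$ contributes a boundary vertex not in $S$, yielding $|N(S)| \geq |S| + 1$.

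Finally I would handle the top layer and telescope. Every non-$\stdbset$ vertex at layer $\lastlayernot$ has an upward path to the unique sink of the \pike through layers above $\lastlayernot$ which do not meet $\stdbset$, so $f(\lastlayernot) = q - |\stdbset_\lastlayernot|$. Summing over $\firstlayernot \leq i \leq \lastlayernot$ gives
\begin{equation*}
  |\stdbset| \geq (q - f(\lastlayernot)) + \sum_{i=\firstlayernot}^{\lastlayernot - 1} \bigl( f(i+1) - f(i) + 1 \bigr) = q - f(\firstlayernot) + d - 1 = q + d - 1.
\end{equation*}
The main obstacle is establishing the expansion bound $|N_i| \geq f(i+1) + 1$ uniformly across the chain part, the pyramid part, and the transition layer between them: the cyclic slack exploited in the chain proof must be replaced by a path-style boundary argument in the pyramid while still delivering the same additive $+1$. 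Once that geometric fact is in place, the telescoping argument transfers unchanged from \reflem{lem:separator-is-large}.
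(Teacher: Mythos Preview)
Your proof is correct and follows essentially the same approach as the paper. The paper's own proof is a one-liner stating that the argument of \reflem{lem:separator-is-large} goes through verbatim except that the contribution of the top layer~$\lastlayernot$ becomes $q - f(\lastlayernot)$ instead of $w - f(\lastlayernot)$; you spell out the details, including the per-layer expansion bound in the pyramid part (which, as you note, is actually cleaner than the cyclic chain case since $|N(S)| \ge |S|+1$ holds for \emph{every} nonempty~$S$), but the skeleton is identical.
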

\begin{proof}[Proof]
  If the blocking set is located on a single layer the lemma follows
  immediately. 
  Otherwise the proof is very similar to the one of
  \reflem{lem:separator-is-large}, except that the intersection
  between $\stdbset$ and its last layer $\lastlayernot$ has size at
  least $q-f(\lastlayernot)$.
\end{proof}

Now we prove the lower bound in \refth{th:separation-standard-reversible}

\begin{lemma}
  \label{lem:chain-upper-bound}
  The reversible pebbling price of a \pike of width $w$ and length $\ell$ is at least $w\log (\ell/w)/2$.
\end{lemma}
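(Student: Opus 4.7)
The plan is to use the equivalence $\dtprice{G} = \persistentprice{G}$ from \eqref{eq:chan13-equalities} and to exhibit a \dtchallenger strategy in the \dtgametext{} on the given \pike{} $P$ that forces at least $w\log(\ell/w)/2$ rounds. The strategy is designed recursively: at the start of each phase $i$, the challenged vertex is the sink of a sub-pike $P_i \subseteq P$ of length $\ell_i$ and width $w$, with $P_0 = P$ and $\ell_0 = \ell$. Within a phase, \dtchallenger simply \dtstays on the current sink as long as the vertices that \dtpebbler has placed so far inside $P_i$ do not form a blocking set of $P_i$. By \reflem{lem:pike-separator-is-large}, any such blocking set that touches the chain part of $P_i$ has size at least $w$, which forces \dtpebbler to spend a linear-in-$w$ number of placements before the condition is met. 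The only way for \dtpebbler to avoid this is to pebble a predecessor of the current sink directly, but then \dtchallenger can \dtjump to that pebble in the same round to prevent the game from ending.

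Let $v$ be the vertex whose placement first completes such a blocking set, and let $L$ be its layer inside $P_i$. If $L \geq \ell_i/2$, \dtchallenger \dtjumps to $v$ and the next sub-pike $P_{i+1}$ is taken to be the ancestor sub-graph $\ancnode{v}$, which is again a pike of length $L \geq \ell_i/2$ and width $w$. If instead $L < \ell_i/2$, \dtchallenger \dtstays on the current sink, and the ``effective'' sub-pike $P_{i+1}$ is the upper portion of $P_i$ above layer $L$, which has length $\ell_i - L \geq \ell_i/2$; here the already-pebbled blocking set at layer $L$ plays the role of new ``sources'' that persist throughout the rest of the game, since pebbles in the \dtgametext{} are never removed. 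In either branch $\ell_{i+1} \geq \ell_i/2$, so after at least $\floor{\log(\ell/w)}$ phases the sub-pike length first drops below $w$ and the recursion terminates.

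The main technical obstacle, and where I expect the argument to require the most care, is charging pebble placements to phases so that each phase contributes its fair share to the total round count. Because pebbles persist, placements made during phase $i$ remain on the board in phase $i+1$, and in particular the vertex $v$ to which \dtchallenger has just \dtjumped becomes the new sink and is already pebbled; a naive reuse of \reflem{lem:pike-separator-is-large} at the start of the next phase would therefore overcount inherited pebbles as ``new work.'' The resolution is to charge every placed pebble to at most two consecutive phases: the phase during which it was placed, and the next phase in which it participates in a blocking set of the inherited sub-pike. Under this amortized accounting, each phase receives credit for at least $w$ placements (by \reflem{lem:pike-separator-is-large}), while any single placement is double-counted at most once, so the total round count is at least $w\floor{\log(\ell/w)}/2$. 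Together with the base case $\ell_i < w$, in which the claimed bound becomes vacuous, and the routine verification that the ``stay'' branch of the recursion still corresponds to a valid DT sub-game on the upper sub-pike (the persistent pebbles behave exactly as extra sources that cost \dtpebbler{} nothing further), this yields $\persistentprice{P} = \dtprice{P} \geq w\log(\ell/w)/2$, as required.
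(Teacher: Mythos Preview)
Your high-level strategy---stay until the current sub-\pike is blocked, then recurse on a smaller sub-\pike---is the same as the paper's, but the way you choose the next sub-\pike and the amortization you propose both have genuine gaps.

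The core problem is that you define the next sub-\pike from the layer $L$ of the \emph{last-placed} vertex~$v$. The minimal blocking set $B$ need not be concentrated at layer~$L$: it can spread over an interval $[a,b]$ of layers with $v$ anywhere in it. In your jump branch, $P_{i+1}=\ancnode{v}$ contains every vertex of $B$ at layers $\le L$; in your stay branch, the ``upper portion above~$L$'' still contains every vertex of $B$ at layers $>L$. So $P_{i+1}$ is typically \emph{not} disjoint from $B$, and the same pebble can appear in the minimal blocking set of phase $i$, phase $i{+}1$, phase $i{+}2$, and so on. Your amortization rule ``charge each pebble to the phase it is placed and to the next phase in which it lies in a blocking set'' therefore does not cap the number of charges at two: once a pebble has been charged its ``second'' time in phase~$j$, nothing prevents it from belonging to $B_{j+1}$ as well, and then phase $j{+}1$ is short of charges. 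Concretely, if \dtpebbler places a straight cut of size $w$ at layer $\ell/2$, then $P_{i+1}=\ancnode{v}$ is already blocked by $B$ at its top layer, so phase $i{+}1$ has zero new placements and your decision rule (which needs a ``last-placed vertex'' of the current phase) is not even defined.

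The paper avoids this by working with the full layer range $[a,b]$ of the minimal blocking set and choosing the next sub-\pike to be entirely below layer~$a$ (jump) or entirely above layer~$b$ (stay), so that $P_{i+1}\cap B=\emptyset$. This makes the accounting additive with no amortization needed: the rounds in $B$ and the rounds inside $P_{i+1}$ are disjoint. The price is that the sub-\pike loses roughly $d=b-a+1$ layers, not just one, and this is exactly where the full strength of \reflem{lem:pike-separator-is-large} is used: $|B|\ge d+w-1$, so the extra $d-1$ pebbles in $B$ compensate for the lost length. Your argument only invokes the weaker bound $|B|\ge w$, which is not enough to absorb a spread-out blocking set; and the remark about ``\dtpebbler pebbling a predecessor of the sink directly'' does not handle the pyramid case, where a minimal blocking set of width $q<w$ can genuinely have size below~$w$.
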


\begin{proof}
We give a \dtchallenger strategy by induction over $\ell$ that lasts
for $w\log (\ell/w)/2$ moves. Furthermore, the strategy \dtstays as
long as the sink is connected to the sources. The base case is a
\pike of length $\ell \leq w-1$, \ie a pyramid of height $\ell$,
in which case the lemma holds vacuously.

We say that a directed path in the graph is \introduceterm{\semiopen} when
there are no pebbles on it except for its last vertex. A \semiopen path from a
vertex to itself is a single vertex with a pebble on it.

During the game \dtchallenger{} \introduceterm{focuses} on a subgraph of
the \pike, and keeps the following invariant at every round: there is a
\semiopen path from the sink of this subgraph to the currently challenged
pebble.
This concretely means that if \dtpebbler places a pebble inside the
subgraph then \dtchallenger plays according to its strategy for
that subgraph. Instead if the new pebble blocks the \semiopen path
between the currently challenged pebble and the sink of the subgraph,
\dtchallenger \dtjumps to the new pebble---essentially making the
path shorter.

If at some round \dtchallenger focuses on a subgraph, in later rounds
\dtchallenger will never challenge a vertex which is neither in the
subgraph nor in the \semiopen path between its sink and the current
challenge.

Let us now give the strategy for playing inside the subgraph.
As long as the sink of the subgraph is connected to the sources,
\dtchallenger \dtstays.
If the sink is disconnected from the sources by a blocking set,
\dtchallenger decides to \dtjump or to \dtstay depending on the
position of the blocking set. Before describing how \dtchallenger
decides, we describe how the strategy continues in both cases.

We consider a minimal blocking set $\stdbset$ and note that the last pebbled vertex $u$ is in any blocking set. Indeed, there is a path from the sources to the sink that only has pebbles at $u$ and at the sink, otherwise the sink would have already been blocked.

We first consider the strategy after \dtchallenger decides to \dtjump.
Let $v$ be the vertex in the \semiopen path from the sources to $u$ at
the layer immediately before all the vertices in $\stdbset$. From now on
\dtchallenger focuses on the subgraph induced by the ancestors of $v$,
which is a \pike. 
This \pike is not blocked, has no vertex in the blocking set
$\stdbset$, and there is a \semiopen path from $v$ to the
challenged pebble $u$. 

If \dtchallenger decides to \dtstay, it focuses on the \pike with
sources at the layer immediately after all vertices in $ \stdbset $.
Again, this \pike is not blocked, it is disjoint from the blocking set
$\stdbset$, and there is a \semiopen path from its sink to the
currently challenged pebble.

It remains to describe how \dtchallenger decides to \dtjump or
to \dtstay.
If the last layer of the blocking set has width $q < w$, then by \reflem{lem:pike-separator-is-large} it has size at least $q + d -1$.
In this case \dtchallenger \dtjumps and focuses on a \pike of length
$\ell-(q + d - 2)$.
Let $\dtprice{w,\ell}$ be the \dtpricetext of a \pike of width $w$ and length $\ell$.
Overall the \dtchallenger strategy lasts for
\begin{equation}
\setsize{\stdbset} + \dtprice{w,\ell-(q+d-2)}
\geq q + d - 1 + w\log((\ell-(q+d-2))/w)/2 \geq w\log(\ell/w)/2
\end{equation}
steps.
Otherwise the blocking set $\stdbset$ has size at least $w+d-1 \geq w$
for some $ d\geq 1 $ and by \reflem{lem:separator-is-large} it spreads over at most $d$
layers, where $\firstlayernot$ is the lowest and $\lastlayernot$ is the highest, with $ d=\lastlayernot-\firstlayernot+1 $ and $m=(\firstlayernot+\lastlayernot)/2$.  
If $m \leq \ell/2$, \dtchallenger \dtstays and focuses on the \pike of
length $\ell - \lastlayernot - 1$ obtained considering only the vertices at
the layers from $ \lastlayernot + 1 $ to $ \ell $.
Overall the \dtchallenger strategy lasts for
\begin{multline}
\setsize{\stdbset} + \dtprice{w,\lastlayernot-1}
\geq w + d-1 + w\log((\lastlayernot-1)/w)/2
\geq w + d-1 + w\log((\ell-d-1)/2w)/2 \\
= w/2\bigl(\log((\ell-d-1)/2w) + 2 + 2(d-1)/w\bigr)
\geq w/2\bigl(\log(\ell/2w)+1\bigr)
= w\log(\ell/w)/2
\end{multline}
steps.
If $m>\ell/2$, \dtchallenger \dtjumps and focuses on a \pike of length
$\firstlayernot-1$ obtained considering one vertex at layer $ \firstlayernot-1 $ which is
connected to the sources, and taking all vertices which have a path
toward such vertex. 
Overall the \dtchallenger strategy lasts for
\begin{multline}
\setsize{\stdbset} + \dtprice{w,\firstlayernot-1}
\geq w + d-1 + w\log((\firstlayernot-1)/w)/2 \\
\geq w + d-1 + w\log((\ell-d-1)/2w)/2
\geq w\log(\ell/w)/2
\end{multline}
steps.
\end{proof}

Note that \refth{th:separation-standard-reversible} follows if the \pike width is $w=s(n)$ and the length is $\ell=n/w$ because $w\log(\ell/w)=w\log(n/w^2)=\bigtheta{w\log n}$ if $w=\Bigoh{n^{1/2-\epsilon}}$.

\section{Tight Bounds for Trees and Pyramids}
\label{sec:tight-bounds}

In this section we show matching upper and lower bounds for the
persistent pebbling price of complete binary trees and pyramids.
Asymptotically tight results for trees were given in
\cite{Kralovic04TimeSpaceReversible}.
The pyramid graph of height $h$ has a vertex for every pair $(i,j)$
with $0 \leq i \leq j \leq h$. The sources are the vertices
$(0,j)$ for $j\geq 0$, the sink is the vertex $(h,h)$, and every
vertex $(i,j)$ for $i<h$ has one outgoing edge going left to vertex
$(i+1,j)$ if $j>i$ and one outgoing edge going right to vertex
$(i+1,j+1)$ if $j<h$.
A pyramid can be obtained from a complete binary tree of height $h$ by
identifying together some of its vertices, in such a way that the left
and right predecessors of two vertices that get identified, get
pairwise identified as well.
For this reason an upper bound for binary trees also holds for the
pyramids, and a lower bound for pyramids also holds for binary trees.

In the following, let $p = h+\pyrx$ be the persistent price of a
pyramid. A pyramid of height $h$ has \pstandard pebbling price $h+2$
\cite{Cook74ObservationTimeStorageTradeOff}, which means that in the
\pstandard pebbling only two extra pebbles are needed compared to the
height of the pyramid. In a similar fashion $\pyrx$ can be interpreted
as the extra space needed by persistent pebbling, with respect to
pyramind height.
We want to estimate the height of the pyramid that has persistent
pebbling with at most $\pyrx$ extra pebbles.
\begin{definition}\label{def:pyramid-extra-pebbles-pyrg}
  Consider $\Delta \in \Nplus$ and let $\pyrg(\pyrx)$ be the function
  defined by the following recursion,
  \begin{equation*}  
    \pyrg(\pyrx) = \begin{cases}
      0 & \text{if $\Delta=1$}\\
      2^{\pyrg(\pyrx-1) + \pyrx - 2} + \pyrg(\pyrx-1) & \text{otherwise.}
    \end{cases}
  \end{equation*}
We define its inverse as 
\begin{equation*}
  \pyrg^{-1}(h) = \min\{\pyrx \mid \pyrg(\pyrx)\geq h\}.
\end{equation*}
\end{definition}
\noindent We show that $\pyrg(\pyrx)$ is the maximum height of a pyramid that can be persistently pebbled using $\pyrx$ pebbles on top of $h$. Observe that $\pyrg(\pyrx)=\bigomega{\underbrace{2^{2^{\cdots^2}}}_\pyrx}$, therefore $\pyrg^{-1}(h) = \bigoh{\log^* h}$.

\begin{proposition}
  We can compute $\pyrg^{-1}(h)$ in time ${(\log h)}^{\bigoh{1}}$ and
  space $\bigoh{\log h}$.
\end{proposition}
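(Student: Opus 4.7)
The plan is to compute $\pyrg^{-1}(h)$ by evaluating the recurrence for $\pyrg$ iteratively, using the crucial observation that we only need to decide when the accumulated value first reaches~$h$ rather than compute $\pyrg(\pyrx)$ exactly for large $\pyrx$. Since $\pyrg$ grows at tower-of-twos speed, the answer $\pyrx^{*}=\pyrg^{-1}(h)$ satisfies $\pyrx^{*}=\Bigoh{\log^{*}h}$, so the main task is to show that each of the $\pyrx^{*}$ iterations can be carried out in polylog time using $\Bigoh{\log h}$ bits of workspace.

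First I would maintain a single integer variable $v_{\pyrx}$ together with the counter $\pyrx$, initialised at $\pyrx=1$ and $v_{1}=0$, and repeatedly update
\[
v_{\pyrx+1}\;\assigneq\;\minofsetsmall{h}{2^{v_{\pyrx}+\pyrx-1}+v_{\pyrx}}
\eqperiod
\]
The clamping at $h$ is the key point: before computing the exponential I would test whether $v_{\pyrx}+\pyrx-1\geq\ceilingsmall{\log h}$, in which case $2^{v_{\pyrx}+\pyrx-1}\geq h$ and I immediately assign $v_{\pyrx+1}\assigneq h$ without materialising the big power; otherwise the exponent is $\Bigoh{\log h}$ and $2^{v_{\pyrx}+\pyrx-1}$ is an $\Bigoh{\log h}$-bit number that fits comfortably in the workspace. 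The loop halts as soon as $v_{\pyrx}\geq h$, at which point we output the current~$\pyrx$.

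Correctness follows from a straightforward induction: as long as no clamping has occurred, $v_{\pyrx}=\pyrg(\pyrx)$; once clamping occurs at step $\pyrx_{0}$, we have $\pyrg(\pyrx_{0})\geq h$, so $\pyrx_{0}$ is exactly the threshold sought (and by monotonicity of the iteration, earlier uncapped values were $<h$, so the test $v_{\pyrx}\geq h$ first fires at the right index). For the resource bounds, the number of iterations is $\pyrx^{*}=\Bigoh{\log^{*}h}$, and in each iteration we perform a comparison of $\Bigoh{\log h}$-bit integers, one exponentiation $2^{k}$ with $k=\Bigoh{\log h}$ (which is just a bit-shift producing an $\Bigoh{\log h}$-bit result), and one addition of $\Bigoh{\log h}$-bit numbers. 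Each of these is doable in $\polybound{\log h}$ time and reuses $\Bigoh{\log h}$ space, giving the claimed complexity.

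The only delicate point is the capping test and verifying that no intermediate quantity ever grows beyond $\Bigoh{\log h}$ bits; once one is careful to replace $v_{\pyrx}$ by $\min(v_{\pyrx},h)$ at every step and to short-circuit the exponentiation as soon as the exponent reaches $\ceilingsmall{\log h}$, the space bound follows immediately, and there is no real obstacle beyond bookkeeping.
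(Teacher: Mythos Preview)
Your proof is correct and follows essentially the same approach as the paper: iterate $\pyrx$ upward, maintain the current value of $\pyrg(\pyrx-1)$ (which stays below~$h$), and avoid materialising the large power of two by first comparing bit-lengths. The paper phrases the short-circuit test as checking whether $\lfloor\log(h-\pyrg(\pyrx-1))\rfloor < \pyrg(\pyrx-1)+\pyrx-2$ rather than your exponent-versus-$\lceil\log h\rceil$ comparison, but this is only a cosmetic difference.
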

\begin{proof}
  If $h=0$ then $\pyrg^{-1}(0)$ is $1$ by definition. 
  If $h>0$ then 
  we need to find the smallest $\pyrx>1$ such that
  $h \ge \pyrg(\pyrx)$. 
  We start from $\pyrx\defeq{2}$ and go upward. At each step we keep in
  memory the value $\pyrg(\pyrx-1)$, which is smaller than $h$,
  and we test the condition 
  \begin{equation*}
    h < \pyrg(\pyrx) 
    \quad
    \text{equivalent to}
    \quad 
    \lfloor\log(h - \pyrg(\pyrx-1))\rfloor 
    < 
    \pyrg(\pyrx-1) + \pyrx - 2 \eqperiod
  \end{equation*}
  The latter test can be achieved in time in $\log h$ by checking the
  length of the bit representation of $h - \pyrg(\pyrx-1)$.
  If the test fails we output $\pyrx$ otherwise we store the value
  $\pyrg(\pyrx)$, we fix $\pyrx\defeq \pyrx+1$ and we continue.
  We can do the whole computation by storing at most $4$ numbers less
  than $h$, each step is polynomial in the length of the binary
  representation of the numbers involved, and we need to do at most
  $\bigoh{\log^* h}$ steps.
\end{proof}

\begin{theorem}
  \label{th:exact-trees-pyramids}
  The persistent pebbling price of a binary tree of height $h$ and a
  pyramid of height $h$ is $h + \pyrg^{-1}(h)$.
\end{theorem}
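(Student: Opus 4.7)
I would prove the upper bound for binary trees and the lower bound for pyramids separately, exploiting the fact, noted earlier in the excerpt, that a pyramid of height $h$ is obtained from a complete binary tree of height $h$ by identifying certain pairs of vertices in a way that respects the predecessor structure. Consequently, any pebbling strategy for $T_h$ pulls back to one for $\Pi_h$ with no increase in space, so $\persistentprice{\Pi_h} \leq \persistentprice{T_h}$, and conversely any lower bound for $\Pi_h$ transfers to $T_h$. It therefore suffices to establish $\persistentprice{T_h} \leq h + \pyrg^{-1}(h)$ and $\persistentprice{\Pi_h} \geq h + \pyrg^{-1}(h)$.

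For the upper bound, I would proceed by induction on $\Delta = \pyrg^{-1}(h)$, proving the stronger statement that for every $h \leq \pyrg(\Delta)$ the tree $T_h$ admits both a persistent and a visiting pebbling using at most $h + \Delta$ pebbles. The base case $\Delta = 1$, where $h = 0$, is immediate. For the inductive step, fix a root-to-leaf path $u_0, u_1, \dots, u_h$ in $T_h$ so that the sibling subtree rooted at the off-path child of $u_i$ is a complete binary tree of height $h - i$. The strategy processes this path in two phases. In the bottom phase, the last $\pyrg(\Delta-1)$ path vertices have sibling subtrees of height at most $\pyrg(\Delta-1) - 1$, each persistently pebbled by the inductive hypothesis with at most $\pyrg(\Delta-1) + \Delta - 1$ pebbles; this part of the strategy uses an additional pebble to keep the trunk, giving total cost $\pyrg(\Delta-1) + \Delta$. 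In the top phase, we need to extend the persistent pebble up the remaining $h - \pyrg(\Delta-1) \leq 2^{\pyrg(\Delta-1)+\Delta-2}$ path vertices, whose sibling subtrees have height between $\pyrg(\Delta-1)$ and $h-1$ and each cost at most $(h-1) + \Delta - 1$ using the inductive visiting strategy plus a single bookkeeping pebble. A Bennett-style divide-and-conquer along this segment, analogous to the reversible path pebbling used in \refth{th:reversible-from-standard}, allows $2^{\pyrg(\Delta-1)+\Delta-2}$ such expensive checkpoints to be traversed with exactly one extra pebble, matching the defining recursion $\pyrg(\Delta) = 2^{\pyrg(\Delta-1)+\Delta-2} + \pyrg(\Delta-1)$.

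For the lower bound on $\Pi_h$, the equivalence in equation (2.5) reduces the problem to exhibiting a \dtchallenger strategy in the \dtgametext on $\Pi_h$ that lasts at least $h + \pyrg^{-1}(h)$ rounds. Mimicking the focus-subgraph technique used in the proof of \reflem{lem:chain-upper-bound}, \dtchallenger maintains a focus sub-pyramid $\Pi'$ of height $h'$ together with a semiopen path from one of its sources to the currently challenged vertex. As long as \dtpebbler's pebbles fail to disconnect the sources of $\Pi'$ from the challenge, \dtchallenger \dtstays; when \dtpebbler completes a blocking set $B$ of $\Pi'$, \dtchallenger either \dtjumps to the last pebble (focusing on the sub-pyramid of its ancestors) or \dtstays (focusing on the sub-pyramid above $B$), choosing whichever leaves the larger residual height. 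A counting argument in the spirit of \reflem{lem:pike-separator-is-large} shows that any blocking set of a height-$h'$ pyramid spreading over $d$ layers has size at least $d + 1$, from which one deduces by induction on $\Delta$ that reducing the focus height from $\pyrg(\Delta)$ to $\pyrg(\Delta-1)$ forces \dtpebbler to spend at least $2^{\pyrg(\Delta-1)+\Delta-2}$ rounds. Summing over $\Delta$ gives the claimed lower bound $h + \pyrg^{-1}(h)$.

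The main obstacle lies in making the upper-bound strategy hit the recursion $\pyrg$ on the nose rather than with a small slack, since losing even one additive pebble per recursion level would inflate the bound by a factor of $\log^* h$ instead of a tight additive $\pyrg^{-1}(h)$. In particular, the interface between phases must not double-count any pebble, and one must be careful that the inductive hypothesis is strong enough to supply \emph{both} a persistent and a visiting sub-pebbling within the same budget, as these two flavours differ by one (cf.\ \refsec{sec:pspace-completeness-overview}) and the Bennett-style recursion on the trunk depends on this. On the lower-bound side the subtle point is that a minimum blocking set of a pyramid need not be a single horizontal level, so the inductive argument must cope with skewed cuts by choosing the jump/stay direction that preserves the larger half of the remaining pyramid height.
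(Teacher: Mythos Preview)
Your overall architecture is right: upper bound on trees, lower bound on pyramids via a \dtchallenger strategy, and the quotient map transfers each to the other. The problem is in the top phase of your upper bound.

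You claim that a Bennett-style recursion along the top segment of length $L \le 2^{\pyrg(\Delta-1)+\Delta-2}$, where each step requires visiting a sibling subtree of height up to $h-1$, can be done with ``exactly one extra pebble'' beyond the per-step cost. But Bennett's path recursion adds $\lfloor\log L\rfloor$ spine pebbles, not one: at recursion depth $d$ you are holding $d$ checkpoints on the spine while still having to pebble a large sibling subtree at the leaf of the recursion. With $C \approx (h-1)+\Delta$ for the largest subtree and $\log L \approx \pyrg(\Delta-1)+\Delta-2$, your top phase costs roughly $h + 2\Delta + \pyrg(\Delta-1) - 3$, which overshoots $h+\Delta$ by an unbounded amount. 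The paper's strategy avoids this by doing the opposite: it \emph{persistently} pebbles every large sibling subtree once, top to bottom along the spine, and keeps all those pebbles. The crucial arithmetic is that when you reach the sibling of height $i-1$ you already hold $h-i$ pebbles from earlier siblings, so the cost is $(i-1)+\Delta + (h-i) = h+\Delta-1$ independently of~$i$. Only after all siblings down to level $k=\pyrg(\Delta-1)+1$ are pinned does the spine become a bare path of length $j-1 \le 2^{\pyrg(\Delta-1)+\Delta-2}$, and \emph{that} is where the path pebbling (Lemma~\ref{lem:PathGraphs}) is applied, contributing $\lfloor\log(j-1)\rfloor+1 \le \pyrg(\Delta-1)+\Delta-2$ pebbles on top of the $h-\pyrg(\Delta-1)+1$ already held, for a total of $h+\Delta-1$ to surround the root.

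Your lower bound sketch is in the right spirit but the accounting does not close. The bound ``a blocking set spreading over $d$ layers has size at least $d+1$'' is too weak; what is actually needed (Proposition~\ref{stm:minimalcuts}) is that a minimal blocking set leaves at least $h-k$ vertices outside the sub-pyramid rooted at the newly pebbled vertex $v$ at level $k$. More importantly, your claim that going from focus height $\pyrg(\Delta)$ down to $\pyrg(\Delta-1)$ costs $2^{\pyrg(\Delta-1)+\Delta-2}$ rounds telescopes to only $h$, not $h+\Delta$; you never explain where the additive $\Delta$ comes from. In the paper's argument the two cases are balanced on the threshold $k=\pyrg(\Delta-1)$: if $k>\pyrg(\Delta-1)$ \dtchallenger jumps and recurses on a height-$k$ sub-pyramid (getting $k+\Delta$ by induction, plus $h-k$ from the blocking set outside), while if $k\le \pyrg(\Delta-1)$ \dtchallenger plays the \emph{path} subgame from $v$ up to the sink, whose length $h-k-1 \ge \pyrg(\Delta)-\pyrg(\Delta-1)-1$ yields $\pyrg(\Delta-1)+\Delta$ rounds by Lemma~\ref{lem:PathGraphs}, and the blocking set contributes $h-k+1$ more. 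The extra $+1$ per recursion level, and hence the additive $\Delta$, comes from this case split, not from a halving argument.
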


To prove the theorem we need the exact value of the persistent
pebbling price of paths. 

\begin{lemma}[Path graphs~\cite{LV96Reversibility}]
  \label{lem:PathGraphs}
  The persistent pebbling price of a path of length $h$ (\ie with
  $h+1$ vertices) is $\floor{\log(h)} + 2$.
\end{lemma}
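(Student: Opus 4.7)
My plan is to establish matching upper and lower bounds on the persistent pebbling price of a path of length $h$ on vertices $v_0, v_1, \ldots, v_h$.

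For the upper bound, I would construct an explicit recursive reversible pebbling strategy, proving by induction on $k$ the equivalent statement that $k$ pebbles suffice to reversibly pebble a path of length at most $2^{k-1}-1$. The base case $k=2$ is the three-move pebbling of the length-$1$ path. For the inductive step, I would split the path at a midpoint $v_m$, persistently pebble $v_m$ using the $(k-1)$-pebble recursive strategy on the left subpath, then with $v_m$ pinned extend the pebbling to reach the sink on the right subpath by another invocation of the recursive strategy, and finally reverse the first phase to unpebble $v_m$ while retaining the sink. A useful observation is that pinning the source of a reversible pebbling of a subpath is essentially free: adding an always-present pebble on the source to every configuration yields another valid reversible pebbling, since the extra pebble can only relax placement preconditions elsewhere and never blocks the removal of any other vertex.

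For the lower bound, I would invoke the equivalence $\persistentprice{G}=\dtprice{G}$ from~\eqref{eq:chan13-equalities} and exhibit a \dtchallenger strategy in the \dtgametext. Challenger maintains an interval $[v_a, v_b]$ such that $v_b$ is the currently challenged vertex, $v_a$ is pebbled (except possibly $v_a = v_0$ in the initial segment of the game), and the vertices $v_{a+1}, \ldots, v_{b-1}$ are all unpebbled. When Pebbler places $v_c$ with $a < c < b$, Challenger \dtjumps to $v_c$ if $c - a \ge \ceiling{(b-a)/2}$, and otherwise \dtstays; in either case the new interval has length at least $\ceiling{(b-a)/2}$. Placements outside the current interval do not advance the game. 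Since the game terminates only when the interval has length $1$ with its left endpoint pebbled, Challenger survives the initial placement round, at least $\floor{\log h}$ halving rounds, and one final round during which Pebbler pebbles the predecessor of the challenge, for a total of $\floor{\log h} + 2$ rounds.

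The hard part will be the upper bound: a naive three-phase decomposition threatens to overshoot the target by one pebble, because the reversal phase must hold the sink pebble on top of a $(k-1)$-pebble reverse sweep of the left subpath. Careful accounting --- based on the source-pinning observation together with a balanced midpoint choice, so that neither subpath exceeds the inductive budget when combined with the pinned source pebble --- is needed to show that the three phases fit within $k$ pebbles overall. The lower bound is comparatively direct, with the only delicate point being the bookkeeping of the initial rounds when $v_0$ may still be unpebbled, which accounts neatly for one of the two additive units in the formula.
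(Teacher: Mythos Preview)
The paper does not prove this lemma; it is quoted from~\cite{LV96Reversibility} and used as a black box. Your plan (recursive bisection for the upper bound, a bisection \dtchallenger strategy for the lower bound) is the standard one and is correct in outline, but both halves have an imprecision that matters for the exact constant.

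\textbf{Upper bound.} Your ``source-pinning'' observation, as you state it, costs one pebble rather than zero: adding $v_m$ to every configuration of a persistent pebbling of the subpath $v_m,\dots,v_h$ can raise the maximum by one (already for the length-$3$ path). Using it as written, phase~2 costs $\persistentprice{\text{path}_{h-m}}+1$ and the recursion only yields $k$ pebbles for $h\le 2^{k-1}-2$, off by one from what you claim. The fix is to recurse on $v_{m+1},\dots,v_h$ (length $h-m-1$) instead: with $v_m$ permanently present, $v_{m+1}$ behaves as a source in the full graph, so phase~2 costs $\persistentprice{\text{path}_{h-m-1}}+1$. Choosing $m=2^{k-2}-1$ makes both $m$ and $h-m-1$ at most $2^{k-2}-1$, and all three phases fit in $k$ pebbles for every $h\le 2^{k-1}-1$.

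\textbf{Lower bound.} Your bisection \dtchallenger does force $\lfloor\log h\rfloor+2$ rounds, but the accounting ``initial $+$ $\lfloor\log h\rfloor$ halving rounds $+$ one final round'' does not follow from the invariant $L'\ge\lceil L/2\rceil$ alone: that invariant needs $\lceil\log h\rceil$ halvings to reach $L=1$, and whether a further round is then required depends on whether some earlier ``stay'' already pebbled the left endpoint. A clean way to close this is to prove by induction on $L$ that, against your \dtchallenger, \dtpebbler needs at least $\lfloor\log L\rfloor+1$ further rounds when the left endpoint is unpebbled and at least $\lceil\log L\rceil$ rounds when it is pebbled; the initial state $(L,\text{unpebbled})=(h,0)$ then gives $1+(\lfloor\log h\rfloor+1)=\lfloor\log h\rfloor+2$.
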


\begin{lemma}[Upper bound for binary trees]
  \label{lem:UpperBound}
  The persistent pebbling price of a complete binary tree of height
  $h \le \pyrg(\pyrx)$ is at most $h+\pyrx$.
\end{lemma}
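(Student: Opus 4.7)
I would proceed by induction on $\pyrx$. For the base case $\pyrx = 1$ we have $\pyrg(1) = 0$, so the only tree to consider is a single vertex, which is persistently pebbled with $1 = 0 + 1$ pebble. For the inductive step, suppose the claim holds for $\pyrx - 1$; trees of height $h \leq \pyrg(\pyrx-1)$ are then handled directly by the inductive hypothesis (giving bound $h + \pyrx - 1 \leq h + \pyrx$), so the only nontrivial case is $\pyrg(\pyrx-1) < h \leq \pyrg(\pyrx)$. By the recurrence defining $\pyrg$, the ``extra height'' above $\pyrg(\pyrx-1)$ satisfies $h - \pyrg(\pyrx-1) \leq 2^{\pyrg(\pyrx-1) + \pyrx - 2}$, and \reflem{lem:PathGraphs} states exactly that a path of this length admits a persistent pebbling with $(\pyrg(\pyrx-1) + \pyrx - 2) + 2 = \pyrg(\pyrx-1) + \pyrx$ pebbles.

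The plan is to interleave two sub-strategies. The first is the inductive strategy, which persistently pebbles any complete binary subtree of height $\pyrg(\pyrx-1)$ using $\pyrg(\pyrx-1) + \pyrx - 1$ pebbles. The second is Bennett's recursive path pebbling from \reflem{lem:PathGraphs}, applied to a designated root-to-descendant path of length $h - \pyrg(\pyrx-1)$ in the tree chosen so that its bottom vertex sits at the top of a complete binary subtree of height $\pyrg(\pyrx-1)$. In the combined strategy, each elementary ``place a pebble at path vertex $v$'' operation of Bennett's scheme is realized in the tree by invoking the inductive strategy to persistently pebble a complete binary subtree of height $\pyrg(\pyrx-1)$ whose root lies appropriately beneath $v$; each ``remove a pebble'' operation is the reverse subpebbling, which exists because the inductive strategy is itself reversible.

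The main obstacle is the peak pebble accounting. When an active inductive subtree pebbling is at its transient peak it uses up to $\pyrg(\pyrx-1) + \pyrx - 1$ pebbles, while the remaining budget of $h + \pyrx - (\pyrg(\pyrx-1) + \pyrx - 1) = h - \pyrg(\pyrx-1) + 1$ must cover every dormant persistent pebble left over from earlier subtree calls, corresponding to a snapshot of Bennett's path recursion. The counting works out if and only if one of the pebbles of the active subtree call simultaneously counts as the ``top'' pebble in Bennett's path recursion at that moment---a shared-credit phenomenon that requires carefully synchronizing the phases of the path recursion with the start and end of each inductive subtree pebbling. Verifying this synchronization and confirming that the resulting peak is exactly $h + \pyrx$ is the technical crux; once the shared-pebble invariant is established, the rest of the argument is routine bookkeeping driven by the recurrence for $\pyrg$.
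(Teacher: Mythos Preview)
Your plan has a structural gap. You correctly single out the root-to-descendant path down to height $\pyrg(\pyrx-1)$ and the bound from \reflem{lem:PathGraphs}, but you misidentify what it costs to place a pebble on a path vertex in the tree. To pebble the path vertex $v_i$ (at height~$i$) you need \emph{both} of its children pebbled: the on-path child $v_{i-1}$, and the off-path sibling, which is the root of a complete binary subtree of height $i-1$. These sibling subtrees therefore have heights ranging all the way from $\pyrg(\pyrx-1)$ up to $h-1$; they are not all of height $\pyrg(\pyrx-1)$, so they cannot all be handled by the $\pyrx-1$ inductive hypothesis at cost $\pyrg(\pyrx-1)+\pyrx-1$.

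This kills the interleaving idea. Persistently pebbling just the sibling at height $h-1$ already costs $(h-1)+\pyrx$ pebbles (by induction on~$h$ with the same~$\pyrx$), so it must be done with the board otherwise empty---you cannot afford even a single dormant Bennett milestone at that moment. The same reasoning then forces the sibling at height $h-2$ to be pebbled next, and so on. In other words the budget dictates a \emph{sequential} pass over the siblings in decreasing height, not a Bennett-style recursion interleaved with subtree calls. This is precisely what the paper does: it inducts on~$h$, persistently pebbles each off-path sibling from tallest to shortest (leaving one pebble each), then pebbles $v_{k-1}$ with the $\pyrx-1$ hypothesis, and only once all $h-k+2$ siblings are in place does it invoke \reflem{lem:PathGraphs} on the now-bare path $(v_k,\ldots,v_h)$. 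Your ``shared-credit'' difficulty is an artifact of the interleaving; it disappears once the two phases are separated.
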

\begin{proof}
  We are going to prove the lemma by induction over $ h $.
  For the base case we observe that a binary tree of height $0$ can be
  pebbled with $1$ pebbles.
  For the general case we assume the statement of the lemma for height
  $i<h$, and we show that the surrounding pebbling price of the
  complete binary tree of height $h$ is at most $ h+ \pyrx -1$.
  Proposition~\ref{lem:surrounding-eq-persistent} immediately implies
  the lemma for height $h$.

  Let us denote the root by $v_h$ and the right child of $v_i$ by
  $v_{i-1}$. The strategy is as follows. First we persistently pebble
  the left child of $v_i$ for $i$ from $h$ down to $k \defeq \pyrg(\pyrx -1) +
  1$, in this order. By the induction hypothesis $(i-1) + \pyrx$ pebbles
  are enough to persistently pebble the left child of $v_i$, and there
  are $h - i$ pebbles left on the rest of the graph from previous steps. So we are
  within the bound $h-1+\pyrx$, and after the last step we have $h -
  (k-1)$ pebbles on the tree.

  Then we persistently pebble $v_{k-1}$, the right child of
  $v_k$. Since $k-1 = \pyrg(\pyrx-1)$, by induction hypothesis $(k-1)+(\pyrx-1)$
  pebbles are enough and we are within the bound. 
  Let $j \defeq h-k+1$. So far we used $j+1 = h - k + 2$ pebbles.

  Finally we surround the sink of path $(v_k, v_{k+1}, \ldots,
  v_h)$, which has $j$ vertices, using $\floor{\log(j-1)} + 1$
  pebbles.  
  Observe that by
  construction $j - 1 = h - k \le \pyrg(\pyrx) - \pyrg(\pyrx-1) - 1 <
  2^{\pyrg(\pyrx-1) + \pyrx - 2}$, hence we have the bound
  $\floor{\log(j-1)} < \pyrg(\pyrx-1) + \pyrx - 2$. Counting the total
  number of pebbles in the graph gives $\bigl( h - k + 2 \bigr) +
  \floor{\log(j-1)} + 1 \le \bigl( h - \pyrg(\pyrx-1) + 1 \bigr) +
  \bigl( \pyrg(\pyrx-1)+ \pyrx - 3 \bigr) + 1 = h + \pyrx - 1$
  pebbles.
\end{proof}

We prove the lower bound for a slight generalization of pyramids in order to
obtain a lower bound on the visiting price in addition to the persistent price.

\begin{definition}
An $(h,\ell)$-teabag is the union of a pyramid of height $h$ and a path of length $\ell$, where we identify the sink of the pyramid and the source of the path.
\end{definition}
Observe that an $(h,0)$-teabag is a pyramid.

For the lower bound we will also need the following basic fact about
pyramids. Recall that a blocking set is a subset of vertices
$\stdbset\subseteq V(G)$ such that every path from any source to the
sink must contain a vertex in $\stdbset$. Also recall that a directed
path in the graph is \semiopen when there are no
pebbles on it except for its last vertex.

\begin{proposition}[\cite{Cook74ObservationTimeStorageTradeOff}]
  \label{stm:minimalcuts}
  Consider a blocking set $\stdbset$ on a pyramid of height $ h$;
  consider a vertex $ v $ at level $ k $ such that there is a path
  between $v$ and the sink whose intersection with $\stdbset$ is at most
  $\{v\}$. Let $ U $ be the set of vertices in the sub-pyramid rooted at
  $ v $. Then $\setsize{\stdbset\setminus{U}} \geq h-k$.
\end{proposition}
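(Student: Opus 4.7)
The plan is to induct on the level gap $d = h - k$. The base case $d = 0$ is trivial: then $v = z$, so $U$ is the whole pyramid and $\setsize{\stdbset \setminus U} \ge 0$ holds vacuously.

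For the inductive step, denote by $u$ the successor of $v$ on the witnessing path, so that $u$ lives at level $k+1$. The path segment from $u$ to the sink has empty intersection with $\stdbset$, and so in particular satisfies the hypothesis of the proposition at $u$. Setting $U' = \ancnode{u}$, the inductive hypothesis yields $\setsize{\stdbset \setminus U'} \ge h - (k+1)$. Since $v$ is a predecessor of $u$ we have $U \subseteq U'$, and therefore $\setsize{\stdbset \setminus U} = \setsize{\stdbset \setminus U'} + \setsize{\stdbset \cap (U' \setminus U)}$. Thus it suffices to exhibit a single blocker lying in $U' \setminus U$.

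The vertex $u$ has precisely two predecessors: $v$ itself and a second vertex $w$, both lying at level $k$. Since $U$ intersects level $k$ only in $v$, we have $w \in U' \setminus U$. The crux of the argument is to produce a path $R$ from a source to $w$ that lies entirely outside $U$; by the pyramid geometry this reduces to a short case analysis depending on whether $w$ sits at coordinate $(k, j-1)$ or $(k, j+1)$ when $v = (k, j)$. In either case, a straight-line path from the appropriate corner source up to $w$ stays outside the triangular region $U$ by construction. Concatenating $R$ with the tail $w \to u \to u_{k+2} \to \cdots \to z$ produces a source-to-sink path, which $\stdbset$ must meet; the tail is contained in $P \setminus \{v\}$ and thus avoids $\stdbset$, forcing the blocker to lie in $R \cup \{w\}$. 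Every vertex of $R \cup \{w\}$ is an ancestor of $u$ (hence in $U'$) and outside $U$, so the blocker is the desired element of $\stdbset \cap (U' \setminus U)$.

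The only non-routine step is producing the path $R$, which is where one must genuinely use that $U$ is the ancestor triangle of $v$ in a pyramid rather than an arbitrary subset of a DAG. Everything else---the inductive setup, the set-decomposition of $\stdbset \setminus U$, and the verification that on the concatenated path a blocker can appear only inside $U' \setminus U$---is routine.
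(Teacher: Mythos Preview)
Your proof is correct and is essentially the paper's argument unrolled inductively. The paper proves the bound directly: it takes the witnessing path $(v_k,v_{k+1},\ldots,v_h)$ and, for each $i>k$, builds the ``straight'' (all-left or all-right, depending on the direction of the edge $(v_{i-1},v_i)$) path $P_i$ from a source to $v_i$; these $h-k$ paths are pairwise disjoint and avoid $U$, so $B$ contributes a distinct element of $B\setminus U$ on each. Your inductive step recovers exactly $P_{k+1}$ (your $R\cup\{w\to u\}$ is the paper's straight path to $v_{k+1}$), and the disjointness of the paper's paths is replaced by the set decomposition $B\setminus U = (B\setminus U') \,\disjointunion\, (B\cap(U'\setminus U))$; the two organisations are equivalent.
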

\begin{proof}
  Pick an arbitrary path which starts at vertex $v$, reaches the
  pyramid sink and does not intersect $\stdbset$ anywhere other than in
  $v$.
  Denote such path as $(v_{k}, v_{k+1}, \ldots, v_{h})$ where $v_{k}$
  is another name for $v$ and $v_{h}$ is the sink. Each $v_{i}$ is at
  height $i$ in the pyramid.
  On pyramids there is a natural notion for edges to go either left or
  right.
  For each $i>k$ we define the path $\DAGNamePath_{i}$ as follows: if
  edge $(v_{i-1},v_{i})$ goes right then $\DAGNamePath_{i}$ is the
  unique path that starts at a source vertex, always goes left, and  ends at
  $v_{i}$;
  if edge $(v_{i},v_{i-1})$ goes left then $\DAGNamePath_{i}$ is the unique path
  that starts at a source vertex, always goes right, and ends at $v_{i}$.
  It is easy to verify that none of $\DAGNamePath_{h}, \ldots, \DAGNamePath_{k+1} $
  intersects any of the vertices in $U$, and that these paths are all
  pairwise vertex disjoint. 
  Since $\stdbset$ is a blocking set it must contain one vertex for
  each $\DAGNamePath_{i}$ and the proposition follows.
\end{proof}

\begin{lemma}
\label{lem:teabag-persistent}
  The persistent pebbling price of the $(h,\ell)$-teabag is at least $h+\pyrx+1$ if either of the following holds:
  \begin{itemize}
    \item $h>\pyrg(\pyrx)$,
    \item $h > \pyrg(\pyrx-1)$ and $\ell>\pyrg(\pyrx)-h$. 
  \end{itemize}
\end{lemma}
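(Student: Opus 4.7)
The plan is to prove the lower bound via the Dymond--Tompa characterization of reversible pebbling price established in Equation~\refeq{eq:chan13-equalities}, namely $\persistentprice{G} = \dtprice{G}$, and exhibit a \dtchallenger strategy that forces at least $h+\pyrx+1$ rounds. I would proceed by induction on~$\pyrx$, handling both cases of the hypothesis in a unified way by keeping track of a ``current teabag subgraph'' on which \dtchallenger plays.

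The base case is $\pyrx=1$, where $\pyrg(1)=0$ and the first condition reads $h\geq 1$. Here the teabag contains a pyramid of height $h\geq 1$, so any persistent pebbling must at some moment surround the pyramid sink; by the classical pyramid lower bound (a consequence of \refpr{stm:minimalcuts} applied with $k=0$), this requires at least $h+2=h+\pyrx+1$ pebbles, which translates into a \dtchallenger strategy of the desired length. The second condition for $\pyrx=1$ follows from \reflem{lem:PathGraphs} applied to the path of length $h+\ell$ embedded in the teabag, since $\floor{\log(h+\ell)}+2\geq h+\pyrx+1$ whenever $h+\ell>\pyrg(1)=0$ and $h\geq 1$.

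For the inductive step, \dtchallenger{}'s overall strategy is to \dtstay on the teabag sink until \dtpebbler{}'s placements first form a blocking set $\stdbset$ separating the sink from all sources; the last-placed pebble $v$ must lie in every minimal such $\stdbset$. \dtchallenger then branches on where $v$ sits. If $v$ lies on the path at distance $d<\ell$ from the pyramid sink, \dtchallenger \dtjumps to~$v$; the remaining game is played on a teabag of the same pyramid height $h$ but shorter path, and a careful check shows the inductive hypothesis with parameter $\pyrx$ still applies (because the hypothesis on $h+\ell$ versus $\pyrg(\pyrx)$ is preserved through the first-blocked configuration) or else we fall to a smaller $\pyrx'$ while having already used enough pebbles. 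If $v$ lies inside the pyramid, we invoke \refpr{stm:minimalcuts}: letting $k$ be the level of the last vertex of the path-to-sink that is free of $\stdbset$, the blocking set contributes at least $h-k$ pebbles outside the sub-pyramid rooted at~$v$, and \dtchallenger \dtjumps to~$v$ to play inside a teabag of height $k$ with path length equal to the residual path to the pyramid sink plus~$\ell$. The inductive hypothesis applied with parameter $\pyrx-1$ guarantees at least $k+\pyrx$ further rounds from this sub-teabag, and summing with the $h-k$ pebbles already fixed in the blocking set outside yields the required $h+\pyrx+1$ pebbles in total.

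The main technical obstacle will be verifying that the sub-teabag on which we recurse always satisfies one of the two hypotheses of the lemma with a parameter strictly smaller by one. This is exactly where the numerical recursion $\pyrg(\pyrx)=2^{\pyrg(\pyrx-1)+\pyrx-2}+\pyrg(\pyrx-1)$ is tight: when $v$ lies inside the pyramid at level $k$, we must show $k>\pyrg(\pyrx-1)$, or else that the residual path length together with $k$ exceeds $\pyrg(\pyrx-1)$; this amounts to showing that no \dtpebbler move can simultaneously make all candidate sub-teabags fall below the hypothesis threshold. I expect this calculation to reduce, via the doubly-exponential growth of $\pyrg$, to a clean averaging or pigeonhole argument about how far into the pyramid the blocking set~$\stdbset$ can reach given its size, combined with the fact that \dtchallenger's jump always preserves a \semiopen path from a source to the current challenge so that the residual teabag really is a teabag and not a more exotic subgraph.
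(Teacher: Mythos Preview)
Your overall framework---the \dtgametext via \refeq{eq:chan13-equalities}, waiting until \dtpebbler first blocks, and invoking \refpr{stm:minimalcuts} for the pebbles outside the sub-pyramid---matches the paper. But the case analysis and the recursion both go wrong in a way that cannot be patched by an averaging argument.

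First, a structural confusion. When $v$ lies inside the pyramid at level~$k$ and \dtchallenger \dtjumps to~$v$, the sub-game is played on $\ancnode{v}$, which is a \emph{pyramid} of height~$k$, i.e.\ a $(k,0)$-teabag. There is no ``residual path to the pyramid sink plus~$\ell$'' available to you: that path lies \emph{above}~$v$ in the DAG and is not part of $\ancnode{v}$. So your recursive instance is a pure pyramid, and to apply the lemma with parameter $\pyrx-1$ you would need $k > \pyrg(\pyrx-1)$.

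Second, and this is the real gap, \dtpebbler is under no obligation to place~$v$ high. Concretely, \dtpebbler can complete a blocking set whose last vertex~$v$ is a source of the pyramid ($k=0$). Then the sub-pyramid is a single vertex, worth $1$~round; \refpr{stm:minimalcuts} gives $\setsize{\stdbset\setminus U}\geq h$; and your total is $1+h+1=h+2$, which falls short of $h+\pyrx+1$ for every $\pyrx\geq 2$. More generally, whenever $k\leq \pyrg(\pyrx-1)$ the recursion with parameter $\pyrx-1$ does not apply, and dropping to $\pyrx-2,\pyrx-3,\ldots$ loses exactly one round per step, so the deficit never closes. No pigeonhole or averaging over ``candidate sub-teabags'' rescues this: there is a single canonical sub-game after a jump, and it is too small.

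The idea you are missing is that when $k\leq \pyrg(\pyrx-1)$ \dtchallenger should \emph{not} jump and recurse on a teabag at all. Instead, observe that there is a path from~$v$ up to the teabag sink of length $h-k+\ell$, pebbled only at its endpoints. \dtchallenger plays the \dtgametext on this path using \reflem{lem:PathGraphs}, which contributes $\floor{\log(h-k+\ell-1)}+2$ rounds, and separately counts $\setsize{\stdbset}\geq h-k+1$ rounds for the blocking set. Since both hypotheses of the lemma give $h+\ell>\pyrg(\pyrx)$ and we are in the case $k\leq \pyrg(\pyrx-1)$, we have $h-k+\ell-1\geq \pyrg(\pyrx)-\pyrg(\pyrx-1)=2^{\pyrg(\pyrx-1)+\pyrx-2}$, so the path contributes at least $\pyrg(\pyrx-1)+\pyrx$ rounds; summing with $h-k+1\geq h-\pyrg(\pyrx-1)+1$ gives $h+\pyrx+1$. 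This path-versus-pyramid dichotomy, split at the threshold $k=\pyrg(\pyrx-1)$, is precisely what makes the recursion defining~$\pyrg$ tight, and it is the piece your plan does not contain.
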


\begin{proof}
  We define a \dtchallenger strategy for the Dymond--Tompa game by
  induction over $h$ and $\ell$ in this order. Furthermore this
  strategy \dtstays on the sink until \dtpebbler blocks the graph. For
  the base case $h=0$, the statement is trivial.

  Assume that the last \dtpebbler move blocks the graph, meaning that
  the currently pebbled vertices form a blocking set, and fix
  $\stdbset$ to be a minimal one. The vertex $v$ pebbled at that round
  must be in $\stdbset$. We have two cases depending on $k$ the layer
  of vertex $v$.
\begin{itemize}

\item Case $ k > \pyrg(\pyrx-1) $: \dtchallenger \dtjumps to $v$. The pebble
  on $ v $ blocks the sources from the sink, so there must be a \semiopen
  path between $ v $ and a source.
  Let $ U $ be the set of pebbles contained in the vertices of $P_v$,
  the subgraph of predecessors of $ v $ (notice that $ v \in U$).
  Consider a new game on $P_{v}$, in which the first actions of
  \dtpebbler are to pebble $ U\setminus\{v\}$ in any order, while
  \dtchallenger \dtstays on $ v $. The set $ U\setminus\{v\} $ does
  not block the subgraph.
  If $k\geq h$ then the new sub-game ends in at least $h + \pyrx$ steps, and
  the total number of rounds is at least $1+ h + \pyrx$. 
  Otherwise $v$ is inside the pyramid, and the new sub-game ends in $k
  + \pyrx$ steps. We use Proposition~\ref{stm:minimalcuts} to claim
  that $\setsize{\stdbset\setminus{U}} \geq h-k$. So in
  total the rounds in the game are at least
  \begin{equation*}
    \underbrace{1}_{\text{challenge to sink}} + 
    \underbrace{\setsize{\stdbset\setminus{U}}}_{\text{outside
        $P_{v}$}} + 
    \underbrace{k+ \pyrx}_{\text{subgame on $P_{v}$}} \geq h+\pyrx+1 \eqperiod
  \end{equation*}

\jncomment{I really don't like underbraces. Can't we just remove them
  and explain equations properly in running text?}

\item Case $ k \leq \pyrg(\pyrx-1) $: there is a path of length $h-k+\ell$
  from $v$ to the sink having only a pebble on each end. So any optimal
  \dtpebbler strategy must contain a strategy for playing on the
  semiopen path of length $ h-k+\ell-1 $ from one unpebbled successor
  of $v$ to the sink.
  Fix $q=\floor{\log(h-k+\ell-1)}$, the sub-game on the path of length
  $ h-k+\ell-1 $ lasts at least $q+2$ rounds (see
  \reflem{lem:PathGraphs}). The initial challenge on the sink of the
  graph is part of this sub-game, but all moves on $\stdbset$ are not,
  so the total number of rounds is
  \begin{equation*}
    \underbrace{\setsize{\stdbset}}_{\text{blocking set}} + 
    \underbrace{q+2}_{\text{subgame on path}} \geq 
    \underbrace{h - k + 1}_{\text{blocking set}} + 
    \underbrace{\pyrg(\pyrx-1)+\pyrx}_{\text{subgame on path}} \geq 
    h +\pyrx+1 \eqperiod
  \end{equation*}
  The bound on $\setsize{\stdbset}$ holds because
  Proposition~\ref{stm:minimalcuts} on vertex $ v $ implies
  $\setsize{\stdbset\setminus\set{v}} \geq h-k$. The bound on $q+2$
  holds because by hypothesis 
  $ h-k+\ell -1 \geq \pyrg(\pyrx) - \pyrg(\pyrx-1)$,
  which implies that 
  $q \geq \floor{\log(\pyrg(\pyrx)-\pyrg(\pyrx-1))} \geq
  \pyrg(\pyrx-1)+\pyrx-2$. 
  \qedhere
\end{itemize}
\end{proof}

\begin{corollary}[Lower bound for pyramids]
  The persistent pebbling price of a pyramid of height $h>\pyrg(\pyrx-1)$ is
  at least $h+\pyrx$.
  The visiting price of a pyramid of height $h=\pyrg(\pyrx)$ is at least
  $h+\pyrx$.
\end{corollary}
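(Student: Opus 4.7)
The plan is to derive both bounds directly from Lemma~\ref{lem:teabag-persistent}, which already does the heavy lifting via a \dtchallenger strategy on teabags. A pyramid of height $h$ is precisely the $(h,0)$-teabag, so the first bullet is essentially an instantiation of the lemma. The second bullet requires a small reduction from visiting pebblings of the pyramid to persistent pebblings of an $(h,1)$-teabag.

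For the persistent bound, I would apply Lemma~\ref{lem:teabag-persistent} with parameters $\pyrx'=\pyrx-1$ and $\ell=0$. The hypothesis $h>\pyrg(\pyrx-1)=\pyrg(\pyrx')$ then matches the first bullet of the lemma, which yields persistent pebbling price at least $h+\pyrx'+1 = h+\pyrx$ for the $(h,0)$-teabag, i.e.\ for the pyramid itself. Nothing more is needed here.

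For the visiting bound the key observation is that a visiting pebbling of the pyramid of height $h$ can be converted into a persistent pebbling of the $(h,1)$-teabag using only one extra pebble: first run the visiting pebbling until the pyramid sink carries a pebble, then place a pebble on the unique path vertex (its only predecessor, the pyramid sink, is pebbled), and finally run the visiting pebbling in reverse to clear everything except the new pebble at the teabag sink. The pebble on the path vertex is untouched throughout the reversal, so this is indeed a legal reversible pebbling, and its space is at most the visiting price of the pyramid plus~$1$. Hence
\begin{equation*}
  \visitprice{\pyramidgraph_h} \geq \persistentprice{(h,1)\text{-teabag}} - 1.
\end{equation*}
Now I would invoke Lemma~\ref{lem:teabag-persistent} with the given $\pyrx$ and $\ell=1$: the second bullet of the lemma applies because $h=\pyrg(\pyrx)>\pyrg(\pyrx-1)$ and $\ell=1>0=\pyrg(\pyrx)-h$. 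This yields persistent price at least $h+\pyrx+1$ for the $(h,1)$-teabag, and combining with the displayed inequality gives $\visitprice{\pyramidgraph_h}\geq h+\pyrx$, as required.

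The only step that is not completely mechanical is verifying that the simulation preserves reversibility and stays within the claimed space, but this is essentially the same argument as in Proposition~\ref{lem:surrounding-eq-persistent} (surrounding price equals persistent price minus one) adapted to the teabag setting where the extra path vertex plays the role of the ``persistent'' pebble.
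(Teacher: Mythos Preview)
Your proof is correct and follows essentially the same approach as the paper. The paper's own proof is terser---it simply says the first statement is Lemma~\ref{lem:teabag-persistent} (with the shift $\pyrx\mapsto\pyrx-1$ you spell out), and for the second statement it phrases your simulation argument as a one-line contradiction: if the visiting price were at most $h+\pyrx-1$, the $(h,1)$-teabag would have persistent price at most $h+\pyrx$, contradicting the lemma---but the content is the same.
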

\begin{proof}
  Lemma~\ref{lem:teabag-persistent} claims the first statement.  The
  second one holds because if the pyramid of height $\pyrg(\pyrx)$ had
  visiting price $h+\pyrx-1$, then the $(\pyrg(\pyrx),1)$-teabag would
  have persistent pebbling price $h+\pyrx$, which contradicts
  Lemma~\ref{lem:teabag-persistent}.
\end{proof}

\section{Technical Constructions}
\label{sec:technical-constructions}

In order to discuss lower bounds on pebbling price we need to identify
expensive pebbling configurations, namely the configurations that are
expensive to reach from the empty configuration.
We will often use the reverse direction, \ie that the empty
configuration cannot be reached without passing through an expensive
configuration.

\begin{definition}\label{def:VPlocked}
  A configuration $\pconf$ is \introduceterm{\vislocked} if
  $\persistentprice[G]{\pconf}=\visitprice{G}$. 
  A configuration $\pconf$ is \introduceterm{\perlocked} if
  $\persistentprice[G]{\pconf}=\persistentprice{G}$.
\end{definition}

\subsection{Christmas Tree Construction}
  \label{st:ChristmasTreeConstruction}

  This section builds on the pyramid graphs to provide a graph $T_{r}$
  with equal visiting and persistent prices $r$ for every
  $r\in \N^{+}$. As a preliminary step we show a graph $G_{p}$ with
  persistent price $p$ for every $p\in \N^{+}$.

  \begin{lemma}[Modified Pyramids]
    \label{lem:ModifiedPyramids}
    There is a family of graphs $\{ G_p \}_{p \in \N^+}$ such that
    \begin{enumerate}
      \item $\persistentprice{G_p} = p$;
      \item $G_p$ has in-degree at most two and a unique sink; and
      \item $G_p$ is \efficiently computable given $p$, and $G_p$ has at most
        $p^2$ nodes.
    \end{enumerate}
  \end{lemma}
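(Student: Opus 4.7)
The plan is to take $G_p$ to be either a pyramid or a pyramid with one additional sink vertex appended, reducing the price analysis to Theorem~\ref{th:exact-trees-pyramids} together with Lemma~\ref{lem:teabag-persistent}. By Theorem~\ref{th:exact-trees-pyramids} a pyramid of height~$h$ has persistent price $h + \pyrg^{-1}(h)$. Since $\pyrg^{-1}$ is non-decreasing and jumps by at most~$1$ at each step, consecutive pyramid prices differ by $1$~or~$2$, and a direct calculation shows that the positive integers missed by this family are exactly $\{\pyrg(\pyrx)+\pyrx+1 : \pyrx\ge 1\}$.

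For each $p \in \Nplus$ the graph $G_p$ is defined as follows. If $p = \pyrg(\pyrx)+\pyrx+1$ for some $\pyrx \ge 2$ (the ``missing'' case), let $G_p$ be the $(\pyrg(\pyrx),1)$-teabag, that is, the pyramid of height $\pyrg(\pyrx)$ extended by one extra vertex~$s'$ whose only predecessor is the pyramid sink~$s$. If $p = 2$ (corresponding to $\pyrx = 1$), take $G_p$ to be a two-vertex path, whose price is~$2$ by Lemma~\ref{lem:PathGraphs}. Otherwise $p$ is attained as a pyramid price, and $G_p$ is the pyramid of height~$h$ for the unique $h$ with $h + \pyrg^{-1}(h) = p$, whose price is~$p$ by Theorem~\ref{th:exact-trees-pyramids}.

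For the teabag case, the lower bound $\persistentprice{G_p}\ge p$ is exactly Lemma~\ref{lem:teabag-persistent} with $h=\pyrg(\pyrx)$ and $\ell=1$: both clauses of the second alternative hold because $\pyrg$ is strictly increasing (so $h>\pyrg(\pyrx-1)$) and $\ell=1>0=\pyrg(\pyrx)-h$. For the matching upper bound I plan to invoke Proposition~\ref{lem:surrounding-eq-persistent}, which gives $\persistentprice{G_p} = \surroundprice[G_p]{s'}+1$. Since $\prednode{s'} = \set{s}$, any reversible pebbling of the underlying pyramid (viewed as a subgraph of~$G_p$) that ends persistently at $\set{s}$ is already a valid surrounding pebbling of~$s'$ and uses space $\pyrg(\pyrx)+\pyrx = p-1$; hence $\surroundprice[G_p]{s'} \le p-1$ and $\persistentprice{G_p} \le p$.

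The remaining requirements are immediate: pyramids and teabags have fan-in~$2$ and a unique sink by construction; a pyramid of height $h\le p-1$ has $\binom{h+2}{2}$ vertices and the teabag adds at most one more, giving $\setsize{\vertices{G_p}} \le p^2$; and efficient constructibility follows from the efficient computability of~$\pyrg^{-1}$ noted after Definition~\ref{def:pyramid-extra-pebbles-pyrg}. The one subtle point is the teabag upper bound, which crucially relies on a surrounding pebbling of~$s'$ being allowed to have other pebbles on the graph alongside~$s$, so that the pyramid's own persistent pebbling (which ends exactly at $\set{s}$) serves as the witness without needing to control the potentially smaller visiting price of the pyramid sink.
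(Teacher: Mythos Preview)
Your proposal is correct and follows essentially the same approach as the paper: use pyramids for the prices they hit and $(\pyrg(\pyrx),1)$-teabags for the gaps, with the teabag lower bound coming from Lemma~\ref{lem:teabag-persistent} and the upper bound from the pyramid's persistent pebbling serving as a surrounding pebbling of the appended sink. Your write-up is in fact slightly more careful than the paper's on two points: you make the teabag upper bound explicit via Proposition~\ref{lem:surrounding-eq-persistent}, and you handle $p=2$ separately (the paper's blanket citation of Lemma~\ref{lem:teabag-persistent} would formally require $\pyrg(0)$, which is undefined, though the $(0,1)$-teabag is just your two-vertex path anyway).
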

  \begin{proof}
    The value of $\pyrg^{-1}(h)$, which is the extra pebbling price of pyramids with respect to the
    height, increases only when $h=\pyrg(\pyrx)+1$. 
    Therefore the persistent pebbling price of a pyramid increases by 1 unless
    $h=\pyrg(\pyrx)+1$, in which case it increases by 2.
    If $p = h + \pyrg^{-1}(h)$ for some $h \in \N$ we let $G_p$
    be the pyramid graph of height $h$.
    In this way $G_{p}$ is defined for every $p>0$, unless
    $p = h + \pyrg^{-1}(h) + 1$ for some $h =\pyrg(\pyrx)$. 
    In this case we let $G_p$ be the $(h,1)$-teabag which, by
    Lemmas~\ref{lem:UpperBound} and~\ref{lem:teabag-persistent}, has
    persistent pebbling price $p = h + \pyrg^{-1}(h) + 1$.
  \end{proof}

  We want a \efficiently computable family of graphs $\{ T_\bank \}_{\bank \in
    \N^+}$ with matching visiting price and persistent price, \ie
  $\visitprice{T_\bank} = \persistentprice{T_\bank} = \bank$.
  The idea is to stack up $\bank$ appropriately chosen graphs, so that any
  visiting or persistent pebbling strategy has to spend one pebble per graph.

  We will use $r$ graphs from the family $\{ G_p \}_{p \in \N^+}$
  where each $G_p$ has persistent pebbling price $p$, as constructed
  in \reflem{lem:ModifiedPyramids}. 
  The resulting graph is a stack of modified pyramids of increasing
  sizes. If there is justice in this world, the resulting graph should
  be called a Christmas Tree;
  though a graph theorist may have a hard time calling
  this a ``tree''.

  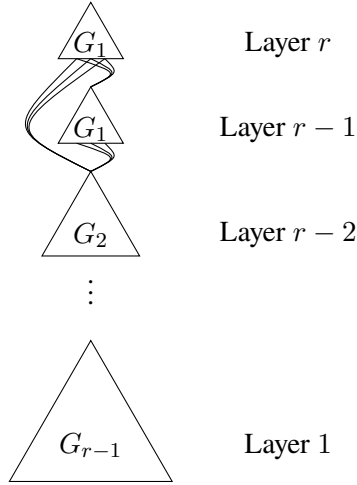
\begin{figure}[ht]
    \begin{center}
      \begin{tikzpicture}[scale=.75]
        \ExampleChristmasTree
      \end{tikzpicture}
      \caption{Illustration of a Christmas Tree in Construction~\ref{con:ChristmasTree}.}
      \label{fig:ChristmasTree}

    \end{center}
  \end{figure}

  \begin{construction}[Christmas Tree]
    \label{con:ChristmasTree}
    Let $\{ G_p \}_{p \in \N^+}$ be given as in \reflem{lem:ModifiedPyramids}.
    Given $\bank \in \N^+$, construct a graph $T_\bank \defeq (V, E)$ as follows.
    Its vertex set $V \defeq V^1 \disjunion V^2 \disjunion
    \dotsb \disjunion V^\bank$ is a disjoint union of $\bank$ layers, where
    for $1 \le \timet < \bank$ the $\timet\th$ layer is a copy of
    $G_{\bank - \timet}$ with vertices $V^\timet \defeq V(G_{\bank - \timet})$,
    and the top-most layer is another copy of
    $G_1$ with vertices $V^\bank \defeq V(G_1)$.
    Its edge set $E \defeq \intraedges \disjunion
    \interedges$ consists of intra- and inter-layer edges.
    The intra-layer edges $\intraedges \defeq E^1 \disjunion
    E^2 \disjunion \dotsb \disjunion E^\bank$ come from the
    corresponding copies of $G_p$, \ie
    for $1 \le \timet < \bank$ the edges on the $\timet\th$ layer
    $E^\timet$ are copies of $ E(G_{\bank - \timet})$ and
    and the edges on the top-most layer $E^\bank$ are copies of $
    E(G_1)$.
    The inter-layer edges 
    $\interedges$ connect, for each
    $1 \leq \timet \leq \bank -2 $, the sink of the subgraph at
    layer $\timet$ with all sources of the subgraphs at layer
    $\timet+1$ and $\timet+2$, and also connect the sink of the copy
    of $G_{1}$ at layer $\bank-1$ with the sources of the copy of
    $G_{1}$ at layer $\bank$.
  \end{construction}

  \begin{lemma}[Christmas Tree]
    \label{lem:ChristmasTree}
    The family of graphs $\{ T_\bank \}_{\bank \in \N^+}$ satisfies
    \begin{enumerate}
      \item $\visitprice{T_\bank} = \persistentprice{T_\bank} = \bank$;
      \item $T_\bank$ has in-degree at most two and a unique sink; and
      \item $T_\bank$ is \efficiently computable given $\bank$, and $T_\bank$
        has at most $\bank^3$.
    \end{enumerate}
  \end{lemma}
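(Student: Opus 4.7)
\emph{Plan.} I will verify the three items in turn; the pebbling-price equality is the substantive claim.

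\emph{Structural properties.} By \reflem{lem:ModifiedPyramids}, each $G_p$ has in-degree at most two, a unique sink, and size at most $p^2$. The inter-layer edges add incoming edges only to source vertices of the $G_p$ copies (which had in-degree zero before), and each such source receives at most two such edges, from the two preceding sinks, so the overall in-degree bound is preserved. The top $G_1$'s single vertex is the unique sink of $T_\bank$ since every other sink emits outgoing inter-layer edges. Summing layer sizes gives $\sum_{p=1}^{\bank} p^2 \leq \bank^3$ vertices, and the whole construction is plainly polynomial-time.

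\emph{Upper bound $\persistentprice{T_\bank} \leq \bank$.} I will describe an explicit persistent pebbling in an ascent-descent pattern, writing $z_t$ for the sink of layer~$t$. In the ascent, for $t = 1, 2, \ldots, \bank$ in order, I persistently pebble $z_t$: the inter-layer predecessors $z_{t-1}, z_{t-2}$ of the sources of layer~$t$ are already present from earlier rounds (vacuously for $t \leq 2$), so the sources become freely placeable and the internal persistent pebbling of $G_{\bank - t}$ runs in $\bank - t$ additional pebbles, giving a peak of $(t-1) + (\bank - t) = \bank - 1$ for $t < \bank$ and exactly $\bank$ at the moment $z_\bank$ itself is placed. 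This already shows $\visitprice{T_\bank} \leq \bank$. In the descent I remove the auxiliary pebbles $z_{\bank - 1}, z_{\bank - 2}, \ldots, z_1$ in reverse order: to reversibly remove $z_t$ I first re-pebble layer~$t$ to surround $z_t$, which costs $\bank - t - 1$ extra pebbles by \refpr{lem:surrounding-eq-persistent}, then delete $z_t$, and finally reverse the surround. The peak during removal of $z_t$ is $(t - 1) + 1 + 1 + (\bank - t - 1) = \bank$, counting pebbles on $z_1, \ldots, z_{t-1}$, on $z_\bank$, on $z_t$ itself, and on the surrounding vertices inside layer~$t$.

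\emph{Lower bound $\visitprice{T_\bank} \geq \bank$.} This is the main obstacle, since visiting is strictly weaker than persistent for most graphs (and indeed for most pyramids, as the paper has noted). My plan is to first show $\persistentprice{T_\bank} \geq \bank$ via the Dymond--Tompa game using \refeq{eq:chan13-equalities}, by giving a Challenger strategy surviving $\bank$ rounds: Challenger starts on $z_\bank$ and descends layer by layer, jumping to any newly placed pebble that lies strictly below the currently challenged layer; once Challenger reaches layer~$t$ internally, the persistent price $\bank - t$ of $G_{\bank - t}$ (which is inherited since inter-layer edges only restrict, never relax, source placements) forces roughly $\bank - t$ more rounds inside the layer, and the descents and within-layer rounds sum to $\bank$. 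The delicate point, which I expect to handle by a case analysis in the spirit of \reflem{lem:teabag-persistent}, is that Pebbler may try to amortize pebbles across layers through the shared inter-layer predecessors, so Challenger's charging must credit each placement exactly once. To promote $\persistentprice \geq \bank$ to $\visitprice \geq \bank$ I use that the top $G_1$ is a single vertex with two inter-layer predecessors: the configuration immediately before $z_\bank$ is placed must either already carry $\bank - 1$ pebbles elsewhere (so that placing $z_\bank$ costs $\bank$ at that very moment), or else an earlier repebbling was required to free lower sinks, and that repebbling itself already pays $\bank$ by the same inductive argument applied recursively to the sub-tree below $z_\bank$.
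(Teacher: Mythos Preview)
Items~2--3 and the upper bound are fine; the paper proves the upper bound more economically by showing $\surroundprice{T_\bank} \le \bank - 1$ and invoking \refpr{lem:surrounding-eq-persistent}, which makes your explicit descent phase unnecessary, but your version is also correct.

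The lower bound is where your proposal has a genuine gap. Your two-step plan---prove $\persistentprice{T_\bank} \ge \bank$ via a Dymond--Tompa Challenger strategy, then ``promote'' to $\visitprice{T_\bank} \ge \bank$---does not work as written, and step~2 is the problem. The dichotomy you offer (``either $\bank-1$ pebbles are already present, or an earlier repebbling pays $\bank$ by recursion on the sub-tree below $z_\bank$'') is unjustified: nothing you have established rules out a pebbling that visits $z_\bank$ with far fewer than $\bank-1$ pebbles elsewhere, having cleaned up lower layers along the way without ever holding $\bank$ pebbles. Indeed, the whole reason the Christmas tree is needed is that for pyramids and most graphs visiting price \emph{is} strictly below persistent price, so promotion from a persistent-price bound cannot come for free. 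Your step~1 is also only a sketch, but even granting it, step~2 would still require essentially the full argument below.

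The paper bypasses the Dymond--Tompa detour and proves $\visitprice{T_\bank} \ge \bank$ directly via a locking invariant (\refclaim{cla:TreeLocker}): in any pebbling using fewer than $\bank$ pebbles, whenever layers $t-1$ and $t$ both carry a pebble, every layer $1, \ldots, t-2$ is \vislocked (hence non-empty). The inductive step uses two observations: a source of layer~$t$ can only be pebbled when the sinks of layers $t-1$ and $t-2$ are present (so those layers are \vislocked at that moment, and the hypothesis applied to $t-1$ locks all layers below); and once locked, layer~$\eta$ \emph{stays} locked as long as layers $t-1$ and $t$ carry pebbles, because unlocking it takes $\visitprice{G_{\bank - \eta}} \ge \bank - \eta - 1$ pebbles on layer~$\eta$, which together with the $t-1$ pebbles already on the other layers would exceed the budget. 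Applying the invariant with $t = \bank$ at the instant the single node of layer~$\bank$ is placed yields a pebble on every layer, i.e.\ $\bank$ pebbles---a contradiction. This ``stays locked'' maintenance step is precisely the idea missing from your step~2.
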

  \begin{proof}
    For Item~3, note that each of the $\bank$ layers has at most $\bank^2$
    nodes.

    For Item~2, if $v$ is a node in $T_\bank$ we have two cases
    depending on whether $v$ is some layer's source node or not.
    If it is, then at most two inter-layer edges from lower layers
    point to $v$, and no intra-layer edge does.
    If $v$ is not a source on any layer then only two intra-layer
    edges point to it, since all $G_{p}$ have fan-in at most 2.
    The only sink of $T_\bank$ is the sink of layer $\bank$.

    To see that $\surroundprice{T_\bank} \le \bank - 1$, and thus that
    $\persistentprice{T_{\bank}} \leq \bank$, persistently pebble the
    sink node of $G_{\bank-\timet}$ on layer $\timet$, for $\timet$
    from 1 to $\bank - 1$, keeping only the pebbles on the sinks of
    the previous layers. 
    To persistently pebble layer $\timet$ takes $\bank - \timet$
    pebbles, assuming there is a pebble on each of the sinks of the
    $\timet-1$ lower layers, so in total $\bank - 1$ pebbles suffice.
    Note that
    $G_1$ is a single node, hence when the sinks of the lower layers
    are all pebbled the sink of $T_\bank$ is surrounded.
    The bound $\persistentprice{T_\bank} \le \bank$ follows by
    Proposition~\ref{lem:surrounding-eq-persistent}.

    To see that $\visitprice{T_\bank} \ge \bank$, we argue that when
    visiting layer $\bank$ there is a \vislocked pebble configuration
    on each of the previous layers (see
    Definition~\ref{def:VPlocked}).
    In particular, any layer with a pebble on the sink has
    a \vislocked configuration and if a configuration is \vislocked,
    then it contains a pebble.
    Given a pebbling configuration on $T_\bank$, for the rest of this
    proof we say that layer $\timet$ is \introduceterm{\vislocked} if
    the configuration, restricted to the corresponding subgraph, is
    \vislocked for the subgraph.

    \begin{claim}[Christmas Tree Locker]
      \label{cla:TreeLocker}
      Consider any pebbling that uses less than $\bank$ pebbles.
      In such a pebbling, whenever some layer $(\timet-1)$ and layer
      $\timet$ contain some pebbles, for $2 \le \timet\le \bank$, then
      layers $1, \ldots, \timet - 2$ are all \vislocked.
    \end{claim}
    \begin{proof}
      The claim is true for $\timet = 2$ vacuously, establishing the base case.
      When $\timet > 2$, consider a time that layer $\timet$ starts to have a
      pebble:
      a source node on layer $\timet$ is pebbled, hence there are pebbles on
      the sink nodes of layers $\timet-1$ and $\timet-2$.
      Thus layers $\timet-1$ and $\timet-2$ are \vislocked and each has a pebble.
      Induction hypothesis (on $\timet-1$) further says that layer $\eta$ is
      \vislocked for any $1 \le \eta < \timet - 2$.
      As long as there are pebbles on layers $\timet$ and $\timet-1$, all lower
      layers remain \vislocked:
      for $1 \le \eta \le \timet - 2$, to unlock layer $\eta$ takes
      $\visitprice{G_{\bank-\eta}} \ge \surroundprice{G_{\bank-\eta}} = \bank - \eta - 1
      \ge \bank - \timet + 1$ pebbles (recall Eq.~\eqref{eq:surrounding-leq-visiting-leq-persistent}), but
      there are $\timet-1$ pebbles on layers other than $\eta$, which cannot be
      done with less than $\bank$ pebbles.
    \end{proof}
    Assume for some $\bank \ge 2$, the sink node of layer $\bank$ is pebbled
    using less than $\bank$ pebbles.
    When a source node of layer $\bank$ is pebbled, there is a pebble on the
    sink node of layer $\bank - 1$.
    Claim~\ref{cla:TreeLocker} shows that there is a pebble on layer $\eta$ for $1
    \le \eta \le \bank - 2$, for a total of $\bank$ pebbles, contradicting
    that less than $\bank$ pebbles are used.
    This shows $\visitprice{T_\bank} \ge \bank$ for $\bank \ge 2$, and the case for
    $\bank = 1$ is obvious.
    In the end we get that 
    \begin{equation}
      \bank \le \visitprice{T_\bank} \le \persistentprice{T_\bank} = \surroundprice{T_\bank} + 1 \le \bank
    \end{equation}
    by equation~\ref{eq:surrounding-leq-visiting-leq-persistent} and
    Proposition~\ref{lem:surrounding-eq-persistent}, which gives Item~1.
  \end{proof}

\subsection{Molding}
\label{sst:Molding}

Given a graph $G$ we want to construct a graph $\mold(G)$ with
a special source $s$ and a single sink, such that any pebbling that
visits the sink must go through a configuration with at least
$\visitprice{\mold(G)}$ pebbles, one of which is on vertex $s$.

  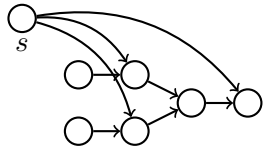
\begin{figure}[ht]
    \begin{center}
      \begin{tikzpicture}[scale=.75]
        \ExampleMolding
      \end{tikzpicture}
      \caption{Example of \refcon{con:Molding}: molding of a pyramid of height 1.}
      \label{fig:ExampleMolding}
    \end{center}
  \end{figure}

\begin{construction}[Molding]
  \label{con:Molding}
  Given a graph $G$, we construct a graph $\mold(G)$ as follows.
  For every vertex $v \in V(G)$, we add to $\mold(G)$ two vertices
  $v\inmold$ and $v\outmold$, and a directed edge
  $(v\inmold, v\outmold)$.
  Also, for every edge $(u,v) \in E(G)$, we add to $\mold(G)$
  a corresponding edge $(u\outmold,v\inmold)$.
  Finally we add to $\mold(G)$ a special new vertex $s$ that we
  connect to all vertices $v\outmold$, \ie for every
  $v \in V(G)$ we add edge $(s, v\outmold)$ to $\mold(G)$. 
  Formally,
  $V(\mold(G)) 
  \defeq 
  \{ s \} \disjunion 
  \bigl\{ v\inmold, v\outmold \where v \in V(G) \bigr\}$
  and 
  $E(\mold(G)) \defeq E_1 \disjunion E_2 \disjunion E_3$, 
  where
  $E_1 \defeq \bigl\{ (v\inmold, v\outmold) \where v \in V(G)
  \bigr\}$,
  $E_2 \defeq \bigl\{ (u\outmold, v\inmold) \where (u,v) \in E(G)
  \bigr\}$
  and 
  $E_3 \defeq \bigl\{ (s, v\outmold) \where v \in V(G) \bigr\}$.
\end{construction}

By construction, if $G$ has in-degree at most two and a unique sink then so does
$\mold(G)$.

\begin{lemma}[Molding]
  \label{lem:Molding}
  Given a graph $G$, the graph $\mold(G)$ has the following properties.
  \begin{enumerate}
    \item $\Persistentprice{ \mold(G) } \le \persistentprice{G} + 2$; and
    \item For any visiting pebbling $\pebbling' = \langle \pconf'_0,
      \pconf'_1, \dotsc, \pconf'_{\stoptime} \rangle$ of $\mold(G)$, there
      is a configuration $P'_b$ (for some $0 \le b \le \stoptime$) using at
      least $\visitprice{G} + 2$ pebbles and containing $s$.
  \end{enumerate}
  \begin{proof}
    For Item~(1), fix any persistent pebbling $\pebbling$ of $G$.
    Simulate the pebbling $\pebbling$ as a persistent pebbling $\pebbling'$ of
    $\mold(G)$ as follows.
    First, pebble the special source $s$ of $\mold(G)$ in $\pebbling'$.
    Afterwards, whenever there is a move in $\pebbling$ to pebble a node $v
    \in V(G)$, make a phase of three moves in $\pebbling'$:
    pebble $v\inmold$, pebble $v\outmold$, unpebble $v\inmold$.
    Similarly, whenever there is a move in $\pebbling$ to unpebble a node $v
    \in V(G)$, make a phase of three moves in $\pebbling'$:
    pebble $v\inmold$, unpebble $v\outmold$, unpebble $v\inmold$.
    If the current configuration in $\pebbling$ is $\pconf$, and
    the configuration in $\pebbling'$ at the end of a phase is $\pconf'$, then
    $\pebbling'$ maintains
    the invariant that $\pconf' = \{ s \} \union \bigl\{ v\outmold
      \where v \in \pconf \bigr\}$.
    As a result, $\pebbling'$ is a legal pebbling on $\mold(G)$:
    whenever $v\inmold$ is pebbled or unpebbled, all its predecessors
    $\prednode{v\inmold} = \bigl\{ u\outmold \where u \in \prednode{v}
      \bigr\}$ have pebbles in $\pconf'$,
    since $\prednode{v}$ have pebbles in $\pconf$;
    whenever $v\outmold$ is pebbled or unpebbled, its predecessors $s$ and
    $v\inmold$ have pebbles.
    If we add a final move in $\pebbling'$ to unpebble $s$, then $\pebbling'$
    persistently pebbles $\mold(G)$.
    Note that whenever $\pebbling$ pebbles or unpebbles $v \in V(G)$ to get to
    configuration $\pconf$, the simulating pebbling $\pebbling'$ uses at most
    two more pebbles to get to $\pconf'$, namely $s$ and $v\inmold$.
    Hence $\Persistentprice{ \mold(G) } \le \persistentprice{G} + 2$.

    For Item~(2), we start with a visiting pebbling $\pebbling' = \langle \pconf_0',
    \pconf_1', \dotsc, \pconf_{\stoptime}'\rangle$ of $\mold(G)$ and
    we construct a visiting pebbling $\pebbling$ of $G$.
    We now define two projection operators that turn pebble
    configurations for $\mold(G)$ into configurations for $G$.
    Let $\outmoldprojTime{\timet} \defeq
    \bigl\{ v \in V(G) \where v\outmold \in \pconf_\timet' \bigr\}$
    be the projection of $\pconf_\timet'$ to $V(G)$ via $v\outmold$, and
    $\anymoldprojTime{\timet} \defeq
    \bigl\{ v \in V(G) \where v\inmold \in \pconf_\timet' \text{ or
      } v\outmold \in \pconf_\timet' \}$ be the projection of $\pconf_\timet'$
    to $V(G)$ via $v\inmold$ or $v\outmold$.
    By definition $\outmoldprojTime{\timet} \subseteq
    \anymoldprojTime{\timet}$.
    Whenever a vertex of $\mold(G)$ is pebbled or unpebbled during
    a pebbling step from  $\pconf_\timet'$ to $\pconf_{\timet+1}'$,
    \begin{enumerate}[(i)]
        \item if the vertex is $s$, then both
        $\outmoldprojTime{\timet}=\outmoldprojTime{\timet+1}$ and
        $\anymoldprojTime{\timet}=\anymoldprojTime{\timet+1}$;
      \item if the vertex is $v\inmold$ for some $v \in V(G)$, then
        $\outmoldprojTime{\timet} = \outmoldprojTime{\timet+1}$
        but $\anymoldprojTime{\timet}$ may differ from
        $\anymoldprojTime{\timet+1}$; 
        and
      \item if the vertex is $v\outmold$ for some $v \in V(G)$, then
        $\anymoldprojTime{\timet} = \anymoldprojTime{\timet+1}$ 
        but $\outmoldprojTime{\timet} \not=
        \outmoldprojTime{\timet+1}$.
    \end{enumerate}
    
    To construct $\pebbling$, we analyze in order each configuration
    $\pconf'_{t}$ in $\pebbling'$. Depending on how the sequences of
    $\outmoldprojTime{\timet}$ and $\anymoldprojTime{\timet}$ evolve,
    we may append new configurations to $\pebbling$.
    In the following $\eta$ is the index of the last configuration
    added to $\pebbling$ and $\timet$ is the configuration of
    $\pebbling'$ under analysis. We maintain the
    following invariants:
    \begin{enumerate}[(a)]
      \item $\outmoldprojTime{\timet} \subseteq \pconf_\eta$; and
      \item for any $v$ in $\anymoldprojTime{\timet} \symmdiff \pconf_\eta$
        it holds that $\prednode{v} \subseteq \outmoldprojTime{\timet}$.
    \end{enumerate}
    Initially at $\timet=0$ and $\eta=0$,
    $\pconf_\timet' = \outmoldprojTime{\timet} =
    \anymoldprojTime{\timet} = \pconf_\eta = \emptyset$,
    so the invariant holds for $\timet=0$.
    Consider a pebbling move in $\pebbling'$ from $\pconf_\timet'$ to
    $\pconf_{\timet+1}'$.
    \begin{enumerate}[(I)]
      \item If $\anymoldprojTime{\timet} = \anymoldprojTime{\timet+1}$
        and $\outmoldprojTime{\timet} = \outmoldprojTime{\timet+1}$ the
        construction does not append any new $\pconf_\eta$ and the
        invariant is preserved.
      \item If $\anymoldprojTime{\timet} \not=
        \anymoldprojTime{\timet+1}$ then we are in case (ii) above, so
        $\outmoldprojTime{\timet}$ does not
        change and some node $v\inmold$ is pebbled or unpebbled. Hence the
        current configuration $\pconf_\timet' \supseteq \prednode{v\inmold} =
        \bigl\{ u\outmold \where u \in \prednode{v} \bigr\}$, thus
        $\outmoldprojTime{\timet} \supseteq \prednode{v}$. 
        The construction does not append a new $\pconf_{\eta}$ and
        the invariant is preserved.
      \item If $\outmoldprojTime{\timet} \not=
        \outmoldprojTime{\timet+1}$ 
        then we are in case (iii) above, so
        $\anymoldprojTime{\timet}$ does not change and some node $v\outmold$
        is pebbled or unpebbled.
        The construction appends to $\pebbling$ the two sequences of
        moves (``Eras'') described below. After each move
        $\anymoldprojTime{\timet} \symmdiff \pconf_\eta$ gets smaller.
        Note that for any $u \in \anymoldprojTime{\timet} \symmdiff
        \pconf_\eta$, the invariant gives $\prednode{u} \subseteq
        \outmoldprojTime{\timet} \subseteq \pconf_\eta$, so
        they can be pebbled or unpebbled in $\pconf_\eta$.
        \begin{description}
          \item[Unpebble Era] while $\pconf_\eta \setminus
          \anymoldprojTime{\timet} \ne \emptyset$,
          pick any $u \in \pconf_\eta \setminus \anymoldprojTime{\timet}$, and
          unpebble $u$ in $\pebbling$ (increment $\eta \defeq \eta + 1$ and
          then set $\pconf_\eta \defeq \pconf_{\eta - 1} \setminus \{ u \}$).
          Since $\outmoldprojTime{\timet} \subseteq \anymoldprojTime{\timet}$,
          $u \notin \outmoldprojTime{\timet}$ and the invariant is
          preserved at time $\timet$.
          \item[Pebble Era] while $\anymoldprojTime{\timet} \setminus
          \pconf_\eta \ne \emptyset$, pick
          any $u \in \anymoldprojTime{\timet} \setminus \pconf_\eta$, and
          pebble $u$ in $\pebbling$
          (increment $\eta \defeq \eta + 1$ and then set $\pconf_\eta \defeq
          \pconf_{\eta - 1} \union \{ u \}$).
          The invariant is preserved at time $\timet$.
      \end{description}
      At the end of the two sequences, $\outmoldprojTime{\timet+1} \subseteq
      \anymoldprojTime{\timet+1} = \anymoldprojTime{\timet} = \pconf_\eta$, so
      the invariant now holds also at time $\timet+1$.
    \end{enumerate}
    We complete the proof of Item~(2). For any visiting pebbling
    $\pebbling'$ of $\mold(G)$, the corresponding pebbling $\pebbling$
    is a visiting pebbling of $G$ by invariant~(a), thus some
    constructed configuration $\pconf_\eta$ has at least
    $\visitprice{G}$ pebbles. 
    The configuration has been appended to
    $\pebbling$ in case (III) above, and without loss of generality we
    can assume it is either at the beginning of an ``unpebble era'' or
    at the end of a ``pebble era'', since the number of pebbles in
    $\pconf_\eta$ decreases in the former and increases in the latter.
    Since the beginning of an ``unpebble era'' other than the first is
    also the end of a ``pebble era'', we can furthermore assume the
    latter.
    This means that for some $\timet$ we have
    $\anymoldprojTime{\timet} = \anymoldprojTime{\timet+1} =
    \pconf_\eta$,
    so either the corresponding configuration $\pconf_\timet'$
    (when $v\outmold$ is unpebbled) or $\pconf_{\timet+1}'$ (when
    $v\outmold$ is pebbled) has at least $\visitprice{G} + 2$ pebbles,
    including $s$, $v\inmold$ and $v\outmold$. This completes
    Item~(2).
  \end{proof}
\end{lemma}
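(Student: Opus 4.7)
The plan is to prove the two items separately. Item~(1) is a direct simulation argument; Item~(2) requires a projection from pebblings of $\mold(G)$ back to pebblings of $G$, and this is where the real work happens.

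For Item~(1), I would take an optimal persistent pebbling $\pebbling$ of $G$ using $\persistentprice{G}$ pebbles and simulate it on $\mold(G)$. First I would pebble the special source $s$ once and for all, then replay each move of $\pebbling$ as a short local phase: to simulate placing (resp.\ removing) a pebble on $v$ in $G$, I would pebble $v\inmold$, then place (resp.\ remove) a pebble on $v\outmold$, then remove the pebble from $v\inmold$. The placement of $v\inmold$ is legal because, by induction, the current configuration on $\mold(G)$ carries $u\outmold$ for every $u \in \prednode{v}$, which is exactly $\prednode{v\inmold}$. The move on $v\outmold$ is legal because its two predecessors $s$ and $v\inmold$ are both pebbled. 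At the end I remove $s$. The overhead over $\pebbling$ is at most two pebbles: $s$ and, temporarily, a single $v\inmold$.

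For Item~(2), the plan is to associate to each visiting pebbling $\pebbling' = (\pconf'_0,\ldots,\pconf'_\tau)$ of $\mold(G)$ a visiting pebbling $\pebbling$ of $G$ in such a way that large configurations of $\pebbling$ come from configurations of $\pebbling'$ that, in addition, have a pebble on $s$ and on some $v\inmold$. The natural projection $\outmoldproj(\pconf'_t) \defeq \{\,v : v\outmold \in \pconf'_t\,\}$ picks out vertices that look ``persistently present'' in $G$, while the auxiliary $\anymoldproj(\pconf'_t) \defeq \{\,v : v\inmold \in \pconf'_t \text{ or } v\outmold \in \pconf'_t\,\}$ also sees the ``in-transit'' vertices. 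I would go through $\pebbling'$ step by step and split each transition into three cases according to what kind of vertex is being toggled: moves on $s$ change neither projection; moves on some $v\inmold$ only affect $\anymoldproj$; and moves on some $v\outmold$ only affect $\outmoldproj$.

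The construction of $\pebbling$ appends new configurations only in the third case. There I would maintain the invariants (a)~$\outmoldproj(\pconf'_t) \subseteq \pconf_\eta$ and (b)~every vertex in the symmetric difference $\anymoldproj(\pconf'_t) \symmdiff \pconf_\eta$ already has all its $G$-predecessors in $\outmoldproj(\pconf'_t) \subseteq \pconf_\eta$, so it may be legally pebbled or unpebbled. The repair is a short ``unpebble era'' removing stale vertices in $\pconf_\eta \setminus \anymoldproj(\pconf'_t)$ followed by a ``pebble era'' adding the new ones in $\anymoldproj(\pconf'_t) \setminus \pconf_\eta$. The resulting $\pebbling$ is a legitimate visiting pebbling of $G$, since invariant~(a) transfers the visit of the sink of $\mold(G)$ down to a visit of the sink of $G$. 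Therefore some $\pconf_\eta$ has size $\geq \visitprice{G}$, and I would choose $\eta$ to be the end of a pebble era, where $\anymoldproj(\pconf'_t) = \pconf_\eta$.

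The main obstacle, and the reason for the two-projection bookkeeping, is extracting the extra $+2$: proving that the witnessing $\pconf'_b$ not only has $\visitprice{G}$ pebbles of the form $v\outmold$ but also carries $s$ and some $v\inmold$ simultaneously. This comes out of the analysis of the pebbling move at time $t$ in case~(III): that move toggles some $v\outmold$, which forces $s$ and $v\inmold$ to be present in either $\pconf'_t$ or $\pconf'_{t+1}$, and meanwhile $\anymoldproj(\pconf'_t) = \pconf_\eta$ contributes $\visitprice{G}$ pebbles on vertices of the form $u\outmold$ or $u\inmold$ (with $u \neq v$ for the ones not already counted). Adding the two dedicated pebbles on $s$ and $v\inmold$, and the pebble on $v\outmold$ itself, to the other pebbles forced by $\pconf_\eta$, yields the bound $\setsize{\pconf'_b} \geq \visitprice{G} + 2$, closing the proof.
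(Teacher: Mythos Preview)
Your proposal is correct and follows essentially the same approach as the paper: the same simulation for Item~(1), and for Item~(2) the same pair of projections $\outmoldproj$ and $\anymoldproj$, the same invariants~(a) and~(b), the same unpebble/pebble eras triggered only by moves on $v\outmold$, and the same final count (at the end of a pebble era $\pconf_\eta = \anymoldproj(\pconf'_t)$, and the witnessing $\pconf'_b$ picks up $s$, $v\inmold$, and $v\outmold$ on top of one pebble per $u \in \pconf_\eta \setminus \{v\}$). The only point you elide is the short justification that the maximum $\setsize{\pconf_\eta}$ can be taken at the end of a pebble era (the size is monotone within each era, and the start of every unpebble era after the first coincides with the end of the preceding pebble era), which the paper spells out explicitly.
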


\subsection{\Highway{}s}

As an application of molding (Construction~\ref{con:Molding}) we show
a construction that controls the visiting price of a node, and that
allows us to construct gadgets for the components of a quantified boolean
formula.

  \begin{construction}[\Highway]
    \label{con:Highway}
    For any non-negative integer $\bank$ we define the \emph{\highway}
    of toll $\bank$ from $a$ to $b$ (represented by
    Fig.~\ref{fig:Highway}) as follows.
    If $r=0$ then the \highway just joins the vertices
    $a$ and $b$ with an edge $(a,b)$.
    If $r>0$ let $T_\bank$ be the graph having $\visitprice{T_\bank} =
    \persistentprice{T_\bank} = \bank$ given by
    \reflem{lem:ChristmasTree}.
    The \highway of toll $\bank$ from $a$ to $b$ is the graph
    $\mold(T_\bank)$, identifying $a$ with the special
    source $s$ in $\mold(T_\bank)$, and identifying $b$ with
    the unique sink in $\mold(T_\bank)$.
  \end{construction}

\begin{figure}[ht]
  \begin{center}
    \begin{tikzpicture}[rotate=-90]
      \GadgetAttachment ab\bank
    \end{tikzpicture}
  \end{center}
  \caption{\Highway of toll $\bank$ from $a$ to $b$.}
  \label{fig:Highway}
\end{figure}

  Let $G$ be any graph that contains a \highway of toll $\bank$ from $a$ to $b$.
  Call the nodes $\tilde R \defeq \ancnode[G]{b} \setminus
  \ancnode[G]{a}$ to be \emph{properly} in the \highway, and call the
  nodes $R \defeq \tilde R \union \{ a \}$ to be in the \highway.
  The \highway construction is sensitive to whether the pebble on node $a$ is
  counted, \ie whether the pebbling prices are restricted to $R$ or to
  $\tilde R$.

  \begin{lemma}[\Highway]
    \label{lem:Highway}
    We have $\persistentprice[R]{b} = \visitprice[R]{b} = \bank + 2$ and
    $\persistentprice[\tilde R]{b} = \visitprice[\tilde R]{b} = \bank + 1$.
  \end{lemma}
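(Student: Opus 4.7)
The plan is to reduce this directly to the Molding Lemma (\reflem{lem:Molding}) applied to the Christmas tree $T_{\bank}$, since by \refcon{con:Highway} the subgraph of $G$ induced by the \highway (when $\bank > 0$) is exactly $\mold(T_{\bank})$ with the special source $s$ identified with $a$ and the unique sink identified with $b$. Throughout, I read $\visitprice[X]{b}$ and $\persistentprice[X]{b}$ as the minimum numbers of pebbles from the vertex set $X$ that a visiting (respectively persistent) pebbling of $b$ must contain, noting that any such pebbling can be assumed without loss of generality to stay inside $\ancnode[G]{b}$. The case $\bank=0$ is immediate: the \highway is a single edge $(a,b)$, so $b$ has $a$ as its only predecessor, and $\persistentprice[R]{b}=\visitprice[R]{b}=2$ while $\persistentprice[\tilde R]{b}=\visitprice[\tilde R]{b}=1$.

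For the case $\bank \ge 1$, the upper bound $\persistentprice[R]{b} \le \bank+2$ follows verbatim from \refitem{lem:Molding}(1), which exhibits a persistent pebbling of $\mold(T_{\bank})$ using $\persistentprice{T_{\bank}}+2 = \bank+2$ pebbles. Inspecting the construction in the proof of \refitem{lem:Molding}(1) shows that the pebble on $s=a$ is placed at the very first step, kept throughout the whole pebbling, and removed at the very last step; hence if we subtract it from the count we obtain a pebbling that uses at most $\bank+1$ pebbles from $\tilde R$ at every point in time, giving $\persistentprice[\tilde R]{b} \le \bank+1$.

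For the matching lower bounds, I would invoke \refitem{lem:Molding}(2), which states that every visiting pebbling of $\mold(T_{\bank})$ contains a configuration with at least $\visitprice{T_{\bank}}+2 = \bank+2$ pebbles that furthermore includes $s=a$. This immediately yields $\visitprice[R]{b} \ge \bank+2$. Removing $a$ from the count in the same configuration gives $\visitprice[\tilde R]{b} \ge \bank+1$. Combining with the trivial inequality $\visitprice[X]{b} \le \persistentprice[X]{b}$ (for both $X=R$ and $X=\tilde R$) and the upper bounds established above pins down the full chain
\begin{equation*}
  \bank+2 \le \visitprice[R]{b} \le \persistentprice[R]{b} \le \bank+2,
  \qquad
  \bank+1 \le \visitprice[\tilde R]{b} \le \persistentprice[\tilde R]{b} \le \bank+1,
\end{equation*}
which is the claim.

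The only subtle point, and the one I would flag as requiring a small explicit argument, is that \reflem{lem:Molding} is stated for the stand-alone graph $\mold(T_{\bank})$, whereas here the \highway sits inside a larger graph $G$. One should observe that any pebbling of $b$ in $G$, when projected onto $R$ by discarding moves on vertices outside $R$, remains a legal pebbling of $b$ in the induced subgraph $R \cong \mold(T_{\bank})$ with $a$ serving as the (possibly pre-pebbled) source, because the only predecessors of properly-inside-\highway vertices that lie outside $\tilde R$ is $a$ itself. This observation licences us to apply \reflem{lem:Molding} literally and carry over its bounds.
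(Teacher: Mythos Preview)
Your proof is correct and follows essentially the same approach as the paper's, which simply says that the upper bounds come from (the proof of) Item~(1) of \reflem{lem:Molding} and the lower bounds from Item~(2) of \reflem{lem:Molding}. You have expanded the argument with the explicit $\bank=0$ base case, the observation that the proof of \reflem{lem:Molding}(1) keeps a pebble on $s=a$ throughout (so that excluding $a$ from the count yields the $\tilde R$ upper bound), and a short justification for why the Molding Lemma---stated for the stand-alone graph---applies to the \highway as an induced subgraph of~$G$; these are details the paper leaves implicit.
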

  \begin{proof}
    $\persistentprice[R]{b} \le \bank + 2$ and
    $\persistentprice[\tilde R]{b} \le \bank + 1$
    by (the proof of) Item~(1) of \reflem{lem:Molding} (since pebbles
    outside of $R$ or of $\tilde R$ are not counted), and $\visitprice[R]{b}
    \ge \bank + 2$ and $\visitprice[\tilde R]{b} \ge \bank + 1$ by Item~(2) of
    \reflem{lem:Molding}.
  \end{proof}

  The fact that $\persistentprice[R]{b}$ and $\persistentprice[\tilde R]{b}$ do not
  differ by more than one holds not only for \highway{}s but for any graph.

  \begin{lemma}[Source Difference]
    \label{lem:SourceDifference}
    Consider regions $R_1$ and $R_2$ such that $R_1 = R_2 \setminus \{ s_2 \}$
    for some source $s_2$ of $R_2$ (\ie $\prednode{s_2} \intersect R_2 =
    \emptyset$).
    We have
    \[
      \begin{matrix}
        \persistentprice[R_1]{v}
        &\le& \persistentprice[R_2]{v}
        &\le& \persistentprice[R_1]{v} + 1 \\
        \visitprice[R_1]{v}
        &\le& \visitprice[R_2]{v}
        &\le& \visitprice[R_1]{v} + 1 \\
        \surroundprice[R_1]{v}
        &\le& \surroundprice[R_2]{v}
        &\le& \surroundprice[R_1]{v} + 1 \enspace .
      \end{matrix}
    \]
  \end{lemma}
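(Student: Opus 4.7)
The plan is to prove all six inequalities uniformly by a simple counting argument, since $R_1 \subseteq R_2$ and $|R_2 \setminus R_1| = 1$. The key observation is that for any pebble configuration $\pconf$ we have
\begin{equation*}
  |\pconf \intersect R_1| \le |\pconf \intersect R_2| \le |\pconf \intersect R_1| + 1 \eqperiod
\end{equation*}
Whether a pebbling is persistent, visiting, or surrounding with respect to $v$ is a property of the pebbling itself (determined by the final configuration $\pconf_{\stoptime}$ or by whether $v$ ever carries a pebble), and is independent of the region in which we measure space. Hence the same pebbling can witness prices for both $R_1$ and $R_2$.

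For the three left inequalities I would take an optimal pebbling of $v$ in $R_2$ of the relevant flavour (persistent, visiting, surrounding), of space at most the corresponding $R_2$-price. Since $R_1 \subseteq R_2$, at every time step the count of pebbles inside $R_1$ is no larger than the count inside $R_2$, so the same pebbling achieves space at most $\persistentprice[R_2]{v}$ (respectively, the visiting/surrounding price) when measured in $R_1$. This gives $\persistentprice[R_1]{v} \le \persistentprice[R_2]{v}$ and the two analogous bounds.

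For the three right inequalities I would dually take an optimal pebbling of $v$ in $R_1$. At every time step the number of pebbles in $R_2$ exceeds the number in $R_1$ by at most $|R_2 \setminus R_1| = 1$, because the only vertex that can contribute the extra pebble is $s_2$. Hence the same pebbling achieves space at most $\persistentprice[R_1]{v}+1$ in $R_2$, yielding $\persistentprice[R_2]{v} \le \persistentprice[R_1]{v}+1$, and similarly for visiting and surrounding prices.

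No real obstacle arises: the source hypothesis $\prednode{s_2} \intersect R_2 = \emptyset$ is not required by the counting argument above; it is a technical condition recorded for the subsequent applications of the lemma (for instance, it guarantees that adding $s_2$ to the region does not introduce a vertex whose predecessors inside the enlarged region must also be tracked). The only minor bookkeeping point is to verify that $v \in R_1$ so that the three flavours of price are well-defined in the smaller region, which follows from the context in which the lemma is invoked since $v \ne s_2$.
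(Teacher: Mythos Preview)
The gap is in how you read the regional price. The paper defines $\persistentprice[R]{v}$, $\visitprice[R]{v}$, $\surroundprice[R]{v}$ as pebbling prices \emph{in the induced subgraph} $G[R]$ (see the definition of ``region''), not as the space of a pebbling of some ambient graph measured only inside~$R$. Under the induced-subgraph reading, a pebbling of $G[R_1]$ is not automatically a pebbling of $G[R_2]$: any vertex $u \in R_1$ that has $s_2$ among its predecessors may be (un)pebbled in $G[R_1]$ without reference to $s_2$, but in $G[R_2]$ that same move is illegal unless $s_2$ carries a pebble. So ``the same pebbling witnesses both prices'' fails, and your derivation of the right-hand inequalities does not go through.

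The source hypothesis is precisely what repairs this, and contrary to your remark it is essential for the lemma itself. Since $\prednode{s_2} \cap R_2 = \emptyset$, the vertex $s_2$ is a source of $G[R_2]$, so one may pebble $s_2$ as the first move, keep it throughout, and (for the persistent case) unpebble it as the last move; this converts any $G[R_1]$-pebbling into a legal $G[R_2]$-pebbling with at most one extra pebble, giving the right-hand inequalities. Without the hypothesis the bound can fail badly: if $G[R_2]$ is a path of length $n$ with sink $v$ and $s_2$ is the predecessor of $v$ (hence not a source of $R_2$), then $\persistentprice[R_1]{v}=1$ while $\persistentprice[R_2]{v}=\lfloor \log n \rfloor + 2$. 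Your left-hand inequalities are fine once rephrased as ``drop the moves on $s_2$ from an optimal $G[R_2]$-pebbling'' rather than ``count fewer pebbles''. The paper states the lemma without proof.
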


\section{PSPACE-Completeness}
\label{sec:pspace-completeness-reversible}

In this section we give all details of the construction in
\refth{th:pspace-construction}, restated in the following theorem, and
its full proof.

\begin{theorem}
  \label{th:pspace-construction-only}
  Given a quantified \xcnf{$3$}~$\QBF$,
  there is a \efficiently
  constructible graph $\gadgetConst{\QBF}$ and a \efficiently
  computable number $\gamma(\QBF)$ such that
  $\Persistentprice{\gadgetConst{\QBF}}=\gamma(\QBF)+\ib{\QBF \text{\
      is \false}}$.
\end{theorem}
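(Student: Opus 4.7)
The plan is to proceed by induction on the quantifier depth. Following the sketch in Section~\ref{sec:pspace-completeness-overview}, I would order the variables so that $\variable_1$ is innermost and $\variable_n$ outermost, and define subformulas $\QBF_i = Q_i \variable_i \, \QBF_{i-1}$ with $\QBF_0 = \fstd$. The graph $\gadget^{(i)} = \gadgetConst{\QBF_i}$ is built from $\gadget^{(i-1)}$ by attaching the appropriate existential or universal quantifier gadget from Figure~\ref{fig:example-qbf}, each of which is assembled out of variable gadgets, conjunction gadgets, and highways with carefully chosen tolls (the clause-level tolls $\beta_j$ and the quantifier-level tolls $\gamma_i$). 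Alongside this construction I would define a recursive quantity $\gamma_i$ so that $\gamma_n = \gamma(\QBF)$ is polynomial-time computable and $\gadget^{(n)} = \gadgetConst{\QBF}$.

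\textbf{Inductive invariant and upper bound.} The claim I would carry through by induction on $i$ is the following: for any partial assignment $\partassign_i$ to $\set{\variable_n, \ldots, \variable_{i+1}}$ encoded as pebbles on the corresponding variable gadgets in the positions of Figures~\ref{fig:variable-false} and~\ref{fig:variable-true}, the conditional persistent pebbling price of $\gadget^{(i)}$ relative to this configuration equals $\gamma_i + \ib{\QBF_i \text{ is false under } \partassign_i}$. The base case $i = 0$ reduces to analysing the graph built from the clause gadgets joined by conjunction gadgets: the toll on each literal-highway is set so that having a pebble on the matching literal vertex in a variable gadget saves exactly one pebble when pebbling through the clause (the source-sensitivity of highway cost, Lemma~\ref{lem:Highway} combined with Lemma~\ref{lem:SourceDifference}), and the conjunction gadget is designed so that a single pebble saved at any clause propagates to the sink of the CNF subgraph. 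For the inductive step on an existential gadget, an optimal pebbling commits to one value of $\variable_i$ and inherits the save-one-pebble discount from the induction hypothesis iff that value satisfies $\QBF_{i-1}$; the universal gadget of Figure~\ref{fig:forall-overview} has the two symmetric branches around $f_i, \bar f_i$ that force both polarities of $\variable_i$ to be pebbled through in sequence, so the discount is available iff \emph{both} assignments satisfy $\QBF_{i-1}$.

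\textbf{Main obstacle and lower bound.} The principal difficulty is proving the lower bound on an arbitrary reversible pebbling, where subgadgets may be entered in reverse topological order and may be traversed by visiting rather than persistent pebblings, so that naive block-by-block accounting loses the single additive pebble on which the reduction hinges. Two structural properties of the gadgets keep this accounting exact. First, the Christmas trees $T_r$ of Lemma~\ref{lem:ChristmasTree} satisfy $\visitprice{T_r} = \persistentprice{T_r}$, so using them as the ``source'' of every highway (via Construction~\ref{con:Molding}) removes any visiting/persistent slack at entry points; Lemma~\ref{lem:Highway} together with Lemma~\ref{lem:SourceDifference} and Proposition~\ref{lem:surrounding-eq-persistent} then give exact $\pm 1$ accounting for whether the source vertex of a highway currently holds a pebble. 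Second, the tolls $\gamma_i$ are set strictly decreasing from the outermost to the innermost quantifier, so at any bottleneck configuration in a hypothetical cheap pebbling only one highway can be actively ``in transit'' while the remaining quantifier gadgets must hold their encoded assignments in \vislocked or \perlocked form (Definition~\ref{def:VPlocked}). The lower bound proof itself I would carry out through the Dymond--Tompa game, using the equality $\dtprice{G} = \persistentprice{G}$ from \eqref{eq:chan13-equalities}: \dtchallenger starts at the sink, waits for \dtpebbler to produce a blocking configuration, then jumps into the current outermost unresolved quantifier gadget and recursively plays a strategy that charges one pebble per locked subgadget plus the full toll of the highway currently being traversed. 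Summing these charges across the nested quantifier structure yields exactly the bound $\gamma(\QBF) + \ib{\QBF \text{ is false}}$ on $\Persistentprice{\gadgetConst{\QBF}}$, matching the upper bound and proving the theorem.
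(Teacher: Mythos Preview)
Your high-level plan---induction on the number of quantifiers, with the invariant that the persistent price of $\gadgetConst{\QBF_i}$ over the region outside the canonical positions for $\variable_{i+1},\ldots,\variable_n$ equals $\gamma_i + \ib{\restrict{\QBF_i}{\partassign}\text{ is \false}}$---is exactly what the paper does (\reflem{lem:QuantifierGadget}), and your upper-bound sketch matches.

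The gap is in the lower bound. The paper does \emph{not} route through the Dymond--Tompa equivalence here: every lower bound in \refsec{sec:pspace-completeness-reversible} is a direct analysis of an arbitrary reversible pebbling. One fixes such a strategy, identifies distinguished times (the last literal transition, the first moment a \highway becomes \srclocked, the moment some $h_i$ is pebbled), and counts pebbles at those moments using the locking notions you cite. The universal-quantifier lower bound (\reflem{lem:UniversalLowerBound}) is the crux and needs two auxiliary claims (\refclaim{cla:Persistence} and \refclaim{cla:NoDoublePersistence}) controlling how many pebbles must simultaneously sit on the two $h_i/\bar h_i$ branches---without these one cannot force the strategy to pay for \emph{both} values of~$\variable_i$. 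Your \dtchallenger outline (``wait for a blocking set, jump to the outermost unresolved gadget, charge one pebble per locked subgadget plus the active toll'') does not explain how \dtchallenger extracts the extra $+1$ precisely when $\QBF$ is false, and in particular how a Challenger move on the universal gadget forces \dtpebbler through $\gadgetConst{\QBF_{i-1}}$ under both assignments rather than just one. Indeed your own key sentence---``at any bottleneck configuration only one \highway is in transit while the rest are locked''---is a statement about pebble configurations, not a \dtchallenger move; you are appealing to the paper's direct argument without carrying it out.
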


Let $ \variable_{1} \ldots \variable_{n}$ be the variables of $\QBF$.
We sort the variables from the innermost to the outermost quantifier
and denote by $\QBF_{i}$ the quantified \xcnf{$3$} with just the $i$
innermost quantifiers, namely $\QBF_{0}=\Land^{m}_{j=1} C_{j}$,
\mbox{$\QBF_{i} = Q_i \variable_{i}\QBF_{i-1}$} for
$Q \in \set{\forall, \exists}$, and $\QBF=\QBF_{n}$. 
For each $\QBF_{i}$ we consider the gadget $\gadgetConst{\QBF_{i}}$,
where $\gadgetConst{\QBF_{0}}$ is built as defined in
Construction~\ref{con:CNFGadget} and each $\gadgetConst{\QBF_{i}}$ for
$i\in[n]$ is built according to either
Construction~\ref{con:ExistentialQuantifierGadget} or
Construction~\ref{con:UniversalQuantifierGadget}, depending on the
$i\th$ innermost quantifier.
Furthermore we fix the sequence of integers $\{\gamma_{i}\}^{n}_{i=0}$
where $\gamma_{0} \defeq 2m+7$, $\gamma_{i} \defeq 3+\gamma_{(i-1)}$
if the $i\th$ innermost quantifier is existential, and
$\gamma_{i}=5+\gamma_{(i-1)}$ if the $i\th$ innermost quantifier
is universal.
To analyze the gadgets for the subformulas we need the next definition.

\begin{definition}
  A \introduceterm{region} is an induced subgraph of the final gadget
  $\gadgetConst{\QBF}$ (or of component gadgets as we build up the
  final gadget). 
  We slightly abuse notation and refer to a region by
  a subset of vertices; for example, given a subset of vertices $\Region$ of
  a gadget $\gadget$, we use $\persistentprice[\Region]{\gadget}$,
  $\visitprice[\Region]{\gadget}$ and $\surroundprice[\Region]{\gadget}$
  to denote the different pebbling prices over the subgraph of $\gadget$ induced on
  $\Region$.
\end{definition}

With these notations and definitions in place,
Theorem~\ref{th:pspace-construction-only} follows immediately from the
next lemma when $i$ is equal to $n$.
In this case $\partassign$ and $\Region$, as stated in the
lemma, are respectively the empty assignment and the full graph
$\gadgetConst{\QBF}$.
In the proof of the lemma we refer to definitions, lemmas and
constructions that we will present in full details in the
coming subsections.
 
\begin{lemma}[Main Lemma]
  \label{lem:QuantifierGadget}
Fix an arbitrary $0\leq i\leq n$ and let
\begin{itemize}
  \item $\partassign$ be an assignment to all but the first $i$
  variables;
  \item $\canonical$ be the canonical set of $\partassign$ according
  to Definition~\ref{def:CanonicalNodes};
  \item $\Region$ be the subset of vertices of
  $\gadgetConst{\QBF_{i}}$ defined as
  $V(\gadgetConst{\QBF_{i}}) \setminus \ancnode{\canonical}$.
\end{itemize}
It holds that
\begin{equation}\label{eq:quantifierRecursion}
  \Persistentprice[\Region]{\gadgetConst{\QBF_{i}}} =  \gamma_{i}
  +\ib{\restrict{\QBF_{i}}{\partassign} \text{\ is \false}}\eqperiod
\end{equation}
\end{lemma}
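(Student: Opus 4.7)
The plan is to prove the Main Lemma by induction on $i$, mirroring the structure of \cite{GLT80PebblingProblemComplete} but adapted to reversible pebbling via the technical gadgets (Christmas trees and \highway{}s) developed in \refsec{sec:technical-constructions}. The base case $i=0$ will be handled by a direct analysis of the CNF formula gadget $\gadgetConst{\QBF_0}$ from \refcon{con:CNFGadget}. Here the region $\Region$ is the full gadget (since there are no bound variables), and $\partassign$ is the full assignment. The gadget chains $m$ clause gadgets together via conjunction gadgets, and I would show $\gamma_0 = 2m+7$ by induction on $m$: each additional clause gadget composed via a conjunction gadget contributes an additive $2$ to the persistent price, and the base gadget together with the dummy component accounts for the additive $7$. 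The crucial property is that pebbling through a clause gadget with a satisfying literal pre-pebbled (via the \highway entrance of the corresponding variable gadget) saves exactly one pebble by \reflem{lem:Highway}, so if even one clause is falsified by $\partassign$ we lose this saving and pay an extra $1$, yielding the Iverson term.

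For the inductive step, I would assume \refeq{eq:quantifierRecursion} holds for $i-1$ and argue two cases depending on whether $Q_i \in \set{\forall, \exists}$. In the existential case (\refcon{con:ExistentialQuantifierGadget}, $\gamma_i = \gamma_{i-1}+3$), the upper bound is obtained as follows. If $\restrict{\QBF_i}{\partassign}$ is true, pick a value $v$ for $x_i$ witnessing satisfaction, pebble the corresponding literal vertex $\variable_i$ or $\olnot\variable_i$ (together with its companion $\bar{\variable}_i'$ resp.\ $\variable_i'$ via the variable gadget's \highway bank), and then invoke the inductive hypothesis on $\gadgetConst{\QBF_{i-1}}$ with the extended assignment; the $3$ extra pebbles come from the OR-structure ($f_i, g_i, q_i$) on top of the subgadget. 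If $\restrict{\QBF_i}{\partassign}$ is false, every assignment to $x_i$ forces the subgadget cost up by $1$, yielding the extra $+1$. For the lower bound I would use the \Highway lemma (\reflem{lem:Highway}) and the Source Difference lemma (\reflem{lem:SourceDifference}) to argue that the pebble on $q_i$ forces a commitment to a specific assignment of $x_i$, and at that moment the persistent pebbling of $\gadgetConst{\QBF_{i-1}}$ must pay the inductive cost in the appropriate region.

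In the universal case (\refcon{con:UniversalQuantifierGadget}, $\gamma_i = \gamma_{i-1}+5$), the key structural feature is the ``double-track'' $(f_i, \bar f_i)$ above $(f_i', \bar f_i')$, which forces any pebbling to traverse $\gadgetConst{\QBF_{i-1}}$ twice, once for each value of $x_i$. The upper bound proceeds by pebbling with $x_i = \true$, reaching $f_i$ via the first track (cost $\gamma_{i-1}$ by induction plus constants for $f_i', f_i, h_i$), then persistently locking the pebble on $f_i$, unpebbling, flipping $x_i$ to $\false$, and repeating on the second track to reach $\bar f_i$; finally the conjunction at $q_i$ is formed. The $5$ extra pebbles account for the variable gadget, the two persistent pebbles at $f_i, \bar f_i$, the intermediate pebbles $h_i, \bar h_i$, and $q_i$, with reversibility forcing symmetric setup and tear-down. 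The additive $+1$ when $\restrict{\QBF_i}{\partassign}$ is false appears because falsity here means at least one of the two extensions falsifies $\QBF_{i-1}$, so at least one of the two traversals incurs the inductive $+1$ penalty, and since both traversals happen within a single persistent pebbling of the whole gadget the penalty propagates.

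The hard part will be the lower bounds, especially the universal case. The main obstacle is controlling that the pebbler cannot somehow amortize work between the two required traversals of $\gadgetConst{\QBF_{i-1}}$: in a reversible pebbling, intermediate configurations can in principle be reused backwards, so one must argue that the ``locking'' provided by the \highway{}s and the double-track structure forces a genuine second traversal from scratch. Here I would exploit the equivalence $\persistentprice = \dtprice$ from \refeq{eq:chan13-equalities} and analyze the \dtgametext on the quantifier gadget: a \dtchallenger strategy can always redirect any \dtpebbler move that attempts to re-use pebbles set up for $x_i=\true$ to challenge the branch of $\bar f_i$, forcing the \dtpebbler to commit to a new traversal. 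Combined with the inductive bound on \dtpricetext of the inner region $\Region \setminus \ancnode{\canonical'}$ for each extended canonical set $\canonical'$, this yields the desired bound. Finally, the calculation of $\gamma_n$ and the matching Iverson term at $i=n$ (where $\partassign$ is empty) gives \refth{th:pspace-construction-only} as an immediate corollary.
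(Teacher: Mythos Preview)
Your inductive skeleton matches the paper's proof, which handles the base case via a CNF-gadget lemma (\reflem{lem:CNFGadget}) and the inductive step via separate existential and universal quantifier-gadget lemmas (\reflem{lem:ExistentialQuantifierGadget}, \reflem{lem:UniversalQuantifierGadget}), each in turn split into matching upper- and lower-bound sublemmas.

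There is one factual slip in your base case: at $i=0$ the region $\Region$ is \emph{not} the full gadget. With no bound variables, $\partassign$ is a full assignment to $x_1,\ldots,x_n$, so the canonical set $\canonical$ contains two nodes per variable and $\Region = V(\gadgetConst{\QBF_0}) \setminus \ancnode{\canonical}$ is strictly smaller than the gadget. This is precisely what makes your ``satisfying literal pre-pebbled'' intuition work: the literal node sits outside $\Region$, so its pebble is free when we compute $\persistentprice[\Region]{\cdot}$. The paper's \reflem{lem:CNFGadget} is stated for exactly this region.

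For the lower bounds your route diverges from the paper. The paper does \emph{not} go through the Dymond--Tompa equivalence; it argues directly with reversible pebblings, introducing three ``locking'' predicates on subconfigurations (\vislocked, \perlocked, and for \highway{}s also \srclocked) and tracking, by a time-line analysis, when these locks are acquired and released. The universal lower bound (\reflem{lem:UniversalLowerBound}) in particular hinges on a ``Persistence'' claim showing that at the bottleneck moment the pebble on $\bar h_i$ must already be present, which is what forces a second full traversal of $\gadgetConst{\QBF_{i-1}}$ under the opposite assignment to~$x_i$. Your Dymond--Tompa sketch is a legitimate alternative in principle (the games are equivalent), but the sentence ``\dtchallenger can always redirect \ldots\ to challenge the branch of $\bar f_i$'' is not yet an argument: you would need an explicit \dtchallenger strategy on $\gadgetConst{\QBF_i}$ that (a) survives the $5$ extra rounds accounted for by the universal gadget, (b) hands off to inductive \dtchallenger strategies on the two \emph{different} subregions $\Region_0$ and $\Region_1$, and (c) shows that no \dtpebbler strategy can collapse these two hand-offs into one. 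Point (c) is exactly the content of the paper's persistence analysis, and it is not clear that the Dymond--Tompa formulation makes it any easier to establish.
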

\begin{proof}
  When $i = 0$, the gadget $\gadgetConst{\QBF_0}$ is the CNF gadget 
  from Construction~\ref{con:CNFGadget} and the base case follows
  immediately by \reflem{lem:CNFGadget}, where $\beta_{m}=2m$.
  When $i>0$, consider the two possible extensions of $\partassign$ that
  assign $\variable_{i}$, namely 
  $\partassign_{0} \defeq \partassign \cup \{\variable_{i}=0\}$ 
  and
  $\partassign_{1} \defeq \partassign \cup \{\variable_{i}=1\}$.
  Consider also the corresponding canonical sets
  $\canonical_{0}, \canonical_{1}$ and the regions
  $\Region_{0} \defeq V(\gadgetConst{\QBF_{(i-1)}}) \setminus
  \ancnode{\canonical_{0}}$
  and
  $\Region_{1} \defeq V(\gadgetConst{\QBF_{(i-1)}}) \setminus
  \ancnode{\canonical_{1}}$. By induction it holds that
  \begin{itemize}
    \item $\Persistentprice[\Region_{0}]{\gadgetConst{\QBF_{(i-1)}}}=
    \gamma_{(i-i)}+\ib{\partassign_{0} \text{\ falsifies\ } \QBF_{(i-1)}}$
    \item   $\Persistentprice[\Region_{1}]{\gadgetConst{\QBF_{(i-1)}}}=
    \gamma_{(i-1)}+\ib{\partassign_{1} \text{\ falsifies\ } \QBF_{(i-1)}}$
  \end{itemize}
  so the hypothesis of Lemmas~\ref{lem:UniversalQuantifierGadget}
  and~\ref{lem:ExistentialQuantifierGadget} holds.
  When $\QBF_{i}=\forall\variable_{i}\QBF_{i-1}$ \reflem{lem:UniversalQuantifierGadget} gives that  
  \begin{equation}\label{eq:universalRecursion}
    \Persistentprice{\gadgetConst{\QBF_{i}}}= 5+\gamma_{(i-i)}+
    \ib{
      \partassign_{0} \text{\ falsifies\ } \QBF_{(i-1)} \text{\ or\ } 
      \partassign_{1} \text{\ falsifies\ } \QBF_{(i-1)}},
  \end{equation}
  and $\QBF_{i}=\exists\variable_{i}\QBF_{(i-1)}$
  \reflem{lem:ExistentialQuantifierGadget} gives that
  \begin{equation}\label{eq:existentialRecursion}
    \Persistentprice{\gadgetConst{\QBF_{i}}}= 3+\gamma_{(i-i)}+
    \ib{
      \partassign_{0} \text{\ falsifies\ } \QBF_{(i-1)} \text{\ and\ } 
      \partassign_{1} \text{\ falsifies\ } \QBF_{(i-1)}}\eqperiod
  \end{equation}
  Equations~\eqref{eq:universalRecursion}
  and~\eqref{eq:existentialRecursion} are equivalent, for the
  respective quantifier type, to
  Equation~\eqref{eq:quantifierRecursion}.
  \end{proof}

\subsection{Literal Gadget}
\label{ssst:LiteralGadget}

  \reflem{lem:ChristmasTree} allows us to create an edge $(u,v)$ with
  $\persistentprice{v} = \persistentprice{u} + 1 = \bank + 1$ given any $\bank \in
  \N^+$. 
  This is used for constructing gadgets for literals (represented as
  \reffig{fig:Literal}).

  \begin{construction}[Literal Gadget]
    \label{con:LiteralGadget}
    Fix a literal $\literal$ and an integer $\bank \in \N^+$.
    Let $T_\bank$ be the graph having $\visitprice{T_\bank} = \persistentprice{T_\bank}
    = \bank$ given by \reflem{lem:ChristmasTree}.
    The literal gadget of price $\bank$ for $\literal$, is denoted
    as $\literalConst{\literal}{\bank}$.
    To construct it, take a copy of $T_\bank$ and call its sink $\literal'$.
    Then add a new node, named $\literal$ as the corresponding literal, and add the edge $(\literal', \literal)$.
  \end{construction}

\begin{figure}[ht]
  \begin{center}
    \begin{tikzpicture}
      \GadgetLiteralWithSources{\literal}{l}\bank{left}
    \end{tikzpicture}
  \end{center}
  \caption{Literal gadget of price $\bank$ for $\literal$.}
  \label{fig:Literal}
\end{figure}

  \begin{lemma}[Literal Gadget]
    \label{lem:LiteralGadget}
    In $\literalConst{\literal}{\bank}$, $\persistentprice{\literal} - 1 =
    \persistentprice{\literal'} = \visitprice{\literal'} = \bank$.
  \end{lemma}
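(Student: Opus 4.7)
The plan is to establish the three equalities via two short arguments. First, the equalities $\persistentprice{\literal'} = \visitprice{\literal'} = \bank$ should follow almost immediately from \reflem{lem:ChristmasTree}: the ancestors of $\literal'$ in $\literalConst{\literal}{\bank}$ coincide with $V(T_\bank)$, since the only vertex added outside $T_\bank$ is $\literal$, and $\literal$ is a successor of $\literal'$ whose placement actually requires $\literal'$ to be pebbled. Hence pebbling or un-pebbling $\literal$ cannot assist in reaching a configuration containing $\literal'$, and both prices in the gadget coincide with those in $T_\bank$, which are $\bank$.

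For the upper bound $\persistentprice{\literal} \leq \bank + 1$, I would take a visiting pebbling of $\literal'$ in $T_\bank$ of space $\bank$, append a placement of $\literal$ (legal since $\literal'$ is pebbled at that moment), and then run the visiting pebbling in reverse while holding the pebble on $\literal$ throughout. The resulting pebbling is persistent on $\literal$ and uses space at most $\bank + 1$.

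The main obstacle is the matching lower bound $\persistentprice{\literal} \geq \bank + 1$. Given any persistent pebbling $\pebbling = (\pconf_0, \ldots, \pconf_\stoptime)$ of $\literal$, let $t^*$ be the last time at which $\literal$ is placed, so that $\literal \in \pconf_t$ for all $t^* \leq t \leq \stoptime$. I would restrict the tail $(\pconf_{t^*}, \ldots, \pconf_\stoptime)$ to $V(T_\bank)$ by deleting the pebble on $\literal$ from every configuration; this restriction is a legitimate reversible pebbling of $T_\bank$, since $\literal$ has no successors in the gadget and hence no precondition of a move on any other vertex ever mentions $\literal$. This pebbling starts from a configuration containing $\literal'$ (forced by the placement of $\literal$ at time $t^*$) and ends at the empty configuration. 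Reversing this sequence produces a visiting pebbling of $\literal'$ in $T_\bank$, which by \reflem{lem:ChristmasTree} must use space at least $\visitprice{T_\bank} = \bank$ at some moment. At the corresponding moment in the original pebbling $\pebbling$, the pebble on $\literal$ is still present, so the total space used in the gadget is at least $\bank + 1$, completing the proof.
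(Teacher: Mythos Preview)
Your proof is correct. The paper's proof is more concise: it observes that since $\literal'$ is the unique predecessor of $\literal$, surrounding $\literal$ is the same as visiting $\literal'$, and then invokes \refpr{lem:surrounding-eq-persistent} ($\persistentprice{v} = \surroundprice{v} + 1$) together with \reflem{lem:ChristmasTree} to get the whole chain $\persistentprice{\literal} - 1 = \surroundprice{\literal} = \visitprice{\literal'} = \persistentprice{\literal'} = \bank$ in one line. Your upper and lower bound arguments for $\persistentprice{\literal}$ are exactly the proof of \refpr{lem:surrounding-eq-persistent} specialized to this instance, so you are re-deriving a result that is already available as a black box.
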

  \begin{proof}
    \mvcomment{Should we have a lemma ``visit($G$)=surr($G^+$)''?}
    Note that visiting node $\literal'$ is equivalent to surrounding node
    $\literal$.
    Hence $\persistentprice{\literal} - 1 = \surroundprice{\literal} = \visitprice{\literal'}
    = \persistentprice{\literal'} = \bank$ by \refeq{eq:surrounding-leq-visiting-leq-persistent}, \refpr{lem:surrounding-eq-persistent},
    and \reflem{lem:ChristmasTree}.
  \end{proof}
  \mvcomment{It would be nice to have T/F/* or True/False/Undef.}

  We associate pebbling configurations on $\Literal \defeq
  \literalConst{\bank}{\literal}$ with truth value \true, \false or \tvundef
  (undefined) as follows.

  \begin{definition}[Literal Position]
    \label{def:LiteralPosition}
    Fix a literal gadget $\Literal \defeq \literalConst{\bank}{\literal}$.
    Given a pebbling configuration $\pconf$, say
    node $\literal'$ is \introduceterm{\vislocked} on
    $\ancnode[\Literal]{\literal'}$
    if
    $\persistentprice[L]{\pconf}=\visitprice{\literal'} = \bank$ (when restricted to
    $\ancnode[\Literal]{\literal'}$);
    that is, if the empty configuration cannot be reached without entering a
    configuration which has $\bank$ pebbles.
    Given a pebbling configuration on $\Literal$, say literal $\literal$ is in
    \begin{itemize}
      \item \true\ position, if node $\literal$ has a pebble and node $\literal'$ is not
        \vislocked;
      \item \false position, if node $\literal'$ is \vislocked; and
      \item \tvundef position, if node $\literal$ has no pebble and node $\literal'$ is not
        \vislocked.
    \end{itemize}
    A \emph{transition} of literal $\literal$ is a change of position for
    $\literal$ (\eg from \true position to \false position, or from \false
    position to \tvundef position) due to a pebble move.
    Finally, we identify certain \emph{canonical} positions with
    configurations on $\Literal$ as follows:
    \begin{itemize}
      \item the canonical \true position is the configuration where only node
        $\literal$ has a pebble;
      \item the canonical \false position is the configuration where only node
        $\literal'$ has a pebble; and
      \item the canonical \tvundef position is the empty configuration.
    \end{itemize}
  \end{definition}

  Clearly, the canonical \true position (resp.~canonical \false position,
  canonical \tvundef position) is indeed a \true position (resp.~\false
  position, \tvundef position).

  \begin{lemma}[Literal Transition]
    \label{lem:LiteralTransition}
    Fix a literal gadget $\Literal \defeq \literalConst{\bank}{\literal}$.
    \begin{enumerate}
      \item it is impossible to transition directly from \tvundef position to
        \true position, and vice versa; and
      \item at a transition, there are $\bank$ pebbles on
        $\ancnode[\Literal]{\literal'}$.
    \end{enumerate}
  \end{lemma}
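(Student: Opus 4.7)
\textbf{Plan for \reflem{lem:LiteralTransition}:} My approach rests on two quick observations about the literal gadget. First, by \reflem{lem:LiteralGadget} any configuration $\pconf$ on $\ancnode[\Literal]{\literal'}$ that contains~$\literal'$ and is reachable with $\bank$ or fewer pebbles satisfies $\persistentprice[\Literal]{\pconf}=\bank$: the lower bound $\geq \bank$ is immediate from $\visitprice{\literal'}=\bank$, and the matching upper bound is obtained by extending any $\bank$-space persistent pebbling of $\{\literal'\}$ to reach~$\pconf$. Second, since $\literal$ is a descendant and not an ancestor of $\literal'$, pebble moves on~$\literal$ leave $\pconf \cap \ancnode[\Literal]{\literal'}$ untouched and therefore cannot change the \vislocked status of~$\literal'$.

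For part~(1), I would argue directly from the first observation. The \tvundef and \true positions differ only in whether $\literal$ carries a pebble, so a direct transition between them is a single placement or reversible removal of~$\literal$. In either case, the unique predecessor $\literal'$ must hold a pebble at the moment of the move, and hence both before and after it (since the move is on $\literal$). But then by the first observation the restricted configuration is \vislocked, which contradicts both the \tvundef and the \true positions, each of which by definition requires $\literal'$ not to be \vislocked.

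For part~(2), the second observation together with part~(1) implies that every transition is triggered by a pebble move within $\ancnode[\Literal]{\literal'}$, and that this move must flip the \vislocked status of~$\literal'$. Consider a transition from not-\vislocked to \vislocked caused by the placement of a vertex~$v$; write $\pconf$ for the pre-move configuration and $\pconf' = \pconf \cup \{v\}$ for the post-move one, both restricted to $\ancnode[\Literal]{\literal'}$. Taking an optimal persistent pebbling that witnesses $\persistentprice[\Literal]{\pconf}<\bank$ and appending the placement of~$v$ yields a pebbling of~$\pconf'$ of space at most $\max\bigl(\persistentprice[\Literal]{\pconf}, |\pconf|+1\bigr)$. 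Since this must be at least $\persistentprice[\Literal]{\pconf'}=\bank$ while $\persistentprice[\Literal]{\pconf}<\bank$, we force $|\pconf|+1\geq \bank$, and the general inequality $|\pconf'|\leq \persistentprice[\Literal]{\pconf'}=\bank$ then pins down $|\pconf'|=\bank$ pebbles on $\ancnode[\Literal]{\literal'}$ at the transition. The dual case (from \vislocked to not-\vislocked via a reversible removal) follows by running the witnessing pebbling in reverse, which is a legal pebbling by reversibility.

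The main obstacle will be making the first observation airtight: one needs that every $\pconf$ arising in the pebbling that contains $\literal'$ really does have $\persistentprice[\Literal]{\pconf}\leq \bank$, even though in principle extra pebbles in $\pconf$ could push this above~$\bank$. I expect this to be handled by restricting attention to configurations of size at most~$\bank$ (which is all we care about, since larger configurations are trivially not \vislocked and the positions are then determined by $\literal\in\pconf$ alone), or by exploiting the fine-grained structure of $T_\bank$ from \reflem{lem:ChristmasTree} to extend the optimal pebbling of $\{\literal'\}$ inside a budget of~$\bank$.
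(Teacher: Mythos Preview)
Your approach is essentially the paper's: show that any configuration with a pebble on~$\literal'$ is \vislocked (hence in \false position), deduce Item~(1), and then observe that every remaining transition flips the \vislocked status and therefore costs~$\bank$ pebbles on $\ancnode[\Literal]{\literal'}$. Your Item~(2) spells out explicitly what the paper compresses into the phrase ``by definition of \vislocked'': if $\pconf$ is not \vislocked and $\pconf'$ is, with $\pconf'$ one legal move from $\pconf$, then appending that move to an optimal pebbling of $\pconf$ forces $\setsize{\pconf'}\geq\bank$.

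Your one worry, however, is unnecessary. You are concerned that a configuration containing $\literal'$ might fail to be \vislocked because $\persistentprice[\Literal]{\pconf}$ could exceed~$\bank$, and you propose to control this via an upper bound. But the operative definition of \vislocked is the parenthetical one in \refdef{def:LiteralPosition}: the empty configuration cannot be reached without passing through a configuration with $\bank$ pebbles, i.e., $\persistentprice[\Literal]{\pconf}\geq\bank$. Under that reading, any configuration containing $\literal'$ is \vislocked immediately from $\visitprice{\literal'}=\bank$, with no upper bound needed. Correspondingly, the conclusion of Item~(2) should be read as ``at least $\bank$ pebbles'', which is exactly what your argument gives and exactly what the later applications (e.g., \refclaim{cla:Clearance}) use. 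Your attempt to pin down $\setsize{\pconf'}=\bank$ exactly via $\setsize{\pconf'}\leq\persistentprice[\Literal]{\pconf'}=\bank$ leans on the equality reading of \vislocked and is both unneeded and not quite justified; drop it and the obstacle you flagged disappears.
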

  \begin{proof}
    Note that node $\literal'$ is \vislocked on any configuration with a pebble on
    node $\literal'$.
    To change from \tvundef position to \true position, node $\literal$ needs to
    be pebbled, and hence node $\literal'$ must have a pebble at some time.
    At that time, node $\literal'$ is \vislocked, so the configuration is in
    \false position.
    This gives Item~(1).

    Hence the only valid transitions are from \tvundef position to \false
    position (or its reverse), and from \false position to \true position (or
    its reverse).
    In any of these, node $\literal'$ needs to switch between locked and
    unlocked status, which requires a configuration with $\bank$ pebbles on
    $\ancnode[\Literal]{\literal'}$ by definition of \vislocked.
    This gives Item~(2).
  \end{proof}

\subsection{Variable Gadget}
\label{ssst:VariableGadget}

  Let $\bank_i \in \N^+$ be an integer to be specified later, which is
  associated with the $i\th$ variable $\variable_i$.

  Inspired by previous works~\cite{GLT80PebblingProblemComplete,%
    HP10PspaceCompleteness,%
    Chan13Thesis}
  truth values are represented using the gadget in \reffig{fig:Variable}.
  \begin{construction}[Variable Gadget]
    \label{con:VariableGadget}
    For the variable $\variable_i$, its variable gadget
    $\gadgetConst{\variable_i}$ is constructed as the disjoint union two
    literal gadgets of price $\bank_i$, one for literal $\variable_i$ and one
    for literal $\bar\variable_i$.
  \end{construction}

\begin{figure}[ht]
  \begin{center}
    \begin{tikzpicture}
      \GadgetVariableWithSources i{\bank_i}
    \end{tikzpicture}
  \end{center}
  \caption{Variable $\variable_i$.}
  \label{fig:Variable}
\end{figure}

  For the gadget $\gadget \defeq \gadgetConst{\variable_i}$, its nodes are
  $V(\gadget) \defeq \Ancnode[\gadget]{ \{ \variable_i, \bar\variable_i \} }$.

  \begin{definition}[Variable Position]
    \label{def:VariablePosition}
    Fix a variable gadget $\gadget \defeq \gadgetConst{\variable_i}$
    consisting of literal gadgets $\Literal_1 \defeq
    \literalConst{\bank_i}{\variable_i}$ and
    $\Literal_0 \defeq \literalConst{\bank_i}{\bar\variable_i}$.
    We identify certain \emph{canonical} positions with configurations on
    $\gadget$ as follows:
    \begin{itemize}
      \item the canonical \true position is the configuration where only nodes
        $\variable_i$ and $\bar\variable_{i}'$ have pebbles;
      \item the canonical \false position is the configuration where only
        nodes $\variable_i'$ and $\bar\variable_{i}$ have pebbles; and
      \item the canonical \tvundef position is the empty configuration.
    \end{itemize}
  \end{definition}

  \begin{lemma}[Variable Assignment]
    \label{lem:VariableAssignment}
    Variable $\variable_i$ can be put into one among canonical \true
    and canonical \false positions using at most $ \bank_{i}+1 $
    pebbles.
  \end{lemma}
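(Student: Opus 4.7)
The plan is a direct constructive upper bound: exhibit a pebbling strategy on $\gadget \defeq \gadgetConst{\variable_i}$ that reaches the desired canonical position while never exceeding $\bank_i + 1$ pebbles. The crucial structural fact is that the variable gadget is the \emph{disjoint} union of the two literal gadgets $\Literal_1 = \literalConst{\bank_i}{\variable_i}$ and $\Literal_0 = \literalConst{\bank_i}{\bar\variable_i}$, so pebblings on one side do not interact with pebbles left on the other side.

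I will handle the canonical \true position; the \false position is symmetric (swap the roles of $\Literal_1$ and $\Literal_0$). The strategy proceeds in two phases. In the first phase, I persistently pebble node $\variable_i$ in $\Literal_1$. By \reflem{lem:LiteralGadget} we have $\persistentprice{\variable_i} = \bank_i + 1$, so this phase uses at most $\bank_i + 1$ pebbles and at its end leaves a single pebble on $\variable_i$ with all other vertices of $\Literal_1$ empty. In the second phase, I persistently pebble node $\bar\variable_i'$ in $\Literal_0$. Again by \reflem{lem:LiteralGadget}, $\persistentprice{\bar\variable_i'} = \visitprice{\bar\variable_i'} = \bank_i$, so this phase uses at most $\bank_i$ pebbles on the vertices of $\Literal_0$. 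Since $\Literal_0$ and $\Literal_1$ are vertex-disjoint, the lone pebble still sitting on $\variable_i$ from phase one does not interfere with any pebbling move in $\Literal_0$.

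Adding up, the total number of pebbles on $\gadget$ at any moment is at most $\bank_i$ pebbles on $\Literal_0$ plus the $1$ pebble on $\variable_i$, giving the claimed bound of $\bank_i + 1$. At the end of phase two, the configuration consists of pebbles exactly on $\variable_i$ and $\bar\variable_i'$, which is precisely the canonical \true position of \refdef{def:VariablePosition}.

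I do not foresee any substantial obstacle: the whole content of the lemma is the observation that the two halves of the variable gadget are independent, so one can reuse essentially the full budget $\bank_i$ on one literal gadget while paying just $+1$ for a single persistent pebble stranded on the other side. The only thing to be careful about is invoking the correct price ($\bank_i + 1$ for the literal node itself and $\bank_i$ for the primed node) from \reflem{lem:LiteralGadget}, and noting that the ordering (first the ``full'' pebbling to the literal, then the cheaper pebbling to the primed node) is what makes the peak come out to $\bank_i + 1$ rather than $\bank_i + 2$.
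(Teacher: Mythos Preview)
Your proposal is correct and takes essentially the same approach as the paper: persistently pebble $\variable_i$ first (cost $\bank_i+1$), then persistently pebble $\bar\variable_i'$ (cost $\bank_i$ plus the one stranded pebble), with the \false case symmetric. Your write-up is in fact more explicit than the paper's about the disjointness of the two literal gadgets and the accounting across the two phases.
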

  \begin{proof}
    To put variable $\variable_i$ in canonical \true position, persistently
    pebble node $\variable_i$, then persistently pebble $\bar\variable_i'$.
    It can be done with $\bank_i + 1$ pebbles by \reflem{lem:LiteralGadget}.
    A symmetric argument shows that $\variable_i$ can be put in canonical
    \false position with $\bank_i + 1$ pebbles.
  \end{proof}

  As we will see in later sections, the design of the quantifier gadgets would
  ensure that any pebbling strategy would effectively associate truth value
  via the canonical positions.
  This motivates the following definition.

  \begin{definition}[Canonical Nodes]
    \label{def:CanonicalNodes}
    Given a partial assignment $ \funcdescr{\partassign}{[n]}{\{ \true,
      \false, \tvundef \}} $,
    the canonical nodes of variable $\variable_i$ under $\partassign$ are
    \begin{itemize}
      \item $\{ \variable_i, \bar\variable_i' \}$ if $\partassign(i) = \true$;
      \item $\{ \variable_i', \bar\variable_i \}$ if $\partassign(i) =
        \false$; and
      \item $\{\}$ if $\partassign(i) = \tvundef$.
    \end{itemize}
  \end{definition}

  Note that if $\partassign(i) \ne \tvundef$, then there are two pebbles on the
  ancestors of the canonical nodes of $\variable_i$.
  For example, if $\partassign(i) = \true$, then there is a pebble on
  $\ancnode{\variable_i}$ and a pebble on $\ancnode{\bar\variable_i'}$.

  In general, we consider a partial assignment on variables
  $ \funcdescr{\partassign}{[n]}{\{ \true, \false, \tvundef \}} $
  as a partial assignment on literals:
  \begin{itemize}
    \item if variable $\variable_i$ is assigned \true under $\partassign$
      (\ie $\partassign(i) = \true$), then literal $\variable_i$ is assigned
      \true and literal $\bar\variable_i$ is assigned \false;
    \item if variable $\variable_i$ is assigned \false under $\partassign$
      (\ie $\partassign(i) = \false$), then literal $\variable_i$ is assigned
      \false and literal $\bar\variable_i$ is assigned \true;
    \item if variable $\variable_i$ undefined under $\partassign$
      (\ie $\partassign(i) = \tvundef$), then literal $\variable_i$ is \tvundef
      and literal $\bar\variable_i$ is \tvundef.
  \end{itemize}

  As we will argue later, the design of the quantifier gadgets would ensure
  that ``invalid variable assignments'' would not be a problem:
  for example,
  the two literals of the same variable cannot be put into the \true position
  at the same time (\eg\refclaim{cla:Clearance}); also,
  it does not help to put the two literals of the same variable into the
  \false position at the same time, and
  each variable will be assigned eventually (\reftwolems{lem:ExistentialLowerBound}{%
    lem:UniversalLowerBound}).
  
  The canonical nodes (of a partial assignment) are useful for defining
  certain \emph{regions} over different component gadgets in the overall
  construction.
\subsection{Clause Gadget}
\label{ssst:ClauseGadget}

  Let $\beta_j \ge 2$ be an integer to be specified later, which is
  associated with the $j\th$ clause $\clause_j$.
  The gadget for the $j\th$ clause, $\clause_j = \literal_{j,1} \lor
  \literal_{j,2} \lor \literal_{j,3}$, uses as a component the \highway
  gadget which is described in Construction~\ref{con:Highway}.
  Its skeleton is shown in
  \reffig{fig:Clause} (the literal gadgets are simplified in \reffig{fig:Clause}
  for a cleaner diagram).
  Assume that the literals $\literal_{j,1}, \literal_{j,2}, \literal_{j,3}$
  are over distinct variables.

  \begin{construction}[Clause Gadget]
    \label{con:ClauseGadget}
    Assume that for each variable $\variable_i$, $1 \le i \le n$ we have
    the corresponding variable gadget $\gadgetConst{\variable_i}$, \ie two
    literal gadgets for $ \variable_{i} $ and $ \bar\variable_{i} $.
    For the $j\th$ clause $\clause_j$, its clause gadget
    $\gadgetConst{\clause_j}$ is constructed as follows.
    Create nodes $a_j, b_j, c_j, u_j, v_j, p_j$, and edges 
    \begin{equation}
      (a_j, u_j), (b_j,u_j), (b_j, v_j), (c_j, v_j), (u_j, p_j), (v_j,
      p_j).
    \end{equation}
    Finally, add three \highway{}s of toll $\beta_j$, from $\literal_{j,1}$ to
    $a_j$, from $\literal_{j,2}$ to $b_j$, and from $\literal_{j,3}$ to $c_j$
    (where the nodes $\literal_{j,1}, \literal_{j,2}, \literal_{j,3}$
    are the ones from the corresponding literal gadgets).
  \end{construction}

  Note that in \reffig{fig:Clause} the six nodes $\literal_{j,1}$,
  $\literal_{j,1}'$, $\literal_{j,2}$, $\literal_{j,2}'$, $\literal_{j,3}$,
  and $\literal_{j,3}'$ come from the variable gadgets corresponding to the
  variables in literals $\literal_{j,1}$, $\literal_{j,2}$ and
  $\literal_{j,3}$.
  Recall the definitions of canonical nodes in
  \refdef{def:CanonicalNodes}.
  For example, if literal $\literal_{j,1}$ is in \true position, literals
  $\literal_{j,2}, \literal_{j,3}$ \false position, then their canonical nodes
  are $\literal_{j,1}, \literal_{j,2}', \literal_{j,3}'$.

\begin{figure}[ht]
  \begin{center}
    \begin{tikzpicture}[rotate=-90]
      \GadgetClause j{\beta_j}
    \end{tikzpicture}
  \end{center}
  \caption{Clause $j$.}
  \label{fig:Clause}
\end{figure}
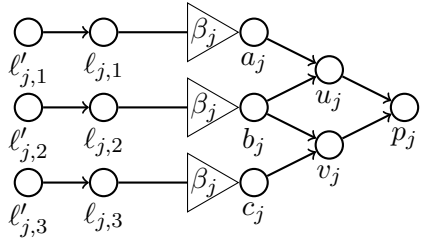

  In this subsection, focus on the gadget $\gadget \defeq
  \gadgetConst{\clause_j}$ constructed for the $j\th$ clause $\clause_j$.
  The gadget $\gadget$ behaves like a disjunction in the sense that at least
  one literal is assigned \true if, and only if, $\beta_j + 3$ additional
  pebbles are needed to surround $p_j$.

  \begin{lemma}[True Clause, upper bound]
    \label{lem:TrueClause}
    Fix a partial assignment $\partassign$.
    Assume none of the literals $\literal_{j,1}, \literal_{j,2},
    \literal_{j,3}$ is assigned \tvundef, and at least one of them is assigned
    \true.
    Let $\canonical_j$ be their canonical nodes.
    Consider the region $R_j \defeq \ancnode[\gadget]{p_j} \setminus
    \ancnode[\gadget]{\canonical_j}$ beyond the canonical nodes.
    Then $\surroundprice[R_j]{p_j} \le \beta_j + 3$.
  \end{lemma}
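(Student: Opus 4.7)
The plan is to exhibit an explicit pebbling inside $R_j$ that surrounds $p_j$ with peak space at most $\beta_j+3$, by persistently pebbling the three highway outputs $a_j,b_j,c_j$ one at a time and then assembling $u_j,v_j$ with a careful swap. The main technical input will be \reflem{lem:Highway} together with the definition of canonical nodes: for each $k\in\{1,2,3\}$, the vertex $\literal_{j,k}$ is the source of the \highway going to one of $a_j,b_j,c_j$. If $\literal_{j,k}$ is in \true position then its canonical node is $\literal_{j,k}$ itself, so its entire literal gadget (including $\literal_{j,k}$) is removed from $R_j$, and persistently pebbling the highway output costs $\beta_j+1$ pebbles (the $\tilde R$ bound of \reflem{lem:Highway}). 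If $\literal_{j,k}$ is in \false position then only $\literal_{j,k}'$ and the underlying Christmas tree are removed, so $\literal_{j,k}$ survives inside $R_j$ as the highway source and persistently pebbling the output costs $\beta_j+2$ pebbles (the $R$ bound).

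Given this accounting, I fix an index $k^\ast$ for which $\literal_{j,k^\ast}$ is in \true position (guaranteed by hypothesis), and first persistently pebble the two highways for $k\neq k^\ast$ in any order, keeping one pebble on each of the two corresponding elements of $\{a_j,b_j,c_j\}$. The peaks during these two steps are at most $\beta_j+2$ and $1+(\beta_j+2)=\beta_j+3$, respectively. Next I persistently pebble the true-literal highway last, saving one pebble in the process: the peak is $2+(\beta_j+1)=\beta_j+3$, and at the end all three of $a_j,b_j,c_j$ carry pebbles. Finally, with these three pebbles present, I place a pebble on $u_j$ (four pebbles total), remove $a_j$ (three pebbles), place a pebble on $v_j$ (four pebbles), and then remove $b_j$ and $c_j$, leaving $\{u_j,v_j\}$ which surrounds $p_j$. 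Since $\beta_j\geq 2$, the ``four pebbles'' moments are bounded by $\beta_j+3$, so the global peak is $\beta_j+3$ as required.

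The main obstacle I expect is bookkeeping: carefully checking that the ``true'' cost $\beta_j+1$ really comes from applying \reflem{lem:Highway} in the restricted region that excludes $\literal_{j,k^\ast}$ and all its ancestors, and that during the persistent pebbling of each highway the pebbles inside that highway do not interact with pebbles already placed on other $a_j/b_j/c_j$ (which holds because the highways share no interior vertices). Once this is clear, the case where more than one literal is true reduces immediately to the argument above: any such index can serve as $k^\ast$, and the other highways are then handled by the weaker ``false'' bound $\beta_j+2$, which still fits within the $\beta_j+3$ budget along the schedule described.
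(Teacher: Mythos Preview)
Your overall approach matches the paper's: pebble the two ``non-true'' \highway{}s first (peaks $\beta_j+2$ and $\beta_j+3$), then the \highway whose source literal is in \true\ position (peak $2+(\beta_j+1)=\beta_j+3$, using the $\tilde R$ bound of \reflem{lem:Highway}), and finally assemble $u_j,v_j$ on top of $a_j,b_j,c_j$.

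However, your final ``swap'' phase is not legal in the reversible game. After persistently pebbling the \highway to $a_j$, the only pebble on that \highway is on $a_j$ itself; its predecessor inside the molding (the ``in'' copy of the Christmas-tree sink) carries no pebble. Hence the step ``remove $a_j$'' violates the reversible removal rule, and the same objection applies to ``remove $b_j$ and $c_j$''. In other words, your attempt to keep the peak at $4$ during assembly fails.

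The fix is immediate and is exactly what the paper does: do not remove anything. With pebbles on $a_j,b_j,c_j$ simply place pebbles on $u_j$ and then $v_j$, reaching $5$ pebbles; since $\beta_j\ge 2$ this is still $\le \beta_j+3$, and the resulting configuration already surrounds $p_j$ (a surrounding pebbling only needs $\prednode{p_j}\subseteq\pconf_\tau$, not $\pconf_\tau=\prednode{p_j}$). With this correction your argument is complete and coincides with the paper's proof.
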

  \begin{proof}
    Note that the $j\th$ clause gadget is symmetric in the three literals
    $\literal_{j,1}, \literal_{j,2}, \literal_{j,3}$ when restricting
    attention to 
    $\Ancnode[\gadget]{\{ a_j, b_j, c_j \}}
    \setminus
    \Pancnode[\gadget]{\{\literal_{j,1}', \literal_{j,2}',
        \literal_{j,3}'\}}$.
    If at least one of the literals $\literal_{j,1}, \literal_{j,2},
    \literal_{j,3}$ is assigned \true, we claim that it takes at most
    $\beta_j + 3$ pebbles over $R_j$ to leave pebbles only on $\{ a_j, b_j, c_j
    \}$;
    afterwards $p_j$ can be surrounded by pebbling $u_j$ and $v_j$ (note that
    $\beta_j + 3 \ge \underbrace2_{u_j, v_j} + \underbrace3_{a_j, b_j, c_j} =
    5$).

    To prove the claim, note that if $\literal_{j,1}$ is assigned \true, then
    the intersection of $R_j$ with the \highway from $\literal_{j,1}$ to $a_j$
    is precisely the nodes \emph{properly} in the \highway;
    if $\literal_{j,1}$ is assigned \false, then the intersection is precisely
    the nodes in the \highway. This holds similarly for literals $
    \literal_{j,2} $ and $ \literal_{j,3} $.
    By symmetry over
    $\Ancnode[\gadget]{\{ a_j, b_j, c_j \}}
    \setminus
    \Pancnode[\gadget]{\{\literal_{j,1}', \literal_{j,2}',
        \literal_{j,3}'\}}$,
    assume
    $\literal_{j,1}$ is assigned \true, and each of $\literal_{j,2}$,
    $\literal_{j,3}$ is assigned \true or \false.
    Consider the following strategy to place three pebbles on $\{ a_j, b_j, c_j \}$,
    where pebbles outside of $R_j$ are not counted:
    \begin{enumerate}
      \item Persistently pebble the \highway from $\literal_{j,2}$ to $b_j$.
        It takes at most $\beta_j + 2$ pebbles over $R_j$ by
        \reflem{lem:Highway}.
      \item Persistently pebble the \highway from $\literal_{j,3}$ to $c_j$.
        Now only $b_j$ and $c_j$ have pebbles over $R_j$.
        It takes at most $\underbrace{\beta_j + 2}_{\ancnode{c_j}} +
        \underbrace{\vphantom{\beta_j}1}_{b_j} = \beta_j + 3$ pebbles over $R_j$ by
        \reflem{lem:Highway}.
      \item Persistently pebble the \highway from $\literal_{j,1}$ to $a_j$.
        Since $\literal_{j,1}$ is outside of $R_j$, it takes at most $\beta_j
        + 1$ pebbles over the intersection of the \highway and $R_j$ by
        \reflem{lem:Highway}, for a total of $\underbrace{\beta_j +
          1}_{\ancnode{a_j}} + \underbrace{\vphantom{\beta_j}2}_{b_j,c_j} = \beta_j + 3$
        pebbles.
        Now $a_j$, $b_j$ and $c_j$ have pebbles.  \qedhere
    \end{enumerate}
  \end{proof}

  \begin{lemma}[False Clause, lower bound]
    \label{lem:FalseClause}
    Fix a partial assignment $\partassign$.
    Assume all of the literals $\literal_{j,1}$, $\literal_{j,2}$ and
    $\literal_{j,3}$ are assigned \false.
    Let 
    $\canonical_j \defeq \{ \literal_{j,1}', \literal_{j,2}',
    \literal_{j,3}' \}$  be their canonical nodes.
    Consider the region $R_j \defeq \ancnode[\gadget]{p_j} \setminus
    \ancnode[\gadget]{\canonical_j}$ beyond the canonical nodes.
    Then $\surroundprice[R_j]{p_j} \ge \beta_j + 4$.
  \end{lemma}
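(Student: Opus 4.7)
The plan is to prove $\surroundprice[R_j]{p_j}\geq\beta_j+4$ by reducing to the Dymond--Tompa game. By Proposition~\ref{lem:surrounding-eq-persistent} and the equivalence $\persistentprice=\dtprice$ from \cite{Chan13JustAPebble}, this is equivalent to exhibiting a challenger strategy on the subgraph $R_j\cap\ancnode{p_j}$ (a single-sink DAG with sink $p_j$) that survives at least $\beta_j+5$ rounds. The key structural feature I will exploit is that because all three literals are \false, their canonical nodes are the $\literal_{j,k}'$, so each \highway gadget from $\literal_{j,k}$ to $\{a_j,b_j,c_j\}$ lies \emph{entirely} within $R_j$, together with its source $\literal_{j,k}$; hence by \reflem{lem:Highway} each such \highway has persistent price $\beta_j+2$ when considered as a single-sink DAG with its apex as sink.

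I would design a three-stage challenger strategy. In \emph{Stage 1} the challenger challenges $p_j$ and stays there; at the round where both $u_j$ and $v_j$ are pebbled for the first time, exactly one of them (the freshly placed one in that round) is a legal jump target, and the challenger jumps to it — call it $v_j$. In \emph{Stage 2} the challenger stays on $v_j$ and, when both $b_j$ and $c_j$ become pebbled, jumps to whichever was freshly placed — call it $c_j$. In \emph{Stage 3} the challenger plays the optimal challenger strategy on the $c_j$-\highway (with sink $c_j$), which survives $\beta_j+2$ rounds by Lemma~\ref{lem:Highway} and the Chan~'13 equivalence.

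For the round count I would argue that Stage~1 cannot conclude before round~3 (placements of $p_j,u_j,v_j$ are all required), and Stage~2 cannot conclude in the same round that Stage~1 concludes — the pebbler places only one pebble per round, so the single placement that triggered Stage~1 (the freshly placed vertex of $\{u_j,v_j\}$) is in the ``top'' gadget and cannot simultaneously complete the predecessor condition for $v_j$. Hence the jump into the \highway occurs no earlier than round~4, and Stage~3 then adds $\beta_j+2$ rounds, of which the first overlaps with the jumping round, yielding a total of at least $4+(\beta_j+2)-1=\beta_j+5$ rounds.

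The main obstacle will be handling the corner case in Stage~1 where, by the time both $u_j$ and $v_j$ are first simultaneously pebbled, the freshly placed vertex already has \emph{all} its predecessors pebbled (so that jumping to it ends the game immediately); this happens only if the pebbler pre-pebbles enough of $\{a_j,b_j,c_j\}$ before finishing $\{u_j,v_j\}$. I will show that in every such case the other vertex of $\{u_j,v_j\}$ still has an un-pebbled predecessor at that round (because pre-pebbling \emph{both} middle configurations $\{a_j,b_j\}$ and $\{b_j,c_j\}$ costs three pre-placements, inflating the round count past $\beta_j+5$ automatically), and a short case analysis shows that either the freshly placed one is the ``safe'' jump target or the round count already suffices without entering the \highway at all. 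With this case analysis the stated bound follows from \reftwolems{lem:Highway}{lem:surrounding-eq-persistent} and the DT--persistent equivalence.
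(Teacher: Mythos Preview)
Your reduction to the \dtgametext is a legitimate route, but the Challenger strategy you outline has two genuine gaps that a ``short case analysis'' will not close.

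\textbf{Gap in Stages~1--2.} In the \dtgametext as defined in the paper, \dtchallenger may only \dtjump to the \emph{newly placed} vertex or \dtstay. Your fallback---``the other vertex of $\{u_j,v_j\}$ still has an unpebbled predecessor''---is therefore useless: \dtchallenger cannot jump to a previously placed vertex. Concretely, suppose \dtpebbler plays $p_j,\,b_j,\,c_j,\,u_j,\,v_j$ in rounds~1--5. Under your Stage~1 rule \dtchallenger stays on~$p_j$ through round~4; at round~5 both options end the game (staying: $u_j,v_j$ surround~$p_j$; jumping to~$v_j$: $b_j,c_j$ surround~$v_j$). That is $5$~rounds, but you need $\beta_j+5\geq 7$. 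Your claim that pre-pebbling all of $a_j,b_j,c_j$ ``inflates the round count past $\beta_j+5$'' is also false: $p_j,a_j,b_j,c_j,u_j,v_j$ is only $6$~pebbles.

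\textbf{Gap in Stage~3.} Even when you successfully jump to an apex (say~$c_j$), the assertion that the \highway sub-game ``adds $\beta_j+2$ rounds'' presumes the \highway is empty apart from~$c_j$. Nothing in your strategy prevents \dtpebbler from seeding the \highway earlier while \dtchallenger sits on~$p_j$; once inside, \dtchallenger cannot jump to those earlier pebbles, so the sub-game can terminate in far fewer than $\beta_j+1$ further rounds. The shortfall is not automatically compensated by the extra rounds spent seeding, because those rounds may have been spent in a \emph{different} \highway than the one \dtchallenger ends up entering.

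The paper sidesteps both issues by arguing directly on surrounding pebblings rather than via the \dtgametext. It defines a \emph{projection} to the height-$2$ pyramid $\{a_j,b_j,c_j,u_j,v_j,p_j\}$ in which $a_j$ (resp.~$b_j,c_j$) is declared pebbled whenever its \highway is \vislocked. It then looks at the first moment all four truncated source-to-sink paths in the pyramid are blocked; this move must lock some apex, say~$a_j$, which by \reflem{lem:Highway} costs $\beta_j+2$ pebbles in that \highway (all counted in~$R_j$ because $\literal_{j,1}$ is \false), while the two other blocked paths force at least two further pebbles in~$R_j$, giving $\beta_j+4$. This projection argument is what replaces your staged jumping and handles all orderings uniformly.
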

  \begin{proof}
    Note that $\gadget$ consists of a pyramid whose sources are attached to three
    \highway{}s.
    Since all literals are assigned \false, the intersection of $R_j$ with
    each of the \highway is precisely the nodes in the \highway.

    Fix an induced subgraph $F \subseteq \gadget$ (\eg $F =$ the \highway from
    $\literal_{j,1}$ to $a_j$) having a unique sink.
    Say a pebbling configuration on $F$ is \introduceterm{\vislocked} if 
    $\persistentprice[F]{\pconf}=\visitprice{F}$, that is if
    in order to reach the empty configuration it is necessary to pass
    through a configuration with $\visitprice{F}$ pebbles.
    In particular, any configuration with a pebble on the sink of $F$ is
    \vislocked on $F$.
    Also, if a configuration is \vislocked on $F$, then there is a pebble on $F$.
    Given a pebbling configuration on $\gadget$, say $a_j$ (resp.~$b_j$,
    $c_j$) is \vislocked if the configuration is \vislocked on the \highway from
    $\literal_{j,1}$ to $a_j$ (resp.~from $\literal_{j,2}$ to $b_j$, from
    $\literal_{j,3}$ to $ c_{j} $).

    With locking in mind, consider a ``projected'' configuration on the
    pyramid defined as follows.
    Given a pebbling configuration $\pconf$ on $\gadget$, its projection to the
    pyramid is $\projection{\pconf}
    \defeq \bigl( \{ u_j, v_j, p_j \} \intersect \pconf \bigr)
    \union \bigl\{ t \in \{ a_j, b_j, c_j \} \where t\text{ is \vislocked under }\pconf
      \bigr\}$.
    Note that given a strategy on $\gadget$, its (configuration-wise)
    projection to the pyramid is a legal strategy on the pyramid.

    We are interested in the truncated paths $\breve\pi$ on the pyramid (\ie
    source to sink paths excluding the sink, which are in bijection to the
    edges $(a_j, u_j)$, $(b_j, u_j)$, $(b_j, v_j)$, $(c_j, v_j)$), and in
    particular whether they are blocked under $\projection{\pconf}$.
    A truncated path $\breve\pi$ is \emph{blocked} under $\projection{\pconf}$ if
    $\breve\pi \intersect \projection{\pconf} \ne \emptyset$.

    Consider a strategy on $\gadget$ to surround $p_j$.
    Its projection to the pyramid is a strategy on the pyramid to surround
    $p_j$.
    At the beginning, all truncated paths on the pyramid are not blocked;
    at the end, all truncated paths are blocked.
    Consider the first time that all truncated paths on the pyramid are
    blocked:
    in the strategy projected on the pyramid, this must be the result of
    pebbling a source node $a_j$, $b_j$, or $c_j$.
    By symmetry (in the rest of this argument), assume $a_j$ is pebbled, then
    there are at least two more pebbles on the pyramid in the projected
    configuration.
    Since $a_j$ is being pebbled in the projected strategy, $a_j$ is getting
    \vislocked on $\gadget$ (\ie restricting attention to the \highway from
    $\literal_{j,1}$ to $a_j$, there are as many pebbles as the visiting
    price of the \highway), accounting for $\beta_j + 2$ pebbles in the
    intersection of $R_j$ and the \highway to $a_j$ by \reflem{lem:Highway}.
    The two other pebbles in the projected strategy each account for one more
    pebble over $R_j$, for a total of $\beta_j + 4$ pebbles.
  \end{proof}

  The lower bound shown in \reflem{lem:FalseClause} holds for
  clauses which are falsified. We now prove a weaker lower bound on
  the surrounding price for the satisfied clauses, which matches the
  upper bound.

  \begin{lemma}[Any Clause, lower bound]
    \label{lem:AnyClause}
    Fix a partial assignment $\partassign$.
    Assume none of literals $\literal_{j,1}$, $\literal_{j,2}$, or
    $\literal_{j,3}$ is assigned \tvundef.
    Let $\canonical_j$ be their canonical nodes.
    Consider the region $R_j \defeq \ancnode[\gadget]{p_j} \setminus
    \ancnode[\gadget]{\canonical_j}$ beyond the canonical nodes.
    Then $\surroundprice[R_j]{p_j} \ge \beta_j + 3$.
  \end{lemma}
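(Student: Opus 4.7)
The plan is to follow the same argument as in \reflem{lem:FalseClause}, but with the weaker bound on pebbling through a \highway that accommodates the case where the corresponding literal is assigned \true.

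\textbf{Setup.} As in the proof of \reflem{lem:FalseClause}, I would introduce the notion of a vertex $a_j$ (resp.~$b_j$, $c_j$) being \vislocked with respect to a configuration $\pconf$ on $\gadget$, meaning that restricting $\pconf$ to the \highway from $\literal_{j,1}$ to $a_j$ (resp.~from $\literal_{j,2}$ to $b_j$, from $\literal_{j,3}$ to $c_j$) gives a configuration whose persistent price equals the visiting price of that \highway. Then project any configuration $\pconf$ on $\gadget$ to the underlying pyramid by
\[
  \projection{\pconf} \defeq \bigl(\{u_j,v_j,p_j\} \intersect \pconf\bigr) \cup \bigl\{ t \in \{a_j,b_j,c_j\} \where t \text{ is \vislocked under } \pconf \bigr\} \eqcomma
\]
and observe that the projection of any strategy on $\gadget$ surrounding $p_j$ is a legal pyramid strategy surrounding $p_j$.

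\textbf{Pinpointing the bottleneck configuration.} Consider the four truncated source-to-sink paths of the pyramid (the ones omitting $p_j$). At the start of the surrounding strategy none of them is blocked by $\projection{\pconf}$, and at the end all of them are. At the first time all four are blocked, the projected move must be the placement of a pebble on one of the three sources $a_j$, $b_j$, $c_j$; by symmetry assume it is $a_j$. Then at this time there are at least two other pebbles on the pyramid in $\projection{\pconf}$, accounting for $2$ pebbles in $R_j$, and $a_j$ is in the process of becoming \vislocked on the \highway from $\literal_{j,1}$ to $a_j$.

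\textbf{Counting pebbles in the \highway.} Here is the only point where the argument differs from \reflem{lem:FalseClause}. The literal $\literal_{j,1}$ is by hypothesis either \true or \false, but never \tvundef. In either case, by the definition of canonical nodes the intersection $R_j \cap \ancnode{a_j}$ contains at least all vertices \emph{properly} inside the \highway from $\literal_{j,1}$ to $a_j$ (it contains the source $\literal_{j,1}$ as well when $\literal_{j,1}$ is assigned \false). Hence by \reflem{lem:Highway} applied to the smaller region $\tilde R$, the configuration that makes $a_j$ \vislocked has at least $\visitprice[\tilde R]{a_j} = \beta_j + 1$ pebbles in $R_j \cap \ancnode{a_j}$.

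\textbf{Conclusion.} Combining the $\beta_j + 1$ pebbles inside the \highway to $a_j$ with the $2$ pebbles on the other sources/interior of the pyramid, we obtain a total of at least $\beta_j + 3$ pebbles in $R_j$ at this moment, proving $\surroundprice[R_j]{p_j} \ge \beta_j + 3$. The main (minor) obstacle is just bookkeeping: making sure that when $\literal_{j,1}$ is \true the pebble potentially sitting on $\literal_{j,1}$ is not double-counted (it lies outside $R_j$, which is exactly why we only get the $+1$ and not the $+2$ bound from \reflem{lem:Highway}), and that the other two pyramid pebbles indeed lie in $R_j$ and are disjoint from the \highway to $a_j$.
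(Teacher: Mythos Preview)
Your proposal is correct and essentially identical to the paper's proof: both follow the argument of \reflem{lem:FalseClause} verbatim (projection to the pyramid, first time all truncated paths are blocked, two extra pyramid pebbles), with the single change that the \highway to the newly \vislocked source contributes only $\beta_j+1$ pebbles in $R_j$ because the node $\literal_{j,1}$ may lie outside $R_j$ when that literal is assigned \true. Your bookkeeping remarks about avoiding double-counting and disjointness are exactly the caveats the paper notes parenthetically.
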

  \begin{proof}
    Follow the proof of \reflem{lem:FalseClause} to define \vislocked, projection
    to the pyramid, truncated paths on the pyramid, and blocking on the
    pyramid.
    The only difference is that, since some literal can be assigned \true,
    the intersection of $R_j$ with some of the \highway can be the nodes
    \emph{properly} in the \highway.

    Consider a strategy on $\gadget$ to surround $p_j$.
    Its projection to the pyramid is a strategy on the pyramid to surround
    $p_j$.
    As in the proof of \reflem{lem:FalseClause}, consider the first time that
    all truncated paths on the pyramid are blocked, which in the projected
    strategy must be the result of pebbling a source node, say, $a_j$.
    And there are at least two more pebbles on the pyramid in the projected
    configuration.
    Since $a_j$ is being pebbled in the projected strategy, $a_j$ is getting
    \vislocked on $\gadget$ (\ie restricting attention to the \highway from
    $\literal_{j,1}$ to $a_j$, there are as many pebbles as the visiting
    price of the \highway), accounting for $\beta_j + 1$ pebbles in the
    intersection of $R_j$ and the \highway to $a_j$ by \reflem{lem:Highway} (note
    that node $\literal_{j,1}$ may be outside of $R_j$ if literal
    $\literal_{j,1}$ is in \true position).
    The two other pebbles in the projected strategy each account for one more
    pebble over $R_j$, for a total of $\beta_j + 3$ pebbles.
  \end{proof}

  Assuming that the literal gadgets are in the position corresponding
  to $ \partassign $, we represent ``whether $\partassign$ falsifies
  $\clause_j$'' through \emph{an increase} in the persistent price of the sink
  of the corresponding gadget,
  \ie
  if $\partassign$ \emph{satisfies} $\clause_j$, then the persistent
  price would be a certain number ($\beta_j + 4$);
  but
  if $\partassign$ \emph{falsifies} $\clause_j$, then the persistent price
  would be one plus that number ($\beta_j + 5$).
  The difference in pebbling prices can be succintly expressed using
  the Iverson bracket notation $\ibleft \partassign \text{ falsifies
  }\clause_j \ibright$.

  \begin{corollary}[Clause Gadget]
    \label{cor:ClauseGadget}
    Fix a partial assignment $\partassign$.
    Assume none of literals $\literal_{j,1}$, $\literal_{j,2}$, or
    $\literal_{j,3}$ is assigned \tvundef.
    Let $\canonical_j$ be their canonical nodes.
    Consider the region $R_j \defeq \ancnode[\gadget]{p_j} \setminus
    \ancnode[\gadget]{\canonical_j}$ beyond the canonical nodes.
    Then 
    $\persistentprice[R_j]{p_j} = 
    \beta_j + 4 + 
    \ibleft 
    \partassign \text{\ falsifies\ }\clause_j \ibright$.
  \end{corollary}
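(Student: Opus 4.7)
The plan is to split on whether $\partassign$ satisfies $\clause_j$ or not, and in each case sandwich $\surroundprice[R_j]{p_j}$ between matching upper and lower bounds coming from the three clause lemmas above; then \refpr{lem:surrounding-eq-persistent} translates these surrounding bounds into the claimed persistent pebbling price (the $+4$ and $+5$ in the two cases differ by exactly the Iverson bracket).

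First I would dispose of the case in which $\partassign$ satisfies $\clause_j$. Then at least one of $\literal_{j,1},\literal_{j,2},\literal_{j,3}$ is in the \true position while, by hypothesis, none is \tvundef, so \reflem{lem:TrueClause} gives $\surroundprice[R_j]{p_j} \le \beta_j + 3$. Since no literal is \tvundef, \reflem{lem:AnyClause} applies and yields $\surroundprice[R_j]{p_j} \ge \beta_j + 3$. Equality holds, and \refpr{lem:surrounding-eq-persistent} gives $\persistentprice[R_j]{p_j} = \beta_j + 4$, matching the statement since the Iverson bracket is~$0$.

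Next I would handle the case in which $\partassign$ falsifies $\clause_j$, so that every $\literal_{j,k}$ is in \false position. The lower bound $\surroundprice[R_j]{p_j} \ge \beta_j + 4$ is exactly \reflem{lem:FalseClause}. For the matching upper bound I would rerun the three-step pebbling strategy from the proof of \reflem{lem:TrueClause}, but now \emph{every} literal's source vertex $\literal_{j,k}$ lies inside $R_j$, so each of the three \highway{}s must be persistently pebbled at its full cost $\beta_j + 2$ over $R_j$ (by \reflem{lem:Highway}). Pebbling the three \highway{}s in sequence to leave pebbles on $\{a_j,b_j,c_j\}$ then attains peak $\beta_j + 2 + 2 = \beta_j + 4$ at the last \highway, after which pebbling $u_j$ and $v_j$ to surround $p_j$ uses only $5 \le \beta_j + 4$ pebbles (using $\beta_j \ge 2$). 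Thus $\surroundprice[R_j]{p_j} \le \beta_j + 4$, equality holds with the lower bound, and \refpr{lem:surrounding-eq-persistent} yields $\persistentprice[R_j]{p_j} = \beta_j + 5$, which matches the statement since the Iverson bracket is~$1$.

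There is no real technical obstacle here; the work has already been done in the three clause lemmas. The only small item to verify is the falsified-case upper bound, where I would essentially recycle the pebbling schedule from \reflem{lem:TrueClause} and observe that the per-\highway cost just increases uniformly from ``at most $\beta_j + 2$ for two \highway{}s and $\beta_j + 1$ for the satisfied one'' to ``$\beta_j + 2$ for all three,'' so that the peak rises by exactly the expected $+1$.
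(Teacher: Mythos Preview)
Your proof is correct. The satisfied case is identical to the paper's. For the falsified case, both you and the paper use \reflem{lem:FalseClause} for the lower bound $\surroundprice[R_j]{p_j}\ge\beta_j+4$, but you obtain the matching upper bound differently: you rerun the explicit three-\highway strategy from \reflem{lem:TrueClause} and observe that the peak rises by exactly~$1$ (to $\beta_j+4$) when the third \highway now costs $\beta_j+2$ rather than $\beta_j+1$. The paper instead invokes \reflem{lem:SourceDifference}: the falsified region differs from a satisfied region by the single source vertex $\literal_{j,1}$, so the persistent price can increase by at most~$1$ over the already-established $\beta_j+4$. Your argument is more self-contained (no extra lemma needed), while the paper's is a cleaner one-liner that avoids re-counting pebbles; both reach $\persistentprice[R_j]{p_j}=\beta_j+5$ without difficulty.
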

  \begin{proof}
    If $\partassign$ satisfies $\clause_j$, \ie at least one literal is
    assigned \true, then $\persistentprice[R_j]{p_j} = \beta_j + 4$ because
    persistent price is one plus surrounding price
    (\refpr{lem:surrounding-eq-persistent}), and the upper bound
    (\reflem{lem:TrueClause}) matches the lower bound (\reflem{lem:AnyClause}).
    If $\partassign$ falsifies $\clause_j$, \ie all literals are assigned
    \false, then $\persistentprice[R_j]{p_j} = \beta_j + 5$ because it has to
    increase (\reflem{lem:FalseClause}) but not by more than one
    (\reflem{lem:SourceDifference}).
  \end{proof}

\subsection{Conjunction Gadget}
\label{ssst:ConjunctionGadget}

  To construct a gadget for the conjunction of $m$ clauses, it suffices to
  repeatedly compose a gadget for the conjunction two smaller gadgets, using
  the \emph{conjunction gadget} represented in \reffig{fig:GadgetConjunction}.

  \begin{construction}[Conjunction Gadget]
    \label{con:ConjunctionGadget}
    Assume two gadgets $G_1$ and $G_2$ with unique sinks are constructed.
    Construct the \emph{conjunction gadget} of weight $\bank$ of $G_1$ and
    $G_2$, denoted $\conjConst{\bank}{G_1}{G_2}$, as follows.
    Call $\sink_1$ the sink of $G_1$, $\sink_2$ the sink of $G_2$.
    Construct nodes $d_1$, $d_2$, $d_3$, $d_4$, $e$, and edges $(d_1, d_3)$,
    $(d_2, d_3)$, $(d_4, e)$.
    Add a \highway of toll $\bank$ from $\sink_1$ to $d_1$, a \highway of toll
    $\bank-1$ from $\sink_2$ to $d_2$, and a \highway of toll $\bank-2$ from
    $d_3$ to $d_4$.
  \end{construction}

\begin{figure}[ht]
  \begin{center}
    \begin{tikzpicture}[rotate=-90]
      \GadgetConjunction{\scriptstyle\bank}{center}{minimum size=1.2cm}
    \end{tikzpicture}
  \end{center}
  \caption{Conjunction gadget of weight $\bank$ of $G_1$ and $G_2$.}
  \label{fig:GadgetConjunction}
\end{figure}
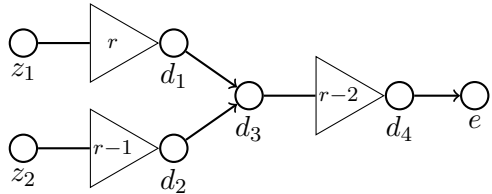

  For the gadget $G \defeq \conjConst{\bank}{G_1}{G_2}$, the nodes in the
  gadget are $V(G) = \ancnode{e} \setminus \bigl( \propancnode{\sink_1} \union
  \propancnode{\sink_2} \bigr)$.
  We want to analyze pebbling prices restricted to a certain region $\Region$
  (in the final gadget containing the conjunction gadgets) which will be a
  superset of the nodes of $G$.

  \begin{lemma}[True Conjunction]
    \label{lem:TrueConjunction}
    Fix a region $\Region$ where $\Region \supseteq V(G)$.
    If $\persistentprice[\Region]{\sink_1} \le \bank + 1$ and
    $\persistentprice[\Region]{\sink_2} \le \bank$, then
    $\surroundprice[\Region]{e} \le \bank + 2$.
  \end{lemma}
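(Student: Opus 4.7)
The plan is to exhibit an explicit surrounding strategy for $e$ that uses exactly $\bank+2$ pebbles. Since the only predecessor of $e$ is $d_4$, surrounding $e$ amounts to placing a pebble on $d_4$, and one may freely leave other pebbles on the graph at the end. I would build the strategy in four phases: (i)~persistently pebble $d_1$; (ii)~persistently pebble $d_2$ while keeping $d_1$; (iii)~place a pebble on $d_3$ using its two in-neighbours $d_1,d_2$; (iv)~pebble $d_4$ through the highway from $d_3$, still keeping $d_1,d_2,d_3$ in place. The reason the budget closes is that the three highways have strictly decreasing tolls $\bank,\bank-1,\bank-2$, compensating exactly for the growing number of pebbles that must be held fixed during the later phases.

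For phase~(i), I would persistently pebble $\sink_1$ using the hypothesis ($\bank+1$~space), then persistently pebble $d_1$ along the highway from $\sink_1$ to $d_1$. By \reflem{lem:Highway} the persistent price of $d_1$ in the highway region excluding $\sink_1$ is $\bank+1$, so together with the pebble held on $\sink_1$ this phase stays within $\bank+2$. Reversing the $\sink_1$-pebbling while keeping $d_1$ costs $(\bank+1)+1=\bank+2$ and leaves only $d_1$ pebbled. Phase~(ii) is the same construction applied to $G_2$ and the second highway, performed with $d_1$ kept pebbled throughout. Because the hypothesis on $\sink_2$ is one smaller ($\bank$ rather than $\bank+1$) and the second highway's toll is also one smaller ($\bank-1$ rather than $\bank$), the analogous bookkeeping again yields total space $\bank+2$. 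The kept pebble on $d_1$ is harmless because $d_1$ lies outside $G_2$ and outside the second highway, so the strategies needed for $\sink_2$ and $d_2$ do not interact with it.

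Phase~(iii) places a pebble on $d_3$, whose predecessors $d_1,d_2$ are both pebbled, reaching a configuration of only three pebbles, well below $\bank+2$. In phase~(iv), still keeping $d_1,d_2,d_3$, I persistently pebble $d_4$ through the highway from $d_3$ to $d_4$: by \reflem{lem:Highway} its persistent price in the region excluding $d_3$ is $\bank-1$, so adding the three constant pebbles gives a total of exactly $\bank+2$. The final configuration contains $d_4$, hence $e$ is surrounded.

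The main subtlety, and the reason one must not try to remove $d_1,d_2,d_3$ before pebbling $d_4$, is that undoing those pebbles reversibly would require re-running the costly sub-pebblings from phases~(i)--(ii) in reverse, and doing so while the last highway is simultaneously active would blow past the $\bank+2$ budget. Allowing the extra pebbles to persist to the end---which is permitted for a surrounding pebbling---is exactly what lets the carefully chosen decreasing tolls close the count, and this is the main conceptual point I would be careful to make explicit in the write-up.
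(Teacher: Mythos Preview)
Your proposal is correct and follows essentially the same approach as the paper: the paper's proof also proceeds by (i) persistently pebbling $\sink_1$, running the first \highway to place $d_1$, then unpebbling $\sink_1$; (ii) doing the analogous thing for $\sink_2$ and $d_2$ while holding $d_1$; and (iii) pebbling $d_3$ and running the third \highway to $d_4$ while holding $d_1,d_2,d_3$, with exactly the same $\bank+2$ bookkeeping via \reflem{lem:Highway}. Your final paragraph's commentary on why one does not clean up $d_1,d_2,d_3$ is extra intuition the paper does not spell out, but it is consistent with the argument.
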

  \begin{proof}
    Consider the following strategy to visit $d_4$ (equivalently, surround
    $e$) using at most $\bank+2$ pebbles:
    \begin{itemize}
      \item Persistently pebble $\sink_1$, persistently pebble the \highway
        from $\sink_1$ to $d_1$, persistently unpebble $\sink_1$.
        Now $d_1$ has a pebble.
        Over $\Region$, the first sub-step takes at most $\bank + 1$ pebbles,
        the second sub-step takes at most $\bank + 2$ pebbles by
        \reflem{lem:Highway}, and the third sub-step takes at most
        $\underbrace{(\bank + 1)}_{\ancnode{\sink_1}} + \underbrace{\vphantom{()}1}_{d_1}
        = \bank + 2$ pebbles.
      \item Persistently pebble $\sink_2$, persistently pebble the \highway
        from $\sink_2$ to $d_2$, persistently unpebble $\sink_2$.
        Now $d_1$ and $d_2$ have pebbles.
        Over $\ancnode{d_2} \intersect \Region$, the first sub-step takes
        at most $\bank$ pebbles, the second sub-step most $(\bank - 1) + 2 =
        \bank + 1$ pebbles by \reflem{lem:Highway}, and the third sub-step takes
        at most $\underbrace{\bank}_{\ancnode{\sink_2}} +
        \underbrace1_{d_2}$ pebbles.
        Over $\Region$, this step takes at most
        $\underbrace{\bank+1}_{\ancnode{d_2}} + \underbrace1_{d_1} = \bank
        + 2$ pebbles.
      \item Pebble $d_3$, persistently pebble the \highway from $d_3$ to $d_4$.
        Now $d_1$, $d_2$, $d_3$, $d_4$ have pebbles.
        Over $\Region' \defeq \bigl( \ancnode{e} \intersect \Region \bigr)
        \setminus \bigl( \ancnode{d_1} \union \ancnode{d_2} \bigr)$,
        the first sub-step takes 1 pebble, the second sub-step takes at most
        $(\bank - 2) + 2 = \bank$ pebbles by \reflem{lem:Highway}.
        Over $\Region$, this step takes at most $\underbrace\bank_{\Region'} +
        \underbrace2_{d_1, d_2}$ pebbles.  \qedhere
    \end{itemize}
  \end{proof}

  \begin{lemma}[Any Conjunction]
    \label{lem:AnyConjunction}
    Fix a region $\Region$ where $\Region \supseteq V(G)$.
    We have the following:
    \begin{enumerate}
      \item $\surroundprice[\Region]{e} \ge \bank + 2$; and
      \item if $\persistentprice[\Region]{\sink_1} \le \bank + 2$ and
        $\persistentprice[\Region]{\sink_2} \le \bank + 1$, then
        $\surroundprice[\Region]{e} \le \bank + 3$.
    \end{enumerate}
  \end{lemma}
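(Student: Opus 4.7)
The plan is to prove Part~(1) (the universal lower bound $\surroundprice[\Region]{e} \ge \bank + 2$) by a molding argument on the highway from $\sink_1$ to $d_1$, and Part~(2) (the conditional upper bound $\surroundprice[\Region]{e} \le \bank + 3$) by adapting the three-step strategy in the proof of \reflem{lem:TrueConjunction} at the cost of one additional pebble per step.

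For Part~(1), I would first observe that any surrounding pebbling of $e$ must eventually pebble $d_1$: surrounding $e$ requires pebbling $d_4$, and since $d_3$ is identified with the special source in the molded highway $\mold(T_{\bank - 2})$ from $d_3$ to $d_4$, pebbling $d_4$ forces $d_3$ to be pebbled; in turn, pebbling $d_3$ requires both $d_1$ and $d_2$ to be pebbled, since they are the predecessors of $d_3$ in the conjunction gadget. Fix the first time $t_1$ at which $d_1$ is pebbled and let $H$ be the node set of the highway from $\sink_1$ to $d_1$. I would then consider the restricted sequence $\pconf_0 \intersect H, \ldots, \pconf_{t_1} \intersect H$ obtained by projecting each configuration of the prefix pebbling onto $H$. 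Since each pebble move touches a single vertex, the projection either stays constant (for moves outside $H$) or executes the same legal placement or removal on the subgraph induced by $H$, with all relevant predecessors contained in $H$ because $H$ is induced. The projected sequence is therefore a visiting pebbling of $d_1$ on the highway, and Item~(2) of \reflem{lem:Molding} applied with $G = T_{\bank}$ produces some intermediate configuration containing $\sink_1$ and at least $\visitprice{T_{\bank}} + 2 = \bank + 2$ pebbles of $H$. The corresponding configuration of the original pebbling already contains those $\bank + 2$ pebbles inside $\Region$, which proves the lower bound.

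For Part~(2), I would follow the three-step construction used in the proof of \reflem{lem:TrueConjunction}, keeping track of the one extra pebble per step allowed by the weaker hypotheses. In Step~1 the persistent pebbling of $\sink_1$ costs at most $\bank + 2$ pebbles; the subsequent persistent pebbling of the highway from $\sink_1$ to $d_1$ costs at most $\bank + 2$ pebbles over $\ancnode{\sink_1}$ by \reflem{lem:Highway}; and the reversal that unpebbles $\sink_1$ while $d_1$ remains pebbled peaks at $(\bank + 2) + 1 = \bank + 3$. Step~2 is symmetric, with $\persistentprice[\Region]{\sink_2} \le \bank + 1$ and the two pebbled vertices $d_1$ and $d_2$ to account for during the reversal, peaking at $(\bank + 1) + 2 = \bank + 3$. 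Step~3 is identical to the corresponding step in the proof of \reflem{lem:TrueConjunction} and still peaks at $\bank + 2$. Taking the maximum across the three steps yields $\surroundprice[\Region]{e} \le \bank + 3$, as desired. The main technical point is the projection argument in Part~(1); once it is verified that restricting each configuration to $H$ yields a legal pebbling on the highway subgraph, \reflem{lem:Molding}\,(2) applies verbatim and the lower bound drops out by direct arithmetic.
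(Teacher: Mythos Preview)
Your proof is correct and follows essentially the same approach as the paper. For Part~(1) the paper compresses your projection-and-Molding argument into a single citation of \reflem{lem:Highway} (which is itself proved via \reflem{lem:Molding}), so you are simply inlining that lemma; for Part~(2) both proofs run the three-step strategy of \reflem{lem:TrueConjunction} with one extra pebble of slack, and your step-by-step accounting matches the paper's (note a small slip: in Step~1 the $\bank+2$ pebbles for the highway are over $\ancnode{d_1}$, not over $\ancnode{\sink_1}$).
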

  \begin{proof}
    For Item~(1), note that any strategy to surround $e$ must visit $d_1$, and
    $\visitprice[\Region]{d_1} \ge \bank + 2$ by \reflem{lem:Highway}.

    Item~(2) follows from the proof of \reflem{lem:TrueConjunction} to visit
    $d_4$ (equivalently, surround $e$):
    \begin{enumerate}
      \item persistently pebble $\sink_1$, persistently pebble the \highway
        from $\sink_1$ to $d_1$, persistently unpebble $\sink_1$.
        Now $d_1$ has a pebble.
        This step takes at most $\bank+3$ pebbles over $\Region$.
      \item persistently pebble $\sink_2$, persistently pebble the \highway
        from $\sink_2$ to $d_2$, persistently unpebble $\sink_2$.
        Now $d_1$ and $d_2$ have pebbles.
        This step takes at most $\bank+3$ pebbles over $\Region$.
      \item pebble $d_3$, persistently pebble the \highway from $d_3$ to $d_4$.
        Now $d_1$, $d_2$, $d_3$, $d_4$ have pebbles.
        This step takes at most $\bank+2$ pebbles over $\Region$. \qedhere
    \end{enumerate}
  \end{proof}

  \begin{lemma}[False Conjunction]
    \label{lem:FalseConjunction}
    Fix a region $\Region$ where $\Region \supseteq V(G)$.
    We have the following:
    \begin{enumerate}
      \item if $\persistentprice[\Region]{\sink_1} \ge \bank + 2$, then
        $\surroundprice[\Region]{e} \ge \bank + 3$; and
      \item if $\persistentprice[\Region]{\sink_2} \ge \bank + 1$, then
        $\surroundprice[\Region]{e} \ge \bank + 3$.
    \end{enumerate}
  \end{lemma}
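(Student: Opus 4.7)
The plan is to adapt the argument from the proof of Lemma~\ref{lem:AnyConjunction} Item~(1), which established the unconditional bound $\surroundprice[\Region]{e} \ge \bank + 2$ by observing that surrounding $e$ requires visiting $d_1$ and invoking Lemma~\ref{lem:Highway}. To upgrade the bound to $\bank + 3$ I would exploit the extra hypothesis on $\sink_1$ (Item~(1)) or on $\sink_2$ (Item~(2)) via a case analysis on how the strategy manages the corresponding source pebble.

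For Item~(1), I would consider any strategy $\pebbling$ that surrounds $e$ in $\Region$; such a strategy must at some time $t_0$ pebble $d_3$ for the first time, so both $d_1$ and $d_2$ must carry pebbles at $t_0$. I would split into two cases on the fate of the pebble on $\sink_1$ in the interval between the first placement of $d_1$ and $t_0$. In the first case, $\sink_1$ remains pebbled throughout this interval; then Lemma~\ref{lem:Molding} Item~(2), applied to the restriction of $\pebbling$ to $H_2$, yields a critical time with $\bank + 1$ pebbles on $H_2$ (including $\sink_2$), and adding the simultaneous pebbles on $d_1$ and $\sink_1$ gives $\bank + 3$ in total. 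In the second case, the pebble on $\sink_1$ is removed at some moment $\sigma$ after $d_1$ is first placed; since the removal is reversible and $d_1$ is still pebbled at $\sigma$, the reverse of the sub-pebbling around $\sigma$ is a persistent pebbling of $\sink_1$ on $\ancnode[\Region]{\sink_1}$, which by hypothesis requires at least $\bank + 2$ pebbles there, and together with the pebble on $d_1 \in H_1 \setminus \ancnode[\Region]{\sink_1}$ this again gives $\bank + 3$.

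For Item~(2), a symmetric case analysis applies to the pebble on $\sink_2$, yielding $\bank + 3$ in each case: in the first case, $\sink_2$ persists throughout a $d_1$-phase so that a Lemma~\ref{lem:Molding} critical time on $H_1$ (contributing $\bank + 2$ pebbles on $H_1$) combines with the simultaneous pebbles on $\sink_2$ and $d_2$; in the second case, the reversible removal of $\sink_2$ (with $d_2$ still pebbled) forces at least $\bank + 1$ pebbles on $\ancnode[\Region]{\sink_2}$ by the hypothesis $\persistentprice[\Region]{\sink_2} \ge \bank + 1$, and anchoring at $t_0$ ensures the simultaneous presence of a pebble on $d_1$ as well.

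The main obstacle I foresee is coordinating the critical times so that the relevant pebbles truly coexist at a single instant, particularly when the strategy pebbles and unpebbles source vertices multiple times or interleaves the $d_1$- and $d_2$-phases in non-trivial ways. My plan for this is to identify, in each case, the precise instant---either a Molding critical time on the relevant highway, or the moment a source pebble is being reversibly removed with its sink still pebbled---at which $\bank + 3$ pebbles coexist, and to invoke reversibility to force each source-pebble sub-pebbling to pay its full persistent cost at that chosen instant.
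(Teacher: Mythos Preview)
Your Case~2 argument for Item~(1) has a genuine gap. When $\sink_1$ is reversibly removed at time~$\sigma$, all that is guaranteed is that $\prednode{\sink_1}$ are pebbled at $\sigma$; the configuration on $\ancnode[\Region]{\sink_1}$ at $\sigma-1$ is \emph{not} $\{\sink_1\}$ but rather $\{\sink_1\} \cup \prednode{\sink_1} \cup (\text{possibly more})$. So the sub-pebbling from time~$0$ to~$\sigma-1$ restricted to $\ancnode[\Region]{\sink_1}$ is only a \emph{visiting} pebbling of $\sink_1$, not a persistent one, and the hypothesis $\persistentprice[\Region]{\sink_1} \ge \bank+2$ gives you nothing. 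Moreover, even if you could extract a persistent sub-pebbling, its space peak might occur long before $d_1$ was ever pebbled, so the ``$+1$ for $d_1$'' is not available at that instant. Your Case~1 has the mirror problem: the Molding critical time on $H_2$ lies somewhere in $[0,t_0]$, not necessarily in your interval $[t_1,t_0]$, and even inside that interval $d_1$ may have been removed and not yet replaced---so you can only count $\sink_1$ as the extra pebble, giving $\bank+2$, not $\bank+3$.

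The paper fixes exactly this coordination problem by arguing by contradiction under a global budget of $\bank+2$ pebbles. It introduces the notion that a \highway{} is \emph{\srclocked} (clearing it forces a configuration with full visiting cost including the source) and that a subgraph is \emph{\perlocked} (clearing it costs the full persistent price). The key step is that under the budget, at the instant $d_1$ first becomes \srclocked all $\bank+2$ pebbles lie on the \highway{} to $d_1$, so the configuration on $\ancnode[\Region]{\sink_1}$ is exactly $\{\sink_1\}$ and $\sink_1$ becomes \perlocked---and it \emph{stays} \perlocked as long as $d_1$ stays \srclocked, because unlocking would cost $\ge \bank+2$ on $\ancnode[\Region]{\sink_1}$ while a pebble sits properly in the \highway. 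A separate ordering claim ($d_1$ must get \srclocked before $d_2$) then pins down a single instant at which one counts two pebbles on $\ancnode[\Region]{d_1}$ (one properly in the \highway, one on $\ancnode[\Region]{\sink_1}$), contradicting the budget. For Item~(2) the argument is genuinely asymmetric and substantially longer: since the \highway{} to $d_2$ has toll only $\bank-1$, its \srclocked moment leaves one pebble of slack, and the paper must additionally track the \highway{} from $d_3$ to $d_4$ and show that $d_1$ and $d_2$ remain \srclocked all the way until $d_4$ becomes \srclocked before it can count $\bank+3$ pebbles. Your ``symmetric'' sketch for Item~(2) does not touch $d_4$ at all and cannot close the gap.
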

  \begin{proof}
    Fix a \highway $\highwaynot \subseteq G$, say from $s$ to $\sink$ (\eg if $\highwaynot$ is the
    \highway from $\sink_1$ to $d_1$, then $s = \sink_1$ and $\sink = d_1$).
    Say a pebbling configuration on $\highwaynot$ is \introduceterm{\srclocked} if the pebbles cannot
    be cleared without using $\visitprice[\Region]{\highwaynot}$ pebbles including one on
    $s$ (when restricted to $\highwaynot$);
    that is, if the empty configuration cannot be reached without entering a
    configuration which has $\visitprice[\Region]{\highwaynot}$ pebbles and contains $s$.
    In particular, any configuration with a pebble on the sink of $\highwaynot$ is
    \srclocked on $\highwaynot$ by \reftwolems{lem:Molding}{lem:Highway}.
    Also, if a configuration is \srclocked on $\highwaynot$, then there is a pebble on some
    node \emph{properly} in the \highway (there is a pebble on $V(\highwaynot) \setminus
    \{ s \}$).
    Given a pebbling configuration on $G$, say $d_1$ (resp.~$d_2$, $d_4$) is
    \srclocked if the configuration is \srclocked on on the \highway from
    $\sink_1$ to $d_1$ (resp.~from $\sink_2$ to $d_2$, from $d_3$ to $d_4$).

    Fix an induced subgraph $F \subseteq G$ having a unique sink $v$ (\eg $F =
    \ancnode{\sink_1}$ and $v = \sink_1$).
    Say a pebbling configuration on $F$ is \introduceterm{\perlocked} if 
$\persistentprice[F]{\pconf}=\persistentprice[\Region]{F}$; this is if the pebbles cannot
    be cleared without using $\persistentprice[\Region]{F}$ pebbles (when
    restricted to $F$).
    In particular, the configuration with just a single pebble on $v$ over
    $\Region$ is \perlocked on $F$.
    Also, if a configuration is \perlocked on $F$, then there is a pebble on $F
    \intersect \Region$.
    Given a pebbling configuration on $G$, say $\sink_1$ (resp.~$\sink_2$) is
    \introduceterm{\perlocked} if the configuration is \perlocked on $\ancnode{\sink_1}$
    (resp.~$\ancnode{\sink_2}$).

    \begin{claim}[\srclocked implies \perlocked]
      \label{cla:JammingAndLocking}
      Fix any \highway $\highwaynot$ on $G$, say of toll $r$ from $s$ to $\sink$.
      Assume $\persistentprice[\Region]{s} \ge r + 2$.
      In any pebbling that uses at most $r+2$ pebbles over $\ancnode{\sink} \intersect
      \Region$, if $\sink$ is \srclocked then $s$ is \perlocked.
    \end{claim}
    \begin{proof}
      Assume $\sink$ starts to get \srclocked.
      By definition of \srclocked, there are $r + 2$ pebbles on the \highway $\highwaynot$,
      and $s$ has one of the pebbles.
      Since at most $r + 2$ pebbles are used over $\Region$, only $s$ has
      pebble over $\ancnode{s} \intersect \Region $, so $s$ is \perlocked.
      Until $\sink$ is not \srclocked, there is one pebble \emph{properly} in the
      \highway (\ie over $V(\highwaynot) \setminus \{ s \}$).
      Since unlocking $s$ requires $\persistentprice[\Region]{s} \ge r + 2$
      pebbles over $\ancnode{s} \intersect \Region$, until $\sink$ stops being
      \srclocked, $s$ remains \perlocked.
    \end{proof}

    Fix a strategy to surround node $e$, which at some time $t_3$ must pebble
    or unpebble $d_3$.
    At time $t_3$, both node $d_1$ and node $d_2$ have pebbles, hence both
    node $d_1$ and node $d_2$ are \srclocked.
    At the beginning, both node $d_1$ and node $d_2$ are not \srclocked.
    Let $t_1$ (resp.~$t_2$) be the earliest time before time $t_3$ such that
    node $d_1$ (resp.~$d_2$) remains \srclocked between time $t_1$ and time $t_3$.
    Thus node $d_1$ (resp.~$d_2$) is \srclocked at time $t_1$ (resp.~$t_2$).

    \begin{claim}[\srclocked]
      \label{cla:Jamming}
      If the pebbling uses at most $\bank+2$ pebbles over $\Region$ to surround
      $e$, then
      \begin{enumerate}[(i)]
        \item $t_1 < t_2$; and
        \item at time $t_2$, there is exactly one pebble over
          $\ancnode{d_1} \intersect \Region$.
      \end{enumerate}
    \end{claim}
    \begin{proof}
      For Item~(i), if $t_2 < t_1$, then at time $t_1$ there is a pebble on
      the \highway from $\sink_2$ to $d_2$ (as node $d_2$ is already \srclocked),
      and there are $\bank + 2$ pebbles on the \highway from $\sink_1$ to $d_1$
      by definition of \srclocked, for a total of $\bank + 3$ pebbles over
      $\Region$.

      For Item~(ii), there is at least one pebble on the \highway from
      $\sink_1$ to $d_1$ as node $d_1$ is already \srclocked, and there is at most
      one pebble over $\ancnode{d_1} \intersect \Region$ since there are
      at least $(\bank-1) + 2 = \bank + 1$ pebbles on the \highway from
      $\sink_2$ to $d_2$ by definition of \srclocked, and we assumed that at most
      $\bank+2$ pebbles are used over $\Region$.
    \end{proof}

    For Item~(1), assume $\persistentprice[\Region]{\sink_1} \ge \bank + 2$.
    If at most $\bank + 2$ pebbles are used over $\Region$ to surround node $e$,
    then \refclaim{cla:Jamming} shows that at time $t_2$, node $d_1$ is \srclocked (as
    $t_1 < t_2$), and there is exactly one pebble on $\ancnode{d_1}
    \intersect \Region$.
    Since $\persistentprice[\Region]{\sink_1} \ge \bank + 2$ and the \highway
    from $\sink_1$ to $d_1$ has toll $\bank$, \refclaim{cla:JammingAndLocking}
    says that when node $d_1$ is \srclocked (which is the case at time $t_2$),
    node $\sink_1$ is \perlocked.
    Therefore at time $t_2$, there are two pebbles over $\ancnode{d_1}
    \intersect \Region$: one \emph{properly} on the \highway from $\sink_1$ to
    $d_1$ (since $d_1$ is \srclocked); and 
    one on $\ancnode{\sink_1} \intersect \Region$ (since $\sink_1$ is
    \perlocked).
    This contradiction shows that $\surroundprice[\Region]{e} \ge \bank + 3$.

    For Item~(2), assume $\persistentprice[\Region]{\sink_2} \ge \bank + 1$.
    Fix a strategy using at most $\bank+2$ pebbles over $\Region$ to surround
    node $e$, \ie to visit node $d_4$.
    At the end, node $d_4$ is \srclocked;
    at the beginning, node $d_4$ is not \srclocked.
    Let $t_4$ be the earliest time such that node $d_4$ remains \srclocked since
    $t_4$ until the end.
    Thus node $d_4$ is \srclocked at time $t_4$.

    Redefine $t_3$ if necessary, assume it is the last time before $t_4$ such
    that node $d_3$ is pebbled or unpebbled.
    Then time $t_1$ and time $t_2$ are defined (as above) relative to this
    $t_3$, giving $t_1 < t_2 < t_3 < t_4$ (the first inequality is by
    \refclaim{cla:Jamming}).

    Note that at time $t_3$, node $d_3$ is being pebbled:
    to see this, we know that at time $t_4$, the \highway from $d_3$ to $d_4$
    is being \srclocked, so there is a pebble on node $d_3$ by definition of
    \srclocked.
    Since there is no pebble move on node $d_3$ after time $t_3$ and before
    time $t_4$, it follows that $d_3$ is being pebbled at time $t_3$, and
    there is a pebble on node $d_3$ between time $t_3$ and time $t_4$.

    We know that both node $d_1$ and node $d_2$ are \srclocked at time $t_3$.
    In fact, they remain \srclocked between time $t_3$ and time $t_4$:
    to see this, note that to make node $d_1$ not \srclocked takes $\bank + 2$ pebbles over
    $\ancnode{d_1} \intersect \Region$, but there are two pebbles outside
    this region (one on the \highway from $\sink_2$ to $d_2$, and one on $d_3$),
    which cannot be done with at most $\bank + 2$ pebbles over $\Region$.
    Likewise, note that to make node $d_2$ not \srclocked takes $(\bank-1) + 2 = \bank + 1$
    pebbles over $\ancnode{d_2} \intersect \Region$, but there are two
    pebbles outside this region (one on the \highway from $\sink_1$ to $d_1$,
    and one on $d_3$), which cannot be done with at most $\bank + 2$ pebbles
    over $\Region$.
    Therefore, node $d_1$ is \srclocked from time $t_1$ to time $t_4$, and node
    $d_2$ is \srclocked from time $t_2$ to time $t_4$.

    At time $t_1$, the \highway from $\sink_1$ to $d_1$ is getting \srclocked, so
    by definition of \srclocked there are $\bank + 2$ pebbles on the \highway.
    By assumption, at most $\bank + 2$ pebbles are used over $\Region$, so
    there is no pebble over $\ancnode{d_2} \intersect \Region$ at time
    $t_1$.
    Restrict attention to the sub-strategy $\sP'$ between time $t_1$ and time
    $t_4$.
    In the sub-strategy $\sP'$, node $d_1$ remains \srclocked, so at most
    $(\bank+2) - 1 = \bank + 1$ pebbles can be used over $\ancnode{d_2}
    \intersect \Region$.
    Since $\persistentprice[\Region]{\sink_2} \ge (\bank-1) + 2$ and the \highway
    from $\sink_1$ to $d_1$ has toll $\bank-1$, \refclaim{cla:JammingAndLocking}
    says that when node $d_2$ is \srclocked, node $\sink_2$ is \perlocked.
    At time $t_4$,
    \begin{itemize}
      \item node $d_1$ is \srclocked, so there is a pebble properly in the \highway
        from $\sink_1$ to $d_1$;
      \item node $d_2$ is \srclocked, hence node $\sink_2$ is \perlocked, so there is
        a pebble properly in the \highway from $\sink_2$ to $d_2$, and a pebble
        over $\ancnode{\sink_2} \intersect \Region$; and
      \item there are $(\bank-2) + 2 = \bank$ pebbles in the \highway from
        $d_3$ to $d_4$.
    \end{itemize}
    This accounts for $1 + 2 + \bank = \bank + 3$ pebbles over $\Region$.
    This contradiction shows that $\surroundprice[\Region]{e} \ge \bank + 3$.
  \end{proof}

  Recall that we are going to represent the unsatisfiability of a clause by
  increased persistent prices, \ie let $\condition_j$ be the condition that
  clause $\clause_j$ is \emph{satisfied}, and $\bar\condition_j$ be its
  negation, then $\persistentprice[\Region_j]{p_j} = \beta_j + 4 + \ib{
    \bar\condition_j }$ by \refcor{cor:ClauseGadget}.

  \begin{corollary}[Conjunction Gadget]
    \label{cor:ConjunctionGadget}
    Fix a region $\Region$ where $\Region \supseteq V(G)$.
    Assume that for some conditions $\condition_1$ and $\condition_2$
    if holds that
    $\persistentprice[\Region]{\sink_1} = \bank + 1 + \ib{\bar\condition_1}$
    and
    $\persistentprice[\Region]{\sink_2} = \bank + \ib{\bar\condition_2}$.
    Then $\persistentprice[\Region]{e} = \bank + 3 +
    \ib{\overline{\condition_1 \land \condition_2}}$.
  \end{corollary}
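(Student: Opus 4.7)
The plan is a short case analysis that stitches together \reflem{lem:TrueConjunction}, \reflem{lem:AnyConjunction} and \reflem{lem:FalseConjunction}, converting between surrounding and persistent price via \refpr{lem:surrounding-eq-persistent} (which gives $\persistentprice[\Region]{e} = \surroundprice[\Region]{e} + 1$). The hypotheses of the corollary translate directly into the numerical bounds required by those lemmas, so essentially no new argument is needed; the corollary is the ``packaged'' form in which the previous three lemmas will be used later in the reduction.

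First I would split on whether the conjunction $\condition_1 \land \condition_2$ holds. In the \emph{true} case, both $\bar\condition_1$ and $\bar\condition_2$ are false, so the assumption becomes $\persistentprice[\Region]{\sink_1} = \bank+1$ and $\persistentprice[\Region]{\sink_2} = \bank$. I would then apply \reflem{lem:TrueConjunction} to get $\surroundprice[\Region]{e} \le \bank + 2$ and \refitem{lem:AnyConjunction}(1) to get $\surroundprice[\Region]{e} \ge \bank + 2$; combining with \refpr{lem:surrounding-eq-persistent} yields $\persistentprice[\Region]{e} = \bank + 3$, which matches the claim since $\ib{\overline{\condition_1 \land \condition_2}} = 0$.

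In the \emph{false} case at least one of $\bar\condition_1, \bar\condition_2$ holds, so either $\persistentprice[\Region]{\sink_1} = \bank + 2$ or $\persistentprice[\Region]{\sink_2} = \bank + 1$. In both sub-cases the relevant part of \reflem{lem:FalseConjunction} gives $\surroundprice[\Region]{e} \ge \bank + 3$, hence $\persistentprice[\Region]{e} \ge \bank + 4$. For the matching upper bound I would observe that under the hypotheses of the corollary we always have $\persistentprice[\Region]{\sink_1} \le \bank + 2$ and $\persistentprice[\Region]{\sink_2} \le \bank + 1$ (since the Iverson bracket contributes at most~$1$), which is exactly the premise of \refitem{lem:AnyConjunction}(2), yielding $\surroundprice[\Region]{e} \le \bank + 3$ and hence $\persistentprice[\Region]{e} \le \bank + 4$. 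Therefore $\persistentprice[\Region]{e} = \bank + 4$, which again matches the claim since $\ib{\overline{\condition_1 \land \condition_2}} = 1$.

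There is no genuine obstacle here: the only thing to be careful about is bookkeeping, namely (i)~consistently using $\persistentprice = \surroundprice + 1$ whenever switching between the two measures, and (ii)~noting that the ``false'' sub-cases overlap harmlessly when both $\bar\condition_1$ and $\bar\condition_2$ hold, since the conclusion and the applicable upper and lower bounds are identical in that regime. The whole argument is a few lines and requires no further technical machinery beyond what is already established in this section.
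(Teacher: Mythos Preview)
Your proposal is correct and follows exactly the paper's own proof: split on whether $\condition_1 \land \condition_2$ holds, use \reflem{lem:TrueConjunction} and \reflem{lem:AnyConjunction}(1) for the true case, \reflem{lem:FalseConjunction} and \reflem{lem:AnyConjunction}(2) for the false case, converting via \refpr{lem:surrounding-eq-persistent} throughout. There is nothing to add.
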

  \begin{proof}
    If both $\condition_1$ and $\condition_2$ are \true, then
    $\persistentprice[\Region]{e} = \bank + 3$ because persistent price is
    one plus surround price (\refpr{lem:surrounding-eq-persistent}), and the
    upper bound (\reflem{lem:TrueConjunction}) matches the lower bound (Item~(1)
    of \reflem{lem:AnyConjunction}).
    If $\condition_1$ or $\condition_2$ is \false, then
    $\persistentprice[\Region]{e} = \bank + 4$ because the lower bound
    increases (\reflem{lem:FalseConjunction}) to match the new upper bound
    (Item~(2) of \reflem{lem:AnyConjunction}).
  \end{proof}

\subsection{CNF Gadget}
\label{ssst:CNFGadget}

  Assume $\CNF = \clause_1 \land \clause_2 \land \dotsb \land \clause_m$ is a
  conjunction of $m$ clauses.
  Let $\CNF_k \defeq \Land_{1 \le j \le k} \clause_j$ be the conjunction of
  the first $k$ clauses.
  We will construct a gadget $F_k \defeq \gadgetConst{\CNF_k}$ for $\CNF_k$
  with increasing $k$ by successive conjunction of two smaller gadgets,
  then the CNF gadget for $\CNF$ is $\gadgetConst{\CNF} =
  \gadgetConst{\CNF_m}$.

  For $1 \le j \le m$, let $\beta_j \defeq 2j$, then $\beta_j \ge 2$.

  \begin{construction}[CNF Gadget]
    \label{con:CNFGadget}
    Assume for each clause $\clause_j$, $1 \le j \le m$, a clause gadget
    $\gadgetConst{\clause_j}$ is constructed.
    Let $\CNF_k \defeq \Land_{1 \le j \le k} \clause_j$ be the conjunction of
    the first $k$ clauses when $0 \le k \le m$.
    Construct a partial CNF gadget $F_k$ for increasing $k$ as follows.
    Construct $F_0 \defeq T_7$ as the graph with $\visitprice{T_7} =
    \persistentprice{T_7} = 7$ given by \reflem{lem:ChristmasTree}.
    Then for $1 \le k \le m$, construct $F_k \defeq
    \ConjConst{\beta_k}{F_{k-1}}{\gadgetConst{\clause_k}}$ be the conjunction
    gadget of weight $\beta_k$ of the previous partial CNF gadget ($F_{k-1}$)
    and the gadget of clause $k$ ($\gadgetConst{\clause_k}$).
    The CNF gadget $\gadgetConst{\CNF}$ for $\CNF$ is $F_m$.
  \end{construction}

  For the gadget $G \defeq \gadgetConst{\CNF}$, the nodes in the gadget $V(G)$
  contains the nodes of $F_0 = T_7$, the nodes of all clause gadgets
  $\gadgetConst{C_j}$, and the nodes of all intermediate conjunction gadgets.

  \begin{lemma}[CNF Gadget]
    \label{lem:CNFGadget}
    Let $\partassign$ be an assignment, and let $\canonical$ be the
    canonical nodes in all variable gadgets according to $\partassign$
    (\refdef{def:CanonicalNodes}).
    Let $\Region \defeq V(G) \setminus \ancnode{\canonical}$ be the region
    beyond the canonical nodes of $\partassign$.
    Then $\persistentprice[\Region]{F_m} = \beta_m + 7 +
    \ib{\partassign\text{\ falsifies\ }\Gamma}$.
  \end{lemma}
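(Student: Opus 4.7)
The plan is to prove the lemma by induction on $k$, strengthening the statement to
\[
  \persistentprice[\Region]{F_k} = \beta_k + 7 + \ib{\partassign \text{ falsifies } \CNF_k}
\]
for all $0 \le k \le m$, where we adopt the convention $\beta_0 \defeq 0$ (so the base case matches the known price of $T_7$). Note that the lemma statement is the special case $k=m$.

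For the base case $k=0$, the gadget $F_0$ is a Christmas tree $T_7$, which by \reflem{lem:ChristmasTree} has persistent price exactly $7$. Since $F_0$ contains no nodes from any variable gadget, the region restriction $\Region = V(G) \setminus \ancnode{\canonical}$ has no effect on this subgraph, so $\persistentprice[\Region]{F_0} = 7 = \beta_0 + 7$. The empty conjunction $\CNF_0$ is vacuously satisfied, so the indicator term vanishes.

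For the inductive step, $F_k = \ConjConst{\beta_k}{F_{k-1}}{\gadgetConst{\clause_k}}$ is a conjunction gadget with $\sink_1$ the sink of $F_{k-1}$ and $\sink_2 = p_k$ the sink of $\gadgetConst{\clause_k}$. I would apply \refcor{cor:ConjunctionGadget} with conditions $\condition_1 = $ ``$\partassign$ satisfies $\CNF_{k-1}$'' and $\condition_2 = $ ``$\partassign$ satisfies $\clause_k$'', after verifying that the hypotheses on the two input sinks hold with the appropriate weight. The price of $\sink_1$ over $\Region$ is supplied by the inductive hypothesis, while the price of $\sink_2$ over $\Region$ is supplied by \refcor{cor:ClauseGadget} (applied to the subregion $\ancnode[\gadget_k]{p_k}\setminus\ancnode{\canonical_k}$, which coincides with the relevant restriction of $\Region$ to the clause gadget, since $\canonical_k\subseteq\canonical$ and only the literal nodes of $\clause_k$'s variables appear as variable-gadget ancestors in $\gadgetConst{\clause_k}$). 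Choosing $\beta_k$ consistently with the definition $\beta_j=2j$ (and with the clause gadget's own weight parameter) makes $\persistentprice[\Region]{\sink_1}$ and $\persistentprice[\Region]{\sink_2}$ satisfy the required $+1$ and $+0$ offsets in \refcor{cor:ConjunctionGadget}; the corollary then outputs $\persistentprice[\Region]{F_k} = \beta_k+7+\ib{\overline{\condition_1\wedge\condition_2}}$, and the logical identity $\condition_1\wedge\condition_2 \Leftrightarrow \partassign\text{ satisfies }\CNF_k$ closes the induction.

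The main obstacle is bookkeeping: one must check that the weight $\beta_k$ of the conjunction gadget is indeed the value that makes the inductive hypothesis on $F_{k-1}$ and the output of \refcor{cor:ClauseGadget} on $\gadgetConst{\clause_k}$ fit the hypotheses of \refcor{cor:ConjunctionGadget} exactly. A secondary but nontrivial point is to verify that the region $\Region$ used at stage $k$ remains a valid region (\ie a superset of $V(G)$ for the newly introduced conjunction gadget and disjoint from $\ancnode{\canonical}$ in the senses needed by the clause and conjunction corollaries), which follows because the internal nodes $d_1,d_2,d_3,d_4,e_k$ and the \highway{}s attached to them have no ancestors among the canonical variable-gadget nodes, and the only nodes in the overall construction that can lie in $\ancnode{\canonical}$ are already removed uniformly at every inductive stage.
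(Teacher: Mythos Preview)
Your proposal is correct and follows essentially the same approach as the paper: induction on $k$, with base case $F_0=T_7$ and inductive step via \refcor{cor:ConjunctionGadget} fed by the inductive hypothesis for $\sink_1$ and \refcor{cor:ClauseGadget} for $\sink_2$. The paper makes the bookkeeping explicit by plugging $\bank=\beta_k+4$ into \refcor{cor:ConjunctionGadget} (so that $\persistentprice[\Region]{F_{k-1}}=\beta_{k-1}+7=\beta_k+5=\bank+1$ and $\persistentprice[\Region]{p_k}=\beta_k+4=\bank$ in the satisfying case), whereas you leave this as ``choosing $\beta_k$ consistently''; conversely you are more careful than the paper about why the global region $\Region$ restricts to the region $R_k$ required by \refcor{cor:ClauseGadget}.
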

  \begin{proof}
    For $1 \le j \le m$, let $\condition_j$ be the condition that clause $j$
    ($\clause_j$) is satisfied by $\partassign$.
    For $0 \le k \le m$, let $\condition'_k$ be the condition that the first
    $k$ clauses ($\CNF_k$) are satisfied by $\partassign$.
    We show by induction that $\persistentprice[\Region]{F_k} = \beta_k + 7 +
    \ib{\bar\condition'_k}$.

    When $k = 0$, $\CNF_k$ is satisfied vacuously, so $\condition'_0$ is
    \true.
    The base case holds as $\persistentprice[\Region]{F_0} =
    \persistentprice{T_7} = 7$.

    For the general case $1 \le k \le m$, plug $\bank = \beta_k + 4$ into
    \refcor{cor:ConjunctionGadget}.
    Since induction hypothesis gives $\persistentprice[\Region]{ F_{k-1} } =
    \beta_{k-1} + 7 + \ib{\bar\condition'_{k-1}} = \beta_k + 5 +
    \ib{\bar\condition'_{k-1}}$, and
    \refcor{cor:ClauseGadget} gives $\Persistentprice[\Region]{
      \gadgetConst{\clause_k} } = \beta_k + 4 + \ib{\bar\condition_k}$,
    it follows that $\persistentprice[\Region]{ F_k } = \beta_k + 7 +
    \ib{\bar\condition'_k}$, because $\condition'_k = \condition'_{k-1} \land
    \condition_k$.
  \end{proof}

\subsection{Existential Quantifier Gadget}
\label{ssst:ExistentialQuantifierGadget}

  Assume that we already have the gadget $\gadgetConst{\QBF_{i-1}}$ and
  that the $i\th$ inner-most quantifier is existential, \ie
  $\quantifier_i = \exists$.
  This quantifier refers to $\variable_{i}$, we set the
  parameter $\bank_i \defeq \gamma_i - 2$ for the corresponding variable gadget.
  We construct $\gadgetConst{\QBF_i}$ as follows.

  \begin{construction}[Existential Quantifier Gadget]
    \label{con:ExistentialQuantifierGadget}
    Let $q_{i-1}$ denote the sink of $\gadgetConst{\QBF_{i-1}}$.
    Construct nodes $f_i$, $g_i$, $q_i$, and edges $(\variable_i, g_i)$,
    $(\bar\variable_i, g_i)$, $(f_i, q_i)$, $(g_i, q_i)$.
    Add a \highway of toll $\gamma_i - 5$ from $q_{i-1}$ to $f_i$.
  \end{construction}

\begin{figure}[ht]
  \begin{center}
    \begin{tikzpicture}
      \GadgetExistentialQuantifierSingle{i}{i-1}{\bank_i}{%
        \scriptstyle\gamma_i - 5}{center}{minimum size=1.4cm}
    \end{tikzpicture}
  \end{center}
  \caption{Existentially quantified variable $\exists \variable_i$.}
  \label{fig:ExistentialQuantifier}
\end{figure}
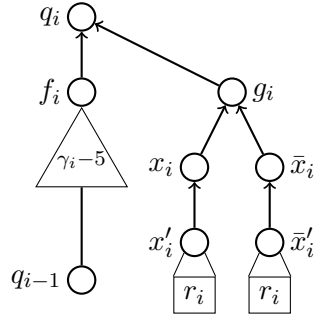

  For the gadget $\gadget \defeq \gadgetConst{\QBF_i}$, the nodes in the
  gadget $V(\gadget)$ contains the nodes in the previous gadget $V\bigl(
  \gadgetConst{\QBF_{i-1}} \bigr)$, the new nodes $f_i, g_i, q_i$ and the
  nodes in the \highway from $q_{i-1}$ to $f_i$.

  \begin{lemma}[Existential Upper Bound]
    \label{lem:ExistentialUpperBound}
    Assume \reflem{lem:QuantifierGadget} holds for $i-1$.
    \begin{enumerate}
      \item If $\QBF_i\restriction_\partassign$ is \true, then
        $\surroundprice[\Region]{q_i} \le \gamma_i - 1$.
      \item If $\QBF_i\restriction_\partassign$ is \false, then
        $\surroundprice[\Region]{q_i} \le \gamma_i$.
    \end{enumerate}
  \end{lemma}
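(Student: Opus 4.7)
The plan is to build explicit reversible pebbling strategies that surround $q_i$ at the claimed cost in each case, by invoking the induction hypothesis of \reflem{lem:QuantifierGadget} at index $i-1$ on a well-chosen extension $\partassign_b \defeq \partassign \union \{x_i = b\}$. For Item~(1), pick $b$ such that $\QBF_{i-1}\restriction_{\partassign_b}$ is true; for Item~(2) either choice works, since $\partassign_b$ necessarily falsifies $\QBF_{i-1}$ under both assignments. By the symmetry of the variable gadget assume $b = 1$, so we aim for the canonical true position $\{x_i, \bar x_i'\}$ on $\gadgetConst{x_i}$.

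The strategy proceeds in six phases. Phase (a): persistently pebble $x_i$ through its literal gadget, peaking at $\bank_i + 1 = \gamma_i - 1$ pebbles by \reflem{lem:LiteralGadget}. Phase (b): persistently pebble $\bar x_i'$ through its literal gadget, which uses $\bank_i$ pebbles in the tree $T_{\bank_i}$ plus the one carried on $x_i$, peaking again at $\gamma_i - 1$. Now $\{x_i, \bar x_i'\}$ are the only pebbles on the board. Phase (c): invoke the induction hypothesis on $\partassign_b$ to persistently pebble $q_{i-1}$ inside $\Region_1$; this costs $\persistentprice[\Region_1]{q_{i-1}} = \gamma_{i-1}$ pebbles in Item~(1) or $\gamma_{i-1} + 1$ in Item~(2), which together with the two carried canonical pebbles peaks at $\gamma_i - 1$ or $\gamma_i$ respectively in $\Region$. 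Phase (d): persistently pebble $f_i$ through the \highway of toll $\gamma_i - 5$; by \reflem{lem:Highway} this uses at most $(\gamma_i - 5) + 1$ pebbles in $\tilde R$ (with $q_{i-1}$ available as the free source), which together with the three carried pebbles $\{x_i, \bar x_i', q_{i-1}\}$ peaks at $\gamma_i - 1$ in $\Region$. Phases (e) and (f): pebble $\bar x_i$ (legal since $\bar x_i'$ is on the board) and then $g_i$ (legal since both literals are now on the board); these two placements raise the pebble count from $4$ to $6$, after which $q_i$ is surrounded by $\{f_i, g_i\}$ and we are done.

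The routine part is verifying the reversibility of each move and tallying the pebble counts, which is immediate from \reflem{lem:LiteralGadget}, \reflem{lem:Highway}, and the inductive hypothesis. The step I expect to require the most care is the accounting in phase~(c): one must justify that the pebbles on $\{x_i, \bar x_i'\}$ are \emph{not} counted inside $\Region_1$ (so they do not clash with the inductive cost) but \emph{are} counted inside $\Region$ (so they must be paid for out of our budget). This follows from $\canonical_1 = \canonical \union \{x_i, \bar x_i'\}$ via \refdef{def:CanonicalNodes}, together with the fact that the quantifier gadget only adds new descendants of $q_{i-1}$, so that $\Region \intersect V(\gadgetConst{\QBF_{i-1}})$ and $\Region_1$ differ precisely by $\ancnode{\{x_i, \bar x_i'\}}$. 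A secondary check is that the peak of six pebbles reached in phases~(e)--(f) is always dominated by the overall bound, which holds because $\gamma_i - 1 \ge \gamma_0 - 1 = 2m + 6 \ge 6$.
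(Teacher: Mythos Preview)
The proposal is correct and follows essentially the same strategy as the paper: place the variable gadget into the canonical true position, use the induction hypothesis to persistently pebble $q_{i-1}$ inside $\Region_1$, run the \highway to $f_i$, and finally pebble $\bar x_i$ and $g_i$. Your accounting differs only cosmetically (you invoke $\persistentprice[\tilde R]{f_i} = (\gamma_i - 5) + 1$ rather than $\persistentprice[R]{f_i} = (\gamma_i - 5) + 2$, and you explicitly verify the final count of six pebbles is dominated), and your discussion of why $\Region$ and $\Region_1$ differ exactly by $\ancnode{\{x_i, \bar x_i'\}}$ is the same justification the paper gives parenthetically in its step~(ii).
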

  \begin{proof}
    For Item~(1), by assumption $\QBF_i\restriction_\partassign\ =
    \exists\variable_i \QBF_{i-1}\restriction_\partassign$ is \true, so there
    is an assignment of $\variable_i$ to \true or to \false to satisfy
    $\QBF_{i-1}$, \ie
    $\QBF_{i-1}\restriction_{\partassign_1}$ is \true or
    $\QBF_{i-1}\restriction_{\partassign_0}$ is \true.
    Assume the former by symmetry.
    Since $\QBF_{i-1}\restriction_{\partassign_1}$ is \true, by the assumption
    that \reflem{lem:QuantifierGadget} holds for $i-1$, we have
    $\persistentprice[\Region_1]{ q_{i-1} } = \gamma_{i-1} = \gamma_i - 3$.
    Consider the following strategy to surround $q_i$ with at most $\gamma_i -
    1$ pebbles over $\Region$:
    \begin{enumerate}[(i)]
      \item Put $\variable_i$ into the canonical \true position with $\bank_i
        + 1 = \gamma_i - 1$
        pebbles by \reflem{lem:VariableAssignment}.
        Now node $\variable_i$ and node $\bar\variable_i'$ have pebbles;
      \item Persistently pebble node $q_{i-1}$ using at most $\gamma_i - 3$
        pebbles over the new region $\Region_1$.
        Now $\variable_i$, $\bar\variable_i'$ and $q_{i-1}$ have pebbles.
        Over the old region $\Region$, at most $(\gamma_i - 3) + 2 = \gamma_i
        - 1$ pebbles are used (this step is legal as $\Region = \Region_1
        \union \bigl( \ancnode{ \variable_i } \union \ancnode{
        \bar\variable_i' } \bigr)$ and $\variable_i$ and $\bar\variable_i$
        have pebbles);
      \item Persistently pebble the \highway from $q_{i-1}$ to $f_i$.
        Now $\variable_i$, $\bar\variable_i'$, $q_{i-1}$ and $f_i$ have pebbles.
        At most $(\gamma_i - 5) + 2 = \gamma_i - 3$ pebbles are used over the
        \highway (including the pebble on node $q_{i-1}$) by
        \reflem{lem:Highway}, so at most $(\gamma_i - 3) + 2 = \gamma_i - 1$
        pebbles are used over $\Region$; and
      \item Pebble node $\bar\variable_i$ then node $g_i$.
        Now the six nodes $\variable_i$, $\bar\variable_i'$,
        $\bar\variable_i$, $q_{i-1}$, $f_i$, $g_i$ have pebbles.
    \end{enumerate}

    For Item~(2),
    since $\QBF_{i-1}\restriction_{\partassign_1}$ is \false, by the assumption
    that \reflem{lem:QuantifierGadget} holds for $i-1$, we have
    $\persistentprice[\Region_1]{ q_{i-1} } = \gamma_{i-1} + 1
    = (\gamma_i - 3) + 1
    = \gamma_i - 2$.
    We run the same strategy as in Item~(1) to surround $q_i$, using at most
    $\gamma_i$ pebbles over $\Region$ (only Step~(ii) uses one more pebble).
  \end{proof}

  \begin{lemma}[Existential Lower Bound]
    \label{lem:ExistentialLowerBound}
    Assume \reflem{lem:QuantifierGadget} holds for $i-1$.
    \begin{enumerate}
      \item $\surroundprice[\Region]{q_i} \ge \gamma_i - 1$.
      \item If $\surroundprice[\Region]{q_i} \le \gamma_i - 1$, then
        $\QBF_i\restriction_\partassign$ is \true.
    \end{enumerate}
  \end{lemma}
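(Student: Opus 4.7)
The plan is to analyze an arbitrary pebbling strategy realizing $\surroundprice[\Region]{q_i}$ and to identify a time at which enough pebbles must be present simultaneously in disjoint subregions of~$\Region$ to force both lower bounds. The central structural observation is that in order to surround $q_i$, both $f_i$ and $g_i$ must be pebbled, and for $g_i$ to be placed both literals $\variable_i$ and $\bar\variable_i$ must simultaneously carry pebbles, so both literal gadgets of $\variable_i$ are forced into \true\ position at that instant. By \reflem{lem:LiteralTransition} neither literal can transition directly from \tvundef\ to \true, so at least one of the two---say $\bar\variable_i$, without loss of generality---must have passed through \false\ position, and the moment of that transition locks $\bank_i = \gamma_i - 2$ pebbles into the subregion $\ancnode[\Literal]{\bar\variable_i'}$ of the literal gadget~$\Literal$ of $\bar\variable_i$.

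For Part~(1), I will combine this observation with the cost of putting a pebble on $f_i$. To pebble $f_i$ the strategy must traverse the \highway from $q_{i-1}$, which by \reflem{lem:Highway} and \reflem{lem:SourceDifference} forces at some time a configuration of at least $\gamma_i - 3$ pebbles on the \highway region together with $\ancnode{q_{i-1}}$, and the induction hypothesis \reflem{lem:QuantifierGadget} applied to~$i-1$ guarantees that the subgadget cost $\persistentprice[\Region_b]{q_{i-1}} \geq \gamma_{i-1} = \gamma_i - 3$ is indeed paid in the canonical region $\Region_b$ determined by whichever extension $\partassign_b$ of~$\partassign$ is in force at that time. Combining the literal-transition charge on $\ancnode{\bar\variable_i'}$ with the pebble on the non-transitioning literal $\variable_i$ and with the highway-plus-subgadget charge, all of which live in pairwise disjoint parts of~$\Region$, will yield the desired lower bound of $\gamma_i - 1$ at a carefully chosen critical moment.

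For Part~(2) I argue by contrapositive. If $\restrict{\QBF_i}{\partassign}$ is \false, then both extensions $\partassign_0, \partassign_1$ falsify $\QBF_{i-1}$, so by the inductive version of \reflem{lem:QuantifierGadget} we have $\persistentprice[\Region_b]{q_{i-1}} = \gamma_{i-1} + 1 = \gamma_i - 2$ for both $b \in \{0,1\}$. Feeding this improved subgadget bound into the counting from Part~(1) raises the critical-moment total by one to at least~$\gamma_i$, contradicting $\surroundprice[\Region]{q_i} \le \gamma_i - 1$. The main obstacle throughout will be the temporal bookkeeping: one must pinpoint a single time at which the literal-transition charge, the highway-plus-subgadget charge, and the auxiliary pebbles on $\variable_i$, $g_i$ and $f_i$ all coexist in disjoint parts of~$\Region$, and one must argue that at the relevant time the variable gadget for $\variable_i$ is canonically in one of the two assignments, so that the inductive hypothesis applies to the correct region $\Region_b$. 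Getting the additive $\pm 1$ exactly right across the two parts of the lemma is what forces the careful choice of critical moment.
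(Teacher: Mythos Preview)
Your overall intuition is right---literal transitions cost $\bank_i=\gamma_i-2$ pebbles and the \highway{} plus subgadget cost more---but the plan has a factual slip and is missing the key structural idea that makes the paper's argument go through.

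First, the slip: when $g_i$ is pebbled, both nodes $\variable_i,\bar\variable_i$ carry pebbles, but this does \emph{not} force both literals into \true\ position. A literal is in \false\ position whenever $\literal'$ is \vislocked, regardless of whether $\literal$ itself has a pebble. The correct conclusion is only that neither literal is in \tvundef\ position.

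Second, and more importantly, your plan is to find a single ``critical moment'' at which the literal-transition charge, the \highway{} charge, and auxiliary pebbles all coexist in disjoint regions. You correctly identify the temporal bookkeeping as the main obstacle, but you do not say how to resolve it---and in fact the charges you name do \emph{not} naturally coexist. The paper avoids this entirely via a \emph{Clearance} argument: letting $t_1$ be the time of the \emph{later} of the two last literal transitions, one shows that at time $t_1$ there are $\bank_i=\gamma_i-2$ pebbles on one literal's ancestors (\reflem{lem:LiteralTransition}) plus at least one on the other literal gadget (it is already out of \tvundef\ since its last transition was earlier). This alone gives $\gamma_i-1$ pebbles and proves Item~(1)---the \highway{} and the inductive hypothesis are not needed here at all.

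The real payoff of Clearance is that these $\gamma_i-1$ pebbles saturate the budget, so the region $\Region_0$ beyond the (now determined) canonical nodes is \emph{completely empty} at time $t_1$. From $t_1$ onward, two pebbles stay pinned in the variable gadget, leaving at most $\gamma_i-3=\gamma_{i-1}$ for $\Region_0$. The paper then uses the molding property of the \highway{} (\reflem{lem:Molding}) to find a later time $t_4$ at which exactly one pebble, on $q_{i-1}$, lies in $\Region_0\cap\ancnode{q_{i-1}}$. The sub-strategy from $t_1$ to $t_4$ is therefore a \emph{persistent} pebbling of $q_{i-1}$ over $\Region_0$ in space $\gamma_{i-1}$, and the inductive hypothesis yields that $\QBF_{i-1}\restriction_{\partassign_0}$ is \true. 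This extraction step---reset the region, then carve out a legal sub-pebbling---is the idea your proposal is missing; your contrapositive for Item~(2) cannot be completed without it.
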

  \begin{proof}
    Fix a strategy to surround $q_i$ using at most $\gamma_i - 1$ pebbles over
    $\Region$.
    At the end, there is a pebble on the \highway from $q_{i-1}$ to $f_i$, and
    there is a pebble on $g_i$.
    Let $t_2$ be the earliest time such that since $t_2$ there is at least one
    pebble on the \highway from $q_{i-1}$ to $f_i$.
    Let $t_3$ be the earliest time such that since $t_3$ there is a pebble on
    $g_i$.

    At time $t_3$, node $g_i$ is being pebbled, so both nodes $\variable_i$
    and $\bar\variable_i$ have pebbles, and none of literal $\variable_i$ or
    $\bar\variable_i$ is in \tvundef position.
    Let $t_0$ (resp.~$t_1$) be the last time before time $t_3$ such that
    literal $\bar\variable_i$ (resp.~literal $\variable_i$) has a transition
    (\refdef{def:LiteralPosition}).
    Note that neither literal $\bar\variable_i$ nor $\variable_i$ can have a
    transition after time $t_3$:
    to make a transition for literal $\bar\variable_i$ (resp.~$\variable_i$)
    takes $r_i = \gamma_i - 2$ pebbles on the ancestors of node
    $\bar\variable_i'$ (resp.~$\variable_i'$) by \reflem{lem:LiteralTransition},
    but there is a pebble on the other literal gadget, \ie on
    $\literalConst{r_i}{\variable_i}$
    (resp.~$\literalConst{r_i}{\bar\variable_i}$),
    and there is a pebble on $g_i$, which cannot be done with at most
    $\gamma_i - 1$ pebbles over $\Region$.
    So time $t_0$ (resp.~$t_1$) is in fact the last time that literal
    $\bar\variable_i$ (resp.~$\variable_i$) has a transition, and
    literal $\bar\variable_i$ (resp~$\variable_i$) is not in \tvundef position
    since $t_0$ (resp.~$t_1$).

    Assume $t_0 < t_1$ by symmetry (in the rest of this argument).
    \begin{claim}[Clearance]
      \label{cla:Clearance}
      At time $t_1$, there is no pebble over $\Region_0 = \Region \setminus
      \bigl( \ancnode{ \variable_i' } \union \ancnode{ \bar\variable_i
      } \bigr)$.
    \end{claim}
    \begin{proof}
      At time $t_1$, there is a transition of literal $\variable_i$,
      accounting for $r_i = \gamma_i - 2$ pebbles on $\ancnode{
      \variable_i' }$ by \reflem{lem:LiteralTransition}.
      And there is a pebble on the other literal gadget
      $\literalConst{r_i}{\bar\variable_i}$, because literal
      $\bar\variable_i$ is not in \tvundef position.
      This accounts for at least $\gamma_i - 1$ pebbles over $\Region$.
    \end{proof}

    The proof of \refclaim{cla:Clearance} establishes Item~(1).

    By \refclaim{cla:Clearance}, we know that since time $t_1$ literal
    $\variable_i$ is not in \true position, hence in \false position.
    As there is no transition of literals $\bar\variable_i$ or
    $\variable_i$ after time $t_1$, there is
    a pebble on $\ancnode{ \bar\variable_i }$ and
    a pebble on $\ancnode{ \variable_i' }$
    since time $t_1$.
    Over the region $\Region_0$, there are at most $(\gamma_i - 1) - 2 =
    \gamma_i - 3$ pebbles since time $t_1$.
    Note that region $\Region_0$ is associated with the $(i-1)$-assignment
    $\partassign_0$.

    By \refclaim{cla:Clearance}, we have $t_0 < t_1 < t_2$, where $t_2$ is defined
    (in the first paragraph of this proof) as the earliest time since
    which there is at least one pebble on the \highway from $q_{i-1}$ to $f_i$.
    Note that at time $t_2 - 1$ there is no pebble on the \highway from
    $q_{i-1}$ to $f_i$, but at the end there is a pebble on $f_i$.
    The sub-strategy since time $t_2-1$ visits $f_i$ when restricted to the
    \highway from $q_{i-1}$ to $f_i$, so by \reflem{lem:Molding}, there is a time
    $t_4$ after $t_2$ such that there are $(\gamma_i - 5) + 2 = \gamma_i - 3$
    pebbles over the \highway, including one pebble on $q_{i-1}$.
    As a result, over the region $\Region_0 \intersect \ancnode{q_{i-1}}$,
    there is only one pebble at time $t_4$, which is on node $q_{i-1}$.
    The sub-strategy from time $t_1$ to $t_4$ persistently pebble node
    $q_{i-1}$ over the region $\Region_0 \intersect \ancnode{q_{i-1}}$
    using $\gamma_i - 3 = \gamma_{i-1}$ pebbles, so
    $\Persistentprice[\Region_0]{ \gadgetConst{\QBF_{i-1}} } \le
    \gamma_{i-1}$.
    By the assumption that \reflem{lem:QuantifierGadget} holds for $i-1$, we
    know $\QBF_{i-1}\restriction_{\partassign_0}$ is \true.
    As a result, $\exists \variable_i \QBF_{i-1}\restriction_\partassign\ =
    \QBF_i\restriction_\partassign$ is \true, giving Item~(2).
  \end{proof}

  \begin{lemma}[Existential Quantifier Gadget]
    \label{lem:ExistentialQuantifierGadget}
    Assume that \reflem{lem:QuantifierGadget} holds for $i-1$.
    We have
    $\Persistentprice[\Region]{ \gadgetConst{\QBF_i} } = \gamma_i + \ib{
      \QBF_i \restriction_\partassign\text{is \false}}$.
  \end{lemma}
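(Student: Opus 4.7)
The plan is to combine the Existential Upper Bound lemma (\reflem{lem:ExistentialUpperBound}) and the Existential Lower Bound lemma (\reflem{lem:ExistentialLowerBound}) via the identity $\persistentprice{v} = \surroundprice{v} + 1$ from \refpr{lem:surrounding-eq-persistent}. Since $q_i$ is the sink of $\gadgetConst{\QBF_i}$, we have $\Persistentprice[\Region]{\gadgetConst{\QBF_i}} = \surroundprice[\Region]{q_i} + 1$, so it suffices to establish the two equalities
\begin{equation*}
  \surroundprice[\Region]{q_i} =
  \begin{cases}
    \gamma_i - 1 & \text{if $\QBF_i\restriction_\partassign$ is \true,} \\
    \gamma_i     & \text{if $\QBF_i\restriction_\partassign$ is \false.}
  \end{cases}
\end{equation*}

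For the true case, the upper bound is exactly Item~(1) of \reflem{lem:ExistentialUpperBound}, while the lower bound is Item~(1) of \reflem{lem:ExistentialLowerBound}, which gives $\surroundprice[\Region]{q_i} \ge \gamma_i - 1$ unconditionally. For the false case, the upper bound is Item~(2) of \reflem{lem:ExistentialUpperBound}, and the matching lower bound $\surroundprice[\Region]{q_i} \ge \gamma_i$ follows from the contrapositive of Item~(2) of \reflem{lem:ExistentialLowerBound}: if $\surroundprice[\Region]{q_i} \le \gamma_i - 1$ then $\QBF_i\restriction_\partassign$ would be \true, contradicting our assumption that it is \false.

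Since both lemmas on which we rely were stated under the hypothesis that \reflem{lem:QuantifierGadget} holds for $i-1$, the same assumption carries through. No new combinatorial work is needed: all the delicate reasoning about \vislocked and \perlocked configurations, the ordering $t_0 < t_1 < t_2$, and the extraction of a persistent pebbling of $q_{i-1}$ over the region $\Region_0$ has already been done in the proofs of the upper and lower bound lemmas. The only subtlety worth noting is that Item~(1) of \reflem{lem:ExistentialLowerBound} is stated unconditionally, which is what allows us to conclude equality in the true case without needing a separate lower bound proof there; and that in the false case we need to use the contrapositive direction of Item~(2) rather than a direct lower bound argument. The proof itself is therefore essentially a two-line case analysis followed by application of \refpr{lem:surrounding-eq-persistent}.
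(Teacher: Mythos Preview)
Your proposal is correct and follows essentially the same approach as the paper: reduce to surrounding price via \refpr{lem:surrounding-eq-persistent}, then in each case match the appropriate item of \reflem{lem:ExistentialUpperBound} against the corresponding item of \reflem{lem:ExistentialLowerBound}. Your explicit mention that the false-case lower bound is obtained via the contrapositive of Item~(2) is a slight elaboration on the paper's terser ``upper bound matches the lower bound,'' but the argument is the same.
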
 
  \begin{proof}
    Since persistent price is one plus surrounding price
    (\refpr{lem:surrounding-eq-persistent}), it suffices to show that
    $\Surroundprice[\Region]{ \gadgetConst{\QBF_i} } = \gamma_i - 1 + \ib{
      \QBF_i \restriction_\partassign\text{is \false}}$.
    If $\QBF_i \restriction_\partassign$ is \true, then
    $\Surroundprice[\Region]{ \gadgetConst{\QBF_i} } = \gamma_i - 1$, as
    the upper bound (\reflem{lem:ExistentialUpperBound})
    matches the lower bound (\reflem{lem:ExistentialLowerBound}).
    If $\QBF_i \restriction_\partassign$ is \false, then
    $\Persistentprice[\Region]{ \gadgetConst{\QBF_i} } = \gamma_i$, as
    the upper bound (\reflem{lem:ExistentialUpperBound})
    matches the lower bound (\reflem{lem:ExistentialLowerBound}).
  \end{proof}

\subsection{Universal Quantifier Gadget}
\label{sst:UniversalQuantifierGadget}

  Assume that we already have the gadget $\gadgetConst{\QBF_{i-1}}$ and
  that the $i\th$ inner-most quantifier is existential, \ie
  $\quantifier_i = \forall$.
  This quantifier refers to $\variable_{i}$, we set the
  parameter $\bank_i \defeq \gamma_i - 3$ for the corresponding variable gadget.
  We construct $\gadgetConst{\QBF_i}$ as follows.

  \begin{construction}[Universal Quantifier Gadget]
    \label{con:UniversalQuantifierGadget}
    Let $q_{i-1}$ denote the sink of $\gadgetConst{\QBF_{i-1}}$.
    Construct nodes
    $f_i'$, $\bar f_i'$,
    $f_i$, $\bar f_i$,
    $g_i$, $\bar g_i$,
    $h_i$, $\bar h_i$,
    $q_i$,
    and edges
    $(\variable_i, f_i')$, $(\bar\variable_i', f_i')$,
      $(f_i, h_i)$, $(g_i, h_i)$,
    $(\bar\variable_i, \bar f_i')$, $(\variable_i', \bar f_i')$,
      $(\bar f_i, \bar h_i)$, $(\bar g_i, \bar h_i)$,
    $(h_i, q_i)$, $(\bar h_i, q_i)$.
    Add
    a \highway of toll $\gamma_i - 6$ from $f_i'$ to $f_i$,
    a \highway of toll $\gamma_i - 6$ from $\bar f_i'$ to $\bar f_i$,
    a \highway of toll $\gamma_i - 7$ from $q_{i-1}$ to $g_i$,
    and
    a \highway of toll $\gamma_i - 7$ from $q_{i-1}$ to $\bar g_i$.
  \end{construction}

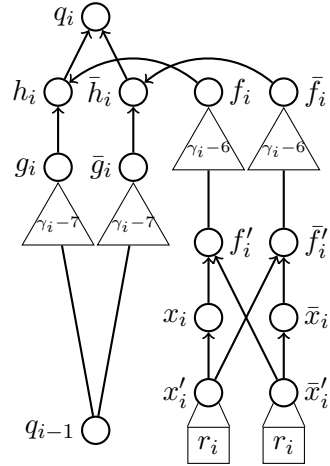
\begin{figure}[ht]
  \begin{center}
    \begin{tikzpicture}
      \GadgetUniversalQuantifierSingle{i}{i-1}{\bank_i}{%
        \scriptscriptstyle\gamma_i - 6}{%
        \scriptscriptstyle\gamma_i - 7}{center}{minimum size=1.1cm}
    \end{tikzpicture}
  \end{center}
  \caption{Universally quantified variable $\forall \variable_i$.}
  \label{fig:UniversalQuantifier}
\end{figure}

  For the gadget $\gadget \defeq \gadgetConst{\QBF_i}$, the nodes in the
  gadget $V(\gadget)$ contains the nodes in the previous gadget $V\bigl(
  \gadgetConst{\QBF_{i-1}} \bigr)$, the new nodes
  $f_i'$, $\bar f_i'$,
  $f_i$, $\bar f_i$,
  $g_i$, $\bar g_i$,
  $h_i$, $\bar h_i$,
  $q_i$,
  and
  the nodes in the four \highway{}s.

  \begin{lemma}[One-Sided Upper Bound]
    \label{lem:OneSidedUpperBound}
    Assume \reflem{lem:QuantifierGadget} holds for $i-1$.
    \begin{enumerate}
      \item If $\QBF_{i-1}\restriction_{\partassign_1}$ is \true, then using
        at most $\gamma_i - 2$ pebbles over $\Region$, we can leave pebbles
        on nodes $\variable_i$, $\bar\variable_i'$, $f_i$, $q_{i-1}$, $g_i$.
      \item If $\QBF_{i-1}\restriction_{\partassign_0}$ is \true, then using
        at most $\gamma_i - 2$ pebbles over $\Region$, we can leave pebbles on
        nodes $\bar\variable_i$, $\variable_i'$, $\bar f_i$, $q_{i-1}$, $\bar
        g_i$.
      \item If $\QBF_{i-1}\restriction_{\partassign_1}$ is \false, then using
        at most $\gamma_i - 1$ pebbles over $\Region$, we can leave pebbles on
        nodes $\variable_i$, $\bar\variable_i'$, $f_i$, $q_{i-1}$, $g_i$.
      \item If $\QBF_{i-1}\restriction_{\partassign_0}$ is \false, then using
        at most $\gamma_i - 1$ pebbles over $\Region$, we can leave pebbles on
        nodes $\bar\variable_i$, $\variable_i'$, $\bar f_i$, $q_{i-1}$, $\bar
        g_i$.
    \end{enumerate}
  \end{lemma}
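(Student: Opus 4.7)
The plan is to prove Items~1 and 3 explicitly; Items~2 and 4 will then follow by the symmetry of $\gadgetConst{\QBF_i}$ that swaps $\variable_i \leftrightarrow \bar\variable_i$ together with the primed/unprimed barred nodes, and uses $\partassign_0$ in place of $\partassign_1$. Both items are witnessed by a single four-phase reversible pebbling of $\Region$, differing only in the cost of one intermediate phase, which is controlled by the inductive instance of \reflem{lem:QuantifierGadget} applied to $\gadgetConst{\QBF_{i-1}}$.

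Set $\partassign_1 \defeq \partassign \cup \{\variable_i=\true\}$, let $\canonical_1$ be its canonical set, and set $\Region_1' \defeq V(\gadgetConst{\QBF_{i-1}}) \setminus \ancnode{\canonical_1}$. The four phases are: (i)~using \reflem{lem:VariableAssignment}, put $\variable_i$ into the canonical \true position, at cost at most $\bank_i+1 = \gamma_i - 2$, leaving pebbles on $\variable_i$ and $\bar\variable_i'$; (ii)~since $\prednode{f_i'}=\{\variable_i,\bar\variable_i'\}$ are now pebbled, pebble $f_i'$ and persistently pebble $f_i$ along the toll-$(\gamma_i-6)$ \highway from $f_i'$ to $f_i$, which by \reflem{lem:Highway} uses at most $\gamma_i-4$ pebbles inside the \highway region, for a total of $\gamma_i - 2$; (iii)~persistently pebble $q_{i-1}$ over $\Region_1'$ while keeping the three external pebbles on $\variable_i,\bar\variable_i',f_i$ alive; (iv)~persistently pebble $g_i$ along the toll-$(\gamma_i-7)$ \highway from $q_{i-1}$ to $g_i$, which by \reflem{lem:Highway} adds at most $\gamma_i-6$ pebbles inside the \highway excluding the source $q_{i-1}$, giving a total of $\gamma_i - 2$. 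The final configuration is exactly $\{\variable_i,\bar\variable_i',f_i,q_{i-1},g_i\}$, as required.

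The two items diverge only in phase~(iii): by induction, $\Persistentprice[\Region_1']{q_{i-1}} = \gamma_{i-1} = \gamma_i - 5$ when $\QBF_{i-1}\restriction_{\partassign_1}$ is \true (Item~1), keeping the total at $\gamma_i - 2$, whereas it equals $\gamma_{i-1}+1 = \gamma_i - 4$ when the restriction is \false (Item~3), pushing that phase's total to $\gamma_i - 1$ while every other phase stays bounded by $\gamma_i - 2 \leq \gamma_i - 1$. The principal point requiring care---and the main obstacle I anticipate---is a clean accounting of which region each alive pebble belongs to at each phase boundary: $\variable_i$ and $\bar\variable_i'$ must lie in $\ancnode{\canonical_1}$ (hence outside $\Region_1'$ and outside both \highway regions), $f_i$ must lie outside $V(\gadgetConst{\QBF_{i-1}})$ and outside the $q_{i-1}$-to-$g_i$ \highway, and $q_{i-1}$ must be exploited as an external source of the phase-(iv) \highway via the ``excluding source'' bound of \reflem{lem:Highway}. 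These disjointness statements should be immediate from the layered construction of $\gadgetConst{\QBF_i}$, so once they are recorded the bookkeeping $2 + (\gamma_i-4) = 3 + (\gamma_i-5) = 4 + (\gamma_i-6) = \gamma_i - 2$ goes through.
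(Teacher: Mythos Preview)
Your proposal is correct and follows essentially the same four-phase strategy as the paper's proof: set the variable to the canonical \true position, persistently pebble $f_i$ through its \highway (then implicitly clear $f_i'$), persistently pebble $q_{i-1}$ over $\Region_1$ via the inductive hypothesis, and persistently pebble $g_i$ through its \highway; Items~2 and~4 are handled by symmetry and Items~3 and~4 by noting only phase~(iii) costs one more pebble. The only cosmetic difference is that in phase~(iv) you invoke the $\tilde R$ bound of \reflem{lem:Highway} (counting $q_{i-1}$ among the four already-placed pebbles), whereas the paper uses the $R$ bound (counting $q_{i-1}$ inside the \highway and only three external pebbles); the arithmetic $4+(\gamma_i-6)=3+(\gamma_i-5)=\gamma_i-2$ is the same either way.
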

  \begin{proof}
    For Item~(1), since $\QBF_{i-1}\restriction_{\partassign_1}$ is \true, by
    the assumption that \reflem{lem:QuantifierGadget} holds for $i-1$, we have
    $\persistentprice[\Region_1]{q_{i-1}} = \gamma_{i-1} = \gamma_i - 5$.
    Consider the following strategy to leave pebbles on
    $\variable_i$, $\bar\variable_i'$, $f_i$, $q_{i-1}$, $g_i$
    using at most $\gamma_i - 2$ pebbles over $\Region$:
    \begin{enumerate}[(i)]
      \item Put $\variable_i$ into the canonical \true position with $\bank_i
        + 1 = \gamma_i - 2$
        pebbles by \reflem{lem:VariableAssignment}.
        Now node $\variable_i$ and node $\bar\variable_i'$ have pebbles;
      \item Pebble $f_i'$, persistently pebble the \highway from $f_i'$ to
        $f_i$, then unpebble $f_i'$.
        Now $\variable_i$, $\bar\variable_i'$ and $f_i$ have pebbles.
        Over the \highway from $f_i'$ to $f_i$, at most $(\gamma_i - 6) + 2 =
        \gamma_i - 4$ pebbles are used by \reflem{lem:Highway}.
        Over $\Region$, at most $(\gamma_i - 4) + \underbrace2_{\variable_i,
        \bar\variable_i'} = \gamma_i - 2$ pebbles are used.
      \item Persistently pebble node $q_{i-1}$ using at most $\gamma_i - 5$
        pebbles over the new region $\Region_1$.
        Now $\variable_i$, $\bar\variable_i'$, $f_i$ and $q_{i-1}$ have
        pebbles.
        Over the old region $\Region$, at most
        $ \underbrace{\gamma_i - 5}_{\Region_1}
        + \underbrace{\vphantom{\gamma_i}3}_{\variable_i, \bar\variable_i', f_i}
        = \gamma_i - 2$
        pebbles are used (this step is legal since
        $\Region
        = \Region_1 \union \bigl( \ancnode{ \variable_i } \union \ancnode{
        \bar\variable_i' } \bigr)$ and since $\variable_i$ and $\bar\variable_i$
        have the only pebbles on $\ancnode{ \variable_i } \union \ancnode{\bar\variable_i' }$);
      \item Persistently pebble the \highway from $q_{i-1}$ to $g_i$.
        Now $\variable_i$, $\bar\variable_i'$, $f_i$, $q_{i-1}$ and $g_i$ have
        pebbles.
        Over the \highway from $q_{i-1}$ to $g_i$, at most $(\gamma_i - 7) + 2
        = \gamma_i - 5$ pebbles are used by \reflem{lem:Highway}.
        Over $\Region$, at most $(\gamma_i - 5) + \underbrace3_{\variable_i,
        \bar\variable_i', f_i} = \gamma_i - 2$ pebbles are used.
    \end{enumerate}

    Item~(2) is symmetric to Item~(1).

    For Item~(3),
    since $\QBF_{i-1}\restriction_{\partassign_1}$ is \false, by the assumption
    that \reflem{lem:QuantifierGadget} holds for $i-1$, we have
    $\persistentprice[\Region_1]{ q_{i-1} } = \gamma_{i-1} + 1
    = (\gamma_i - 5) + 1
    = \gamma_i - 4$.
    We run the same strategy as in Item~(1) to leave pebbles on
    $\variable_i$, $\bar\variable_i'$, $f_i$, $q_{i-1}$, $g_i$, using at most
    $\gamma_i - 1$ pebbles over $\Region$ (only Step~(iii) uses one more pebble).

    Item~(4) is symmetric to Item~(3).
  \end{proof}

  \begin{lemma}[Universal Upper Bound]
    \label{lem:UniversalUpperBound}
    Assume \reflem{lem:QuantifierGadget} holds for $i-1$.
    \begin{enumerate}
      \item If $\QBF_i\restriction_\partassign$ is \true, then
        $\surroundprice[\Region]{q_i} \le \gamma_i - 1$.
      \item If $\QBF_i\restriction_\partassign$ is \false, then
        $\surroundprice[\Region]{q_i} \le \gamma_i$.
    \end{enumerate}
  \end{lemma}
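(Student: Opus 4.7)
The plan is to exploit the symmetric structure of the universal gadget, whose sink $q_i$ has predecessors $h_i$ and $\bar h_i$. Surrounding $q_i$ amounts to reaching a configuration with pebbles on both $h_i$ and $\bar h_i$, and pebbling $h_i$ (respectively $\bar h_i$) in turn requires pebbles on $f_i, g_i$ (respectively $\bar f_i, \bar g_i$). The strategy will first pebble the ``positive'' side up to $h_i$, clear everything else while retaining the pebble on $h_i$, and then pebble the ``negative'' side up to $\bar h_i$, invoking \reflem{lem:OneSidedUpperBound} for each side. The essential quantitative fact is that the one-sided subroutine costs $\gamma_i - 2$ pebbles when the corresponding sub-formula $\QBF_{i-1}\restriction_{\partassign_b}$ is true and $\gamma_i - 1$ pebbles when it is false.

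For Item~(1), $\QBF_i\restriction_\partassign$ is true, so both $\QBF_{i-1}\restriction_{\partassign_1}$ and $\QBF_{i-1}\restriction_{\partassign_0}$ are true. The strategy is then executed in four phases. First, invoke Item~(1) of \reflem{lem:OneSidedUpperBound} to reach pebbles on $\variable_i, \bar\variable_i', f_i, q_{i-1}, g_i$ with peak $\gamma_i - 2$. Second, place $h_i$ (whose predecessors $f_i, g_i$ are already pebbled), raising the count to $6 \le \gamma_i - 2$. Third, run the first phase in reverse while keeping $h_i$ on the graph, ending with only $h_i$ pebbled and peak $(\gamma_i - 2) + 1 = \gamma_i - 1$. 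Fourth, invoke Item~(2) of \reflem{lem:OneSidedUpperBound} on top of the $h_i$-pebble to reach pebbles on $\bar\variable_i, \variable_i', \bar f_i, q_{i-1}, \bar g_i$, with peak $(\gamma_i - 2) + 1 = \gamma_i - 1$, and finally place $\bar h_i$ to surround $q_i$, reaching a count of $7 \le \gamma_i - 1$. The overall peak is $\gamma_i - 1$.

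For Item~(2), at least one of the two sub-formulas $\QBF_{i-1}\restriction_{\partassign_b}$ is false. We order the two sides so that the false one is handled first, now invoking Item~(3) or (4) of \reflem{lem:OneSidedUpperBound} at cost $\gamma_i - 1$; the reversal of this phase while retaining $h_i$ then costs $\gamma_i$, while the subsequent phase on the opposite side costs at most $(\gamma_i - 1) + 1 = \gamma_i$ regardless of whether its sub-formula is true or false, and the final placement of $\bar h_i$ never exceeds this bound. The overall peak is $\gamma_i$. The principal bookkeeping obstacle, and the main point to verify carefully, is that retaining a pebble on $h_i$ during execution or reversal of the one-sided strategy on the opposite side only adds exactly one to the peak: this is legitimate because $h_i$ lies outside the region $\ancnode{q_{i-1}} \cup \ancnode{\bar\variable_i} \cup \ancnode{\variable_i'}$ on which \reflem{lem:OneSidedUpperBound} operates, so its presence never blocks a legal pebbling move and does not disturb the restricted-region accounting. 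A minor numerical check is that $\gamma_i \ge 8$, which is immediate from $\gamma_0 = 2m + 7 \ge 9$, suffices to accommodate the transient configurations of size $6$ and $7$ that appear at the junctions between phases.
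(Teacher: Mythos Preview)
Your proof is correct and follows essentially the same approach as the paper: first pebble one side up to $h_i$ via \reflem{lem:OneSidedUpperBound}, reverse it while keeping $h_i$, then pebble the other side and place $\bar h_i$. The only minor divergence is that for Item~(2) you reorder so the false side is handled first, whereas the paper simply reruns the Item~(1) strategy and observes that each of the three expensive steps may cost one more pebble; your reordering is harmless but unnecessary, since either order already gives peak~$\gamma_i$.
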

  \begin{proof}
    For Item~(1), by assumption $\QBF_i\restriction_\partassign\ =
    \forall\variable_i \QBF_{i-1}\restriction_\partassign$ is \true, so
    assigning $\variable_i$ to \true and to \false both satisfy
    $\QBF_{i-1}$, \ie
    $\QBF_{i-1}\restriction_{\partassign_1}$ is \true and
    $\QBF_{i-1}\restriction_{\partassign_0}$ is \true.
    Consider the following strategy to surround $q_i$ with at most $\gamma_i -
    1$ pebbles over $\Region$:
    \begin{enumerate}[(i)]
      \item Run Item~(1) of \reflem{lem:OneSidedUpperBound} to pebble nodes
        $\variable_i$, $\bar\variable_i'$, $f_i$, $q_{i-1}$, $g_i$,
        using at most $\gamma_i - 2$ pebbles over $\Region$.
      \item Pebble $h_i$.
        Now the six nodes $\variable_i$, $\bar\variable_i'$, $f_i$, $q_{i-1}$,
        $g_i$, $h_i$ have pebbles.
      \item Run the reverse of Item~(1) of \reflem{lem:OneSidedUpperBound} to
        remove pebbles from nodes $\variable_i$, $\bar\variable_i'$, $f_i$,
        $q_{i-1}$, $g_i$.
        Now node $h_i$ has a pebble.
        Over $\Region$, at most $\gamma_i - 2 + \underbrace1_{h_i}
        = \gamma_i - 1$ pebbles are used.
      \item Run Item~(2) of \reflem{lem:OneSidedUpperBound} to pebble nodes
        $\bar\variable_i$, $\variable_i'$, $\bar f_i$, $q_{i-1}$, $\bar g_i$.
        Now the six nodes $h_i$, $\bar\variable_i$, $\variable_i'$, $\bar
        f_i$, $q_{i-1}$, $\bar g_i$ have pebbles.
        Over $\Region$, at most $\gamma_i - 2 + \underbrace1_{h_i}
        = \gamma_i - 1$ pebbles are used.
      \item Pebble $\bar h_i$ to surround $q_i$.
        Seven nodes have pebbles.
    \end{enumerate}

    For Item~(2),
    we run the same strategy as in Item~(1) to surround $q_i$, using at most
    $\gamma_i$ pebbles over $\Region$ (each of Steps~(i), (iii), (iv) may use
    one more pebble by Items~(3) and (4) of \reflem{lem:OneSidedUpperBound}).
  \end{proof}
  
  \begin{lemma}[Universal Lower Bound]
    \label{lem:UniversalLowerBound}
    Assume \reflem{lem:QuantifierGadget} holds for $i-1$.
    \begin{enumerate}
      \item $\surroundprice[\Region]{q_i} \ge \gamma_i - 1$.
      \item If $\surroundprice[\Region]{q_i} \le \gamma_i - 1$, then
        $\QBF_i\restriction_\partassign$ is \true.
    \end{enumerate}
  \end{lemma}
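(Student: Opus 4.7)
The plan is to adapt the Existential Lower Bound argument (\reflem{lem:ExistentialLowerBound}) to the universal setting. The central structural difference is that surrounding $q_i$ requires pebbling both $h_i$ and $\bar h_i$, which in turn require pebbling $f_i,g_i$ and $\bar f_i,\bar g_i$ respectively. Since $f_i'$ has predecessors $\variable_i,\bar\variable_i'$ while $\bar f_i'$ has predecessors $\bar\variable_i,\variable_i'$, any pebbling must visit the variable gadget in its canonical \true position (when $f_i'$ is placed) and in its canonical \false position (when $\bar f_i'$ is placed). Consequently both literals $\variable_i$ and $\bar\variable_i$ must transition in the sense of \refdef{def:LiteralPosition} during the pebbling.

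For Item~1, I would fix a strategy surrounding $q_i$ with at most $\gamma_i - 1$ pebbles over $\Region$ and let $t_h,t_{\bar h}$ denote the last times $h_i,\bar h_i$ receive their pebbles, assuming $t_h<t_{\bar h}$ by symmetry. Since $h_i$ remains pebbled through $t_{\bar h}$, in the sub-strategy on $[t_h,t_{\bar h}]$ the \highway from $\bar f_i'$ to $\bar f_i$ must be traversed, and by the Molding Lemma (\reflem{lem:Molding}) there is a time $t_a$ when $\bar f_i'$ is placed, forcing $\bar\variable_i$ and $\variable_i'$ to be pebbled simultaneously. Locating within $[t_h,t_a]$ the last transition $t^*$ of one of the two literals, by \reflem{lem:LiteralTransition} this transition costs $r_i=\gamma_i-3$ pebbles on the corresponding half of the variable gadget. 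At $t^*$ the other literal is in a non-\tvundef position (contributing one more pebble on the opposite half of the gadget), and the pebble on $h_i$ lies outside both halves since $h_i\notin\ancnode{\variable_i'}\cup\ancnode{\bar\variable_i}$, adding a further pebble. Summing yields at least $(\gamma_i-3)+1+1=\gamma_i-1$ pebbles, which establishes Item~1.

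For Item~2, I would extract two persistent pebblings of $q_{i-1}$ corresponding to the two assignments $\partassign_1$ and $\partassign_0$. Pebbling $g_i$ requires traversing the \highway from $q_{i-1}$ to $g_i$, which by the Molding Lemma produces a time at which $\gamma_i-5=\gamma_{i-1}$ pebbles lie on this \highway including one on $q_{i-1}$, and symmetrically for $\bar g_i$. Around $t_h$ the variable is forced into canonical \true by the pebbling of $f_i'$, so the region available around $q_{i-1}$ is $\Region_1$; this extracts a persistent pebbling of $q_{i-1}$ over $\Region_1\cap\ancnode{q_{i-1}}$ using at most $\gamma_{i-1}$ pebbles. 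Symmetrically around $t_{\bar h}$ the variable is in canonical \false, yielding a persistent pebbling over $\Region_0\cap\ancnode{q_{i-1}}$ within the same budget. By \reflem{lem:QuantifierGadget} at level $i-1$, both $\QBF_{i-1}\restriction_{\partassign_1}$ and $\QBF_{i-1}\restriction_{\partassign_0}$ are \true, so $\QBF_i\restriction_\partassign=\forall\variable_i\QBF_{i-1}\restriction_\partassign$ is \true.

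The main obstacle I expect is the scheduling argument underlying both items. The transitions of $\variable_i$ and $\bar\variable_i$ can occur in either order and may pass through the \tvundef position, so a careful choice of $t^*$ together with an analogue of the Clearance claim (\refclaim{cla:Clearance}) adapted to the two-branch structure will be needed to guarantee that the pebbles on $h_i$, on the transitioning literal, and on the complementary literal can all be counted simultaneously. A secondary subtlety for Item~2 is pairing each of the two critical times produced by the Molding Lemma with the correct canonical variable position, which requires showing that the \highway traversals to $g_i$ and $\bar g_i$ fit inside the windows during which the variable is in canonical \true and \false respectively, using the $\gamma_i-1$ budget to rule out intermediate variable transitions mid-traversal.
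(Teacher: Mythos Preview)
Your high-level plan---exploiting that the two branches through $h_i$ and $\bar h_i$ force the variable into both canonical positions---is sound, but the time anchors you chose do not give the control you need. Taking $t_h<t_{\bar h}$ to be the \emph{last} times $h_i,\bar h_i$ are pebbled does not force the $\bar f$-highway to be traversed in $[t_h,t_{\bar h}]$: a strategy may traverse \emph{both} the $f$- and $\bar f$-highways (and perform all literal transitions) well before $t_h$, leave $f_i,\bar f_i$ pebbled, and only afterwards build up $g_i,\bar g_i,h_i,\bar h_i$. In that scenario there is no literal transition in $[t_h,t_a]$ at all, so your Item~1 bottleneck configuration need not exist. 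The obstacle you flag is not just ``a careful choice of $t^*$''; the framework of last-pebbling times is itself too weak to pin down when transitions occur.

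The paper anchors instead on the earliest times the regions $\Region_f\defeq\{h_i\}\cup(\text{$f$-highway})$ and $\Region_{\bar f}\defeq\{\bar h_i\}\cup(\text{$\bar f$-highway})$ become \emph{permanently} non-empty, say $t_0<t_1$. For Item~1 the main charge is not a literal transition but the $f$-highway peak: at some $t_2>t_1$ the Molding Lemma gives $(\gamma_i-6)+2=\gamma_i-4$ pebbles there, plus one on $\Region_{\bar f}$ (non-empty since $t_0$), plus two on $\ancnode{\{\variable_i,\bar\variable_i'\}}$ (after arguing the literals are frozen from $t_1$ on), totalling $\gamma_i-1$. For Item~2 the $\partassign_1$-side extraction works roughly as you sketch, but the $\partassign_0$-side is the hard part and is \emph{not} symmetric: the paper proves a separate Persistence claim that at the moment $t_2$ the vertex $\bar h_i$ already carries a pebble (this itself requires a nested ``No Double Persistence'' argument ruling out that a lone pebble on $h_i$ persists in $\Region_f\cup\Region_h$). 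Only with this can one rewind from $t_2$, through the emptying of the $\bar f$-highway, to locate the $\bar g$-highway traversal and extract a $\gamma_{i-1}$-pebble persistent pebbling of $q_{i-1}$ over $\Region_0$. Your ``symmetrically around $t_{\bar h}$'' step hides exactly this: with $h_i$ permanently pebbled the naive budget on the $\bar h$-side is off by one, and the Persistence claim is what rescues the count by forcing the $\bar h$-side to have been completed \emph{before} the $f$-highway peaks.
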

  \begin{proof}
    Fix a strategy to surround $q_i$ using at most $\gamma_i - 1$ pebbles over
    $\Region$.
    Let
    $\Region_f \defeq \{ h_i \} \union \ancnode{ f_i } \setminus
    \propancnode{ f_i' }$
    be the region to augment $h_i$ to the \highway from $f_i'$ to $f_i$,
    and
    $\Region_{\bar f} \defeq \{ \bar h_i \} \union \ancnode{ \bar f_i }
    \setminus \propancnode{ \bar f_i' }$
    be the region to augment $\bar h_i$ to the \highway from $\bar f_i'$ to
    $\bar f_i$.
    At the end, the region $\Region_f$ has a pebble (on $h_i$) and the region
    $\Region_{\bar f}$ has a pebble (on $\bar h_i$).
    Let $t_1$ (resp.~$t_0$) be the earliest time such that since time $t_1$
    (resp.~$t_0$) the region $\Region_f$ (resp.~$\Region_{\bar f}$) has
    pebble.

    Assume $t_0 < t_1$ by symmetry (in the rest of this argument).
    At time $t_1 - 1$, there is no pebble on $\Region_f$, and there are
    pebbles on nodes $\variable_i$ and $\bar\variable_i'$ (so that node $f_i'$
    can be pebbled at time $t_1$).
    Hence literal $\bar\variable_i$ is in \false position, and literal
    $\variable_i$ is not in \tvundef position.
    Note that there is no transition of literals $\variable_i$ or
    $\bar\variable_i$ since time $t_1$:
    to make a transition for literal $\variable_i$ (resp.~$\bar\variable_i$)
    takes $r_i = \gamma_i - 3$ pebbles over $\ancnode{\variable_i'}$
    (resp.~$\ancnode{\bar\variable_i'}$) by \reflem{lem:LiteralTransition},
    but there is
    a pebble on $\Region_{\bar f}$,
    a pebble on $\Region_f$,
    and
    a pebble on the other literal gadget
    $\literalConst{r_i}{\bar\variable_i}$
    (resp.~$\literalConst{r_i}{\variable_i}$);
    thus a transition cannot be done with at most $\gamma_i - 1$ pebbles over
    $\Region$.
    As such, there is a pebble on $\ancnode{\variable_i}$ and a pebble on
    $\ancnode{\bar\variable_i'}$ since time $t_1$.

    Because node $f_i$ must be visited before node $h_i$ can be pebbled, by
    \reflem{lem:Highway}, at some later time $t_2 > t_1$ there are
    $(\gamma_i - 6) + 2 = \gamma_i - 4$
    pebbles on the \highway from $f_i'$ to $f_i$.
    At time $t_2$ over $\Region$, there are at least
    \begin{equation}
      \underbrace{\gamma_i - 4}_{\Region_f \setminus \{ h_i \}}
    + \underbrace{\vphantom{\gamma_i}1}_{\Region_{\bar f}}
    + \underbrace{\vphantom{\gamma_i}2}_{\Ancnode{ \{ \variable_i, \bar\variable_i' \} }}
    = \gamma_i - 1
\label{eq:Clearance}
    \end{equation}
    pebbles, giving Item~(1).

    At time $t_2$, there is no pebble on $\{ h_i \} \union \Region_g$ where
    $\Region_g \defeq \ancnode{ g_i }
    \setminus \Ancnode{ \{ \variable_i, \bar\variable_i' \} }$,
    as all $\gamma_i - 1$ pebbles over $\Region$ are on $\Region_f \setminus
    \{ h_i \}$, $\Region_{\bar f}$ or $\Ancnode{ \{ \variable_i,
      \bar\variable_i' \}}$ by \refeq{eq:Clearance}.
    Over $\Region \setminus \Region_g$, there are at least
    \begin{equation*} \underbrace1_{\Region_f}
    + \underbrace1_{\Region_{\bar f}}
    + \underbrace2_{\Ancnode{ \{ \variable_i, \bar\variable_i' \} }}
    = 4\end{equation*}
    pebbles since time $t_2$,
    so at most $(\gamma_i - 1) - 4 = \gamma_i - 5$ pebbles over $\Region_g$.
    Because node $g_i$ must be visited before node $h_i$ can be pebbled, by
    \reflem{lem:Molding}, at some later time $t_3 > t_2$ there are $(\gamma_i -
    7) + 2 = \gamma_i - 5$ pebbles on the \highway from $q_{i-1}$ to $g_i$,
    including one on node $q_{i-1}$.
    Recall the region $\Region_1 = \Region \setminus \bigl(
    \ancnode{\variable_i} \union \ancnode{\bar\variable_i'} \bigr)$
    that is associated with the $(i-1)$-assignment $\partassign_1$.
    At time $t_3$, there is only one pebble over $\Region_1 \intersect
    \ancnode{q_{i-1}}$, which is on node $q_{i-1}$.
    The sub-strategy from time $t_2$ to $t_3$ persistently pebble node
    $q_{i-1}$ over the region $\Region_1 \intersect \ancnode{q_{i-1}}$
    using $\gamma_i - 5 = \gamma_{i-1}$ pebbles, so
    $\Persistentprice[\Region_1]{ \gadgetConst{\QBF_{i-1}} } \le
    \gamma_{i-1}$.
    By the assumption that \reflem{lem:QuantifierGadget} holds for $i-1$, we
    know $\QBF_{i-1}\restriction_{\partassign_1}$ is \true.

    We claim that at time $t_2$, node $\bar h_i$ has a pebble (by modifying
    the argument in the previous paragraph).
    Let $\Region_h \defeq \bigl( \{ h_i \} \union \ancnode{ g_i } \bigr)
    \setminus \ancnode{q_{i-1}}$
    be nodes properly in the \highway from $q_{i-1}$ to $g_i$ plus $h_i$,
    and
    $\Region_{\bar h} \defeq \bigl( \{ \bar h_i \} \union \ancnode{ \bar
    g_i } \bigr) \setminus \ancnode{q_{i-1}}$
    be nodes properly in the \highway from $q_{i-1}$ to $\bar g_i$ plus $\bar
    h_i$.

    \begin{claim}[Persistence]
      \label{cla:Persistence}
      At time $t_2$, node $\bar h_i$ has a pebble.
    \end{claim}
    \begin{proof}
    For otherwise,
    at time $t_2$, there is no pebble on $\Region_h$ or $\Region_{\bar h}$,
    as all $\gamma_i - 1$ pebbles over $\Region$ are on
    $\Region_f \setminus \{ h_i \}$,
    $\Region_{\bar f} \setminus \{ \bar h_i \}$
    or
    $\ancnode{ \{ \variable_i, \bar\variable_i' \}}$
    by \refeq{eq:Clearance}.
    Let $t_3$ (resp.~$t_4$) be the earliest time such that since time $t_3$
    (resp.~$t_4$) the region $\Region_h$ (resp.~$\Region_{\bar h}$) has
    pebble.

    \begin{claim}[No Double Persistence]
      \label{cla:NoDoublePersistence}
      Since $t_3$, there are at least two pebbles over $\Region_f \union
      \Region_h$.
    \end{claim}
    \begin{proof}
      Note that $\Region_f$ consists of the \highway from $f_i'$ to $f_i$ plus
      node $h_i$, and $\Region_h$ consists of nodes \emph{properly} in the
      \highway from $q_{i-1}$ to $g_i$ plus node $h_i$.

      Fix an induced subgraph $F \subseteq G$ (\eg $F =$ the \highway from
      $f_i'$ to $f_i$) having a unique sink.
      Say a pebbling configuration $\pconf$ is \introduceterm{\vislocked} on $F$ if
      $\persistentprice[F]{\pconf}=\visitprice{F}$; this is if the pebbles cannot
      be cleared without using $\visitprice{F}$ pebbles (when restricted to $F$).
      In particular, any configuration with a pebble on the sink of $F$ is
      \vislocked on $F$.
      Also, if a configuration is \vislocked on $F$, then there is a pebble on
      $F$.
      Given a pebbling configuration on $G$, we say that $f_i$ (resp.~$g_i$) is
      \introduceterm{\vislocked} if the configuration is \vislocked on the \highway from
      $f_i'$ to $f_i$ (resp.~from $q_{i-1}$ to $g_i$).

      Assume for contradiction that at some time $t_7 \ge t_3$ there is only
      one pebble over $\Region_f \union \Region_h$, which must be on $h_i$ by
      the inclusion-exclusion principle.
      Let $t_5$ be the earliest time before $t_7$ such that there is a pebble
      on node $h_i$ from $t_5$ to $t_7$.
      We know $t_0 < t_1 < t_3 < t_5 < t_7$.
      At time $t_5$ node $h_i$ is pebbled, so node $g_i$ and node $f_i$ each
      has a pebble, and both are \vislocked.
      At time $t_7$, nodes properly in the two \highway{}s have no pebbles, and
      nodes $g_i$ and $f_i$ are not \vislocked.
      Let $t_6$ be the earliest time after $t_5$ such that one of the \highway{}s
      is not \vislocked, then $t_5 < t_6 < t_7$.

      \begin{itemize}
        \item
          If the \highway from $f_i'$ to $f_i$ stops being \vislocked at time $t_6$, then
          at time $t_6 - 1$ there are $(\gamma_i - 6) + 2 = \gamma_i - 4$
          pebbles over the \highway from $f_i'$ to $f_i$ by \reflem{lem:Highway}.
          Over $\Region$, there are
          \begin{equation*}          
            \underbrace{\gamma_i - 4}_{\text{\highway from }f_i'\text{ to }f_i}
            + \underbrace{\vphantom{\gamma_i}1}_{\text{\highway from }q_{i-1}\text{ to }g_i}
            + \underbrace{\vphantom{\gamma_i}1}_{h_i}
            + \underbrace{\vphantom{\gamma_i}1}_{\Region_{\bar f}}
            + \underbrace{\vphantom{\gamma_i}2}_{\Ancnode{ \{ \variable_i, \bar\variable_i' \} }}
            = \gamma_i + 1
          \end{equation*}
          pebbles, contradicting that at most $\gamma_i - 1$ pebbles are used
          over $\Region$.
        \item
          If the \highway from $q_{i-1}$ to $g_i$ stops being \vislocked at time $t_6$,
          then at time $t_6 - 1$ there are $(\gamma_i - 7) + 2 = \gamma_i - 5$
          pebbles over the \highway from $q_{i-1}$ to $g_i$ by
          \reflem{lem:Highway}.
          Over $\Region$, there are
          \begin{equation*}
            \underbrace{\gamma_i - 5}_{\text{\highway from }q_{i-1}\text{ to
              }g_i}
            + \underbrace{\vphantom{\gamma_i}1}_{\text{\highway from }f_i'\text{ to }f_i}
            + \underbrace{\vphantom{\gamma_i}1}_{h_i}
            + \underbrace{\vphantom{\gamma_i}1}_{\Region_{\bar f}}
            + \underbrace{\vphantom{\gamma_i}2}_{\Ancnode{ \{ \variable_i, \bar\variable_i' \} }}
            = \gamma_i
          \end{equation*}
          pebbles, contradicting that at most $\gamma_i - 1$ pebbles are used
          over $\Region$.  \qedhere
      \end{itemize}
    \end{proof}

    Assume $t_3 < t_4$ by symmetry (in the rest of \refclaim{cla:Persistence}).
    At time $t_4 - 1$, there is no pebble on $\Region_{\bar h}$.
    By \reflem{lem:Highway}, at some later time $t_5 > t_4$ there are
    $(\gamma_i - 7) + 2 = \gamma_i + 5$
    pebbles on the \highway from $q_{i-1}$ to $\bar g_i$.
    Over $\Region$, there are at least
    \begin{equation*}
    \underbrace{\gamma_i - 5}_{\text{\highway from }q_{i-1}\text{ to }\bar
      g_i}
    + \underbrace{\vphantom{\gamma_i}2}_{\Region_f \union \Region_h}
    + \underbrace{\vphantom{\gamma_i}1}_{\Region_{\bar f}}
    + \underbrace{\vphantom{\gamma_i}2}_{\Ancnode{ \{ \variable_i, \bar\variable_i' \} }}
    = \gamma_i
    \end{equation*}
    pebbles, contradicting that at most $\gamma_i - 1$ pebbles are used over
    $\Region$.
    \end{proof}

    \refclaim{cla:Persistence} shows that there is a pebble on $\bar h_i$ at time
    $t_2$, hence there is no pebble on the \highway from $\bar f_i'$ to $\bar
    f_i$ by \refeq{eq:Clearance}.
    Let
    $t_3$ be the earliest time before $t_2$ such that the only pebble on
    $\Region_{\bar f}$ is on $\bar h_i$,
    and let
    $t_6$ be the earliest time before $t_3$ such that $\bar h_i$ has a pebble
    from time $t_6$ to $t_3$.
    At time $t_3$, there are pebbles on nodes $\variable_i'$ and
    $\bar\variable_i$ (so that node $\bar f_i'$ can be unpebbled at time
    $t_3 - 1$), hence literal $\variable_i$ is in \false position, and literal
    $\bar\variable_i$ is not in \tvundef position.
    Note that there is no transition of literals $\variable_i$ or
    $\bar\variable_i$ between time $t_6$ and $t_3 - 1$:
    to make a transition for literal $\variable_i$ (resp.~$\bar\variable_i$)
    takes $r_i = \gamma_i - 3$ pebbles over $\ancnode{\variable_i'}$
    (resp.~$\ancnode{\bar\variable_i'}$) by \reflem{lem:LiteralTransition},
    but there is
    a pebble on $\bar h_i$,
    a pebble on $\Region_{\bar f} \setminus \{ \bar h_i \}$,
    and
    a pebble on the other literal gadget
    $\literalConst{r_i}{\bar\variable_i}$
    (resp.~$\literalConst{r_i}{\variable_i}$);
    thus a transition cannot be done with at most $\gamma_i - 1$ pebbles over
    $\Region$.
    As such, there is a pebble on $\ancnode{\variable_i'}$ and a pebble on
    $\ancnode{\bar\variable_i}$ between time $t_6$ and $t_3$.

    At time $t_6$, node $\bar g_i$ and node $\bar f_i$ each has a pebble (so
    that node $\bar h_i$ can be pebbled at time $t_6 - 1$).
    At time $t_3$, the \highway from $\bar f_i'$ to $\bar f_i$ has no pebble.
    By \reflem{lem:Highway}, there is a time $t_4$ between $t_6$ and $t_3$ such
    that there are $(\gamma_i - 6) + 2 = \gamma_i - 4 $ pebbles on the
    \highway from $\bar f_i'$ to $\bar f_i$.
    We know $t_6 < t_4 < t_3 < t_2$.
    At time $t_4$ over $\Region$, there are at least
    \begin{equation}
      \underbrace{\gamma_i - 4}_{\Region_{\bar f} \setminus \{ \bar h_i \}}
    + \underbrace{\vphantom{\gamma_i}1}_{\bar h_i}
    + \underbrace{\vphantom{\gamma_i}2}_{\Ancnode{ \{ \variable_i', \bar\variable_i \} }}
    = \gamma_i - 1
\label{eq:BarClearance}
    \end{equation}
    pebbles.

    At time $t_4$, there is no pebble on $\Region_{\bar g}$ where
    $\Region_{\bar g} \defeq \ancnode{ \bar g_i }
    \setminus \Ancnode{ \{ \variable_i', \bar\variable_i \} }$,
    as all $\gamma_i - 1$ pebbles over $\Region$ are on $\Region_{\bar f}$ or
    $\ancnode{ \{ \variable_i, \bar\variable_i' \}}$ by
    \refeq{eq:BarClearance}.
    Over $\Region \setminus \Region_{\bar g}$, there are at least
    \begin{equation*} \underbrace1_{\Region_{\bar f} \setminus \{ \bar h_i \}}
    + \underbrace1_{\bar h_i}
    + \underbrace2_{\Ancnode{ \{ \variable_i', \bar\variable_i \} }}
    = 4\end{equation*}
    pebbles between time $t_6$ and $t_3 - 1$,
    so at most $(\gamma_i - 1) - 4 = \gamma_i - 5$ pebbles over $\Region_{\bar
      g}$.
    Because node $\bar g_i$ is visitied at time $t_6$ and the \highway from
    $q_{i-1}$ to $\bar g_i$ has no pebble at time $t_4$, by
    \reflem{lem:Molding}, at some time $t_5$ between $t_6$ and $t_4$ there are
    $(\gamma_i - 7) + 2 = \gamma_i - 5$ pebbles on the \highway from $q_{i-1}$
    to $\bar g_i$, including one on node $q_{i-1}$.
    Recall the region $\Region_0 = \Region \setminus \bigl(
    \ancnode{\variable_i'} \union \ancnode{\bar\variable_i} \bigr)$
    that is associated with the $(i-1)$-assignment $\partassign_0$.
    At time $t_5$, there is only one pebble over $\Region_0 \intersect
    \ancnode{q_{i-1}}$, which is on node $q_{i-1}$.
    The (reverse of the) sub-strategy from time $t_5$ to $t_4$ persistently
    pebble node $q_{i-1}$ over the region $\Region_0 \intersect
    \ancnode{q_{i-1}}$ using $\gamma_i - 5 = \gamma_{i-1}$ pebbles, so
    $\Persistentprice[\Region_0]{ \gadgetConst{\QBF_{i-1}} } \le
    \gamma_{i-1}$.
    By the assumption that \reflem{lem:QuantifierGadget} holds for $i-1$, we
    know $\QBF_{i-1}\restriction_{\partassign_0}$ is \true.

    As a result, $\forall \variable_i \QBF_{i-1}\restriction_\partassign\ =
    \QBF_i\restriction_\partassign$ is \true, giving Item~(2).
  \end{proof}

  \begin{lemma}[Universal Quantifier Gadget]
    \label{lem:UniversalQuantifierGadget}
    Assume that \reflem{lem:QuantifierGadget} holds for $i-1$.
    We have
    $\Persistentprice[\Region]{ \gadgetConst{\QBF_i} } = \gamma_i + \ib{ \QBF_i
      \restriction_\partassign\text{is \false}}$.
  \end{lemma}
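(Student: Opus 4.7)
The plan is to follow exactly the same template as the proof of \reflem{lem:ExistentialQuantifierGadget}, swapping in the universal-quantifier upper and lower bounds. Since \refpr{lem:surrounding-eq-persistent} gives $\persistentprice = \surroundprice + 1$ on any region, it suffices to establish
\[
\Surroundprice[\Region]{\gadgetConst{\QBF_i}} = \gamma_i - 1 + \ib{\QBF_i \restriction_\partassign \text{ is \false}} \eqperiod
\]
This is a two-case argument, split by whether $\QBF_i\restriction_\partassign$ is true or false.

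In the true case, the upper bound $\Surroundprice[\Region]{q_i} \le \gamma_i - 1$ comes from Item~(1) of \reflem{lem:UniversalUpperBound}, and the matching lower bound $\Surroundprice[\Region]{q_i} \ge \gamma_i - 1$ is exactly Item~(1) of \reflem{lem:UniversalLowerBound}. In the false case, the upper bound $\Surroundprice[\Region]{q_i} \le \gamma_i$ is Item~(2) of \reflem{lem:UniversalUpperBound}. For the lower bound $\Surroundprice[\Region]{q_i} \ge \gamma_i$ I would argue by contradiction: if $\Surroundprice[\Region]{q_i} \le \gamma_i - 1$, then the contrapositive of Item~(2) of \reflem{lem:UniversalLowerBound} forces $\QBF_i \restriction_\partassign$ to be true, contradicting the case hypothesis. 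Combining both cases yields the desired equality, and adding the $+1$ from \refpr{lem:surrounding-eq-persistent} gives the stated formula.

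I do not expect any real obstacle here since the lemma is essentially a bookkeeping corollary: all the technical work (analyzing \highway{}s, literal transitions, the \vislocked/\perlocked/\srclocked dichotomies, and the careful counting of pebbles over the shifted region $\Region$ in the two sub-assignments $\partassign_0, \partassign_1$) has already been done in \reftwolems{lem:UniversalUpperBound}{lem:UniversalLowerBound}, using in turn \reflem{lem:OneSidedUpperBound}, \reflem{lem:LiteralTransition}, \reflem{lem:Highway}, and \reflem{lem:Molding}. The only mildly delicate point is to make sure that the inductive hypothesis of \reflem{lem:QuantifierGadget} for $i-1$ is threaded through correctly to both $\partassign_0$ and $\partassign_1$, but both upper and lower bound lemmas are already stated under exactly this hypothesis, so no additional invocation is needed. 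The proof is therefore a two-sentence case analysis, structurally identical to that of \reflem{lem:ExistentialQuantifierGadget}.
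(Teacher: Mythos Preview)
Your proposal is correct and follows essentially the same approach as the paper's proof: reduce to surrounding price via \refpr{lem:surrounding-eq-persistent}, then match upper and lower bounds case-by-case using \reftwolems{lem:UniversalUpperBound}{lem:UniversalLowerBound}. The paper's proof is indeed the two-sentence bookkeeping corollary you anticipated, structurally identical to that of \reflem{lem:ExistentialQuantifierGadget}.
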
 
  \begin{proof}
    Since persistent price is one plus surrounding price
    (\refpr{lem:surrounding-eq-persistent}), it suffices to show that
    $\Surroundprice[\Region]{ \gadgetConst{\QBF_i} } = \gamma_i - 1 + \ib{
      \QBF_i \restriction_\partassign\text{is \false}}$.
    If $\QBF_i \restriction_\partassign$ is \true, then
    $\Surroundprice[\Region]{ \gadgetConst{\QBF_i} } = \gamma_i - 1$, as
    the upper bound (\reflem{lem:UniversalUpperBound})
    matches the lower bound (\reflem{lem:UniversalLowerBound}).
    If $\QBF_i \restriction_\partassign$ is \false, then
    $\Persistentprice[\Region]{ \gadgetConst{\QBF_i} } = \gamma_i$, as
    the upper bound (\reflem{lem:UniversalUpperBound})
    matches the lower bound (\reflem{lem:UniversalLowerBound}).
  \end{proof}

\section{Product Construction for Reversible Pebbling}
\label{sec:blow-up-reversible}

The part of the proof of Theorem~\ref{th:approx-both} that deals with
reversible pebbling uses as a black box the construction in
Theorem~\ref{th:blowup-both} for reversible pebbling. 
Now we state it again and we give its full proof.

\begin{theorem}
  \label{th:blowup}
  Given two graphs $\graphstd_1$ and $\graphstd_2$, 
  there is a \efficiently constructible graph 
  $\blowupConst{\graphstd_1}{\graphstd_2}$ 
  of size 
  $3\setsize{\graphstd_1}\cdot\setsize{\graphstd_2}$ 
  with reversible pebbling price
  $\Persistentprice{\blowupConst{\graphstd_1}{\graphstd_2}} = \persistentprice{\graphstd_1} + \persistentprice{\graphstd_2} + 1$.
\end{theorem}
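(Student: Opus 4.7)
The plan is to first formalize the construction $\blowupConst{\graphstd_1}{\graphstd_2}$ sketched in \refsec{sec:pspace-inapproximability-overview}: for every vertex $v \in \vertices{\graphstd_1}$ we create a $v$-block consisting of a copy of $\graphstd_2$ in which every vertex $w$ is replaced by three vertices $w\extblowup, w\intblowup, w\outblowup$ with edges $(w\extblowup, w\outblowup)$ and $(w\intblowup, w\outblowup)$; intra-block edges originally going into $w$ are redirected to $w\intblowup$, while inter-block edges coming from the sink of a $u$-block (for $(u,v) \in \edges{\graphstd_1}$) are sent to $w\extblowup$ for every source $w$ of the \mbox{$v$-block}. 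This keeps fan-in bounded by~$2$ and gives the size bound. Let $\pebsp_i = \persistentprice{\graphstd_i}$.

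For the upper bound $\Persistentprice{\blowupConst{\graphstd_1}{\graphstd_2}} \le \pebsp_1 + \pebsp_2 + 1$, I would simulate an optimal reversible pebbling $\pebbling$ of $\graphstd_1$: each placement (respectively removal) of a pebble on $v$ in $\pebbling$ is simulated by a complete reversible pebbling (respectively unpebbling) of the $v$-block in the product graph, leaving a pebble on (or removing the pebble from) the sink of that block. Because the inner $v$-block is pebbled only after the sinks of its predecessor blocks already hold pebbles, the extra bookkeeping needed at the three-vertex ``clouds'' costs only a constant number of additional pebbles; the careful accounting yields exactly $\pebsp_1 + \pebsp_2 + 1$.

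The hard part is the lower bound, which is where I would follow the projection approach from the overview. Given a pebbling $\pebblingBlowUp$ of $\blowupConst{\graphstd_1}{\graphstd_2}$, I define a projection to pebble configurations on $\graphstd_1$ by distinguishing \perlocked blocks (whose current internal configuration requires space $\pebsp_2$ to be cleared) from \vislocked blocks (requiring only space $\pebsp_2 - 1$), following \refdef{def:VPlocked}. The projection places a pebble on $v \in \vertices{\graphstd_1}$ whenever the $v$-block is \perlocked, and additionally on certain $v$ whose block is merely \vislocked, subject to a technical condition on the configurations of the predecessor blocks (designed so that visiting an otherwise unsurrounded $v$-block really does correspond to a legal pebble move on $v$). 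Between consecutive projected configurations, one may need to insert valid pebbling moves on $\graphstd_1$ that add or remove pebbles in topological order over the set of $v$-corresponding \vislocked blocks, and one must verify that this produces a legal persistent pebbling of $\graphstd_1$.

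Once the projection is shown legal, the strategy is: since the projected pebbling is persistent, at some moment it contains $\pebsp_1$ pebbles. If this moment corresponds to a block becoming \perlocked, then that block contributes $\pebsp_2$ pebbles while the other $\pebsp_1 - 1$ projected pebbles each witness at least one pebble in another block, giving the desired $\pebsp_1 + \pebsp_2 - 1$; the extra ``$+2$'' over this naive count comes from the three-vertex gadget, and specifically from the fact that the last pebbling step on the critical block has to touch at least two of $w\extblowup, w\intblowup, w\outblowup$ simultaneously. If instead the maximal projected configuration is reached through a \vislocked block, then the technical condition on predecessor blocks guarantees a second \vislocked block carrying an extra pebble, recovering the missing unit. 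The main obstacle I expect is the careful case analysis of how pebbles in the ``clouds'' $\{w\extblowup, w\intblowup, w\outblowup\}$ interact with the \vislocked/\perlocked classification—especially arguing that the projection really is a legal reversible pebbling across all transition types—and tuning the technical condition on predecessors so that the \vislocked case always recovers the needed extra pebbles without double counting.
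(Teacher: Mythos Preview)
Your construction is not the one in the paper, and the difference matters. You route inter-block edges from the sink of the $u$-block to $w\extblowup$ only for \emph{sources} $w$ of the $v$-block. In \refcon{con:BlowUp} the edges go from $(u,\sink_2)\outblowup$ to $(v,w)\extblowup$ for \emph{every} $w \in \vertices{G_2}$, not just sources: the product is obtained by splitting the vertices of $\blowupExpensiveConst{G_1}{G_2}$ (where predecessor-block sinks point to \emph{all} vertices of the $v$-block), not of $\blowupNaiveConst{G_1}{G_2}$ (where they point only to sources). With your version, for any non-source $w$ of $G_2$ the node $w\extblowup$ has no predecessors at all and is a free source of the product graph, so the exterior layer carries no information about whether the predecessor blocks are pebbled except at the $G_2$-sources.

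This breaks the lower-bound mechanism you sketch. The paper's argument hinges on the notion that a $v_1$-block is \emph{surrounded} precisely when all predecessor-block sinks carry pebbles, and crucially this single condition simultaneously frees \emph{every} exterior node in the block to be pebbled or unpebbled. That uniformity is what lets the interior mapping $\intmappingBlock{v_1}$ and the closure $\pebClosure{\cdot}$ behave well and what forces the ``three pebbles in the $(v_1,v_2)$-cell'' count in \refclaim{cla:LegalImpliesLowerBound}. In your graph, non-source exterior nodes can be pebbled at any time regardless of predecessor blocks, so a pebbling can advance deep into a $v$-block while the block is still ``unsurrounded'' in your sense; the \perlocked/\vislocked projection to $G_1$ then need not be a legal pebbling, and the extra-pebble accounting in the \vislocked case has no reason to go through. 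The rest of your outline (upper bound simulation, two-case lower-bound counting) is on the right track once the construction is corrected, but as stated the proposal is building the wrong graph.
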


We want to construct a graph $\blowup$ to inherit structures from two
graphs, which are called respectively the \introduceterm{exterior}
graph $G_1$ and the \introduceterm{interior} graph $G_2$.
Intuitively, for every node in $G_1$, we will construct a
\introduceterm{block} with the structure of $G_2$: in each such block,
for every node in $G_2$ we create a \introduceterm{cell} of three
nodes, where different cells are connected according to the exterior
graph $G_1$ and the interior graph $G_2$.

  \begin{figure}[ht]
    \begin{center}
      \begin{tikzpicture}[scale=.75]
        \ExampleBlowUpReversible
      \end{tikzpicture}
      \caption{Example of \refcon{con:BlowUp}: product of a pyramid of height 1 and a rhombus.}
      \label{fig:ExampleBlowUp}
    \end{center}
  \end{figure}

\begin{construction}[Product for reversible pebbling]
  \label{con:BlowUp}
  Fix two graphs $G_1$ and $G_2$, and denote $\sink_2$ as the unique sink of
  $G_2$.
  Construct a graph $\blowup \defeq \blowupConst{G_1}{G_2}$ as follows.
  For every node $(v_1, v_2) \in V(G_1) \times V(G_2)$, create three nodes
  $(v_1, v_2)\outblowup$, $(v_1, v_2)\extblowup$, $(v_1, v_2)\intblowup$.
  Add an edge from the exterior node to the output node, \ie
  from node $(v_1, v_2)\extblowup$ to $(v_1, v_2)\outblowup$;
  add an edge from the interior node to the output node, \ie
  from node $(v_1, v_2)\intblowup$ to $(v_1, v_2)\outblowup$.
  The exterior node supports the structure of the exterior graph $G_1$, in the
  sense that for every predecessor $w_1$ of $v_1$ in $G_1$, we create an edge
  from the sink the $w_1$-block of $G_2$, \ie
  from node $(w_1, z_2)\outblowup$ to node $(v_1, v_2)\extblowup$.
  The interior node supports the structure of the interior graph $G_2$, in the
  sense that for every predecessor $w_2$ of $v_2$ in $G_2$, we create an edge
  from the output node of $w_2$, \ie
  from node $(v_1, w_2)\outblowup$ to node $(v_1, v_2)\intblowup$.

  Formally,
  $V(\blowup) \defeq \bigl\{
    (v_1, v_2)\outblowup, \allowbreak
    (v_1, v_2)\extblowup, \allowbreak
    (v_1, v_2)\intblowup
    \where
    v_1 \in V(G_1), \allowbreak
    v_2 \in V(G_2)
  \bigr\}$,
  and
  $E(\blowup) \defeq
  E\outblowup \disjunion E\extblowup \disjunion E\intblowup$, where
  $E\outblowup \defeq \Bigl\{
    \bigl( (v_1, v_2)\extblowup, (v_1, v_2)\outblowup \bigr) , \allowbreak
    \bigl( (v_1, v_2)\intblowup, (v_1, v_2)\outblowup \bigr)
    \where \allowbreak
    v_1 \in V(G_1), v_2 \in V(G_2)
    \Bigr\}$,
  $E\extblowup \defeq \Bigl\{
    \bigl( (w_1, z_2)\outblowup, (v_1, v_2)\extblowup \bigr)
    \where \allowbreak
    v_1 \in V(G_1), \allowbreak
    v_2 \in V(G_2), \allowbreak
    w_1 \in \prednode[G_1]{v_1}
    \Bigr\}$, and
  $E\intblowup \defeq \Bigl\{
    \bigl( (v_1, w_2)\outblowup, (v_1, v_2)\intblowup \bigr)
    \where \allowbreak
    v_1 \in V(G_1), \allowbreak
    v_2 \in V(G_2), \allowbreak
    w_2 \in \prednode[G_2]{v_2}
    \Bigr\}$.
\end{construction}

Clearly, if $G_1$ and $G_2$ each has in-degree at most two and a unique sink,
then so does the resulting graph $\blowup$.
Note that the graph $\blowup$ partitions into $\setsize{V(G_1)}$ blocks,
namely, for each $v_1 \in V(G_1)$, the $v_1$-block is the subgraph of
$\blowup$ induced over the node set
$\bigl\{ (v_1, v_2)\outblowup, \allowbreak
  (v_1, v_2)\extblowup, \allowbreak
  (v_1, v_2)\intblowup
  \where \allowbreak
  v_2 \in V(G_2) \bigr\}$.
Each such block further partitions into $\setsize{V(G_2)}$ cells, namely,
in the $v_1$-block, for each $v_2 \in V(G_2)$, the $(v_1, v_2)$-cell is the
subgraph of the $v_1$-block induced over the node set
$\bigl\{ (v_1, v_2)\outblowup,\allowbreak
  (v_1, v_2)\extblowup,\allowbreak
  (v_1, v_2)\intblowup \bigr\}$.

Finally, given a configuration $\pconfBlowUp$ on $\blowup$, say the $v_1$-block
(resp.~the $(v_1, v_2)$-cell) is \emph{surrounded} if any exterior node of the
$v_1$-block (resp.~the interior node of the $(v_1, v_2)$-cell) is surrounded
in $\pconfBlowUp$.
Note that the $v_1$-block is surrounded iff \emph{every} exterior node of the
$v_1$-block is surrounded.

\begin{lemma}[Upper Bound]
  \label{lem:BlowUpUpperBound}
  $\Persistentprice{\blowupConst{G_1}{G_2}} \le \persistentprice{G_1} +
  \persistentprice{G_2} + 1$.
\end{lemma}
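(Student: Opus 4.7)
The plan is to exhibit a reversible pebbling of $\blowupConst{G_1}{G_2}$ that simulates an optimal reversible pebbling $\pebbling_1 = (\pconf_0, \pconf_1, \ldots, \pconf_\tau)$ of the outer graph $G_1$ in space $p_1 = \persistentprice{G_1}$. Each placement or removal of a pebble on a vertex $v_1$ in $\pebbling_1$ is simulated by a reversible pebbling, respectively an un-pebbling, of the sink node $(v_1, \sink_2)\outblowup$ of the $v_1$-block of $\blowupConst{G_1}{G_2}$. This block-level pebbling is in turn obtained by simulating an optimal reversible pebbling $\pebbling_2$ of $G_2$ in space $p_2 = \persistentprice{G_2}$. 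The macroscopic invariant maintained between block-level simulations is that whenever $v_1 \in \pconf_t$ the $v_1$-block contains exactly the single pebble on $(v_1, \sink_2)\outblowup$, and whenever $v_1 \notin \pconf_t$ the entire $v_1$-block is empty.

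At the micro level, whenever $\pebbling_2$ moves from configuration $P$ to $P \cup \{v_2\}$ with $v_2 \notin P$, I would realize this move inside the $v_1$-block through the five-step sequence (i)~place $(v_1, v_2)\extblowup$, (ii)~place $(v_1, v_2)\intblowup$, (iii)~place $(v_1, v_2)\outblowup$, (iv)~remove $(v_1, v_2)\extblowup$, (v)~remove $(v_1, v_2)\intblowup$. A removal move in $\pebbling_2$ is simulated by running the same sequence in reverse. Chaining these block-level simulations according to $\pebbling_1$ yields the desired pebbling of $\blowupConst{G_1}{G_2}$.

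The main point requiring care is verifying that each of the five micro-moves is a legal reversible move, that is, that whenever a pebble is placed on or removed from a vertex, all its predecessors in $\blowupConst{G_1}{G_2}$ carry pebbles at that moment. Steps (i) and (iv) are legal because the predecessors of $(v_1, v_2)\extblowup$ are the nodes $(w_1, \sink_2)\outblowup$ for $w_1 \in \prednode[G_1]{v_1}$, and these are pebbled thanks to the macroscopic invariant combined with the legality condition $\prednode[G_1]{v_1} \subseteq \pconf_t$ that holds at the instant $\pebbling_1$ acts on $v_1$. Steps (ii) and (v) are legal because the predecessors of $(v_1, v_2)\intblowup$ are the nodes $(v_1, w_2)\outblowup$ for $w_2 \in \prednode[G_2]{v_2}$, pebbled thanks to the analogous inner invariant that $\prednode[G_2]{v_2} \subseteq P$ whenever $\pebbling_2$ acts on $v_2$. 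Step (iii) is legal because both predecessors $(v_1, v_2)\extblowup$ and $(v_1, v_2)\intblowup$ of $(v_1, v_2)\outblowup$ have just been placed in (i) and (ii).

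Counting pebbles is then routine. Throughout the five-step sequence simulating one move of $\pebbling_2$, the $v_1$-block carries at most $p_2$ pebbles on output nodes plus at most two auxiliary pebbles on $(v_1, v_2)\extblowup$ and $(v_1, v_2)\intblowup$, for a total of at most $p_2 + 2$. Outside the $v_1$-block currently being processed, at most $\setsize{\pconf_t} - 1 \le p_1 - 1$ other blocks each carry a single pebble on their sink node, by the macroscopic invariant. The transient peak of the constructed pebbling is therefore bounded by $(p_1 - 1) + (p_2 + 2) = p_1 + p_2 + 1$, which is the claimed upper bound.
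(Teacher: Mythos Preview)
Your proof is correct and follows essentially the same approach as the paper: simulate an optimal pebbling of $G_1$, where each move on a vertex $v_1$ is expanded into a persistent (un)pebbling of the $v_1$-block, itself obtained by simulating an optimal pebbling of $G_2$ cell by cell. Your five-step micro-sequence is just an explicit spelling-out of what the paper calls ``persistently pebble the $(v_1,v_2)$-cell'', and your legality checks and space count match the paper's. One tiny notational slip: when $\pebbling_1$ is \emph{placing} on $v_1$ (so $v_1 \notin \pconf_t$), the number of other occupied blocks is $\setsize{\pconf_t}$, not $\setsize{\pconf_t}-1$; but since $\setsize{\pconf_t \cup \{v_1\}} \le p_1$ this is still at most $p_1-1$, so your final bound is unaffected.
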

\begin{proof}
  Fix a persistent pebbling $\pebbling_1$ of $G_1$ using
  $\persistentprice{G_1}$ pebbles, and a persistent pebbling $\pebbling_2$ of
  $G_2$ using $\persistentprice{G_2}$ pebbles.
  We will construct a persistent pebbling $\pebblingBlowUp$ of
  $\blowupConst{G_1}{G_2}$ using $\persistentprice{G_1} +
  \persistentprice{G_2} + 1$ pebbles.
 
  We claim that the persistent price of each block of $\blowup$ is at most
  $\persistentprice{G_2} + 2$.
  For any $v_1 \in V(G_1)$, to persistently pebble the $v_1$-block (assuming
  the $v_1$-block is surrounded), simulate $\pebbling_2$ as follows:
  whenever $\pebbling_2$ pebbles a node $v_2 \in V(G_2)$, the simulating
  pebbling has a phase to persistently pebble the $(v_1, v_2)$-cell, and
  whenever $\pebbling_2$ unpebbles a node $v_2 \in V(G_2)$, then the simulating
  pebbling has a phase to persistently unpebble the $(v_1, v_2)$-cell.
  If the current configuration in $\pebbling_2$ is $\pconf$, and the
  configuration in the simulating pebbling at the end of a phase is $\pconfBlowUp$,
  then the simulating pebbling maintains the phase-invariant that $\pconfBlowUp =
  \bigl\{ (v_1, v_2)\outblowup \where v_2 \in \pconf \bigr\}$.
  Note that the simulating pebbling is legal:
  since the pebbling $\pebbling_2$ is legal,
  when $v_2$ is pebbled or unpebbled it is surrounded in the current
  configuration $\pconf$,
  so the $(v_1, v_2)$-cell is surrounded in the simulating configuration
  $\pconfBlowUp$ and the interior node can be pebbled or unpebbled;
  and we assume that the $v_1$-block is surrounded, so the exterior node can
  be pebbled or unpebbled.
  The simulating pebbling uses at most two more pebbles (on the exterior node
  and the interior node of each cell), for at most $\persistentprice{G_2} + 2$
  pebbles over the $v_1$-block.

  Then the resulting graph $\blowupConst{G_1}{G_2}$ can be persistently
  pebbled by simulating $\pebbling_1$ as follows.
  Whenever $\pebbling_1$ pebbles a node $v_1 \in V(G_1)$, the simulating
  pebbling $\pebblingBlowUp$ has a stage to persistently pebble the $v_1$-block;
  whenever $\pebbling_1$ unpebbles a node $v_1 \in V(G_1)$, the simulating
  pebbling $\pebblingBlowUp$ has a stage to persistently unpebble the $v_1$-block.
  If the current configuration in $\pebbling_1$ is $\pconf$, and the
  configuration in the simulating pebbling $\pebblingBlowUp$ at the end of a stage
  is $\pconfBlowUp$,
  then the simulating pebbling maintains the stage-invariant that $\pconfBlowUp =
  \bigl\{ (v_1, z_2)\outblowup \where v_1 \in \pconf \bigr\}$.
  Note that the simulating pebbling $\pebblingBlowUp$ is legal:
  since the pebbling $\pebbling_1$ is legal,
  when $v_1$ is pebble or unpebbled it is surrounded in the current
  configuration $\pconf$,
  so the $v_1$-block is surrounded in the simulating configuration $\pconfBlowUp$,
  so the $v_1$-block can be persistently pebbled or unpebbled in a stage of
  $\pebblingBlowUp$.
  When the $v_1$-block is pebbled or unpebbled in a stage of $\pebblingBlowUp$,
  there are at most $\persistentprice{G_1} - 1$ pebbles on other blocks of
  $\blowup$, and there are at most $\persistentprice{G_2} + 2$ pebbles in the
  $v_1$-block, for a total of $\persistentprice{G_1} + \persistentprice{G_2} +
  1$ pebbles.
\end{proof}

To prove the lower bound we extract simultaneous pebblings of $G_1$ and $G_2$ from any pebbling of $\blowup$ and use the known pebbling prices of $G_1$ and $G_2$ to argue that some configuration needs many pebbles. We do so by projecting a pebbling of $\blowup$ into $G_1$ and $G_2$ as the skeleton of a pebbling and then filling the gaps between configurations with a legal sequence of pebbling moves.

\mvcomment{I would suggest to use the notation $\mathbb{P}$ for $B$-configurations, $\mathbb{I}$ for $G_2$-configurations (so $\mathbb{I}^{w_1}(t) \defeq I(\mathbb{P}_t)$) and $\mathbb{E}$ for $G_1$-configurations.}

\begin{lemma}[Lower Bound]
  \label{lem:BlowUpLowerBound}
  $\Persistentprice{\blowupConst{G_1}{G_2}} \ge \persistentprice{G_1} +
  \persistentprice{G_2} + 1$.
\end{lemma}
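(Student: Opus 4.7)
My plan is to follow the blueprint described in Section~\ref{sec:pspace-inapproximability-overview}: given any persistent pebbling $\pebblingBlowUp = (\pconfBlowUp_0, \ldots, \pconfBlowUp_{\stoptime})$ of $\blowupConst{G_1}{G_2}$, I would extract a reversible pebbling $\pebbling$ of $G_1$ by projecting each $\pconfBlowUp_t$ to a vertex subset of $V(G_1)$, and then use the forced peak $\persistentprice{G_1}$ in $\pebbling$ to identify a configuration of size at least $\persistentprice{G_1}+\persistentprice{G_2}+1$ in $\pebblingBlowUp$. The first step is to attach to each block a notion of \vislocked and \perlocked configuration in the sense of Definition~\ref{def:VPlocked}, viewing the $v_1$-block as a standalone subgraph whose ext-nodes are its sources; the relevant reference costs are $\visitprice{G_2}+2$ and $\persistentprice{G_2}+2$ respectively, the $+2$ overhead coming from the ext/int pair of each cell and being entirely analogous to the moulding bound of Lemma~\ref{lem:Molding}.

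Next I would define the projection $\projection{\pconfBlowUp_t}$ in two stages, mirroring the description in the overview: let $\mathbb{E}_0$ consist of every $v_1$ whose block is \perlocked under $\pconfBlowUp_t$, and let $\mathbb{E}_1$ consist of those $v_1 \notin \mathbb{E}_0$ that are not already surrounded by $\mathbb{E}_0$ and whose block is \vislocked; then set $\projection{\pconfBlowUp_t} \defeq \mathbb{E}_0 \union \mathbb{E}_1$. I would then verify that the sequence of projections, after inserting between consecutive $\pconfBlowUp_t$'s a short legal pebbling sequence that sweeps pebbles onto (and off of) $\mathbb{E}_1$-vertices in topological order, is itself a valid persistent reversible pebbling $\pebbling$ of $G_1$. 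Legality rests on the observation that a block can become \perlocked or \vislocked only after the output nodes of its $G_1$-predecessor blocks have already been pebbled, which forces those predecessors into $\mathbb{E}_0$ and hence makes the corresponding $G_1$-move legal.

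Because $\pebbling$ is persistent it must reach size at least $\persistentprice{G_1}$ at some time $t^*$. I would then split into two cases depending on how the peak in $\pebbling$ is triggered in $\pebblingBlowUp$. In the easier case the triggering event is a block $v_1^*$ that has just become \perlocked: then the block's own restriction must have passed through a configuration of $\persistentprice{G_2}+2$ pebbles, while the remaining $\persistentprice{G_1}-1$ vertices of the projection sit in blocks carrying at least one pebble each, for a total of $\persistentprice{G_1}+\persistentprice{G_2}+1$. In the harder case the event is a \vislocked transition on some $v_1^*$, which only guarantees $\visitprice{G_2}+2$ pebbles in that block and is potentially one short; here the side condition that $v_1^*$ is not surrounded by $\mathbb{E}_0$ is precisely what will be used to exhibit a second \vislocked block whose sink cell carries two additional pebbles, recovering the missing unit.

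The main obstacle I anticipate is making the side condition on $\mathbb{E}_1$ and the timing convention for ``the event that triggers the peak'' precise enough that the above two-case analysis is airtight, especially in the \vislocked case where one has to locate the extra \vislocked block and simultaneously account for its ext/int pebbles without double counting. A secondary, but still delicate, point is checking that the $+2$ cell overhead propagates correctly through the sum over blocks; this is the exact place where the three-node-cell structure of $\blowupConst{G_1}{G_2}$ departs from the cleaner toy construction $\blowupExpensiveConst{G_1}{G_2}$ discussed in the overview, and where the bound tightens from $\persistentprice{G_1}+\persistentprice{G_2}-1$ to $\persistentprice{G_1}+\persistentprice{G_2}+1$.
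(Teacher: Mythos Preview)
Your overall strategy matches the paper's, but the legality argument for the projected $G_1$-pebbling has a genuine gap. The claim that ``a block can become \perlocked or \vislocked only after the output nodes of its $G_1$-predecessor blocks have already been pebbled, which forces those predecessors into $\mathbb{E}_0$'' is false. Take $G_1$ to be a single edge $u_1 \to v_1$ and $G_2$ a single vertex~$a$, and consider the reversible sequence in $\blowupConst{G_1}{G_2}$ that first pebbles all three nodes of the $u_1$-block, then pebbles $(v_1,a)\extblowup$, then completely empties the $u_1$-block (legal, since $(u_1,a)\outblowup$ still has both its predecessors when it is removed), and only afterwards pebbles $(v_1,a)\intblowup$ and $(v_1,a)\outblowup$. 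At this last step the $v_1$-block becomes \perlocked for the first time in your sense, yet the $u_1$-block is empty, so $u_1$ lies in neither $\mathbb{E}_0$ nor $\mathbb{E}_1$; your sweep over $\mathbb{E}_1$ cannot supply the missing predecessor and the projected move on~$v_1$ is illegal. This prefix extends to a complete persistent pebbling of~$\blowup$, so your projection must cope with it.

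The paper repairs exactly this with two devices your proposal lacks. First, \perlocked and \vislocked are not read off the raw block configuration but off an \emph{interior mapping} $\intmappingBlock{v_1}(\pconfBlowUp)$ that is itself a configuration on~$G_2$. Second---and this is the crux---when the $v_1$-block is not surrounded (some predecessor sink $(u_1,z_2)\outblowup$ is unpebbled), the interior mapping is computed from the \emph{closure} $\pebClosure{\pconfBlowUp}$, which saturates the block under all legal int- and out-placements. In the example above the closure fills in the entire $v_1$-block already at the moment $(u_1,a)\outblowup$ is removed, so the paper declares $v_1$ \perlocked precisely while $u_1$ is still in the projection, and its sweep over $\visproj{\timet-1}\cup\visproj{\timet}$ (not merely your~$\mathbb{E}_1$) legally transits through $\{u_1,v_1\}$. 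Without the closure you cannot synchronise the transition of $v_1$ with the presence of~$u_1$. (Separately, your assertion that a standalone block has prices $\persistentprice{G_2}+2$ and $\visitprice{G_2}+2$ ``by analogy with moulding'' also needs justification---a block is not a moulding, since each cell has its own ext-source rather than one shared~$s$---but the paper never relies on block prices; its $+2$ comes from the single cell where the triggering output-node move occurs carrying all three of its pebbles at that instant.)
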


\begin{proof}
  Fix any persistent pebbling
  $\pebblingBlowUp =
  (\pconfBlowUp_0, \pconfBlowUp_1, \dotsc, \pconfBlowUp_{\stoptime})$ of
  $\blowupConst{G_1}{G_2}$.
  For every $v_1 \in V(G_1)$, we are going to simulate a pebbling
  $\pebblingBlock{v_1}$ on $G_2$ based (essentially) on the configurations of
  $\pebblingBlowUp$ over the $v_1$-block.
  From the family of pebblings $\{ \pebblingBlock{v_1} \}_{v_1 \in V(G_1)}$,
  we then simulate a persistent pebbling $\pebbling$ on $G_1$.
  
  In more detail, for each $v_1\in V(G_1)$ we define a mapping
  $\funcdescr{\intmappingBlock{v_1}}{\blowup}{G_2}$ and we view the
  sequence of configurations $(\intmappingBlock{v_1}(\pconf_t'))_{\timet\in[0,\stoptime]}$ as the
  skeleton of a pebbling of $G_2$. We fill the gaps between
  configurations according to the algorithm described below to obtain a legal
  pebbling $\pebblingBlock{v_1}$. Similarly we define a mapping
  $\funcdescr{\formatfunctiontoset{Ext}}{\blowup}{G_1}$ to construct the pebbling
  $\pebbling$ on $G_1$.

  \mvcomment{Move projection and closure outside and give some basic properties and examples?}
  To describe the mappings we need some definitions.
  Given a configuration $\pconfBlowUp$ in $\pebblingBlowUp$ of
  $\blowupConst{G_1}{G_2}$, its projection to the output (resp.~exterior,
  interior) nodes of the $v_1$-block is
  $\outproj{v_1}{\pconfBlowUp} \defeq \bigl\{ v_2 \in V(G_2)
    \where \allowbreak
    (v_1, v_2)\outblowup \in \pconfBlowUp \bigr\}$
  (resp.~%
  $\extproj{v_1}{\pconfBlowUp} \defeq \bigl\{ v_2 \in V(G_2)
    \where \allowbreak
    (v_1, v_2)\extblowup \in \pconfBlowUp \bigr\}$,
  $\intproj{v_1}{\pconfBlowUp} \defeq \bigl\{ v_2 \in V(G_2)
    \where \allowbreak
    (v_1, v_2)\intblowup \in \pconfBlowUp \bigr\}$).
  
  The closure
  $\pebClosure{\pconfBlowUp} \subseteq
  V\bigl( \blowupConst{G_1}{G_2} \bigr)$
  is the smallest set of nodes containing $\pconfBlowUp$ that is closed
  under pebble placements on interior or output nodes;
  equivalently, $\pebClosure{\pconfBlowUp}$ can be generated by the
  following algorithm:
  Start with $\pconfBlowUp$, while there is a node
  $v \in V\bigl( \blowupConst{G_1}{G_2} \bigr)$ which is
  an interior node $v = (v_1, v_2)\intblowup$ or
  an output node $v = (v_1, v_2)\outblowup$ such that
  $v$ is surrounded by, but not in, the subset of nodes having pebbles,
  pebble $v$. Note that the closure of a $v_1$-block does not depend on
  other blocks as they are connected only through exterior nodes.

  For brevity, given a graph $G$ and a subset $U \subseteq V(G)$ of
  vertices, denote $\unsurrounded GU \defeq \{ v \in V(G) \where
  \prednode[G]{v} \not\subseteq U \}$ as the subset of nodes in $G$
  not surrounded by $U$.

  The block mapping is
  $\IntInvariantFunc{w_1}{\pconfBlowUp} \defeq
  \IntInvariant{w_1}{\pconfBlowUp}$

  We define the interior mapping $\intmappingBlock{w_1}(\pconfBlowUp)$ to be
  $\IntInvariantFunc{w_1}{\pconfBlowUp}$ if the $w_1$-block is
  surrounded, and $\IntInvariantFunc{w_1}{\pebClosure{\pconfBlowUp}} 
  $ if the $w_1$-block is
  not surrounded.

  Given a configuration $\pconfBlowUp$, 
  its \perprojText is $\perproj{\pconfBlowUp} \defeq
  \bigl\{
    v_1 \in V(G_1)
    \where \allowbreak
    \intmappingBlock{v_1}(\pconfBlowUp)
    \text{ is \perlocked}
  \bigr\}$, and
  \visprojText is $\visproj{\timet} \defeq
  \bigl\{
    v_1 \in V(G_1)
    \where \allowbreak
    \intmappingBlock{v_1}(\pconfBlowUp)
    \text{ is \vislocked}
  \bigr\}$.

  Finally the exterior mapping is $\extInvariantFunc{\pconfBlowUp} \defeq
  \ExtInvariant{\pconfBlowUp}$.

  We abuse the notation for mappings from configurations in
  $\pebblingBlowUp$ and write $f(t)$ to mean $f(\pconfBlowUp_t)$. In addition we
  define
  $\pconfBlockTime{v_1}{\timet}=\intmappingBlock{v_1}(\pconf_t')$ and
  $\pconfTime{\timet}=\extInvariantFunc{\pconf_t'}$.  Note that there
  can be multiple pebble moves between, say, $\pconfTime{\timet-1}$
  and $\pconfTime{\timet}$.

  We construct the pebblings $\pebblingBlocks$ and $\pebbling$
  according to \refalg{alg:reasonable}, which are legal by
  \refclaim{cla:Correctness}.
  Note that $\pebbling$ is a persistent pebbling of $G_1$:
  $\pconfBlockTime{w_1}{\stoptime}
  = \IntInvariantUnsurroundedFunc{w_1}{\stoptime}$, which is
  $\emptyset$ if $w_1$ is not the sink of $G_1$, and
  $\set{z_2}$ if $w_1$ is the sink of $G_1$, so
  $\pconfTime{\stoptime}
  = \extInvariantFunc{\stoptime} = \set{z_1}$, the
  sink of $G_1$.
  Then the lower bound follows from \refclaim{cla:LegalImpliesLowerBound}.
\end{proof}

  The algorithm to construct the remaining configurations in the
  pebblings $\pebblingBlocks$ and $\pebbling$, given
  $\pebblingBlowUp$, is essentially to insert and remove missing
  pebbles in topological order whenever two configuration are
  different, and make such a pebbling go through a specific
  configuration in the exterior case. We give an explicit description
  below and in \refclaim{cla:Correctness} we prove that it is equivalent
  to this implicit description.

\begin{definition}[\Reasonablepebbling]
  Given a configuration $\pconf$, a set of vertices to pebble
  $\toPebble$ and a set of vertices to unpebble $\toUnpebble$, the
  \introduceterm{\reasonablepebbling} is the following
  pebbling:
  \begin{itemize}
    \item start with $\pconf$;
    \item for each $v \in \toPebble$ in a topological order,
      pebble $v$;
    \item for each $v \in \toUnpebble$ in a reverse topological order,
      unpebble $v$.
  \end{itemize}
  Furthermore, the \reasonablepebbling between two configurations
  $\pconf_1$, $\pconf_2$ is the \reasonablepebbling with
  $\pconf=\pconf_1$, $\toPebble=\pconf_2 \setminus \pconf_1$, and
  $\toUnpebble=\pconf_1 \setminus \pconf_2$.
\end{definition}

\begin{claim}[Legality]
  \label{cla:Legality}
  Assume $\pconf, \toPebble, \toUnpebble \subseteq V(G)$ are given.
  If for every $v \in \toPebble \union \toUnpebble$, we have
  $\prednode[G]{v} \subseteq \pconf \union \toPebble$, then the
  \reasonablepebbling is legal.
\end{claim}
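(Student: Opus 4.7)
The plan is to verify that every move produced by the algorithm is a legal reversible pebbling move, which amounts to showing that whenever a vertex $v$ is pebbled or unpebbled, the predecessors $\prednode[G]{v}$ are all present in the current configuration. The argument splits naturally into the placement phase and the unpebbling phase, and in both phases the key observation is that the (reverse) topological order interacts with the hypothesis $\prednode[G]{v} \subseteq \pconf \cup \toPebble$ in a controlled way.

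For the placement phase, I would fix a topological order on $\toPebble$ and track the current configuration as the algorithm proceeds. Just before pebbling some $v \in \toPebble$, the configuration is $\pconf$ together with all elements of $\toPebble$ that precede $v$ in that order. Any predecessor $u$ of $v$ lies in $\pconf \cup \toPebble$ by hypothesis; if $u \in \pconf$ then it is still present since no unpebbling has started yet, and if $u \in \toPebble$ then $u$ strictly precedes $v$ in every topological order and has therefore already been pebbled. Hence the placement of $v$ is legal.

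For the unpebbling phase, I would use the analogous argument with reverse topological order. Just before unpebbling some $v \in \toUnpebble$, every vertex of $\toUnpebble$ that strictly succeeds $v$ topologically has already been removed, while every other vertex of $\pconf \cup \toPebble$ is still present. A predecessor $u$ of $v$ again lies in $\pconf \cup \toPebble$; if $u \in \toPebble$ then it is untouched by the unpebbling phase, and if $u \in \pconf$ and additionally $u \in \toUnpebble$ then $u$ precedes $v$ topologically and will therefore only be unpebbled after $v$. Either way $u$ is still pebbled when $v$ is removed, so the reversible removal is legal.

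The main subtlety I expect to handle is a careful bookkeeping of the case in which $\pconf$, $\toPebble$, and $\toUnpebble$ overlap, in particular when a vertex in $\toPebble$ is a predecessor of a vertex in $\toUnpebble$, since those two vertices could be processed in opposite orders in the two phases. For the intended application, where $\toPebble = \pconf_2 \setminus \pconf_1$ and $\toUnpebble = \pconf_1 \setminus \pconf_2$ with $\pconf = \pconf_1$, these sets are automatically disjoint and $\toUnpebble \subseteq \pconf$, so the case analysis above collapses; but verifying the claim at this level of generality only requires that the arguments above use strict precedence in the (reverse) topological order, which they do, so no further work is needed.
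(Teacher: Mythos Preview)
Your proof is correct and follows essentially the same approach as the paper: verify the placement phase using the topological order together with the hypothesis $\prednode[G]{v} \subseteq \pconf \cup \toPebble$, and verify the removal phase using the reverse topological order together with the same hypothesis. The paper's argument is terser and does not spell out the overlap cases you flag, but the reasoning is identical.
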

\begin{proof}
  For any $v \in \toPebble$, right before $v$ is pebbled,
  we know $\prednode[G]{v}$ have pebbles since
  $\prednode[G]{v} \subseteq \pconf \union \toPebble$ and the pebble
  placement on $\toPebble$ proceeds in a topological order.
  So all pebble placements on $\toPebble$ are legal.

  For any $v \in \toUnpebble$, right before $v$ is unpebbled,
  we know $\prednode[G]{v}$ have pebbles since
  $\prednode[G]{v} \subseteq \pconf \union \toPebble$ and the pebble
  placement on $\toUnpebble$ proceeds in a reverse topological order.
  So all pebble removals on $\toUnpebble$ are legal.
\end{proof}

\begin{corollary}[Legality]
  \label{cor:Legality}
  Assume $\pconf_1, \pconf_2 \subseteq V(G)$ are given.  Assume the
  sets $\toPebble \defeq \pconf_2 \setminus \pconf_1$ and $\toUnpebble
  \defeq \pconf_1 \setminus \pconf_2$ satisfy that for every $v \in
  \toPebble \union \toUnpebble = \pconf_1 \symmdiff \pconf_2$, we have
  $\prednode[G]{v} \subseteq \pconf_1 \union \toPebble = \pconf_1
  \union \pconf_2$, then the \reasonablepebbling over $\pconf_1$,
  $\toPebble$, $\toUnpebble$ is legal.
\end{corollary}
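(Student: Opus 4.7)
The plan is to derive this corollary as a direct instantiation of \refclaim{cla:Legality}. First I would set $\pconf \defeq \pconf_1$, $\toPebble \defeq \pconf_2 \setminus \pconf_1$, and $\toUnpebble \defeq \pconf_1 \setminus \pconf_2$, so that the \reasonablepebbling in the hypothesis of the corollary is precisely the \reasonablepebbling in the sense of \refclaim{cla:Legality} for this choice of parameters.

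Next I would verify that the hypothesis of \refclaim{cla:Legality} is satisfied. Observe that
\begin{equation*}
  \pconf \union \toPebble
  = \pconf_1 \union (\pconf_2 \setminus \pconf_1)
  = \pconf_1 \union \pconf_2
  \eqperiod
\end{equation*}
Also, $\toPebble \union \toUnpebble = (\pconf_2 \setminus \pconf_1) \union (\pconf_1 \setminus \pconf_2) = \pconf_1 \symmdiff \pconf_2$. By the assumption of the corollary, for every $v \in \pconf_1 \symmdiff \pconf_2$ we have $\prednode[G]{v} \subseteq \pconf_1 \union \pconf_2 = \pconf \union \toPebble$, which is exactly the premise required by \refclaim{cla:Legality}.

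Applying \refclaim{cla:Legality} then yields that the \reasonablepebbling over $\pconf_1$, $\toPebble$, $\toUnpebble$ is legal, which is the conclusion of the corollary. Since the argument is a one-line specialization of the preceding claim, there is no substantive obstacle; the only thing to be careful about is matching up the set identities $\pconf \union \toPebble = \pconf_1 \union \pconf_2$ and $\toPebble \union \toUnpebble = \pconf_1 \symmdiff \pconf_2$ so that the hypothesis of \refclaim{cla:Legality} is seen to coincide with the hypothesis stated in the corollary.
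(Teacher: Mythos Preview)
Your proposal is correct and matches the paper's own proof, which is simply ``Apply \refclaim{cla:Legality} on $\pconf_1, \toPebble, \toUnpebble$.'' You have merely spelled out the set identities that make the hypotheses line up, which is exactly the intended one-line specialization.
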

\begin{proof}
  Apply \refclaim{cla:Legality} on $\pconf_1, \toPebble, \toUnpebble$.
\end{proof}

\begin{algorithm}
  \label{alg:reasonable}
  As \refclaim{cla:Correctness} shows, we only need to consider the case when an output node $(v_1, v_2)\outblowup$
  is pebbled or unpebbled to get to $\pconfBTime{\timet}$ in
  $\pebblingBlowUp$. In this case we run the following steps in sequence:
  \begin{enumerate}[(a)]
  \item 
    \label{stepIIa}
    if the $v_1$-block is surrounded in $\pconfBTime{\timet}$, we insert
    the \reasonablepebbling between $\IntInvariantSurroundedFunc{v_1}{\timet-1}$
    and $\IntInvariantSurroundedFunc{v_1}{\timet}$ into $\pebblingBlock{v_1}$.
  \item if $v_2 = z_2$,
    \label{stepIIb}
    \begin{enumerate}[(i)]
    \item 
      \label{stepIIbi}
      if $(v_1, v_2)\outblowup$ is pebbled to get to
      $\pconfBTime{\timet}$ in $\pebblingBlowUp$, for each successor $w_1$ of
      $v_1$ such that the $w_1$-block is surrounded in
      $\pconfBTime{\timet}$, we insert the \reasonablepebbling between $\IntInvariantUnsurroundedFunc{w_1}{\timet-1}$ and $\IntInvariantSurroundedFunc{w_1}{\timet}$ into $\pebblingBlock{w_1}$.
    \item 
      \label{stepIIbii}
      if $(v_1, v_2)\outblowup$ is unpebbled to get to
      $\pconfBTime{\timet}$ in $\pebblingBlowUp$, for each successor $w_1$ of
      $v_1$ such that the $w_1$-block is surrounded in
      $\pconfBTime{\timet-1}$, we insert the \reasonablepebbling between $\IntInvariantSurroundedFunc{w_1}{\timet-1}$ and $\IntInvariantUnsurroundedFunc{w_1}{\timet}$ into $\pebblingBlock{w_1}$.
    \end{enumerate}
  \item
    \label{stepIIc}
    if $\extInvariantFunc{\timet-1} \ne \extInvariantFunc{\timet}$, let
    $\prePebbleUnpebble
    \defeq \visproj{\timet-1} \union \visproj{\timet}$.
    We insert the \reasonablepebbling with
    $\toPebble \defeq \prePebbleUnpebble \setminus
    \extInvariantFunc{\timet-1}$, and
    $\toUnpebble \defeq \prePebbleUnpebble \setminus
    \extInvariantFunc{\timet}$ into $\pebbling$.
  \end{enumerate}
\end{algorithm}

\begin{claim}[Correctness]
  \label{cla:Correctness}
  The pebblings $\pebblingBlocks$ and $\pebbling$ are legal.
\end{claim}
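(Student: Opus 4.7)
The plan is to reduce legality of each insertion to the predecessor-containment criterion of \refcor{cor:Legality}, and to handle separately the three places in the algorithm where configurations jump: inside a single block (steps (a), (b)(i), (b)(ii)) and at the exterior level (step (c)).

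First I would justify the opening sentence of the algorithm, namely that pebble moves in $\pebblingBlowUp$ on exterior or interior nodes induce no change in any mapping $\intmappingBlock{w_1}$ or $\extInvariantFunc{\cdot}$, so that no insertion is required for them. An exterior move modifies only $\extproj{v_1}{\cdot}$, which appears in none of the three mappings and does not affect surroundedness of any block. An interior move on $(v_1,v_2)\intblowup$ is legal in $\pebblingBlowUp$ only when $\prednode[G_2]{v_2}\subseteq\outproj{v_1}{\pconfBlowUp}$, so $v_2\notin\unsurrounded{G_2}{\outproj{v_1}{\pconfBlowUp}}$; hence the change in $\intproj{v_1}$ is invisible to $\IntInvariantFunc{v_1}{\cdot}$, and is also invisible to the closure (the closure only cares about the inner structure above output pebbles). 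After this reduction it suffices to inspect output-node moves, which is exactly what the algorithm does.

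Next I would handle the block pebblings of cases (a), (b)(i), (b)(ii). In each case the insertion is a reasonable pebbling between two values of $\intmappingBlock{w_1}$ differing by a single output-node move in $\pebblingBlowUp$. Writing each mapping explicitly as $\outproj{w_1}{\cdot}\cup\bigl(\intproj{w_1}{\cdot}\cap\unsurrounded{G_2}{\outproj{w_1}{\cdot}}\bigr)$ (possibly after taking closure, which only adds vertices whose $G_2$-predecessors are already inside the set being closed), I would verify that any vertex in the symmetric difference of the two configurations has all its $G_2$-predecessors in their union. The key observation is that a vertex $v_2$ that enters via the closure brings in all of $\prednode[G_2]{v_2}$ by definition, and a vertex that enters because $v_2\in\outproj{w_1}$ changes has its predecessors still in $\outproj{w_1}$ of one of the two times because the single $\pebblingBlowUp$-move does not remove them. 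Case (b) additionally requires checking the jump between the surrounded-block and unsurrounded-block definitions of $\intmappingBlock{w_1}$; here the closure side always contains the non-closure side, and the vertices added by the closure satisfy the predecessor condition by construction.

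The harder step, and the main obstacle, is case (c), the exterior insertion. Here the reasonable pebbling goes $\extInvariantFunc{\timet-1}\rightsquigarrow\prePebbleUnpebble\rightsquigarrow\extInvariantFunc{\timet}$ via $\prePebbleUnpebble=\visproj{\timet-1}\union\visproj{\timet}$, and I need to verify the predecessor criterion for both legs. For any $v_1$ appearing in the first symmetric difference, $v_1\in\perproj{\cdot}\cup\visproj{\cdot}$ at one of the two times, so the $v_1$-block is \vislocked at that time. Legality of $\pebblingBlowUp$ and the definition of \vislocked imply that the exterior nodes of the $v_1$-block feeding the locked structure are surrounded by output pebbles of the predecessor blocks; hence each $w_1\in\prednode[G_1]{v_1}$ has its $w_1$-block sufficiently populated to force $w_1\in\visproj{\cdot}\subseteq\prePebbleUnpebble$. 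Thus $\prednode[G_1]{v_1}\subseteq\extInvariantFunc{\timet-1}\cup\prePebbleUnpebble$, which is the union of the endpoints of the first leg. The analogous argument, using $\visproj{\timet}$ in place of $\visproj{\timet-1}$, settles the second leg. Applying \refcor{cor:Legality} twice finishes the proof, so both $\pebblingBlock{w_1}$ and $\pebbling$ are legal pebblings of $G_2$ and $G_1$ respectively.
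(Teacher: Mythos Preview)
Your overall strategy of reducing each inserted reasonable pebbling to the predecessor-containment criterion is the right one, and it is what the paper does (splitting the bookkeeping into \refclaim{cla:Correctness} and the actual legality checks into \refclaim{cla:InvariantImpliesLegal}). But your argument for step~(c) contains a genuine gap. You assert that if $w_1$ lies in the relevant symmetric difference then the $w_1$-block is \vislocked, and that ``legality of $\pebblingBlowUp$ and the definition of \vislocked imply that the exterior nodes of the $w_1$-block feeding the locked structure are surrounded by output pebbles of the predecessor blocks.'' This implication is false: being \vislocked is a property of the projected $G_2$-configuration $\intmappingBlock{w_1}(\pconfBlowUp)$ alone and says nothing about exterior nodes or predecessor-block sinks in $\pconfBlowUp$; an unsurrounded block can be \vislocked via the closure. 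The paper's argument is different and requires a case split on $w_1 \in \prePebbleUnpebble \setminus \extInvariantFunc{\timet}$. Either $w_1 \in \visproj{\timet}$, in which case $w_1 \notin \extInvariantFunc{\timet}$ together with the definition of $\extInvariantFunc{\cdot}$ forces $\prednode[G_1]{w_1} \subseteq \perproj{\timet} \subseteq \prePebbleUnpebble$; or $w_1 \in \visproj{\timet-1} \setminus \visproj{\timet}$, so $\pconfBlockTime{w_1}{\timet-1} \ne \pconfBlockTime{w_1}{\timet}$, and by design of the algorithm this only happens in steps~(a), (b)(i), (b)(ii), each of which requires the $w_1$-block to be surrounded at time $\timet-1$ or $\timet$. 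Surroundedness then puts $z_2$ into the output projection of every predecessor block, whence each $u_1 \in \prednode[G_1]{w_1}$ lands in $\visproj{\cdot} \subseteq \prePebbleUnpebble$. That structural link---block configuration changes only when the block is surrounded---is the missing ingredient.

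Two smaller slips. For exterior-node moves you say the closure is unaffected because it ``only cares about the inner structure,'' but an output node enters the closure only when \emph{both} its interior and exterior predecessors are present, so exterior pebbles do matter; the actual reason nothing changes is that a legal exterior move forces its own block to be surrounded, while the closure is invoked only for unsurrounded blocks, and for any other block $w_1$ the restricted configuration is untouched. And in step~(b), the claim that ``the closure side always contains the non-closure side'' is incorrect: since $\unsurrounded{G_2}{\outproj{w_1}{\cdot}}$ is anti-monotone in the output projection, $\IntInvariantFunc{w_1}{\pconfBlowUp}$ and $\IntInvariantFunc{w_1}{\pebClosure{\pconfBlowUp}}$ are in general incomparable (a simple two-vertex $G_2$ already exhibits this). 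The paper accordingly treats both directions of the symmetric difference by separate arguments.
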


\begin{proof}
  At the beginning $t = 0$, we know $\pconfBTime{0} = \emptyset$, so
  for any $w_1 \in V(G_1)$ we have
  $\IntInvariantSurroundedFunc{w_1}{0} = \emptyset$ and
  $\IntInvariantUnsurroundedFunc{w_1}{0} = \emptyset$, matching
  $\pconfBlockTime{w_1}{0} = \emptyset$.
  Also $\perproj{0} = \emptyset = \visproj{0}$, and
  $\pconfTime{0} = \emptyset = \extInvariantFunc{0}$.

  For any two consecutive configurations we show that either they are the same or the algorithm inserted a \reasonablepebbling between them, which is legal by \refclaim{cla:InvariantImpliesLegal}.

  When $t > 0$, if an exterior node or an interior node is pebbled or
  unpebbled to get to $\pconfBTime{\timet}$ in $\pebblingBlowUp$, fix any $w_1
  \in V(G_1)$.  Note that the $w_1$-block is surrounded in
  $\pconfBTime{\timet-1}$ iff it is surrounded in $\pconfBTime{\timet}$.

  If the $w_1$-block is surrounded, then $\outproj{w_1}{\cdot}$ is the
  same at $\pconfBTime{\timet-1}$ and $\pconfBTime{\timet}$, and
  $\VisitPartInterior{w_1}{\cdot}$ is also the same at
  $\pconfBTime{\timet-1}$ and $\pconfBTime{\timet}$ because a node being
  pebbled or unpebbled must be surrounded, and the predecessors of an
  interior node are output nodes.  Hence
  $\pconfBlockTime{w_1}{\timet-1} = \IntInvariantSurroundedFunc{w_1}{\timet-1} =
  \IntInvariantSurroundedFunc{w_1}{\timet} = \pconfBlockTime{w_1}{\timet}$.

  Otherwise the $w_1$-block is not surrounded, then
  the pebble move is not on any exterior node in the $w_1$-block, hence
  $\pebClosure{\cdot}$ is the same at
  $\pconfBTime{\timet-1}$ and $\pconfBTime{\timet}$
  over the $w_1$-block, so
  $\IntInvariantFunc{w_1}{\pebClosure{\cdot}}$ is also the same at
  $\pconfBTime{\timet-1}$ and $\pconfBTime{\timet}$.

  Since $\extInvariantFunc{\cdot}$ only depends on $\pconfBlocksTime{\timet}$,
  it holds that $\pconf_{\timet-1}=\pconf_{\timet}$ as well.

  Focus on the case when an output node $(v_1, v_2)\outblowup$ is
  pebbled or unpebbled to get to $\pconfBTime{\timet}$ in $\pebblingBlowUp$,
  fix any $w_1 \in V(G_1)$.

  \begin{itemize}
  \item If $w_1 \ne v_1$ and either $w_1$ is not a successor of $v_1$
    or $v_2 \ne z_2$, note that the $w_1$-block is surrounded in
    $\pconfBTime{\timet-1}$ iff it is surrounded in $\pconfBTime{\timet}$.
    Since $\pconfBTime{\timet-1} = \pconfBTime{\timet}$ over the
    $w_1$-block, we have $\IntInvariantSurroundedFunc{w_1}{\timet-1} =
    \IntInvariantSurroundedFunc{w_1}{\timet}$ and
    $\IntInvariantUnsurroundedFunc{w_1}{\timet-1} =
    \IntInvariantUnsurroundedFunc{w_1}{\timet}$.

  \item If $w_1 = v_1$, then
    the $w_1$-block is surrounded in $\pconfBTime{\timet-1}$ iff it is
    surrounded in $\pconfBTime{\timet}$.
    If the $w_1$-block is not surrounded, since
    the pebble move is not on any exterior node in the $w_1$-block, we have
    $\pebClosure{\cdot}$ is the same at
    $\pconfBTime{\timet-1}$ and $\pconfBTime{\timet}$
    over the $w_1$-block, so
    $\IntInvariantFunc{w_1}{\pebClosure{\cdot}}$ is also the same at
    $\pconfBTime{\timet-1}$ and $\pconfBTime{\timet}$.
      
    Otherwise the $w_1$-block is surrounded and Step~\ref{stepIIa} inserts a
    \reasonablepebbling into $\pebblingBlock{w_1}$.

  \item If $w_1$ is a successor of $v_1$ and
    $v_2 = z_2$
    the $w_1$-block cannot be surrounded
    in both $\pconfBTime{\timet-1}$ and $\pconfBTime{\timet}$.

    If the $w_1$-block is unsurrounded
    in both $\pconfBTime{\timet-1}$ and $\pconfBTime{\timet}$,
    since
    $\pconfBTime{\timet-1} = \pconfBTime{\timet}$
    over the $w_1$-block, we have
    $\IntInvariantUnsurroundedFunc{w_1}{\timet-1}
    = \IntInvariantUnsurroundedFunc{w_1}{\timet}$.

    Otherwise $w_1$ is surrounded in exactly one of
    $\pconfBTime{\timet-1}$ and $\pconfBTime{\timet}$ and Step~\ref{stepIIb} inserts a
    \reasonablepebbling into $\pebblingBlock{w_1}$.
  \end{itemize}

Finally, Step~\ref{stepIIc} inserts a \reasonablepebbling into $\pebbling$ as needed.

\end{proof}

\begin{claim}[Exterior Symmetry]
  \label{cla:ExteriorSymmetry}
  In Step~\ref{stepIIc}, we have
  $\extInvariantFunc{\timet-1} \union \toPebble
  = \prePebbleUnpebble
  = \extInvariantFunc{\timet} \union \toUnpebble$.
\end{claim}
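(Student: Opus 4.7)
The plan is to prove the two equalities by unpacking the definitions and showing that $\extInvariantFunc{\timet-1}$ and $\extInvariantFunc{\timet}$ are both contained in $\prePebbleUnpebble$. Once those containments are in place, both equalities become the routine set-theoretic identity $X \union (Y \setminus X) = Y$ whenever $X \subseteq Y$.

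First I would unfold the definition of the exterior mapping: by construction
\begin{equation*}
  \extInvariantFunc{\timet}
  = \ExtInvariant{\timet}
  = \perproj{\timet} \union \bigl(\visproj{\timet} \intersect \unsurrounded{G_1}{\perproj{\timet}}\bigr)
  \subseteq \perproj{\timet} \union \visproj{\timet}\eqperiod
\end{equation*}
The key structural fact I would then invoke is that being \perlocked implies being \vislocked: if the $v_1$-block has paid the persistent price $\persistentprice{G_2}$, then in particular it has paid at least $\visitprice{G_2}$ (by \refeq{eq:surrounding-leq-visiting-leq-persistent}), so the minimum cost to clear the block is at least $\visitprice{G_2}$. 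Hence $\perproj{\timet} \subseteq \visproj{\timet}$, which collapses the union above to $\extInvariantFunc{\timet} \subseteq \visproj{\timet}$, and likewise $\extInvariantFunc{\timet-1} \subseteq \visproj{\timet-1}$.

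From the definition $\prePebbleUnpebble \defeq \visproj{\timet-1} \union \visproj{\timet}$ in Step~(c), it follows immediately that both $\extInvariantFunc{\timet-1}$ and $\extInvariantFunc{\timet}$ are subsets of $\prePebbleUnpebble$. Using these containments I would compute
\begin{equation*}
  \extInvariantFunc{\timet-1} \union \toPebble
  = \extInvariantFunc{\timet-1} \union \bigl(\prePebbleUnpebble \setminus \extInvariantFunc{\timet-1}\bigr)
  = \extInvariantFunc{\timet-1} \union \prePebbleUnpebble
  = \prePebbleUnpebble
\end{equation*}
and symmetrically
\begin{equation*}
  \extInvariantFunc{\timet} \union \toUnpebble
  = \extInvariantFunc{\timet} \union \bigl(\prePebbleUnpebble \setminus \extInvariantFunc{\timet}\bigr)
  = \prePebbleUnpebble\eqcomma
\end{equation*}
completing the claim.

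There is no real obstacle here; the claim is essentially a bookkeeping identity, and the only subtlety is recognizing the containment $\perproj{\timet} \subseteq \visproj{\timet}$, which is why the \visprojText controls the set of blocks that ever need attention. The main use of this claim later is to certify that the \reasonablepebbling inserted into $\pebbling$ at Step~(c) passes through a common intermediate configuration $\prePebbleUnpebble$ on $G_1$, which is precisely what is needed to lower-bound the number of pebbles required by $\pebblingBlowUp$ via the extracted pebbling $\pebbling$.
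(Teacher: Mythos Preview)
Your proof is correct and follows essentially the same approach as the paper: both argue that $\perproj{\timet} \subseteq \visproj{\timet}$ forces $\extInvariantFunc{\timet} \subseteq \visproj{\timet} \subseteq \prePebbleUnpebble$ (and likewise at time $\timet-1$), after which the identities are immediate. You have merely written out the final set-theoretic step $X \union (\prePebbleUnpebble \setminus X) = \prePebbleUnpebble$ explicitly where the paper leaves it implicit.
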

\begin{proof}
  Since $\perproj{\timet} \subseteq \visproj{\timet}$, we have
  $\extInvariantFunc{\timet}
  = \ExtInvariant{\timet} \subseteq \visproj{\timet}
  \subseteq \prePebbleUnpebble$ and likewise
  $\extInvariantFunc{\timet-1} \subseteq \visproj{\timet-1}
  \subseteq \prePebbleUnpebble$.
\end{proof}

\begin{claim}[Legality]
  \label{cla:InvariantImpliesLegal}
  All \reasonablepebbling{}s are legal.
\end{claim}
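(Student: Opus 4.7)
The plan is to verify, for each reasonable pebbling inserted by \refalg{alg:reasonable}, the sufficient condition supplied by \refcor{cor:Legality}: namely, that for every vertex $v$ in the symmetric difference of the two endpoint configurations, its predecessors lie in their union. The proof splits naturally by case, one case per insertion rule (\ref{stepIIa}), (\ref{stepIIb})i, (\ref{stepIIb})ii, and (\ref{stepIIc}).

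For the three interior cases, I would exploit the structural shape of the invariant $\IntInvariantFunc{w_1}{\cdot} = \outproj{w_1}{\cdot} \cup \bigl(\intproj{w_1}{\cdot} \cap \unsurrounded{G_2}{\outproj{w_1}{\cdot}}\bigr)$. Fix $v \in V(G_2)$ in the symmetric difference of $I_1 \defeq \IntInvariantFunc{w_1}{\pconfBTime{\timet-1}}$ and $I_2 \defeq \IntInvariantFunc{w_1}{\pconfBTime{\timet}}$. I would do a case split on which projection $v$ belongs to in $I_2$ (or symmetrically $I_1$) and on whether $v$ is in the visit part. If $v \in \outproj{w_1}{\pconfBTime{\timet}} \setminus \outproj{w_1}{\pconfBTime{\timet-1}}$, the move from $\pconfBTime{\timet-1}$ to $\pconfBTime{\timet}$ placed the output pebble on $(w_1,v)\outblowup$, so $(w_1,v)\intblowup \in \pconfBTime{\timet-1}$, putting $v \in \intproj{w_1}{\pconfBTime{\timet-1}}$; either $\prednode[G_2]{v} \subseteq \outproj{w_1}{\pconfBTime{\timet-1}} \subseteq I_1$ directly, or else the visit-part contributes $v$ to $I_1$, which contradicts the assumption $v \notin I_1$. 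If $v$ joins $I_2$ via the visit part, then the predecessors witness non-surroundedness and all lie in $\outproj{w_1}{\pconfBTime{\timet}} \subseteq I_2$. For Step~(\ref{stepIIb}), where we compare a surrounded and an unsurrounded projection, I would additionally use that $\pebClosure{\cdot}$ only adds nodes whose predecessors are already inside, so the same case analysis extends verbatim.

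For the exterior case (\ref{stepIIc}), the reasonable pebbling acts on $G_1$ with $\toPebble = \prePebbleUnpebble \setminus \extInvariantFunc{\timet-1}$, $\toUnpebble = \prePebbleUnpebble \setminus \extInvariantFunc{\timet}$, and $\prePebbleUnpebble = \visproj{\timet-1} \cup \visproj{\timet}$. By \refclaim{cla:ExteriorSymmetry}, both unions $\extInvariantFunc{\timet-1} \cup \toPebble$ and $\extInvariantFunc{\timet} \cup \toUnpebble$ equal $\prePebbleUnpebble$, so it suffices to show that for every $v_1 \in \prePebbleUnpebble$ (in particular for $v_1$ in the symmetric difference) we have $\prednode[G_1]{v_1} \subseteq \prePebbleUnpebble$. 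The idea is that $v_1 \in \visproj{\cdot}$ means $\pconfBlockTime{v_1}{\cdot}$ is \vislocked on $G_2$, which forces the $v_1$-block in $\blowup$ to contain pebbles whose existence in $\pconfBTime{\cdot}$ is only possible if every predecessor block $w_1 \in \prednode[G_1]{v_1}$ has its output sink $(w_1, z_2)\outblowup$ surrounded or pebbled; projecting this back through $\IntInvariantFunc{w_1}{\cdot}$ should yield that every such $w_1$ is itself \vislocked (hence in $\visproj{\cdot}$) or \perlocked (hence in $\perproj{\cdot} \subseteq \visproj{\cdot}$).

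The main obstacle is this last step: pinning down precisely why \vislockedness of the $v_1$-block propagates to \vislockedness (or \perlockedness) of predecessor blocks in $G_1$. The difficulty is that \vislocked is a global persistence statement about $G_2$-pebbling price, not an immediately local condition, so the argument must translate ``reaching the empty configuration requires $\visitprice{G_2}$ pebbles in the $v_1$-block'' into a concrete statement about output pebbles on $(w_1, z_2)\outblowup$ for each $w_1 \in \prednode[G_1]{v_1}$, and then further into the invariant for the $w_1$-block. I expect this to require a careful analysis of exterior-node predecessors in \refcon{con:BlowUp} combined with the observation that, at the boundary configurations $\pconfBTime{\timet-1}$ and $\pconfBTime{\timet}$ of a move on an output node, the only pebbling step separating them is a single placement/removal, so blocks other than $v_1$-block do not change their surroundedness status between the two times.
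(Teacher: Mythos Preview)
Your interior cases (Steps~\ref{stepIIa} and~\ref{stepIIb}) are on the right track and match the paper's approach, though your treatment of the visit-part subcase in Step~\ref{stepIIa} is garbled: if $v$ lies in the visit part of $I_2$, then by definition $v \in \unsurrounded{G_2}{\outproj{w_1}{\pconfBTime{\timet}}}$, which means \emph{not all} predecessors of $v$ lie in $\outproj{w_1}{\pconfBTime{\timet}}$---the opposite of what you wrote. The correct conclusion (as in the paper) is that since $\intproj{w_1}{\cdot}$ is unchanged between $\timet-1$ and $\timet$, and $v \notin I_1$, the vertex $v$ must have been \emph{surrounded} by $\outproj{w_1}{\pconfBTime{\timet-1}}$, so $\prednode[G_2]{v} \subseteq \outproj{w_1}{\pconfBTime{\timet-1}} \subseteq I_1$.

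The more serious gap is in Step~\ref{stepIIc}. Your sufficient condition---that $\prednode[G_1]{v_1} \subseteq \prePebbleUnpebble$ for \emph{every} $v_1 \in \prePebbleUnpebble$---is too strong and in fact false: a block can be \perlocked (e.g.\ its output sink is pebbled) long after all its predecessor blocks have been emptied, so $v_1 \in \perproj{\timet} \subseteq \prePebbleUnpebble$ while $\prednode[G_1]{v_1} \cap \prePebbleUnpebble = \emptyset$. Your proposed mechanism, that \vislockedness of the $v_1$-block forces predecessor sinks to be pebbled, fails for the same reason: \vislockedness is a statement about the current $G_2$-projection, which can hold via output pebbles that were placed when the block was surrounded at an earlier time and survive after the predecessor blocks are cleared.

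The paper's argument instead uses that we only need the predecessor condition for $w_1 \in \toPebble \cup \toUnpebble$, and exploits two structural facts you do not invoke. First, if $w_1 \in \visproj{\timet} \setminus \extInvariantFunc{\timet}$, then since $\extInvariantFunc{\timet} = \perproj{\timet} \cup \bigl(\visproj{\timet} \cap \unsurrounded{G_1}{\perproj{\timet}}\bigr)$ and $w_1 \in \visproj{\timet}$, it must be that $w_1$ is \emph{surrounded by} $\perproj{\timet}$, giving $\prednode[G_1]{w_1} \subseteq \perproj{\timet} \subseteq \prePebbleUnpebble$ directly. Second, if $w_1 \in \visproj{\timet-1} \setminus \visproj{\timet}$, then $\pconfBlockTime{w_1}{\timet-1} \ne \pconfBlockTime{w_1}{\timet}$, which by the design of \refalg{alg:reasonable} forces $w_1 \in \{v_1\} \cup \succnode[G_1]{v_1}$ and, crucially, forces the $w_1$-block to be surrounded in $\pconfBTime{\timet-1}$ or $\pconfBTime{\timet}$; surroundedness then gives $z_2 \in \outproj{u_1}{\cdot}$ for each $u_1 \in \prednode[G_1]{w_1}$, hence $u_1 \in \visproj{\cdot} \subseteq \prePebbleUnpebble$. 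Neither step relies on \vislockedness propagating backwards through $G_1$ in the way you envision.
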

\begin{proof}
  At Step~\ref{stepIIa} 
  an output node $(v_1, v_2)\outblowup$ is pebbled or unpebbled to get to
  $\pconfBTime{\timet}$ in $\pebblingBlowUp$ and
  the $v_1$-block is surrounded, hence
  $\pconfBlockTime{v_1}{\timet-1}
  = \IntInvariantSurroundedFunc{v_1}{\timet-1}$.
  By \refcor{cor:Legality}
  it suffices to show that
  for $w_2 \in
  \IntInvariantSurroundedFunc{v_1}{\timet}
  \symmdiff
  \IntInvariantSurroundedFunc{v_1}{\timet-1}$, we have
  $\prednode[G_2]{w_2} \subseteq
  \IntInvariantSurroundedFunc{v_1}{\timet}
  \union
  \IntInvariantSurroundedFunc{v_1}{\timet-1}$.
  Recall
  $\IntInvariantFunc{v_1}{\cdot} = \IntInvariant{v_1}{\cdot}$.

      Assume $w_2 \in
      \IntInvariantSurroundedFunc{v_1}{\timet}
      \setminus
      \IntInvariantSurroundedFunc{v_1}{\timet-1}$, reversing the roles of $\timet-1$ and $\timet$ otherwise.
      If $w_2 \in \outproj{v_1}{\pconfBTimeFunc{\timet}}
      \setminus
      \IntInvariantSurroundedFunc{v_1}{\pconfBTimeFunc{\timet-1}}$, then
      $w_2 \in \outproj{v_1}{\pconfBTimeFunc{\timet}}
      \setminus
      \outproj{v_1}{\pconfBTimeFunc{\timet-1}}$, \ie
      $w_2 = v_2$ and
      $(v_1, w_2)\outblowup$ is being pebbled to get to $\pconfBTime{\timet}$ in
      $\pebblingBlowUp$.
      Since $\pebblingBlowUp$ is legal,
      $(v_1, w_2)\intblowup$ has a pebble in $\pconfBTime{\timet-1}$, \ie
      $w_2 \in \intproj{v_1}{\pconfBTimeFunc{\timet-1}}$, so
      $w_2$ is surrounded by
      $\outproj{v_1}{\pconfBTimeFunc{\timet-1}}
      \subseteq \IntInvariantSurroundedFunc{v_1}{\timet-1}$ as needed.
      
      Otherwise $w_2 \in
      \Bigl(
      \VisitPartInteriorTime{v_1}{\timet}
      \Bigr)
      \setminus
      \IntInvariantSurroundedFunc{v_1}{\timet-1}$, since
      $\intproj{v_1}{\pconfBTimeFunc{\timet}}
      =
      \intproj{v_1}{\pconfBTimeFunc{\timet-1}}$, we know
      $w_2 \in \UnsurroundedOutput{v_1}{\timet}$
      but
      $w_2 \notin \UnsurroundedOutput{v_1}{\timet-1}$, so
      $w_2$ is a successor of $v_2$ and
      $(v_1, v_2)\outblowup$ is being unpebbled to get to $\pconfBTimeFunc{\timet}$ in
      $\pebblingBlowUp$ and
      $\prednode[G_2]{w_2}
      \subseteq \outproj{v_1}{\pconfBTimeFunc{\timet-1}}
      \subseteq \IntInvariantSurroundedFunc{v_1}{\timet-1}$ as needed.

  At Step~\ref{stepIIb}\ref{stepIIbi}
  an output node $(v_1, v_2)\outblowup$ is pebbled to get to
  $\pconfBTime{\timet}$ in $\pebblingBlowUp$, and
  $w_1$ is a successor of $v_1$ such that
  the $w_1$-block is surrounded in $\pconfBTime{\timet}$.
  Note that
  the $w_1$-block is not surrounded in $\pconfBTime{\timet-1}$, so
  $\pconfBlockTime{w_1}{\timet-1}
  = \IntInvariantUnsurroundedFunc{w_1}{\timet-1}$.
  Since the pebble move to get to $\pconfBTime{\timet}$ in $\pebblingBlowUp$ is not in
  the $w_1$-block, we have
  $\IntInvariantUnsurroundedFunc{w_1}{\timet-1}
  =
  \IntInvariantUnsurroundedFunc{w_1}{\timet}$.
  By \refcor{cor:Legality}
  it suffices to show that
  for $w_2 \in
  \IntInvariantSurroundedFunc{w_1}{\timet}
  \symmdiff
  \IntInvariantUnsurroundedFunc{w_1}{\timet}$, we have
  $\prednode[G_2]{w_2}
  \subseteq \IntInvariantSurroundedFunc{w_1}{\timet}
  \union \IntInvariantUnsurroundedFunc{w_1}{\timet}$.
  \begin{itemize}
    \item
      Assume $w_2 \in
      \IntInvariantSurroundedFunc{w_1}{\timet}
      \setminus
      \IntInvariantUnsurroundedFunc{w_1}{\timet}$.
      Since
      $\outproj{w_1}{\pconfBTimeFunc{\timet}}
      \subseteq \outproj{w_1}{\pebClosure{\pconfBTimeFunc{\timet}}}
      \subseteq \IntInvariantUnsurroundedFunc{w_1}{\timet}$,
      we can assume
      $w_2 \in
      \Bigl( \VisitPartInteriorTime{w_1}{\timet} \Bigr)
      \setminus
      \IntInvariantUnsurroundedFunc{w_1}{\timet}$.
      Since
      $\intproj{w_1}{\pconfBTimeFunc{\timet}} 
      \subseteq \intproj{w_1}{\pebClosure{\pconfBTimeFunc{\timet}}}$, we know
      $w_2$ is surrounded by
      $\outproj{w_1}{\pebClosure{\pconfBTimeFunc{\timet}}} 
      \subseteq \IntInvariantUnsurroundedFunc{w_1}{\timet}$ as needed.
    \item
      Assume $w_2 \in
      \IntInvariantUnsurroundedFunc{w_1}{\timet}
      \setminus
      \IntInvariantSurroundedFunc{w_1}{\timet}$.
      We claim that
      $w_2 \in \intproj{w_1}{\pebClosure{\pconfBTimeFunc{\timet}}}$:
      if
      $w_2 \in
      \Bigl(
      \VisitPartInterior{w_1}{\pebClosure{\pconfBTimeFunc{\timet}}}
      \Bigr)
      \setminus
      \IntInvariantSurroundedFunc{w_1}{\timet}$, then
      we are done;
      otherwise
      $w_2 \in
      \outproj{w_1}{\pebClosure{\pconfBTimeFunc{\timet}}}
      \setminus
      \IntInvariantSurroundedFunc{w_1}{\timet}$, then
      $w_2 \notin
      \outproj{w_1}{\pconfBTimeFunc{\timet}}$ and thus
      by definition of $\pebClosure{\cdot}$ we have
      $w_2 \in \intproj{w_1}{\pebClosure{\pconfBTimeFunc{\timet}}}$ as claimed.
      Now if
      $w_2 \in \intproj{w_1}{\pconfBTimeFunc{\timet}}$, then
      $w_2$ is surrounded by
      $\outproj{w_1}{\pconfBTimeFunc{\timet}}
      \subseteq \IntInvariantSurroundedFunc{w_1}{\timet}$ as needed;
      otherwise
      $w_2 \notin \intproj{w_1}{\pconfBTimeFunc{\timet}}$, then from
      $w_2 \in \intproj{w_1}{\pebClosure{\pconfBTimeFunc{\timet}}}$ and
      by definition of $\pebClosure{\cdot}$ we have
      $w_2$ is surrounded by
      $\outproj{w_1}{\pebClosure{\pconfBTimeFunc{\timet}}}
      \subseteq \IntInvariantUnsurroundedFunc{w_1}{\timet}$ as needed.
  \end{itemize}

  To see that Step~\ref{stepIIb}\ref{stepIIbii} is legal,
  note that it is the reverse of Step~\ref{stepIIb}\ref{stepIIbi}.

  At Step~\ref{stepIIc}
  an output node $(v_1, v_2)\outblowup$ is pebbled or unpebbled to get to
  $\pconfBTime{\timet}$ in $\pebblingBlowUp$.
  By \refclaim{cla:Legality}
  it suffices to show that
  for $w_1 \in
  \bigl( \prePebbleUnpebble \setminus \extInvariantFunc{\timet-1} \bigr)
  \union
  \bigl( \prePebbleUnpebble \setminus \extInvariantFunc{\timet} \bigr)$,
  we have
  $\prednode[G_1]{w_1}
  \subseteq \extInvariantFunc{\timet-1} \union \toPebble
  = \prePebbleUnpebble$,
  where the last equality is due to \refclaim{cla:ExteriorSymmetry}.
  Recall that
  $\extInvariantFunc{\cdot} = \ExtInvariant{\cdot}$ and
  $\prePebbleUnpebble = \visproj{\timet-1} \union \visproj{\timet}$.

  Fix any $w_1 \in
  \bigl( \prePebbleUnpebble \setminus \extInvariantFunc{\timet-1} \bigr)
  \union
  \bigl( \prePebbleUnpebble \setminus \extInvariantFunc{\timet} \bigr)$.
  Swap the roles of $\timet$ and $\timet-1$ if necessary, we can assume
  $w_1 \in \prePebbleUnpebble \setminus \extInvariantFunc{\timet}$.
  If $w_1 \in \visproj{\timet} \setminus \extInvariantFunc{\timet}$, then
  $w_1$ is surrounded by
  $\perproj{\timet}
  \subseteq \extInvariantFunc{\timet}
  \subseteq \prePebbleUnpebble$ as needed.
  
  Otherwise $w_1 \in \visproj{\timet-1} \setminus \visproj{\timet}$, in particular
  $\pconfBlockTime{w_1}{\timet-1} \ne \pconfBlockTime{w_1}{t}$, so
  $w_1 \in \{ v_1 \} \union \succnode[G_1]{v_1}$ by design of the algorithm.
  Note that it suffices to show that the $w_1$-block is surrounded in
  $\pconfBTime{\timet-1}$ or in $\pconfBTime{\timet}$:
  if the $w_1$-block is surrounded in $\pconfB$,
  for any $u_1 \in \prednode[G_1]{w_1}$
  the sink $z_2$ of the inner graph $G_2$ satisfies
  $z_2 \in \outproj{u_1}{\pconfB}
  \subseteq \IntInvariantFunc{u_1}{\pconfB}
  \subseteq \intmappingBlock{u_1}(\pconfB)$, so
  $u_1 \in \visproj{\pconfB} \subseteq \prePebbleUnpebble$ as needed.
  If $w_1 = v_1$, then
  in Step~\ref{stepIIa} the $w_1$-block is surrounded in $\pconfBTime{\timet}$ as
  needed.
  Otherwise $w_1 \in \succnode[G_1]{v_1}$, then
  in Step~\ref{stepIIb}
  the $w_1$-block is surrounded in $\pconfBTime{\timet}$ (in Step~\ref{stepIIb}\ref{stepIIbi}) or
  in $\pconfBTime{\timet-1}$ (in Step~\ref{stepIIb}\ref{stepIIbii}) as needed.
\end{proof}

\begin{claim}[No Spurious Projections]
  \label{cla:NoMagicPebbles}
  For any $w_1 \in V(G_1)$ and $w_2 \in V(G_2)$, if $w_2$ has a pebble
  in any configuration of $\pebblingBlock{w_1}$ created between time
  $\timet-1$ and time $\timet$, \ie between
  $\pconfBlockTime{w_1}{\timet-1}$ and $\pconfBlockTime{w_1}{\timet}$,
  then the $(w_1, w_2)$-cell has a pebble both in $\pconfBTime{\timet-1}$
  and in $\pconfBTime{\timet}$ in $\pebblingBlowUp$.
\end{claim}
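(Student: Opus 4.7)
My plan is to prove the claim in three steps, using the structure of the reasonable pebblings and a careful case analysis of which algorithm steps insert them.

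\emph{Step 1 (Containment).} I would first observe that every reasonable pebbling inserted into $\pebblingBlock{w_1}$ between time $\timet-1$ and time $\timet$ goes from $\pconfBlockTime{w_1}{\timet-1}$ to $\pconfBlockTime{w_1}{\timet}$ by pebbling the vertices in $\pconfBlockTime{w_1}{\timet} \setminus \pconfBlockTime{w_1}{\timet-1}$ in topological order and then unpebbling those in $\pconfBlockTime{w_1}{\timet-1} \setminus \pconfBlockTime{w_1}{\timet}$ in reverse topological order. Hence every intermediate configuration is a subset of $\pconfBlockTime{w_1}{\timet-1} \union \pconfBlockTime{w_1}{\timet}$, so $w_2$ appearing in one of them implies $w_2 \in \pconfBlockTime{w_1}{\timet-1} \union \pconfBlockTime{w_1}{\timet}$.

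\emph{Step 2 (Single-time pebble).} The next step is to show that $w_2 \in \pconfBlockTime{w_1}{\timet'}$ implies that the $(w_1, w_2)$-cell has a pebble in $\pconfBTime{\timet'}$. When the $w_1$-block is surrounded at time $\timet'$, we have $\pconfBlockTime{w_1}{\timet'} = \outproj{w_1}{\pconfBTimeFunc{\timet'}} \union \bigl(\intproj{w_1}{\pconfBTimeFunc{\timet'}} \intersect \unsurrounded{G_2}{\outproj{w_1}{\pconfBTimeFunc{\timet'}}}\bigr)$, so $w_2$ lying in this set directly places a pebble on the output or interior node of the cell. When the block is unsurrounded, the analysis goes through $\pebClosure{\pconfBTimeFunc{\timet'}}$: if $(w_1, w_2)\outblowup$ lies in the closure, then either it was originally in $\pconfBTime{\timet'}$ or it was added, the latter forcing both its predecessors into the closure, and since the closure never adds exterior nodes this yields $(w_1, w_2)\extblowup \in \pconfBTime{\timet'}$; if $w_2 \in \intproj{w_1}{\pebClosure{\pconfBTimeFunc{\timet'}}} \intersect \unsurrounded{G_2}{\outproj{w_1}{\pebClosure{\pconfBTimeFunc{\timet'}}}}$, then $(w_1, w_2)\intblowup$ cannot have been added by closure, since that would put all of $w_2$'s $G_2$-predecessors into $\outproj{w_1}{\pebClosure{\pconfBTimeFunc{\timet'}}}$, contradicting the unsurrounded condition; hence $(w_1, w_2)\intblowup$ is originally in $\pconfBTime{\timet'}$. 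In all cases the $(w_1, w_2)$-cell has a pebble.

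\emph{Step 3 (Both times).} Finally, to obtain pebbles at both $\pconfBTime{\timet-1}$ and $\pconfBTime{\timet}$, I would use that these configurations differ by a single pebble placement or removal, which affects at most one vertex of the $(w_1, w_2)$-cell. Only three cases of \refalg{alg:reasonable} insert a reasonable pebbling into $\pebblingBlock{w_1}$: case (a), where $v_1 = w_1$ and the move is on an output node $(w_1, v_2)\outblowup$; and cases (b)(i) and (b)(ii), where the move is on $(v_1, z_2)\outblowup$ for some predecessor $v_1$ of $w_1$ and the $w_1$-block toggles between surrounded and unsurrounded. In cases (b)(i) and (b)(ii) the move lies outside the $w_1$-block, so the $(w_1, w_2)$-cell is untouched and by step 2 it has a pebble at both times. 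In case (a), if $w_2 \ne v_2$ the same argument applies, while if $w_2 = v_2$, reversibility of the output move forces $(w_1, w_2)\intblowup$ and $(w_1, w_2)\extblowup$ to be pebbled both at $\pconfBTime{\timet-1}$ and at $\pconfBTime{\timet}$, so the cell has pebbles at both times.

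The main obstacle is the unsurrounded case of step 2, which hinges on the interplay between closure derivations and the $\unsurrounded{G_2}{\cdot}$ condition appearing in the formula for $\pconfBlockTime{w_1}{\cdot}$; once this is settled, the remaining case analysis in step 3 reduces to routine bookkeeping about which algorithm cases trigger an insertion.
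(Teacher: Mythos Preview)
Your proof is correct and follows essentially the same route as the paper. The core argument---your Step~2 analysis showing that $\IntInvariantFunc{w_1}{\pconfB}$ and $\IntInvariantFunc{w_1}{\pebClosure{\pconfB}}$ are both contained in $\anyproj{w_1}{\pconfB} \defeq \outproj{w_1}{\pconfB} \cup \extproj{w_1}{\pconfB} \cup \intproj{w_1}{\pconfB}$ via the closure/unsurrounded interplay---is exactly the paper's argument. The only difference is that where you carry out Step~3 by an explicit case split on which algorithm step fired, the paper compresses this into the single observation that $\anyproj{w_1}{\pconfBTimeFunc{\timet-1}} = \anyproj{w_1}{\pconfBTimeFunc{\timet}}$ (which holds because an output-node move requires both the interior and exterior nodes of that cell to be pebbled, so the ``any'' projection is unchanged); this absorbs your entire Step~3 case analysis in one line.
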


\begin{proof}
  Let us reword the claim.
  Define $\anyproj{w_1}{\pconfB}
  \defeq \outproj{w_1}{\pconfB}
  \union \extproj{w_1}{\pconfB}
  \union \intproj{w_1}{\pconfB}$, note that
  $\anyproj{w_1}{\pconfBTimeFunc{\timet-1}} = \anyproj{w_1}{\pconfBTimeFunc{\timet}}$, and
  we want to show
  $\pconfBlockTime{w_1}{\timet-1}\union\toPebble\block{w_1} \subseteq \anyproj{w_1}{\pconfBTimeFunc{\timet}}$.
  
  Note that for any $\pconfB$ and $w_1 \in V(G_1)$ we have
  $\IntInvariantFunc{w_1}{\pconfB}
  \subseteq \anyproj{w_1}{\pconfB}$.
  Moreover, we have
  $\IntInvariantFunc{w_1}{\pebClosure{\pconfB}}
  \subseteq \anyproj{w_1}{\pconfB}$:
  for any $w_2 \in \IntInvariantFunc{w_1}{\pebClosure{\pconfB}}$,
  if $w_2 \in \outproj{w_1}{\pebClosure{\pconfB}}$, then either
  $w_2 \in \outproj{w_1}{\pconfB}
  \subseteq \anyproj{w_1}{\pconfB}$ as needed, or
  $w_2 \notin \outproj{w_1}{\pconfB}$, then
  by definition of $\pebClosure{\cdot}$ we have
  $w_2 \in \extproj{w_1}{\pconfB}
  \subseteq \anyproj{w_1}{\pconfB}$ as needed.
  Otherwise $w_2 \in \VisitPartInterior{w_1}{\pebClosure{\pconfB}}$,
  then it follows that
  $w_2 \in \intproj{w_1}{\pconfB}
  \subseteq \anyproj{w_1}{\pconfB}$ as needed;
  for otherwise $w_2 \notin \intproj{w_1}{\pconfB}$, thus
  by definition of $\pebClosure{\cdot}$ we have
  $\prednode[G_2]{w_2}
  \subseteq \outproj{w_1}{\pebClosure{\pconfB}}$,
  which contradicts
  $w_2 \in \Unsurrounded{w_1}{\outproj{w_1}{\pebClosure{\pconfB}}}$.

  To prove the claim it is enough to recall that the set
  $\pconfBlockTime{w_1}{\timet-1}\union\toPebble\block{w_1}$ equals
$\IntInvariantFunc{w_1}{\pconf_{\timet-1}}
  \union \IntInvariantFunc{w_1}{\pconf_{\timet}}$ in Step~\ref{stepIIa} because the $w_1$-block is surrounded, it equals 
$\IntInvariantFunc{w_1}{\pebClosure{\pconf_{\timet-1}}}
  \union \IntInvariantFunc{w_1}{\pconf_{\timet}}$ in Step~\ref{stepIIb}\ref{stepIIbi}), and it equals
  $\IntInvariantFunc{w_1}{\pconf_{\timet-1}}
  \union \IntInvariantFunc{w_1}{\pebClosure{\pconf_{\timet}}}$ in Step~\ref{stepIIb}\ref{stepIIbii}.
  All three sets are contained in $\anyproj{w_1}{\pconfBTimeFunc{\timet}}$ by the previous paragraph.
\end{proof}

\begin{claim}[Legal implies Lower Bound]
  \label{cla:LegalImpliesLowerBound}
  Assume that
  the pebblings $\pebblingBlocks$ and $\pebbling$ are legal, and
  $\pebbling$ is a persistent pebbling of $G_1$.
  Then there is a configuration $\pconfBlowUpBottleneck$ in $\pebblingBlowUp$ of
  $\blowupConst{G_1}{G_2}$ having at least $\persistentprice{G_1} +
  \persistentprice{G_2} + 1$ pebbles.
\end{claim}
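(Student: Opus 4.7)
The plan is to pull back a bottleneck of the constructed outer pebbling $\pebbling$ on $G_1$ to a single time $t^*$ of $\pebblingBlowUp$, and then count pebbles block by block in $\blowupConst{G_1}{G_2}$ at that moment, exploiting the three-node cell structure of \refcon{con:BlowUp} to collect the final additive $\persistentprice{G_2}+1$ that goes beyond the contribution of the outer graph.

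First, since $\pebbling$ is assumed to be a legal persistent pebbling of $G_1$, it passes through a configuration of size at least $\persistentprice{G_1}$. By the construction of $\pebbling$ from $\pebblingBlowUp$ via the mapping $\extInvariantFunc{\cdot}$ and the reasonable pebblings inserted in Step~\ref{stepIIc} of \refalg{alg:reasonable}, any such bottleneck either coincides with $\extInvariantFunc{t^*}$ for some $t^*$, or lies in the middle of an insertion between $\extInvariantFunc{t^*-1}$ and $\extInvariantFunc{t^*}$; in the latter case it is contained in $\visproj{t^*-1}\union\visproj{t^*}$. Either way I will fix a corresponding time $t^*$ in $\pebblingBlowUp$ together with a distinguished block that witnesses the bottleneck in $\pebbling$.

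Next, I will lower-bound $|\pconfBTime{t^*}|$ by summing over blocks. For every $v_1\in\visproj{t^*}$, the $G_2$-projection $\intmappingBlock{v_1}$ is \vislocked or \perlocked; since a configuration of $G_2$ with persistent price $q$ can only be reached or cleared by passing through a configuration of size at least $q$, \refclaim{cla:NoMagicPebbles} allows me to transfer those pebbles from the inner pebbling $\pebblingBlock{v_1}$ to actual pebbles on the corresponding $(v_1,\cdot)$-cells in $\pconfBTime{t^*-1}$ or $\pconfBTime{t^*}$. Consequently, a block $v_1\in\perproj{t^*}$ contributes at least $\persistentprice{G_2}$ pebbles, and a block $v_1\in\visproj{t^*}\setminus\perproj{t^*}$ contributes at least $\visitprice{G_2}\geq\persistentprice{G_2}-1$ pebbles.

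The main obstacle, and the reason why \refcon{con:BlowUp} splits each product position into three nodes rather than using the naive product $\blowupExpensiveConst{G_1}{G_2}$ from the overview, is to close the last two units and reach $\persistentprice{G_1}+\persistentprice{G_2}+1$ rather than the weaker $\persistentprice{G_1}+\persistentprice{G_2}-1$ obtainable from the naive product. The key observation is that placing or removing a pebble on an output node forces both its exterior and its interior predecessor to be simultaneously pebbled, so the cell responsible for the latest change in the bottleneck block carries two extra pebbles that are invisible to the $G_2$-projection; and the defining condition that some $v_1\in\extInvariantFunc{t^*}\setminus\perproj{t^*}$ is not surrounded in $G_1$ by \perlocked ancestors forces at least one of its predecessor blocks to keep its output sink pebbled in order to feed the exterior of $v_1$'s cells, producing an additional pebble in a neighbouring block. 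A careful case distinction on the location of the bottleneck of $\pebbling$ (at an $\extInvariantFunc{t^*}$ versus inside an inserted reasonable pebbling) combines these extra contributions with the block-by-block count to yield the desired $\persistentprice{G_1}+\persistentprice{G_2}+1$ pebbles in $\pconfBTime{t^*}$ (or in $\pconfBTime{t^*-1}$), completing the proof of the claim.
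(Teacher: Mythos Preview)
Your high-level plan---pull back a bottleneck of $\pebbling$ to some time $t$ in $\pebblingBlowUp$, then count pebbles block by block, using the three-node cell to harvest two extra pebbles---is the paper's plan. But the central counting step is wrong. You write that ``a block $v_1\in\perproj{t^*}$ contributes at least $\persistentprice{G_2}$ pebbles'' and similarly for \vislocked blocks. This is false: a \perlocked configuration such as $\{\sink_2\}$ has exactly one pebble. Being \perlocked means the \emph{persistent price} of the configuration is $\persistentprice{G_2}$, i.e.\ clearing it would eventually cost that much, not that it currently holds that many pebbles. Your appeal to \refclaim{cla:NoMagicPebbles} does not fix this: that claim only guarantees that pebbles appearing in $\pebblingBlock{w_1}$ \emph{between time $t-1$ and time $t$} correspond to pebbled cells in $\pconfBTime{t-1}\cup\pconfBTime{t}$; it says nothing about pebbles that appeared earlier in the inner pebbling and are long gone.

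The paper's argument instead locates a single block that is \emph{transitioning} between the two consecutive times: the case split is not on where the bottleneck of $\pebbling$ sits inside the inserted reasonable pebbling, but on whether $\perproj{t-1}\ne\perproj{t}$. If so, some $w_1$ changes \perlocked status, hence its inner pebbling must pass through a configuration of size $\persistentprice{G_2}$ \emph{between $t-1$ and $t$}, and now \refclaim{cla:NoMagicPebbles} legitimately puts $\persistentprice{G_2}$ pebbled cells into $\pconfBlowUpBottleneck$; every other block in $\pconfBottleneck$ contributes one cell, and the $(v_1,v_2)$-cell contributes three, giving $\persistentprice{G_1}+\persistentprice{G_2}+1$. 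If $\perproj{t-1}=\perproj{t}$, one shows there is $w_1\in\visproj{t-1}\symmdiff\visproj{t}$ not surrounded by $\perproj{t}$, so its inner pebbling crosses $\visitprice{G_2}$ between $t-1$ and $t$, and a predecessor $u_1\notin\perproj{t}$ has its sink pebbled (since the $w_1$-block is surrounded) yet is not \perlocked, forcing a second cell in the $u_1$-block to carry a pebble. Those two pieces together with the three-pebble cell again yield $\persistentprice{G_1}+\persistentprice{G_2}+1$. Your last paragraph gestures at both ingredients, but without the correct case split on the change of $\perproj{\cdot}$ and the correct use of transitioning blocks, the argument does not close.
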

\begin{proof}
  Since $\pebbling$ is a legal persistent pebbling of $G_1$,
  there is a configuration $\pconfBottleneck$ having at least
  $\persistentprice{G_1}$ pebbles.
  By definition of the algorithm $\pconfBottleneck$ is created in Step~\ref{stepIIc} when
  an output node $(v_1, v_2)\outblowup$ is pebbled or unpebbled to get to
  $\pconfBTime{\timet}$ in $\pebblingBlowUp$, so the configuration $\pconfBottleneck$
  is between $\pconfTime{\timet-1}$ and $\pconfTime{\timet}$.
  Let $\pconfBlowUpBottleneck$ be $\pconfBTime{\timet}$ if
  the output node $(v_1, v_2)\outblowup$ is pebbled, and
  $\pconfBTime{\timet-1}$ if
  the output node $(v_1, v_2)\outblowup$ is unpebbled, then
  $\pconfBlowUpBottleneck = \pconfBTime{\timet-1} \union \pconfBTime{\timet}$.

  Note that for any $u_1 \in \pconfBottleneck$ we know that between
  $\pconfBlockTime{u_1}{\timet-1}$ and $\pconfBlockTime{u_1}{\timet}$,
  some configuration in $\pebblingBlock{u_1}$ is not empty, because
  $u_1 \in \extInvariantFunc{\timet-1} \union
  \extInvariantFunc{\timet} \subseteq D = \visproj{\timet-1} \union
  \visproj{\timet}$. It follows from \refclaim{cla:NoMagicPebbles} that
  there is at least one pebble in every $u_1$-block of
  $\pconfBlowUpBottleneck$.

  If $\perproj{\timet-1} \neq \perproj{\timet}$, that is if there is
  $w_1 \in \perproj{\timet-1} \symmdiff \perproj{\timet}$, then
  we are done. Since exactly one of
  $\pconfBlockTime{w_1}{\timet-1}$ and $\pconfBlockTime{w_1}{\timet}$ is \perlocked,
  there is a configuration between them having at least
  $\persistentprice{G_2}$ pebbles, and by \refclaim{cla:NoMagicPebbles} at least $\persistentprice{G_2}$ cells of the $w_1$-block have pebbles.
  Summing up, in $\pconfBlowUpBottleneck$, there are
  \begin{equation*}
  \underbrace{\persistentprice{G_2}}_{w_1\text{ block}}
  +
  \underbrace{\persistentprice{G_1} - 1}_{\text{other blocks}}
  \end{equation*}
  cells having pebbles, and
  among them the $(v_1, v_2)$-cell has three pebbles, for a total of
  $\persistentprice{G_1} + \persistentprice{G_2} + 1$ pebbles as needed.

  Assume that $\perproj{\timet-1}=\perproj{\timet}$ instead.
  Recall
  $\extInvariantFunc{\cdot} = \ExtInvariant{\cdot}$.
  In Step~\ref{stepIIc} we know
  $\extInvariantFunc{\timet-1} \ne \extInvariantFunc{\timet}$, therefore at least one of
  $\preToPebble
  \defeq \extInvariantFunc{\timet} \setminus \extInvariantFunc{\timet-1}$ and
  $\preToUnpebble
  \defeq \extInvariantFunc{\timet-1} \setminus \extInvariantFunc{\timet}$ is not empty.
  Swap $\timet$ and $\timet-1$ if necessary,
  by their symmetry in the rest of this lemma,
  assume $\preToUnpebble \ne \emptyset$.
  Fix $w_1 \in \preToUnpebble
  = \extInvariantFunc{\timet-1} \setminus \extInvariantFunc{\timet}$.

  Since $\perproj{\timet-1} \setminus \extInvariantFunc{\timet} \subseteq \perproj{\timet-1} \setminus \perproj{\timet} = \emptyset$, we can assume that $w_1 \in \bigl( \visitPartExterior{\timet-1} \bigr)
  \setminus \extInvariantFunc{\timet}$.

  In particular $w_1 \in \bigl( \visproj{\timet-1}\setminus\visproj{\timet} \bigr) \intersection \unsurrounded{G_1}{\perproj{\timet}}$.
  Let $u_1 \in \prednode[G_1]{w_1}$ be such that $u_1 \notin
  \perproj{\timet-1}=\perproj{\timet}$.

  Now
  $w_1 \in \visproj{\timet-1} \setminus \visproj{\timet}$, and
  as in the proof of \refclaim{cla:InvariantImpliesLegal}, we know that
  the $w_1$-block is surrounded
  in $\pconfBTime{\timet-1}$ or in $\pconfBTime{\timet}$.
This implies that the sink of the $u_1$-block $(u_1,z_2)\outblowup$ has a pebble, and so $z_2 \in \pconfBlockTime{u_1}{\timet-1}$ or $z_2 \in \pconfBlockTime{u_1}{\timet}$. But $u_1$ is not in the \perprojText at either time, so there is some other pebble other than $z_2$ in $\pconfBlockTime{u_1}{\timet-1}$ and $\pconfBlockTime{u_1}{\timet}$. By \refclaim{cla:NoMagicPebbles}, at least 2 cells of the $u_1$-block have pebbles.

  Since
  $w_1 \in \visproj{\timet-1} \setminus \visproj{\timet}$, exactly one of
  $\pconfBlockTime{w_1}{\timet-1}$ and $\pconfBlockTime{w_1}{\timet}$ is \vislocked so
  there is a configuration of $\pebblingBlock{w_1}$ between them having at least $\visitprice{G_2}$
  pebbles, and by 
 \refclaim{cla:NoMagicPebbles} at least $\visitprice{G_2}$ cells of the $w_1$-block have pebbles. Summing up, in $\pconfBlowUpBottleneck$, there are
  \begin{equation*}
  \underbrace{\visitprice{G_2}}_{w_1\text{ block}}
  + \underbrace{\vphantom{\visitprice{G_2}}2}_{u_1\text{ block}}
  + \underbrace{\persistentprice{G_1} - 2}_{\text{other blocks}} \ge
  \persistentprice{G_1} + \persistentprice{G_2} - 1
  \end{equation*}
  cells having pebbles, and
  among them the $(v_1, v_2)$-cell has three pebbles, for a total of
  $\persistentprice{G_1} + \persistentprice{G_2} + 1$ pebbles as needed.
\end{proof}

\mvcomment{Do we want the counterexample for black pebbling anymore?}

This product construction does not work for the \pstandard black
pebbling: there is no single constant $C$ such that for any two graphs
$G_1$ and $G_2$,
$\blackprice{\blowupConst{G_1}{G_2}}=\blackprice{G_1}+\blackprice{G_2}+C$,
therefore we need to develop a different construction in the next
section.

Indeed, if we take $G_1$ and $G_2$ to be the singleton graph, which
has pebbling price $1$, the product construction is the pyramid of
height 1, which has pebbling price $3$. This gives a value of $1$ for $C$.

If we take $G_1$ and $G_2$ to be the path of length $1$, which has
pebbling price $2$, the product construction has pebbling price $4$. This
gives a value of $0$ for $C$.

If we take $G_1$ and $G_2$ to be the pyramid of height $1$, which has
pebbling price $3$, the product construction has pebbling price
$5$. An optimal strategy is to pebble the sinks of the two $G_2$
copies corresponding to sources in $G_1$, then pebble all the exterior
nodes of the remaining copy of $G_2$, unpebble the sinks and finish
the pebbling. This gives a value of $-1$ for $C$.

\section{Product Construction for \Pstandard Pebbling}
\label{sec:blow-up-standard}

The part of the proof of Theorem~\ref{th:approx-both} that deals with
\pstandard pebbling uses as a black box the construction in
Theorem~\ref{th:blowup-both} for \pstandard pebbling. 
Now we state it again and we give its full proof.

\begin{theorem}
  \label{th:blowupblack}
  Given two graphs $\graphstd_1$ and $\graphstd_2$ of \pstandard
  pebbling price at least $3$, 
  there is a \efficiently constructible graph 
  $\blowupBlackConst{\graphstd_1}{\graphstd_2}$ 
  of size 
  $\setsize{{\graphstd_1}}(2\setsize{\graphstd_1}+\setsize{\graphstd_2})$ 
  with \pstandard pebbling price
  $\blackprice{\blowupBlackConst{\graphstd_1}{\graphstd_2}} = \blackprice{\graphstd_1} + \blackprice{\graphstd_2} - 1$.
\end{theorem}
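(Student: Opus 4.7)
The plan is to prove matching upper and lower bounds on $\blackprice{\blowupBlackConst{\graphstd_1}{\graphstd_2}}$, after spelling out the construction. For each $v \in V(\graphstd_1)$ the $v$-block is a copy of $\graphstd_2$ whose sink is identified with the source of a length-$|V(\graphstd_1)|$ centipede (a directed path $v_0 \to v_1 \to \cdots \to v_\pathnvert$ in which every $v_i$ with $i \ge 1$ has a unique extra predecessor that is a fresh source). For every edge $(u,w) \in E(\graphstd_1)$ we add edges from the centipede sink of $u$ to all sources of the centipede of $w$. The resulting graph has fan-in two, a unique sink (the centipede sink of the sink of $\graphstd_1$), and size $|V(\graphstd_1)| \cdot (|V(\graphstd_2)| + 2|V(\graphstd_1)|)$.

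For the upper bound, I would take an optimal pebbling $\pebbling_1$ of $\graphstd_1$ and simulate it on $\blowupBlackConst{\graphstd_1}{\graphstd_2}$, maintaining the invariant that the set of pebbles is precisely the set of centipede sinks of currently-pebbled $\graphstd_1$-vertices. A pebble-placement move on $v \in V(\graphstd_1)$ is simulated in two phases: first, pebble the sink of the $\graphstd_2$-copy inside the $v$-block using at most $\pebsp_2$ pebbles (the centipede sinks of $\graphstd_1$-predecessors of $v$, which are among the invariant pebbles, already supply whatever external sources are needed); second, walk the pebble through the centipede to the centipede sink and clean up the $\graphstd_2$-part. A pebble-removal move on $v$ just drops the pebble at the centipede sink. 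At any moment at most $\pebsp_1 - 1$ of the invariant pebbles live outside the currently active block, and the active block carries at most $\pebsp_2$ pebbles, so the simulation uses at most $\pebsp_1 + \pebsp_2 - 1$ pebbles. The hypothesis $\pebsp_i \ge 3$ lets us actually do the centipede traversal within this budget.

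For the lower bound I would extract from any \pstandard pebbling $\pebblingBlowUpBlack$ of $\blowupBlackConst{\graphstd_1}{\graphstd_2}$ a legal \pstandard pebbling $\pebbling$ of $\graphstd_1$. Call a block \emph{active} when the $\graphstd_2$-part alone holds at least $\pebsp_2$ pebbles. The projection rule is: whenever a block becomes active and all $\graphstd_1$-predecessors of $v$ are already pebbled in $\pebbling$, place a pebble on $v$ in $\pebbling$; whenever the $v$-block becomes empty, remove the pebble from $v$. The purpose of the long centipede is to force that a pebble placed on the centipede sink of $u$ in $\blowupBlackConst{\graphstd_1}{\graphstd_2}$ must have been preceded by a moment when the $u$-block was active, and that as long as any descendant of the $u$-centipede sink carries a pebble the $u$-block still contains at least one pebble, so the projection is well-defined and produces a legal pebbling of $\graphstd_1$. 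If $\pebblingBlowUpBlack$ uses $\pebsp$ pebbles, then at any time any additional pebble in $\pebbling$ beyond the block that just triggered it corresponds to a distinct non-empty block, so $\pebbling$ uses at most $\pebsp - \pebsp_2 + 1$ pebbles. Since $\pebbling$ pebbles $\graphstd_1$, we get $\pebsp_1 \le \pebsp - \pebsp_2 + 1$ and the desired lower bound follows.

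The main obstacle is justifying the projection in full generality for the \pstandard game, which unlike the reversible setting of the previous section does not force a topological passage through blocks: a \pstandard pebbling could in principle pebble an external vertex in a block, use it to help elsewhere, and discard it without ever making the block ``active''. This is precisely what the centipede is designed to forbid, and the proof will need a careful invariant relating the pebble on the centipede sink of $u$ to the existence of a pebble somewhere in the $u$-block, together with a charging argument that counts at most one $\graphstd_1$-pebble per non-empty block, plus a bonus pebble for the active one, to conclude $\pebsp_{\pebbling} \le \pebsp - \pebsp_2 + 1$.
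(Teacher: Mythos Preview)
Your construction and upper bound match the paper's. The gap is in the lower-bound projection.

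Your rule places a pebble on $v_1$ in $\pebbling$ only at the moment the $v_1$-block itself becomes active (its $G_2$-part holds $p_2$ pebbles) \emph{and} all $G_1$-predecessors of $v_1$ are already pebbled. This is too restrictive. Consider $G_1$ a single edge $u_1 \to v_1$. A legal pebbling of $\blowupBlackConst{G_1}{G_2}$ can first make the $G_2$-part of the $v_1$-block active, then shrink it to a lone pebble on $z_2$, and only afterwards make the $u_1$-block active and push a pebble to its centipede sink. From that point the $v_1$-centipede can be traversed without the $G_2$-part of the $v_1$-block ever again holding $p_2$ pebbles. Under your rule $v_1$ never gets pebbled in $\pebbling$, so $\pebbling$ never reaches the sink of $G_1$ and the lower bound argument collapses.

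The invariant you state---``as long as any descendant of the $u$-centipede sink carries a pebble the $u$-block still contains at least one pebble''---is simply false in the \pstandard game: after pebbling a centipede source of a successor block you are free to empty the $u$-block entirely.

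The paper's fix is a \emph{cascading} placement rule. Whenever \emph{any} block reaches $p_2$ pebbles, repeatedly place pebbles on every $w_1$ whose block is currently non-empty and whose $G_1$-predecessors are already pebbled in $\pebbling$, until no such $w_1$ remains. Removals are as you describe. The key claim one then proves (by induction on a topological order of $G_1$) is: whenever the centipede sink of $v_1$ carries a pebble in $\pebblingBlowUpBlack_t$, the vertex $v_1$ is pebbled in $\pebbling$ at the end of step $t$. This is where the centipede length is actually used: at the first time the $v_1$-block holds $p_2$ pebbles since it was last empty, all the $p_2$ pebbles sit in the $G_2$-part, so at most $p_1-1 < |V(G_1)|$ pebbles lie elsewhere, hence some centipede source $s_i$ of the $v_1$-block is unpebbled and the path $s_i,r_i,\dots,r_{|G_1|}$ is empty. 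To eventually pebble $(v_1,r_{|G_1|})$ one must later pebble $s_i$, which forces the centipede sinks of all $G_1$-predecessors of $v_1$ to be simultaneously pebbled; the inductive hypothesis then puts those predecessors into $\pebbling$, and the cascading rule (triggered at the time the last predecessor was inserted) finally inserts $v_1$.

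With this corrected projection the counting you give goes through unchanged: at a configuration of maximal $\pebbling$-space, one block holds $p_2$ pebbles and each other pebbled $G_1$-vertex corresponds to a distinct non-empty block, giving the bound $\pebsp \ge p_1 + p_2 - 1$.
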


For the rest of the section we fix $G_1$ and $G_2$ to be two single
sink directed acyclic graphs, with sinks $\sink_{1}$
and $\sink_{2}$, respectively. We set $p_{1}\defeq
\blackprice{G_{1}}$ and $p_{2}\defeq \blackprice{G_{2}}$, and we
assume that $p_{2}$ is at least $3$.

\begin{construction}[Product for \pstandard pebbling]
\label{con:BlowUpBlack}
A \introduceterm{centipede} of length $\ell$
is a
path of length $\ell$ where all nodes but the source have an extra
predecessor.
Formally, it is the graph with vertices $\set{r_0,\ldots, r_\ell}$,
$\set{s_1, \ldots, s_\ell}$ and edges
$\set{(r_{i-1},r_i):i\in[\ell]} \cup \set{(s_i,r_i):i\in[\ell]}$.

As the first step we define the graph $\pointygraph{G_{2}}$ from
$G_{2}$ as follows:
$\pointygraph{G_{2}}$ is the union of
$G_{2}$ and of a centipede of length $|G_{1}|$, where we identify the sink of
$G_{2}$ with the vertex $r_{0}$ of the centipede.
Observe that the pebbling price of $\pointygraph{G_{2}}$ is
$\max(3,p_{2})=p_{2}$.

The graph $\blowupBlackConst{G_{1}}{G_{2}}$ is as follows.
For every vertex
$v_1$ of $G_1$ we make a copy of $\pointygraph{G_2}$, which
we call the $v_1$-block. Then, for every edge $(u_1,v_1)$ in $G_1$, we
add edges from the sink of the $u_1$-block to all the sources of the
centipede in the $v_1$-block.
Formally, $\blowupBlackConst{G_{1}}{G_{2}}$ is the graph with vertices $\set{(v_1,v_2):v_1\in
  V(G_1),v_2\in V(\pointygraph{G_2})}$ and edges
$\set{((v_1,u_2),(v_1,v_2)):v_1\in V(G_1),(u_2,v_2)\in E(G_2)}$,
$\set{((u_1,r_{|G_{1}|}),(v_1,s_i)):(u_1,v_1)\in E(G_1),i\in[\setsize{G_1}]}$.
\end{construction}

\begin{figure}[ht]
  \begin{center}
    \begin{tikzpicture}[scale=.75]
      \BlowUpBlackExample
    \end{tikzpicture}
      \caption{Example of \refcon{con:BlowUpBlack}: product of a pyramid of height 1 and a rhombus.}
      \label{fig:ExampleBlowUpBlack}
  \end{center}
\end{figure}

\begin{lemma}
  The \pstandard pebbling price of $\blowupBlackConst{G_{1}}{G_{2}}$ is at most $p_{1}+p_{2}-1$.
\end{lemma}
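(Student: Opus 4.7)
My plan is to simulate an optimal pebbling $\pebbling_1$ of $G_1$ (using $p_1$ pebbles) on the product graph, maintaining the invariant that a pebble sits on the sink $r_{|G_1|}$ of the $v_1$-block in $\blowupBlackConst{G_1}{G_2}$ if and only if $v_1$ is pebbled in the current configuration of $\pebbling_1$. A removal move of $\pebbling_1$ translates directly to removing the pebble from the corresponding block sink, so the interesting case is simulating a placement on some vertex $v_1$.

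When $\pebbling_1$ places a pebble on $v_1$, the current configuration already contains $\prednode[G_1]{v_1}$, so by the invariant the sinks of all predecessor blocks are pebbled, and at most $p_1 - 1$ pebbles sit on sinks of other blocks (since $\pebbling_1$ uses at most $p_1$ pebbles including the new one on $v_1$). The simulation inside the $v_1$-block proceeds in two phases. In Phase~1 I would pebble through the copy of $G_2$ in the $v_1$-block by simulating an optimal $p_2$-pebble strategy for $G_2$, ending with a single pebble on the sink of $G_2$, which is identified with the source $r_0$ of the centipede. During this phase we use at most $p_2$ pebbles inside the block and at most $p_1 - 1$ elsewhere, for a total of at most $p_1 + p_2 - 1$.

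In Phase~2 I would walk up the centipede: for $i = 1, \ldots, |G_1|$, pebble $s_i$ (whose predecessors are exactly the sinks of predecessor blocks, all currently pebbled by the invariant), then pebble $r_i$, then remove $s_i$ and $r_{i-1}$. Between steps only $r_{i-1}$ is pebbled inside the block, so we use $1 + (p_1 - 1) = p_1$ pebbles overall. The peak inside the block occurs momentarily when $r_{i-1}$, $s_i$, and $r_i$ are all present, using $3$ pebbles in the block and $p_1 - 1$ elsewhere, for a total of $p_1 + 2$. This is precisely where the hypothesis $p_2 \ge 3$ is used: it guarantees $p_1 + 2 \le p_1 + p_2 - 1$.

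I do not expect a significant obstacle; the main thing to verify carefully is that sinks of predecessor blocks remain pebbled throughout the entire simulated placement (which they do, since nothing in the procedure touches them) and that the invariant is correctly restored when the climb ends with only $r_{|G_1|}$ pebbled in the $v_1$-block. Concatenating the simulations of all the moves of $\pebbling_1$ thus yields a legal pebbling of $\blowupBlackConst{G_1}{G_2}$ of space at most $p_1 + p_2 - 1$.
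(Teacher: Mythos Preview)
Your proposal is correct and follows essentially the same approach as the paper: simulate an optimal pebbling of $G_1$, maintaining the invariant that a pebble sits on the sink of the $v_1$-block iff $v_1$ is pebbled, and for each placement move on $v_1$ pebble through the $v_1$-block while at most $p_1-1$ other block sinks carry pebbles. The only cosmetic difference is that the paper packages your two phases into a single appeal to the earlier observation $\blackprice{\pointygraph{G_2}} = \max(3,p_2) = p_2$, whereas you spell out the $G_2$-phase and the centipede climb separately and make explicit that the centipede peak of $3$ in-block pebbles is where the hypothesis $p_2 \ge 3$ is needed.
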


\begin{proof}
  To pebble $\blowupBlackConst{G_{1}}{G_{2}}$ we simulate a strategy for pebbling $G_1$ with $p_1$ pebbles. When a
  pebble is placed in $v_1$, we pebble the sink of the $v_1$-block using
  a strategy for $\pointygraph{G_{2}}$ in $p_{2}$ pebbles. 
  When a pebble is removed from $v_1$, we remove the pebble from the sink
  of the $v_1$-block.
  When we put a pebble in some $v_{1}$-block there are at most $p_1-1$
  other non empty blocks and they all have exactly one pebble,
  therefore this strategy for $\blowupBlackConst{G_{1}}{G_{2}}$ is within the
  pebbling limit.
\end{proof}

\begin{lemma}
  \label{lmm:blowup-black-lowerbound}  
  The \pstandard pebbling price of $\blowupBlackConst{G_{1}}{G_{2}}$ is at least $p_{1}+p_{2}-1$.
\end{lemma}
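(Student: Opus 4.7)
The plan is to construct from any standard pebbling $\pebblingBlowUpBlack$ of $\blowupBlackConst{G_{1}}{G_{2}}$ using $p$ pebbles a legal complete pebbling $\pebbling$ of $G_{1}$ such that $\max_{t}|\pebbling_{t}| + p_{2} - 1 \leq p$; combined with the fact that any legal complete pebbling of $G_{1}$ uses at least $p_{1}$ pebbles, this yields $p \geq p_{1} + p_{2} - 1$. Following the sketch in \refsec{sec:pspace-inapproximability-overview}, I define $\pebbling$ incrementally: initialize $\pebbling$ empty; at each step $t$ of $\pebblingBlowUpBlack$ at which some block of $\blowupBlackConst{G_{1}}{G_{2}}$ contains $p_{2}$ pebbles (a \emph{$p_{2}$-event}), iteratively add to $\pebbling$ any $v_{1}\in V(G_{1})\setminus\pebbling_{t}$ such that every predecessor of $v_{1}$ in $G_{1}$ already belongs to $\pebbling_{t}$ and the $v_{1}$-block currently has a pebble in $\pebblingBlowUpBlack_{t}$; whenever a block of $\blowupBlackConst{G_{1}}{G_{2}}$ becomes empty, remove the corresponding vertex from $\pebbling$. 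Legality is immediate by construction, since every addition is in topological order with predecessors already present.

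The space bound is an amortized counting argument. Since $\pebbling$ can only grow at $p_{2}$-events and only shrink in between, its maximum occurs at some update time $t$. At such an update, the distinguished block $u_{1}$-block contributes $p_{2}$ pebbles to $|\pebblingBlowUpBlack_{t}|$, and every $v_{1}\in\pebbling_{t}$ still has a non-empty block (otherwise it would already have been removed), contributing at least one additional pebble for each $v_{1}\neq u_{1}$. Hence $|\pebblingBlowUpBlack_{t}|\geq p_{2}+|\pebbling_{t}|-1$, so $\max_{t}|\pebbling_{t}|\leq p-p_{2}+1$.

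The technical heart of the argument, and the main obstacle, is verifying that $\pebbling$ is indeed \emph{complete}, i.e.\ that $\sink_{1}$ belongs to $\pebbling$ at some point. I prove by induction on the topological order of $G_{1}$ that whenever $r_{|G_{1}|}$ of the $v_{1}$-block is pebbled at some time in $\pebblingBlowUpBlack$, the vertex $v_{1}$ is added to $\pebbling$ at some time. The base case for sources of $G_{1}$ is immediate: the mandatory pebbling of $r_{0}$ in the $v_{1}$-block is itself a $p_{2}$-event and the predecessor condition is vacuous, so $v_{1}$ is added. For the inductive step, the length-$|G_{1}|$ centipede of $v_{1}$-block forces $\pebblingBlowUpBlack$ to pebble each $s_{i}$ of the centipede, and each such placement requires the sinks of all predecessor blocks of $v_{1}$ to be simultaneously pebbled; by the inductive hypothesis every predecessor of $v_{1}$ is in $\pebbling$ at those times. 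Coordinating this with the $p_{2}$-events arising either from pebbling $r_{0}$ of the $v_{1}$-block itself or from (re)pebbling $r_{0}$ of the predecessor blocks needed to refresh their sinks, one locates a $p_{2}$-event at which both clauses of the addition rule for $v_{1}$ are satisfied, so $v_{1}$ is added. Applying this to $\sink_{1}$, whose block's sink is pebbled at the end of $\pebblingBlowUpBlack$, gives $\sink_{1}\in\pebbling$ at some moment, so $\max_{t}|\pebbling_{t}|\geq p_{1}$ and $p\geq p_{1}+p_{2}-1$, completing the proof.
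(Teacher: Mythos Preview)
Your overall architecture matches the paper's exactly: project $\pebblingBlowUpBlack$ to a pebbling $\pebbling$ of $G_1$ via the same add/remove rules, bound $|\pebbling_t|$ by counting nonempty blocks at a $p_2$-event, and prove completeness by induction along the topological order. The gap is entirely in the completeness argument.

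First, your inductive hypothesis is too weak. You state it as ``whenever the sink of the $v_1$-block is pebbled at \emph{some} time, $v_1$ is added at \emph{some} time''. But in the inductive step you then write ``by the inductive hypothesis every predecessor of $v_1$ is in $\pebbling$ \emph{at those times}''---this does not follow. What the paper proves (Claim~\ref{cla:faithful}) is the time-tracking statement: if $(v_1,r_{|G_1|})\in\pconfBlowUpBlack_t$ then $v_1\in\pebbling$ at the end of $\pebblingchunk{t}$. You need this stronger form to get the predecessors present \emph{simultaneously}.

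Second, and more seriously, the completeness of $\pebbling$ is \emph{not} unconditional, and your proof treats it as such. Consider $G_1$ a height-1 pyramid with sources $a,b$ and sink $c$. A pebbling $\pebblingBlowUpBlack$ with no space constraint can first pebble both the $a$- and $b$-block sinks, then place pebbles on \emph{all} centipede sources $s_1^c,\ldots,s_{|G_1|}^c$ of the $c$-block, then empty the $a$-block entirely (removing $a$ from $\pebbling$), and only afterwards trigger the first $p_2$-event in the $c$-block and walk up its centipede. Since the $s_i^c$ are already pebbled, the predecessor sinks are never needed again; $a\notin\pebbling$ from then on, so $c$ is never active and never enters $\pebbling$. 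Thus $\pebbling$ fails to be complete. The paper avoids this by arguing under the hypothesis $p\le p_1+p_2-1$ (equivalently, for contradiction, $p<p_1+p_2-1$): at the \emph{first} $p_2$-event $t_1$ in the $v_1$-block since it was last empty, none of $r_1,\ldots,r_{|G_1|}$ can yet be pebbled, and the space bound leaves at most $p_1-1<|G_1|$ pebbles outside the $G_2$ part, so some $s_i$ is unpebbled at $t_1$. That $s_i$ must be pebbled at some $t_2\in(t_1,t]$, and at $t_2$ all predecessor-block sinks are present while the $v_1$-block is still nonempty. Your sentence ``the length-$|G_1|$ centipede forces $\pebblingBlowUpBlack$ to pebble each $s_i$'' is true but beside the point: what matters is that one $s_i$ must be pebbled \emph{after} $t_1$, and that is exactly where both the centipede length and the assumed space bound enter.
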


\begin{proof}
  Given a pebbling 
  $\pebblingBlowUpBlack=(\pconfBlowUpBlack_{0},\pconfBlowUpBlack_{1},\ldots,\pconfBlowUpBlack_{\stoptime})$
  for the
  graph $\blowupBlackConst{G_{1}}{G_{2}}$ 
  we construct a pebbling $\pebbling$ for $G_{1}$
  in such a way that if the space of $\pebblingBlowUpBlack$ is less
  than $p_{1}+p_{2}-1$ then $\pebbling$ has space less than
  $p_{1}$, which is impossible.

  In particular for any configuration $\pconfBlowUpBlack_{t}$ in
  $\pebblingBlowUpBlack$, we build a sequence $\pebblingchunk{t}$ of
  pebbling configurations for $G_{1}$, such that the final pebbling
  $\pebbling$ of $G_{1}$ is the concatenation of
  $\pebblingchunk{0},\pebblingchunk{1},\ldots,\pebblingchunk{\stoptime}$.
  While we build these sequences of configurations, we say that a
  vertex $v_{1} \in V(G_{1})$ is \introduceterm{active} if in the last
  configuration built so far $v_{1}$ does not have a pebble while all
  of its predecessors do.

\mvcomment{You do not like surrounded?}

  If $\pconfBlowUpBlack_t$ follows from a pebbling removal after which some
  $v_{1}$-block of $\blowupBlackConst{G_{1}}{G_{2}}$ becomes empty
  then $\pebblingchunk{t}$ is the pebbling step that
  removes the pebble present on $v_{1}$, if any, otherwise
  $\pebblingchunk{t}$ is the empty sequence.

  If $\pconfBlowUpBlack_t$ is the result of a pebble placement after
  which some $v_{1}$-block of $\blowupBlackConst{G_{1}}{G_{2}}$
  contains $p_{2}$ pebbles, then $\pebblingchunk{t}$
  performs the following pebbling steps: place a pebble on each empty vertex
  $w_{1}\in V(G_{1})$ such that
  \begin{enumerate}
    \item\label{item:proj_full} the $w_{1}$-block of $\blowupBlackConst{G_{1}}{G_{2}}$
    contains a pebble in $\pconfBlowUpBlack_t$; and
    \item\label{item:proj_pred} $w_{1}$ is active.
  \end{enumerate}
  This process is repeated until there is no vertex in $V(G_{1})$
  that meets both conditions. In particular a pebbling placement in
  the pebbling $\pebblingBlowUpBlack$ can cause a long chain of
  pebbling placements in $\pebbling$.
 
  Pebbling $\pebbling$ is a legal \pstandard pebbling of $G_{1}$ since
  removals are always legal, and pebbling placements are only done on
  active vertices by construction.

  \begin{claim}
    \label{cla:faithful}
    Assume the pebbling $\pebblingBlowUpBlack$ uses at most
    $p_{1}+p_{2}-1$ pebbles.  If there is a pebble on $(v_{1},r_{|G_{1}|})$ in
    $\pconfBlowUpBlack_t$, then there is a pebble on $v_1$ at the end of
    $\pebblingchunk{t}$.
  \end{claim}
  \begin{proof}
    We prove the claim by induction over a topological order of $G_{1}$.
    Consider the earliest time $t_1$ such that
    $\pebblingBlowUpBlack$ has $p_2$ pebbles in the $v_1$-block and
    there is a pebble in the $v_1$-block during the whole interval
    $[t_1,t]$. 
    Such a time exists because the $v_{1}$-block is a copy of
    $\pointygraph{G_2}$, and $p_{2}$ pebbles are
    necessary to pebble its sink.  
    
    If $v_1$ is active at any point during the construction of
    $\pebblingchunk{t_1}$, then $v_{1}$ is pebbled in that sequence
    and it is not removed afterwards. This is always the case if
    $v_{1}$ is a source of $G_{1}$.
    
    If $v_{1}$ is not active we assume
    that Claim~\ref{cla:faithful} holds for all its predecessors.
    Time $t_1$ is the first time when there are $p_2$ pebbles in the
    $v_1$-block since it has been empty. Therefore none
    of the successors of $(v_1,\sink_2)$ in the $v_{1}$-block has a
    pebble.
    Also, since at most $p_1-1<\setsize{G_1}$ pebbles are outside the
    $G_2$ part of the $v_1$-block, some vertex $(v_1,s_i)$ in the
    centipede part of the $v_1$-block has no pebble.

    So far we discovered that at time $t_{1}$ there is a path
    $(v_1,s_i),(v_1,r_i),(v_1,r_{i+1})\ldots,(v_1,r_{\setsize{G_1}})$
    with no pebbles, and that at time $t$ vertex
    $(v_1,r_{\setsize{G_1}})$ has a pebble. 
    Then it must be the case that vertex $(v_1,s_i)$ is pebbled at some
    time $t_2$ where $t_1<t_{2}<t$,
    and furthermore at time $t_2$ there must be
    pebbles on $(u_1,r_{|G_{1}|})$ and
    $(w_1,r_{|G_{1}|})$, where $u_{1}$ and $w_{1}$ are the
    predecessors of $v_{1}$ in graph $G_{1}$. 
    By induction hypothesis $u_{1}$ and $w_{1}$ have a pebble at the end
    of $\pebblingchunk{t_{2}}$, so $v_{1}$ is active at that point
    and, since the $v_{1}$-block is not empty, it gets a pebble.
    Such pebble stays in place at least until the end of $\pebblingchunk{t}$.
  \end{proof}
  
We can finally prove Lemma~\ref{lmm:blowup-black-lowerbound}. 
Assume for the sake of contradiction that $\pebblingBlowUpBlack$ uses strictly less than $p_{1}+p_{2}-1$ pebbles.
Pebbling $\pebbling$ is a legal pebbling of $G_{1}$ which pebbles the
sink $\sink_{1}$, because of Claim~\ref{cla:faithful} and the fact
that $\pebblingBlowUpBlack$ pebbles vertex $(\sink_{1},r_{|G_{1}|})$.
Consider a configuration in which
$\pebbling$ reaches its maximum number of pebbles.
This configuration is at the end of a sequence $\pebblingchunk{t}$
corresponding to a pebble placement in $\pebblingBlowUpBlack$ that
causes some $v_{1}$-block to have $p_{2}$ pebbles,
since this is the only case in which a sequence $\pebblingchunk{t}$ adds pebbles.

The corresponding configuration $\pconfBlowUpBlack_{t}$ has $p_{2}$
pebbles in the $v_{1}$-block and at most $p_{1}-2$ other non empty
blocks by assumption. Empty blocks in $\pconfBlowUpBlack_{t}$ corresponds
to empty vertices in $\pebblingchunk{t}$ by construction, so there are
at most $p_{1}-1$ pebbles in all configurations in $\pebblingchunk{t}$. 
This contradicts the fact that $\blackprice{G_{1}}=p_{1}$.
\end{proof}

Observe that the only point where we used the fact that the length of a centipede is $\setsize{G_1}$ is to claim that there is one source without a pebble, so any length $u \geq \blackprice{G_1}$ would suffice. Since in general it is \pspace-hard to compute $\blackprice{G_1}$, we settle for the trivial upper bound $\setsize{G_1}$.

\section{Concluding Remarks}
\label{sec:conclusion}

In this paper, we study the pebble game first introduced
in~\cite{PH70Comparative} as well as the more restricted reversible
pebble game in~\cite{Bennett89TimeSpaceReversible}, where
by~\cite{Chan13JustAPebble} the latter game is also equivalent to the
\dtgametext~\cite{DT85Speedups} and the
\rmgametext~\cite{RM99Separation}.

We establish that it is \PSPACE-hard to approximate \pstandard and
reversible pebbling price up to any additive constant.  To the best of
our knowledge, these are the first hardness of approximation results
for such pebble games, even for polynomial time. It would be very
interesting to show stronger inapproximability results for pebbling
price under stronger assumptions. On the one hand, we are only able to
show additive hardness, but on the other hand our results hold for
arbitrary algorithms using a polynomial amount of memory.  It seems
reasonable to believe that the problem should become much harder for
algorithms restricted to polynomial time, but showing this seems like
a challenging task---in some sense, it appears that pebbling might be
so hard a problem that it is even hard to prove that it is hard.

Another challenging problem is to prove approximation hardness, or
even just \PSPACE-completeness, for the 
\introduceterm{black-white pebble game}~\cite{CS76Storage} modelling
nondeterministic computation.  This game is a strict generalization of
the \pstandard (black) pebble game, and so intuitively it should be at
least as hard, but the added option of placing nondeterministic white
pebbles anywhere in the graph completely destroys locality and makes
the reduction in \cite{GLT80PebblingProblemComplete} break down.
Hertel and Pitassi~\cite{HP10PspaceCompleteness} showed a
\PSPACE-completeness result in the nonstandard setting when unbounded
(and very large) fan-in is allowed. Essentially, the large fan-in
makes it possible to lock down almost all pebbles in one place at a
time (namely on the predecessors of a large fan-in vertex to be
pebbled) and to completely rule out any use of white pebbles, reducing
the whole problem to black pebbling (although this reduction, it
should be stressed, is far from trivial). This approach does not work
for bounded fan-in graphs, however, which is the standard setting
studied in the 1970s and 80s and the setting that could potentially
have interesting applications in, for instance, proof complexity.

We also show in this paper that \pstandard black pebbling is
asymptotically stronger than reversible pebbling by exhibiting
families of DAGs over $n$~vertices which have \pstandard pebblings in
space~$\pebsp$ but for which the reversible pebbling price 
\mbox{is $\bigomega{\pebsp \log n}$.}  Since any DAG on $n$~vertices
with \pstandard pebbling price~$\pebsp$ can be reversibly pebbled in
space $\bigoh{\pebsp^2 \log n}$, our separation is at most a linear
factor (in~$\pebsp \leq n$) off from the optimal. It would be
interesting to determine how large the separation can be. We do not
rule out the possibility that the separation we give might in fact be
asymptotically optimal.

\section*{Acknowledgements}

We are grateful to 
\mbox{Anna Gál}, 
\mbox{Yuval Filmus}, 
\mbox{Toniann Pitassi}, 
and 
\mbox{Robert Robere}
for stimulating discussions on the topic of pebble games.
A special thanks goes to 
\mbox{Mladen Mik\v{s}a}, 
who participated in the initial stages of this work but somehow
managed to avoid the pebbling addiction that seized the rest of us\ldots

\TheauthorSMC performed part of this work while at Princeton
University.
The second, third and fourth authors were funded by the
European Research Council under the European Union's Seventh Framework
Programme \mbox{(FP7/2007--2013) /} ERC grant agreement no.~279611.
\TheauthorJN 
was also supported by the
Swedish Research Council grants \mbox{621-2012-5645} and \mbox{2016-00782},
and by the Independent Research Fund Denmark grant \mbox{9040-00389B}.

\bibliographystyle{alpha}

\end{document}